\date{\today}
\title{A mathematical foundation for self-testing: Lifting common assumptions}
\author[1]{Pedro Baptista}
\author[2]{Ranyiliu Chen}
\author[3]{J\k{e}drzej Kaniewski}
\author[4]{David Rasmussen Lolck}
\author[5]{Laura Man\v{c}inska}
\author[6]{Thor Gabelgaard Nielsen}
\author[7]{Simon Schmidt}
\affil[1]{Department of Computer Science, Federal University of Minas Gerais, Ave. Antônio Carlos, 6627, Belo Horizonte, MG, Brazil}
\affil[1,2,4,5,6,7]{Department of Mathematical Sciences, University of Copenhagen, Universitetsparken 5, 2100 Copenhagen \O, Denmark}
\affil[3]{Faculty of Physics, University of Warsaw, Pasteura 5, 02-093 Warsaw, Poland}
\affil[7]{Faculty of Computer Science, Ruhr University Bochum, Universitätsstra{\ss}e 150, 44801 Bochum, Germany}
\begin{document}
	\maketitle

\begin{abstract} 
     In this work we study the phenomenon of self-testing from the first principles, aiming to place this versatile concept on a rigorous mathematical footing. Self-testing allows a classical verifier to infer a quantum mechanical description of untrusted quantum devices that she interacts with in a black-box manner. Somewhat contrary to the black-box paradigm, existing self-testing results tend to presuppose conditions that constrain the operation of the untrusted devices. A common assumption is that these devices perform a \emph{projective} measurement of a \emph{pure} quantum state. Naturally, in the absence of any prior knowledge it would be appropriate to model these devices as measuring a mixed state using POVM measurements, since the purifying/dilating spaces could be held by the environment or an adversary. 
     
     We prove a general theorem allowing to remove these assumptions, thereby promoting most existing self-testing results to their assumption-free variants. On the other hand, we pin-point situations where assumptions cannot be lifted without loss of generality. As a key (counter)example we identify a quantum correlation which is a self-test only if certain assumptions are made. Remarkably, this is also the first example of a correlation that cannot be implemented using projective measurements on a bipartite state of full Schmidt rank. Finally, we compare existing self-testing definitions, establishing many equivalences as well as identifying subtle differences.
\end{abstract}
\newpage

\tableofcontents
\newpage
    \section{Introduction}

\emph{Self-testing} was first introduced by Mayers and Yao \cite{MY}. Over the years it has evolved into an active research field with widespread applications that well surpass the initial expectations (see review paper \cite{SB}). It could be argued that the successful applications of self-testing have outpaced a thorough examination and rigorous development of the underlying mathematical formalism. Addressing this issue is the overarching goal of this work which we believe will further broaden and ease future applications of self-testing. 

The original motivation of self-testing is that it can be used for certifying quantum devices. Since the properties of quantum systems are inherently difficult to observe directly, the problem of certifying that a quantum device functions according to its specification is challenging. Self-testing provides the strongest form of certification in this context. Specifically, it allows untrusted parties to convince a classical verifier that their shared quantum memory holds a specific state on which they are able to perform certain quantum measurements. 


In addition to its fundamental role in certification, self-testing techniques are increasingly being used for other purposes. These include protocols for delegated quantum computation, verifiable randomness generation, device-independent cryptography, Bell nonlocality, and quantum complexity theory. In fact some of the biggest recent breakthroughs like the $\mathrm{MIP}^*=\mathrm{RE}$ \cite{MIP=RE2020} crucially rest on self-testing techniques.

The concept of self-testing can be framed in the context of nonlocal games \cite{CHTW}, which involve two untrusted provers, Alice and Bob, and a verifier. The provers respond to questions from the verifier and their win or loss is determined by a predefined function. Crucially, Alice and Bob cannot communicate after receiving the questions but can agree on a strategy beforehand. In a quantum strategy, $S=(\rho,\{A_{xa}\},\{B_{yb}\})$, they share an entangled state, $\rho$, and employ local measurements to obtain their answers. We refer to a quantum strategy the provers use as an \textbf{arbitrary strategy} whereas a quantum strategy we would like to certify as a \textbf{canonical strategy}. In this context, self-testing posits that any arbitrary strategy that optimally wins a nonlocal game must be equivalent to a canonical strategy for that game, up to a local isometry. 
This conceptual idea of self-testing has 
been formalized in many different, albeit similar,
definitions 
\cite{MY, SB, GeoQCor,paddock2023operatoralgebraic} which have then been employed to establish self-testing theorems. This naturally evokes the following questions:

\begin{question}
\label{q:1}
   What is the relationship between the existing definitions of self-testing (and hence the obtained self-testing theorems)? Which is the strongest or the ``right'' definition of self-testing?
\end{question}

Understanding the difference between the existing definitions of 
self-testing is further complicated by the fact that most authors do not allow the arbitrary strategy to take the most general form allowed by quantum mechanics (POVM measurements on a mixed quantum state). A priori this weakens the resulting notion of self-testing and goes against the idea that the untrusted provers should be allowed to be all-powerful. Most of the existing self-testing results place at least one of the following three assumptions on the arbitrary strategy, $S=(\rho,\{A_{xa}\},\{B_{yb}\})$, employed by Alice's and Bob's untrusted quantum devices:
\begin{itemize}
    \item[(1)] the shared state, $\rho$, is pure,
    \item[(2)] the shared state has full Schmidt rank\footnote{For example, the state $\ket{\psi}=1/\sqrt{2}(\ket{00}+\ket{11})\in\mathbb{C}^2\otimes\mathbb{C}^2$ is full-rank, while $\ket{\psi'}=1/\sqrt{2}(\ket{00}+\ket{22})\in\mathbb{C}^3\otimes\mathbb{C}^3$ is not.},
    \item[(3)] the measurements $\{A_{xa}\}$ and $\{B_{yb}\}$ are projective measurements (PVM's). 
\end{itemize}

Depending on which assumptions are made on the arbitrary strategy, we refer to the resulting self-tests as pure/ full-rank/ PVM self-tests. For example, a PVM self-test means that any PVM strategy can be mapped by a local isometry to the canonical strategy but such a mapping need not exist for non-projective strategies\footnote{One might attempt to get rid of this assumption by dilating the non-projective strategy to a projective one with Naimark dilation. Unfortunately, the isometry that exists for Naimark dilation does not directly work for the original non-projective measurement in the sense of the self-testing definition. At a conceptual level this argument is problematic if we consider that the dilating space could be held by an adversary or the environment.}. If the arbitrary strategy is not restricted in any way, we say that the resulting self-test is \emph{assumption-free}. 
The above assumptions (1)--(3) give rise to a hierarchy of self-tests: for instance, every pure PVM self-test is also a PVM self-test. Hardly any of the existing self-tests are proven to be assumption-free. In particular, it is very common to only consider arbitrary strategies which measure a pure state with projective measurements (Assumptions (1) \& (3)). This leads to several intriguing questions:
\begin{question}
\label{q:2}
    Which of the assumptions (1)--(3) can (or cannot) be lifted without loss of generality? Can things go wrong in the most general case?
\end{question}
To gain intuition of the potential consequences of making unjustified assumptions, consider an example from \cite{Christandl2022operational} where two provers receive a single question each and produce a perfectly correlated bit. This can be achieved with a classical, separable mixed state: no quantum entanglement needed. However, if we assume that the perfectly correlated bit is produced by measuring a pure state, then this state needs to be entangled, leading to an entirely different analysis and conclusions. To give a more practical example, in device-independent random number generation, randomness is secure if it is not predictable by a third party \cite{Ac_n_2016}. Then the purity assumption oversimplifies and invalidate the security analysis, as there is no way any third party is entangled with a pure state. {The assumption that all measurements are projective is sometimes made for the sake of simplicity or due to historical precedent. On the other hand, we know that non-projective measurements are essential for certain tasks in quantum error correction and state discrimination. Adhering to this assumption could therefore unnecessarily restrict the applicability of self-testing methods.} From a philosophical standpoint, making additional assumption goes against the idea of self-testing, which aims to make as few assumptions as possible. This is particularly important in cryptographic contexts where fewer assumptions often translate into stronger security guarantees.

Shifting the attention to the canonical strategy, we can inquire about the limits of self-testing, namely, which states and measurements can we hope to self-test: 
\begin{question}
\label{q:3}
    Which strategies can or cannot be self-tested?
\end{question}
It is known that mixed states cannot be self-tested (see {\it e.g.} \cite[Sect. 3.5]{SB}). 
However, it remains uncertain whether strategies containing non-full-rank pure states or non-projective measurements can be self-tested. Additionally, the ability to self-test certain strategies might depend on the specific definition of self-testing or the assumptions regarding the arbitrary strategies employed. 

Moving beyond self-testing, the 2-party Bell scenario presents an intriguing question: what is the \emph{simplest} form of a strategy realizing a given bipartite correlation? Conventional purification and Naimark dilation arguments show that any quantum correlation can be realized by measuring a pure state with local projective measurements. In a similar vein, by restricting a strategy to the state's local supports, we can obtain a strategy which utilizes a pure state of full Schmidt rank. However, these standard arguments fall short of providing a strategy that simultaneously has all three desired attributes: purity, full rank, and projective measurements. This leads us to the following question: 
\begin{question}
    \label{q:4}
    Can every bipartite quantum correlation be realized by locally measuring a shared state of full Schmidt rank with projective measurements?
\end{question}

\subsection{Results}

We study self-testing from first principles. This includes determining the assumptions that can be made without sacrificing generality, as well as identifying which quantum states and measurements are amenable to self-testing. At a conceptual level, we make the following contributions:

\begin{itemize}
    \item We put the concept of self-testing on a rigorous mathematical footing. {This includes identifying new key concepts ({\it e.g.} ``support-preserving strategy''), putting forth definitions for robust versions of properties like ``projectivity'' or ``being support-preserving'', and unifying the existing self-testing definitions.};
    \item We establish which strategies can or cannot be self-tested in an assumption-free manner,
    \item We delineate which combinations of assumptions (1)--(3) can or cannot be lifted;
    \item We provide a previously unknown reference example of a quantum correlation that cannot be realized by projective measurements on a full-rank state.
\end{itemize}

To start things off, we examine and compare existing self-testing definitions. We show that some of these definitions are indeed equivalent while highlighting subtle differences and pin-pointing assumptions which potentially cannot be lifted, thus giving a better understanding of Question \ref{q:1}.

\begin{theoremA}(informal version of the results in Section \ref{sect:equivalence})
   The existing definitions of self-testing are equivalent in certain natural settings. In the general case we identify definitions that yield weaker notion of self-testing (\textit{See also Theorem \ref{thmB2}}). 
\end{theoremA}

To address Question \ref{q:2} we establish the following theorem:

\begin{theoremB1}\makeatletter\def\@currentlabel{B.1}\makeatother\label{thmB1}(Theorem \ref{thm:bigtheorem})~
Let $G$ be a nonlocal game.
\begin{enumerate}
    \item[(a)] Let $\tilde{S}$ be a strategy for $G$ that measures a pure state of full Schmidt rank.\\
    If $G$ is a pure PVM self-test of $\tilde{S}$  then it is also an assumption-free self-test of  $\tilde{S}$.
    \item[(b)] Let $\tilde{S}$ be a strategy for $G$ that uses only projective measurements.\\
    If $G$ is a pure full-rank self-test of $\tilde{S}$  then it is also an assumption-free self-test of~$\tilde{S}$.
\end{enumerate}
\end{theoremB1}

\begin{remark*}
Theorem~\ref{thmB1} (as well as Theorem \ref{thmC} given below) also hold for self-tests from Bell inequalities and extreme quantum correlations. In Section \ref{sec:lifting} we establish robust versions of these results.
\end{remark*}

Theorem \ref{thmB1} shows that both the ``purity + projectivity'' and ``purity + full-rank'' assumptions can consistently be lifted, thus {\it elevating} most (if not all) of the existing self-tests to their assumption-free versions. We view this as the most practically impactful contribution of our work. Essentially, Theorem \ref{thmB1} enables us to sidestep cumbersome general strategies that involve non-projective measurements and arbitrary mixed states, thereby simplifying the proof process for self-testing theorems. Part (a) of Theorem \ref{thmB1} answers the question raised in \cite[Appendix B.2]{SB}.

It is natural to ask whether without loss of generality we can restrict to full-rank \emph{and} projective arbitrary strategies. We conjecture that this is not possible in general, since we do not know a general construction allowing to promote an arbitrary strategy to an equivalent strategy that is simultaneously full-rank and projective. 

\begin{conjecture}
  There is a nonlocal game that is a pure full-rank PVM self-test for a projective full-rank strategy $\tilde S$, which is not an assumption-free self-test for $\tilde S$. 
\end{conjecture}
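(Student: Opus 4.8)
The natural line of attack is to promote the exotic correlation featured in this paper --- a quantum correlation $p^{*}$ that is realizable in the tensor-product model but admits \emph{no} realization by projective measurements on a bipartite state of full Schmidt rank --- into a nonlocal game whose \emph{canonical} strategy $\tilde S$ is itself full-rank and projective. Every realization of $p^{*}$ is forced to be non-full-rank or non-projective, and this is the source of the desired separation; the work is to package it so that the canonical strategy escapes that fate.

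I would try one of two routes. The \emph{direct} route: exhibit a Bell inequality (or nonlocal game) whose set of optimal correlations is exactly $\{p^{*},q\}$, where $q$ is produced by a full-rank projective strategy $\tilde S$ that $q$ rigidly self-tests within the class of pure full-rank PVM strategies. Since $p^{*}$ is not realizable by a full-rank PVM strategy, every pure full-rank PVM optimal strategy must produce $q$, hence be equivalent to $\tilde S$, so the game is a pure full-rank PVM self-test of $\tilde S$. But the POVM (or non-full-rank) strategy realizing $p^{*}$ is optimal and produces a correlation different from $p(\tilde S)=q$, hence is inequivalent to $\tilde S$, so the game is not an assumption-free self-test of $\tilde S$. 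The \emph{modular} route instead glues a standard rigid self-test $G_{1}$ (e.g.\ CHSH, which by Theorem~\ref{thmB1} assumption-free self-tests the canonical qubit strategy) to a sub-test $G_{2}$ built around $p^{*}$: the verifier picks $G_{1}$ or $G_{2}$ at random, $\tilde S$ plays the canonical qubit strategy on $G_{1}$ and a full-rank \emph{projective} optimal strategy on $G_{2}$, the rigidity of $G_{1}$ pins the shared state tightly, and --- so one hopes --- optimality on $G_{2}$ then forces the second-block behaviour of every full-rank PVM strategy to coincide with $\tilde S$, while the strategy playing canonically on $G_{1}$ and realizing $p^{*}$ on $G_{2}$ stays optimal, pure, full-rank, non-projective, and inequivalent to $\tilde S$.

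The hard part --- the reason this remains a conjecture --- is to arrange the optimal set so finely that the game pins $\tilde S$ \emph{among full-rank PVM strategies} while still admitting a genuinely different general optimal strategy. This amounts to a rigidity statement for the ``nice'' correlation $q$ that holds for full-rank PVM realizations but provably fails for POVM or low-rank ones; equivalently, $q$ (or $p^{*}$) must have a realization that cannot be converted into an equivalent full-rank \emph{and} projective one. The absence of such a conversion in general is exactly the phenomenon noted above, but converting that \emph{absence} into a \emph{proof} about a concrete game is the crux: one must (i) verify that the exotic realization genuinely has full Schmidt rank, so that the resulting counterexample strategy is full-rank --- otherwise only a weaker claim is refuted --- and (ii) rule out every full-rank PVM strategy that is optimal but inequivalent to $\tilde S$, a uniqueness argument that typically needs a hands-on analysis of the low-dimensional operator solutions, via the representation theory of the underlying algebra or an SDP-hierarchy certificate.

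A further subtlety, specific to the modular route: one must prevent a full-rank PVM strategy from reproducing $p^{*}$-statistics on $G_{2}$ by exploiting Schmidt-rank freedom in the ``junk'' system that $G_{1}$'s rigidity leaves unconstrained --- which is why $G_{1}$ should be an assumption-free self-test (keeping the junk minimal) and why the weighting between $G_{1}$ and $G_{2}$ must force simultaneous optimality on both blocks. If all of this can be carried out, the conjecture follows; if it cannot, the obstruction one meets would most plausibly be a general construction lifting arbitrary strategies to equivalent full-rank projective ones, which would instead \emph{refute} the conjecture.
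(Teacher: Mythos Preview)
The statement you are addressing is a \emph{conjecture} in the paper, not a theorem: the authors explicitly leave it open, remarking only that they ``do not know a general construction allowing to promote an arbitrary strategy to an equivalent strategy that is simultaneously full-rank and projective.'' There is therefore no proof in the paper to compare your proposal against.

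Your write-up is not a proof either, and you are upfront about this (``the reason this remains a conjecture''). What you have produced is a sensible research plan. The central idea --- leverage the paper's correlation $p^{*}$ with no full-rank PVM realization, and engineer a game whose optimal set contains both $p^{*}$ and a second correlation $q$ realized by a full-rank projective $\tilde S$ --- is exactly the intuition the paper gestures at, and your identification of the two obstacles (rigidity of $q$ within the full-rank PVM class, and exclusion of stray full-rank PVM optimal strategies) is accurate. One additional constraint worth making explicit: by Theorem~\ref{thmB1}, any game witnessing the conjecture must fail to be a pure PVM self-test \emph{and} fail to be a pure full-rank self-test (since $\tilde S$ is both full-rank and projective), so the optimal set must contain a non-full-rank PVM strategy and a full-rank POVM strategy, both inequivalent to $\tilde S$. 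The paper's $p^{*}$ conveniently supplies both (via $S_{\mathrm{proj}}$ and $S_{\mathrm{full\text{-}rank}}$), which supports your direct route; but the optimal face of a game is an affine slice of the quantum set, so demanding it be exactly $\{p^{*},q\}$ (or the segment between them, with no other quantum points) is a delicate geometric condition that neither you nor the paper knows how to enforce.
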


Figure \ref{fig:diagram} succinctly illustrates Theorem \ref{thmB1} as well as our conjecture.

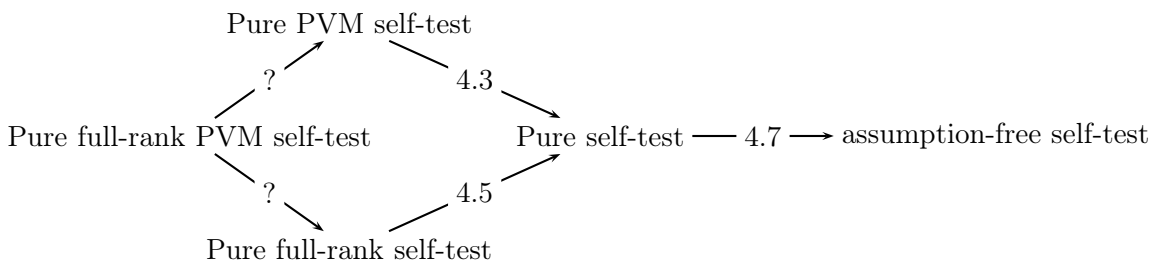
\begin{figure}[ht]
\begin{pspicture}(16,3.7)


\rput[c](2.4,2){\rnode{00}{Pure full-rank PVM self-test}}

\rput[c](4.5,0.5){\rnode{01}{Pure full-rank self-test}}

\rput[c](4.5,3.5){\rnode{02}{Pure PVM self-test}}

\rput[c](7.8,2){\rnode{A}{Pure self-test}}
\rput[c](13,2){\rnode{B}{assumption-free self-test}}

\ncline[nodesep=3pt]{->}{00}{01}
\ncput*{?}
\ncline[nodesep=3pt]{->}{00}{02}
\ncput*{?}
\ncline[nodesep=3pt]{->}{A}{B}
\ncput*{\ref{thm:pure_to_mix_rob}}
\ncline[nodesep=3pt]{->}{01}{A}
\ncput*{\ref{thm:full_to_any_rank_rob}}
\ncline[nodesep=3pt]{->}{02}{A}
\ncput*{\ref{thm:PVMtoPOVMrob}}
\end{pspicture}
\caption{Implications of Theorem \ref{thmB1} for a  canonical full-rank projective strategy. Specifically, Theorem \ref{thmB1} (a) consists of Theorem \ref{thm:PVMtoPOVMrob} and \ref{thm:pure_to_mix_rob}. And Theorem \ref{thmB1} (b) consists of Theorem \ref{thm:full_to_any_rank_rob} and \ref{thm:pure_to_mix_rob}. For the implications with `?' we conjecture them to be false.}
\label{fig:diagram}
\end{figure}

Taking another look at Theorem \ref{thmB1}, one can ask whether the same conclusions hold if the canonical strategy $\tilde S$ is not projective and/or full-rank.  We show that in any more general situation, there are examples of restricted self-tests that are not assumption-free self-tests, giving an answer to the second part of Question \ref{q:2}. 

\begin{theoremB2}\makeatletter\def\@currentlabel{B.2}\makeatother\label{thmB2}(Corollary \ref{cor:fullrankselftestnoselftest} and Corollary \ref{cor:PVMselftestnoselftest})
We exhibit an extreme quantum correlation $\tilde{p}$ that can be realized by two different strategies. The first strategy, $S_\textrm{full-rank}$, measures a full-rank shared state with non-projective measurements while the second strategy, $S_\textrm{proj}$, measures a non-full-rank shared state with projective measurements\footnote{Essentially we can take ${S}_\textrm{proj}$ to be the Naimark dilation of ${S}_\textrm{full-rank}$.}. We have that
\begin{enumerate}
    \item[(a)] $\tilde p$ is a PVM self-test of $S_\textrm{proj}$,
    \item[(b)] $\tilde p$ is a full-rank self-test of $S_\textrm{full-rank}$,
    \item[(c)] $\tilde p$ is {\bf not} an assumption-free self-test (no matter what canonical strategy we choose),
    \item[(d)] $\tilde p$ cannot be realized by performing projective measurements on a shared state with full Schmidt rank. (This answers Question \ref{q:4} in the negative.)
\end{enumerate}
\end{theoremB2}

The correlation $\tilde p$ highlights that we need to be very careful with the assumptions imposed on the arbitrary strategies as these assumptions can yield weaker versions of self-testing. Indeed, assuming that the arbitrary strategy is projective (or full-rank) allows us to obtain a self-testing result for $\tilde p$ while this fails to hold in the absence of any assumptions.

To obtain the correlation $\tilde p$ we combine the CHSH inequality with another Bell inequality, in which Alice holds the same measurement operators as in CHSH and Bob gets an additional three-outcome measurement. 

Finally, one might ask if it is possible to self-test canonical strategies $\tilde S$ that are not full-rank or projective.
Using the properties of Naimark dilation and restriction, we show that this is essentially not the case, thus answering Question \ref{q:3}. 

\begin{theoremC}\makeatletter\def\@currentlabel{C}\makeatother\label{thmC} (An abridged version of Theorem \ref{thm:originalpartC})
If a nonlocal game is an assumption-free self-test, then there exists a full-rank projective strategy that is self-tested by this game. Moreover, full-rank non-projective strategies cannot be self-tested in an assumption-free way.
\end{theoremC}
A crucial take-away from the above Theorem~\ref{thmC} is that it is not possible to self-test non-projective measurements.

We remark that for the case of pure self-tests from correlations, a conclusion similar to the first part of Theorem~\ref{thmC} has also been reached in \cite[Proposition 4.14]{paddock2023operatoralgebraic} via a different approach.

\subsection{Structure of the paper}
In Section \ref{sec:prem}, we recall basic notions in nonlocal games and explain the different self-testing definitions we are using throughout the paper. Section \ref{sec:lemmas} gives the tools we are using for the main results. We show how strategies being support-preserving or projective is connected to local dilation. In addition, we look at restrictions and Naimark dilations of strategies which can also be related to such strategies. For an overview of this, see Figure \ref{fig:visualizingConcepts}. Then, in Section \ref{sec:lifting}, we have all the tools to prove our main result: Theorem \ref{thmB1}. Here we lift common assumptions in self-testing theorems, in different steps. First, we lift the PVM assumption, then the full-rank assumption and finally the assumption that the state in an arbitrary strategy is pure. Using the tools from the previous section, we are also able to prove Theorem \ref{thmC} here. After this, we compare existing definitions for local dilations for strategies using mixed states and show their equivalence in Section  \ref{sect:equivalence}. Finally, in Section \ref{sect:counterexamples}, we proceed by proving Theorem \ref{thmB2}. Here, we give examples showing we cannot lift our assumptions on self-testing theorems in general. 
	\section{Preliminaries}\label{sec:prem}
\subsection{Notation}
We state some notations and basic facts we need throughout the article. Unless specified otherwise, we assume all Hilbert spaces to be finite-dimensional. For any vector $v\in\mathcal{H}$, we denote its norm by $\|v\|=\braket{v,v}^{\frac{1}{2}}$. {We write $u \approx_{\varepsilon}v$ if $\|u-v\|\le\varepsilon$.}

A \emph{pure state} $\ket{\psi}$ is a unit vector in a Hilbert space $\mathcal H$. For a bipartite pure state $\ket\psi \in \mathcal H_A \otimes \mathcal H_B$, we can consider its \emph{Schmidt decomposition}
\begin{equation}
\ket\psi = \sum_{i=0}^{k-1}\alpha_i \ket{e_i}\ket{f_i}
\label{eq:Sch}
\end{equation}
where $\alpha_i > 0$ and both $\left\{\ket{e_i}\right\}_{i=0}^{k-1} \subseteq \mathcal H_A$ and $\left\{\ket{f_i}\right\}_{i=0}^{k-1} \subseteq \mathcal H_B$ are orthonormal sets. Note that \pref{eq:Sch} only includes terms with positive Schmidt coefficients. We refer to the number $k$ as the \emph{Schmidt rank} of the state $\ket\psi$ and if $k = \operatorname{dim}(\mathcal H_A) = \operatorname{dim}(\mathcal H_B)$ then we say that $\ket\psi$ has full Schmidt rank, or $\ket\psi$ is \emph{full-rank} for simplicity.  We also define $\Supp_A \ket\psi := {\Span\left\{\ket{e_0}, \dotsc \ket{e_{k-1}} \right\}} \subseteq \mathcal H_A$, and similarly, $\Supp_B \ket\psi := {\Span\left\{\ket{f_0}, \dotsc, \ket{f_{k-1}} \right\}} \subseteq \mathcal H_B$.

A \emph{mixed state} $\rho$ is represented by a positive semi-definite, self-adjoint matrix with trace one, also called density matrix. Every mixed state can be written in the form $\rho=\sum_{i=1}^n p_i \proj{
\psi_i}$ for a probability vector $(p_i)_{i=1}^n$ and pure states $\ket{\psi_i}$. A \emph{purification} of a mixed state $\rho\in B(\mathcal H)$ is a pure state $\ket{\psi} \in \mathcal H \otimes \mathcal H_P$ for some purification Hilbert space $\mathcal H_P$ such that $\rho =\tr_P(\proj{\psi})$. Here $\mathrm{Tr}_P$ denotes the partial trace over the Hilbert space $\mathcal H_P$.

We will be working with the following definition of measurements.
\begin{definition}[POVM]
    A \emph{positive, operator-valued measurement (POVM)} is a set of positive, self-adjoint operators $\{E_i\}_{i=1}^n$ in $B(\mathcal H)$ such that 
    $$\sum_{i=1}^n E_i = \mathbbm{1}_{B(\mathcal H)}.$$
    Furthermore, if all operators are projections ($E_i=E_i^2=E_i^*$ for all $i$), then we call it \emph{projective measurement (PVM)}.
\end{definition}

\subsection{Nonlocal games and strategies}
\begin{definition}[Nonlocal game]\label{def:bck_nonlocalgame}
	A nonlocal game $G$ is a tuple $(\mathcal S,\mathcal T,\mathcal A,\mathcal B, \pi, \mathcal V)$ of a probability distribution of the questions $\pi: \mathcal{S}\times \mathcal{T}\to [0,1]$ and a verification function $\mathcal{V}: \mathcal{A}\times\mathcal{B}\times\mathcal{S}\times\mathcal{T} \to \{0,1\}$,
	where $\mathcal{S}$ and $\mathcal{T}$ are finite sets of questions for Alice and Bob respectively, and  $\mathcal{A}$ and $\mathcal{B}$ are finite sets of answers for Alice and Bob respectively.
\end{definition}

A nonlocal game $G$ is a cooperative game played by two players, which we typically call Alice and Bob, and a referee. The game is played the following way: Before the game starts, Alice and Bob agree on some strategy, including possibly sharing a quantum state. During the game, the players are not allowed to communicate, but they can perform measurements on their own parts of the shared state. Alice and Bob each receive a question $s$ and $t$ respectively from predetermined sets of questions $\mathcal{S}$ and $\mathcal{T}$, determined by the probability distribution $\pi$. Each of them then gives an answer $a$ and $b$ from their answer sets $\mathcal A$ and $\mathcal B$, respectively. They then win if $\mathcal{V}(a,b|s,t) = 1$. In such games, the behaviour of Alice and Bob are described as \emph{quantum strategies}.

\begin{definition}[Strategy]
	A (tensor-product) quantum strategy for a nonlocal game $G = (\mathcal S,\mathcal T,\mathcal A,\mathcal B, \pi, \mathcal V)$ is a tuple
	\begin{equation}
	    \label{eq:general-strategy-definition}
	    S=(\rho_{AB},\{A_{sa}\}_{s\in\mathcal{S},a\in\mathcal{A}},\{B_{tb}\}_{t\in\mathcal{T},b\in\mathcal{B}}),
	\end{equation}
	consisting of a shared density operator $\rho_{AB}\in B(\mathcal{H}_A\otimes\mathcal{H}_B)$, where $\mathcal{H}_A$ is the state space of Alice and $\mathcal{H}_B$ is the state space of Bob. Furthermore, for each $s\in\mathcal{S}$, the set $\{A_{sa}\}_{a\in\mathcal{A}}\subset B(\mathcal{H}_A)$ is a POVM on $\mathcal{H}_A$, and for each $t\in \mathcal{T}$, the set $\{B_{tb}\}_{b\in\mathcal{B}}\subset B(\mathcal{H}_B)$ is a POVM on $\mathcal{H}_B$.
	We identify the following special cases (which are not mutually exclusive):
	\begin{itemize}
	    \item If $\rho_{AB} = \proj\psi$ for some pure state $\ket\psi \in \mathcal H_A \tensor \mathcal H_B$, we refer to the quantum strategy as \emph{pure}. In this case, we may replace $\rho_{AB}$ with $\ket\psi$ in \eqref{eq:general-strategy-definition}. 
     \item If both marginal states $\rho_{A}:=\tr_B[\rho_{AB}],\rho_{B}:=\tr_A[\rho_{AB}]$ have rank equal to the dimension of corresponding Hilbert space, we may refer to the quantum strategy as \emph{full-rank}. In the case of pure state $\rho_{AB}=\ket{\psi}$, this is equivalent to $\ket{\psi}$ having full Schmidt rank.
	    \item If all POVM elements $A_{sa}$ and $B_{tb}$ are projectors, then we refer to the quantum strategy as \emph{projective}. Otherwise, we call it \emph{non-projective}.
	\end{itemize}
\end{definition}

We will write $\{A_{sa}\}_{s\in\mathcal{S},a\in\mathcal{A}}$ as $\{A_{sa}\}$ when from the context it is clear that the set is indexed over the sets $\mathcal{S}$ and $\mathcal{A}$. We will use analogous notation for Bob's measurements $\{B_{tb}\}$. We will refer to a quantum strategy simply as a strategy in the following.

It is easy to compute the winning probability when using a particular strategy $S$ for a game $G$. Sometimes, it will be useful to collect all the information regarding game $G$ and the employed measurements in a single operator $W$.
\begin{lemma}\label{lem:bck_score}
	Let $G = (\mathcal S,\mathcal T,\mathcal A,\mathcal B, \pi, \mathcal V)$ be a nonlocal game and $S=(\rho_{AB},\{A_{sa}\},\{B_{tb}\})$ a strategy for $G$. Define $W$ as 
	\begin{equation*}
		W := \sum_{a,b,s,t} \pi(s,t)\mathcal{V}(a,b|s,t)(A_{sa}\otimes B_{tb}).
	\end{equation*}
	Then the probability of winning the game $\omega(S,G)$ using the strategy $S$ can be found as
	\begin{equation*}
		\omega(S,G) = \tr(W\rho_{AB}).
	\end{equation*}
\end{lemma}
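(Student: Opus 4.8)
The statement to prove is Lemma~\ref{lem:bck_score}, which is a routine but foundational computation, so the plan is short and direct.

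\medskip

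\textbf{Plan.} The goal is to show $\omega(S,G)=\tr(W\rho_{AB})$ where $W=\sum_{a,b,s,t}\pi(s,t)\mathcal V(a,b|s,t)(A_{sa}\otimes B_{tb})$. First I would recall the definition of the winning probability: playing the strategy $S$, the referee samples $(s,t)$ according to $\pi$, the players produce answers $(a,b)$ with probability $\tr\bigl((A_{sa}\otimes B_{tb})\rho_{AB}\bigr)$ (this is the Born rule for the product POVM $\{A_{sa}\otimes B_{tb}\}_{a,b}$ on $\mathcal H_A\otimes\mathcal H_B$, which is a genuine POVM since $\sum_{a,b}A_{sa}\otimes B_{tb}=(\sum_a A_{sa})\otimes(\sum_b B_{tb})=\mathbbm 1\otimes\mathbbm 1$), and they win iff $\mathcal V(a,b|s,t)=1$. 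Hence by definition
\begin{equation*}
\omega(S,G)=\sum_{s,t}\pi(s,t)\sum_{a,b}\mathcal V(a,b|s,t)\,\tr\bigl((A_{sa}\otimes B_{tb})\rho_{AB}\bigr).
\end{equation*}

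\medskip

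\textbf{Key step.} The remaining work is to pull the sum inside the trace using linearity of the trace: since $\mathcal V(a,b|s,t)\in\{0,1\}$ and $\pi(s,t)$ are scalars,
\begin{equation*}
\omega(S,G)=\tr\!\left(\Bigl(\sum_{a,b,s,t}\pi(s,t)\mathcal V(a,b|s,t)(A_{sa}\otimes B_{tb})\Bigr)\rho_{AB}\right)=\tr(W\rho_{AB}),
\end{equation*}
which is exactly the claim. One should note that all sums are finite (the sets $\mathcal S,\mathcal T,\mathcal A,\mathcal B$ are finite by Definition~\ref{def:bck_nonlocalgame}), so there is no convergence issue and the interchange of $\tr$ and $\sum$ is unconditionally valid.

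\medskip

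\textbf{Main obstacle.} There is essentially no obstacle here; the only thing requiring a word of care is justifying that $\tr\bigl((A_{sa}\otimes B_{tb})\rho_{AB}\bigr)$ is the correct expression for the probability that Alice answers $a$ and Bob answers $b$ on input $(s,t)$ — i.e., that local measurements on a shared state are described by the tensor-product POVM and the Born rule. This is part of the standard tensor-product model of quantum strategies and can be taken as definitional, so the proof is a one-line application of linearity of the trace after unwinding the definition of $\omega(S,G)$.
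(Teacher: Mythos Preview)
Your proof is correct and follows essentially the same approach as the paper: unwind the definition of $\omega(S,G)$ as the weighted sum of probabilities $\tr\bigl((A_{sa}\otimes B_{tb})\rho_{AB}\bigr)$ and then use linearity of the trace to collect everything into $\tr(W\rho_{AB})$. You include a bit more justification (that the tensor-product POVM is a POVM, and that finiteness of the index sets makes the interchange unconditional), but the argument is the same one-liner as in the paper.
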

\begin{proof}
	We show this by rewriting $\omega(S,G)$ using the definition of $W$ and linearity of the trace,
	\begin{align*}
		\omega(S,G) &= \sum_{s,t}\pi(s,t)\sum_{a,b}\mathcal{V}(a,b|s,t) \tr((A_{st}\otimes B_{tb})\rho_{AB})\\
		&= \tr(\left(\sum_{s,t}\sum_{a,b}\pi(s,t)\mathcal{V}(a,b|s,t) (A_{st}\otimes B_{tb})\right)\rho_{AB})\\
		&= \tr(W\rho).
	\end{align*}
\end{proof}
Furthermore, we define $\omega_q(G) := \sup_S \omega(S, G)$, where the supremum is taken over all (tensor-product) strategies $S$ which are compatible with $G$. We refer to $\omega_q(G)$ as the \emph{optimal} or \emph{maximal} quantum value of $G$. We call $S$ an optimal strategy if $\omega(S,G)=\omega_q(G)$. Note that in general, it may not be possible to obtain $\omega_q(G)$ with any tensor-product strategy\cite{SlofstraQcor}. For  $\delta\geq 0$, we say that $S$ is $\delta$-optimal if $\omega(S,G) \geq \omega_q(G)-\delta$.

\subsection{Self-testing: Definitions}

The main topic of this work, \emph{self-testing}, asks whether the state and the measurements used in an optimal quantum strategy for a nonlocal game are unique, up to local isometries. The reason this is useful is that it enables us to make conclusions about the strategy only from the observed probability of winning. Consider a situation where we make Alice and Bob play a nonlocal game that self-tests an optimal strategy. If in this setting we observe that Alice and Bob achieve the quantum value of the game, we can guarantee that Alice and Bob must have used a strategy that is equivalent to the optimal one up to local isometries. We will also discuss the concept of \emph{robust self-testing}. Here, we ask that the states and measurements of an almost optimal strategy are close to those of a (canonical) optimal strategy, up to local isometries. 

In the literature, strategies are often assumed to use pure states and projective measurements, and so common formulations of self-testing reflect this fact. We will be working with a definition of self-testing that is very similar to the one presented by \cite[Definition 2]{SB}, though augmented with additional qualifiers that are relevant to the presentation of our results.

To start discussing the concept of (robust) self-testing, we first introduce the concept of local $\varepsilon$-dilations. 

\begin{definition}[Local $\varepsilon$-dilation]
Given two strategies
    \begin{align*}
        S&=(\rho_{AB}\in B({\mathcal H}_A \tensor {\mathcal H}_B),\{ A_{sa}\}_{s\in\mathcal{S},a\in\mathcal{A}},\{ B_{tb}\}_{t\in\mathcal{T},b\in\mathcal{B}}) 
        \text{ and}
        \\
        \tilde S&=(\ket{\tilde \psi} \in {\mathcal H}_{\tilde{A}} \tensor {\mathcal H}_{\tilde{B}},\{\tilde A_{sa}\}_{s\in\mathcal{S},a\in\mathcal{A}},\{\tilde B_{tb}\}_{t\in\mathcal{T},b\in\mathcal{B}})
    \end{align*}
we say that \emph{$\tilde S$ is a local $\varepsilon$-dilation of $S$} and write $S\xhookrightarrow{\varepsilon} \tilde S$ if for any purification $\ket{\psi} \in \mathcal{H}_A \tensor \mathcal{H}_B \tensor \mathcal{H}_P$ of $\rho_{AB}$ there exist spaces $\mathcal H_{\hat A},\mathcal H_{\hat B},$ a local isometry $U = U_A \tensor U_B$, with $U_A : \mathcal H_A \to \mathcal H_{\tilde A} \tensor \mathcal H_{\hat A}$, $U_B : \mathcal H_B \to \mathcal H_{\tilde B} \tensor \mathcal H_{\hat B}$ and a state $\ket{\aux} \in \mathcal H_{\hat A}\tensor\mathcal H_{\hat B}\tensor \mathcal H_{P}$ such that for all $s,t,a,b$ we have
    \begin{align}
             \|(U \tensor \mathbbm 1_P)\ket{\psi} - \ket{\tilde\psi} \tensor \ket{\aux}\|\leq \varepsilon,\nonumber\\
         \|(U \tensor \mathbbm 1_P) (A_{sa} \tensor \mathbbm{1}_B \tensor\mathbbm 1_P )\ket{\psi} - (\tilde A_{sa} \tensor \mathbbm 1_{\tilde B})\ket{\tilde\psi} \tensor \ket{\aux}\|\leq \varepsilon,\label{eq:localdilation}\\
         \|(U \tensor \mathbbm 1_P) (\mathbbm 1_A \tensor B_{tb} \tensor\mathbbm 1_P )\ket{\psi} - (\mathbbm 1_{\tilde A} \tensor \tilde B_{tb})\ket{\tilde\psi} \tensor \ket{\aux}\|\leq \varepsilon.\nonumber
    \end{align}  

In case we want to name the local isometry and the auxiliary state, we write $S \xhookrightarrow[U,\ket{\aux}]{\varepsilon}\tilde{S}$. We will use this notation only when $\rho_{AB}$ is pure to avoid ambiguity.
\label{def:localdilation}
\end{definition}

\begin{remark}~
    \begin{itemize}
        \item Note that local dilations are transitive. That is if $S_X\xhookrightarrow{\varepsilon_1} S_Y$ and $S_Y \xhookrightarrow{\varepsilon_2} S_Z$, then $S_X\xhookrightarrow{\varepsilon_1+\epsilon_2}S_Z$, see \cite[Lemma 4.7]{MPS}.
        \item If the state $\rho_{AB}=\proj{\psi}$ in strategy $S$ is pure, we do not need to concern ourselves with purifications of $\rho_{AB}$ in the above definition. That is, the auxiliary state $\ket{\aux}\in\mathcal{H}_{\hat{A}}\otimes \mathcal{H}_{\hat{B}}$, and Eq. \eqref{eq:localdilation} becomes 
        \begin{align*}
             \|U\ket{\psi} - \ket{\tilde\psi} \tensor \ket{\aux}\|\leq \varepsilon,\\
         \|U (A_{sa} \tensor \mathbbm{1}_B)\ket{\psi} - (\tilde A_{sa} \tensor \mathbbm 1_{\tilde B})\ket{\tilde\psi} \tensor \ket{\aux}\|\leq \varepsilon,\\
         \|U(\mathbbm 1_A \tensor B_{tb} )\ket{\psi} - (\mathbbm 1_{\tilde A} \tensor \tilde B_{tb})\ket{\tilde\psi} \tensor \ket{\aux}\|\leq \varepsilon.
    \end{align*}  
        \item  If $\varepsilon=0$ holds, we say that $\tilde S$ is a local dilation of $S$ and write $S \xhookrightarrow{} \tilde{S}$. For pure states, this is equivalent to finding a local isometry $U = U_A \tensor U_B$ such that
\[
U (A_{sa} \tensor B_{tb})\ket{\psi} = (\tilde A_{sa} \tensor \tilde B_{tb})\ket{\tilde\psi} \tensor \ket{\aux}
\]
holds for all $a,b,s,t$.
    \end{itemize}
\end{remark}

 Intuitively, self-testing allows us to say that \emph{any} optimal strategy  $S$ for a game $G$ can be mapped to a chosen canonical strategy $\tilde S$. In practice, however, when proving self-testing theorems, authors often impose different restrictions on the set of considered strategies $S$. Three most common types of assumptions restricting the strategy, $S$, implemented by the untrusted black-box quantum device are as follows:
 \begin{enumerate}
 \item the state in $S$ is pure rather than mixed,
 \item the state in $S$ is full-rank.
 \item the measurements in $S$ are projective rather than general POVMs,
 \end{enumerate}
 The above assumptions give rise to a  priori different definitions of self-testing. A \emph{$t$-strategy} for $t\subseteq {\{\text{pure},\text{full-rank},\text{PVM}\}}$
 is a strategy for the game, where the states and measurements are restricted according to~$t$. For example, a pure PVM strategy has a pure state and projective measurements, while the rank of the state can be arbitrary. An assumption-free strategy will usually just be called a strategy. 

\begin{definition}[Self-testing]\label{def:selftest}
   Let $\tilde{S}$ be a pure strategy and 
   $t\subseteq {\{\text{pure},\text{full-rank},\text{PVM}\}}$. We say that a nonlocal game $G$ is a \emph{$t$-self-test} for a (reference) strategy $\tilde{S}$ if  $S \xhookrightarrow{} \tilde{S}$ for every optimal $t$-strategy $S$ for game $G$.
\label{def:selftestDil}
\end{definition}

All those different definitions also have a robust version, defined as follows. 

\begin{definition}[Robust self-testing]\label{def:robustselftest}
    Let $\tilde{S}$ be a pure strategy, $t\subseteq {\{\text{pure},\text{full-rank},\text{PVM}\}}$. We say that a nonlocal game $G$ is a \emph{robust $t$-self-test} for a (reference) strategy $\tilde{S}$ if from every $\varepsilon \geq 0$, there exists $\delta\geq 0$ such that $S \xhookrightarrow{\varepsilon} \tilde{S}$ for every $\delta$-optimal $t$-strategy $S$ for a game $G$.
\end{definition}

Many of the existing robust self-testing results specify an explicit dependence between $\varepsilon$ and $\delta$. Our results also apply to this case, and we remark how the theorem statements should be altered in case one wishes to apply them to lift assumptions for a robust self-test with an explicit $(\varepsilon,\delta)$-dependence.

It is clear that every $t$-self-test is also a $t'$-self-test if $t'$ imposes more restrictions on the strategy than $t$. For example, every PVM self-test is also a pure PVM self-test.  We will refer to an assumption free self-test just as self-test. Note that in the literature, the term ``self-test'' is often used for a prior weaker form of self-testing, such as a pure PVM self-test. In this paper, we refer to an (assumption-free) self-test if our arbitrary strategies are allowed to have mixed states of any rank and POVM measurements. 
	\section{Key tools and concepts}\label{sec:lemmas}

In this section, we introduce support-preserving strategies as a fundamental concept crucial to the proofs of several key results. Furthermore, we examine the formalization of projectiveness, providing a comprehensive framework for the Naimark dilation of non-local strategies and presenting important properties associated with it. The interaction of the concepts introduced in this section can be visualized, see Fig. \ref{fig:visualizingConcepts}.

\begin{figure}[ht]
\begin{pspicture}(15,6.5)

\rput[c](2,6){\ovalnode{A}{$\varepsilon$-support-preserving}}
\rput[c](12,6){\ovalnode{B}{$\varepsilon$-projective}}

\rput[c](2,2){\ovalnode{D}{restriction}}
\rput[c](12,2){\ovalnode{E}{Naimark dilation}}

\rput[c](2,0.5){\rnode{F}{\psframebox{projective restriction \ref{thm:resProj}}}}
\rput[c](12,0.5){\rnode{G}{\psframebox{support-preserving Naimark \ref{thm:NaiSupp}}}}

\ncline[nodesep=3pt,offset=2pt]{->}{D}{A}
\ncline[nodesep=3pt,offset=2pt]{->}{A}{D}
\nbput[nrot=:D,labelsep=.25]{$\varepsilon$-dilation \ref{prop:restrictrob}}

\ncline[nodesep=3pt,offset=2pt]{->}{B}{E}
\ncline[nodesep=3pt,offset=2pt]{->}{E}{B}
\nbput[nrot=:D,labelsep=.25]{$\varepsilon$-dilation \ref{prop:Naimarkrob}}

\ncline[nodesep=3pt,linestyle=dashed]{->}{D}{F}
\ncline[nodesep=3pt,linestyle=dashed]{->}{B}{F}
\ncline[nodesep=3pt,linestyle=dashed]{->}{E}{G}
\ncline[nodesep=3pt,linestyle=dashed]{->}{A}{G}

\rput[c](7,4){\dianode[fillstyle=solid,fillcolor=white]{C}{local dilation}}

\ncline[nodesep=3pt]{-}{C}{A}
\nbput[nrot=:D]{invariant \ref{prop:invariantSupp}}

\ncline[nodesep=3pt]{-}{B}{C}
\nbput[nrot=:D]{invariant \ref{prop:invariantProj}}

\end{pspicture}
\caption{Local dilation ``$\xhookrightarrow{}$'' is the central concept of self-testing. We introduce the idea of support-preservingness and projectiveness for strategies, which are invariant under local dilations. There are two canonical ways of obtaining a support-preserving strategy and a projective strategy, respectively: restriction and Naimark dilation. If a strategy $S$ is support-preserving/projective, then we can locally dilate $S$ to its restriction/Naimark dilation and vice versa. Finally, if the restriction of $S$ is projective, or if its Naimark dilation is support-preserving, then $S$ must be both projective and support-preserving.}
\label{fig:visualizingConcepts}
\end{figure}
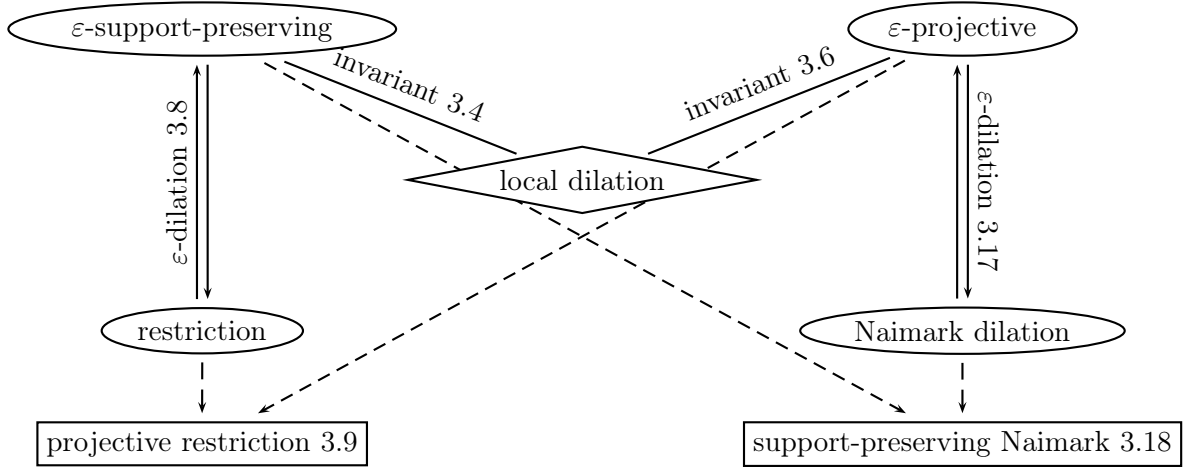

\subsection{Facts about local dilations}

We noted in Section 2 that the local dilation is transitive, so it is a pre-order on the set of strategies. In general, local dilation is not an equivalence relation, because if we let $S'$ to be $S$ attached with an {entangled} auxiliary state, then $S'\xhookrightarrow{}S$ but not the other direction. Nevertheless, we can show that if the auxiliary state in a local dilation is separable, then the two strategies are ``equivalent'' in the sense of local dilation:

\begin{proposition}
    If a strategy $S_n$ is a $\varepsilon$-local dilation of a strategy $S_m$ with a separable ancilla:
    $$
    S_m\xhookrightarrow[V_A\otimes V_B,\ket{0}_{A'}\otimes\ket{0}_{B'}]{\varepsilon}S_n,
    $$
    then $S_m$ is also a $\varepsilon$-local dilation of $S_n$ (for some separable ancilla).
    \label{prop:otherdirection}
\end{proposition}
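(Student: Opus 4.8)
The statement says: if $S_m \xhookrightarrow[V_A\otimes V_B,\;\ket 0_{A'}\otimes\ket 0_{B'}]{\varepsilon} S_n$ with a \emph{separable} (indeed product) ancilla, then also $S_n \xhookrightarrow{\varepsilon} S_m$ for some separable ancilla. The idea is that since $V_A$, $V_B$ are isometries and the ancilla is in a fixed product pure state, the map $W_A := V_A(\cdot)$ from $\mathcal H_A$ onto $V_A\mathcal H_A \subseteq \mathcal H_{\tilde A}\otimes\mathcal H_{\hat A}$ is injective with a partial inverse, and one should be able to "run the dilation backwards." The first thing I would do is reduce to the case where the state in $S_m$ is pure. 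If $\rho_{AB}$ is mixed, pick a purification $\ket\psi \in \mathcal H_A\otimes\mathcal H_B\otimes\mathcal H_P$; Definition~\ref{def:localdilation} gives us a local isometry and an auxiliary state $\ket{\aux}\in\mathcal H_{\hat A}\otimes\mathcal H_{\hat B}\otimes\mathcal H_P$ — but wait, the hypothesis forces the ancilla to be the product state $\ket0_{A'}\otimes\ket 0_{B'}$, so in particular there can be no $\mathcal H_P$ factor entangled in; this already tells us $\rho_{AB}$'s purification decouples, i.e. effectively $S_m$ is (equivalent to) a pure strategy. So after this reduction I can assume $S_m = (\ket\psi_{AB}, \{A_{sa}\}, \{B_{tb}\})$ is pure and we have a genuine honest local isometry identity up to $\varepsilon$.

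The core construction: let $P_A$ be the orthogonal projection onto $V_A(\mathcal H_A\otimes\ket0_{A'}) \subseteq \mathcal H_{\tilde A}\otimes \mathcal H_{\hat A}$, and similarly $P_B$ on Bob's side. Since $V_A$ is an isometry, $V_A^\dagger$ restricted to its range is a genuine inverse: $V_A^\dagger V_A = \mathbbm 1_{\mathcal H_A}$. I want to define a local isometry $U = U_A\otimes U_B$ going the other way, roughly $U_{\tilde A}:\mathcal H_{\tilde A}\to \mathcal H_A\otimes(\text{ancilla})$ built from $V_A^\dagger$ together with something that handles the orthogonal complement of the range. Concretely, pad: choose $\mathcal H_{\hat A}'$ large enough and set $U_A = V_A^\dagger \oplus (\text{isometry of } P_A^\perp \text{ part into ancilla})$, tensored with a fixed $\ket0$ so that it is an isometry $\mathcal H_{\tilde A}\otimes\mathcal H_{\hat A}\to \mathcal H_A\otimes\mathcal H_{\hat A}'$. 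Applying $U_A\otimes U_B$ to $\ket{\tilde\psi}\otimes\ket{\aux}$: by hypothesis $\ket{\tilde\psi}\otimes\ket{\aux} \approx_\varepsilon (V_A\otimes V_B)(\ket\psi\otimes\ket0_{A'}\otimes\ket0_{B'})$, which lies (exactly) in the range $P_A\otimes P_B$, so $(U_A\otimes U_B)$ sends it to within $\varepsilon$ of $(V_A^\dagger\otimes V_B^\dagger)(V_A\otimes V_B)(\ket\psi\otimes\ket 0\otimes\ket 0)\otimes(\text{junk}) = \ket\psi\otimes\ket{\aux'}$ with $\ket{\aux'}$ a product state (the fixed $\ket0$'s). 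The same computation with a measurement operator inserted works because the measurement operators of $S_m$ (which is the smaller/"inner" strategy when we reverse) are now the \emph{tilde} operators $\tilde A_{sa}$... here I have to be careful about which direction is which: in $S_m\xhookrightarrow{} S_n$, $\tilde S = S_n$ is the reference, so the three inequalities in Definition~\ref{def:localdilation} relate $V(A_{sa}^{(m)}\otimes\mathbbm1)\ket\psi$ to $(A^{(n)}_{sa}\otimes\mathbbm1)\ket{\tilde\psi}\otimes\ket{\aux}$. Reversing, I apply $U=U_A\otimes U_B$ to $(A^{(n)}_{sa}\otimes\mathbbm 1)\ket{\tilde\psi}\otimes\ket{\aux}$; using that this vector is $\varepsilon$-close to $(V_A\otimes V_B)(A^{(m)}_{sa}\otimes\mathbbm1)(\ket\psi\otimes\ket0\otimes\ket0)$, which is in the range of $P_A\otimes P_B$, and that $U_AU_B$ acts as the inverse there, I land within $\varepsilon$ of $(A^{(m)}_{sa}\otimes\mathbbm1)\ket\psi\otimes\ket{\aux'}$. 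That is exactly the second displayed inequality of Definition~\ref{def:localdilation} for $S_n\xhookrightarrow{\varepsilon} S_m$, and symmetrically for Bob.

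The step I expect to be the main obstacle — and the one needing the most care — is the treatment of the \emph{orthogonal complement}: $U_A$ must be a genuine isometry on all of $\mathcal H_{\tilde A}\otimes\mathcal H_{\hat A}$, not just on the range of $V_A$, so I need to map $P_A^\perp$ somewhere, and I must check that this "garbage" never interferes because every vector I feed in ($\ket{\tilde\psi}\otimes\ket{\aux}$ and its $\varepsilon$-perturbations of range vectors, and $(A^{(n)}_{sa}\otimes\mathbbm1)\ket{\tilde\psi}\otimes\ket{\aux}$) is $\varepsilon$-close to the range $P_A\otimes P_B$ — so all the $\varepsilon$'s propagate additively with the right constant (isometries preserve norm, so no blow-up). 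I should also double-check that the partial-trace/purification bookkeeping in the very first reduction really does let me discard $\mathcal H_P$: the point is just that a product ancilla over $\hat A\hat B\otimes P$ with the $P$-part fixed means $\rho_{AB}$'s purifying system is carried entirely inside $\ket\psi$ itself, so no generality is lost — but I'd write this carefully since Definition~\ref{def:localdilation}'s quantifier "for any purification" interacts with it. Modulo these bookkeeping points, the proof is a clean "invert the isometry" argument and the final auxiliary state on the $S_n\to S_m$ side is manifestly a product (hence separable) state.
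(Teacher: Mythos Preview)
Your approach is correct and is essentially the same construction as the paper's, just phrased abstractly rather than in explicit matrix-dimension notation. The paper writes $V_A = U_A\,\mathbbm 1_{n'_A\times m}$ for a unitary $U_A$ and then takes $V'_A := \mathbbm 1_{n''\times n'_A}\,U_A^*\,\mathbbm 1_{n'_A\times n}$ as the reverse isometry (padding to a dimension $n''$ divisible by $m$ so that the codomain factors as $\mathcal H_A\otimes\mathcal H_{A''}$); your ``$V_A^\dagger$ on the range, arbitrary isometry on $P_A^\perp$, then tensor with $\ket 0$'' is exactly the same map described coordinate-free.

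Two small points to tighten. First, the isometry in $S_n\xhookrightarrow{}S_m$ must have domain $\mathcal H_{\tilde A}$, not $\mathcal H_{\tilde A}\otimes\mathcal H_{\hat A}$; your $U_A$ has the latter domain, so the actual local isometry is $W_A(v):=U_A(v\otimes\ket 0_{A'})$ --- you use this implicitly when you feed in $\ket{\tilde\psi}\otimes\ket{\aux}$, but it should be said explicitly. Second, your ``reduction to pure $S_m$'' is not a deduction but a reading of convention: the paper declares that the notation $S\xhookrightarrow[U,\ket\aux]{}\tilde S$ with a named isometry and ancilla is only used when the source state is pure, so there is nothing to reduce and no $\mathcal H_P$ to worry about. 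Once those are cleaned up, the $\varepsilon$-bookkeeping goes through with no loss, exactly as you say: isometries preserve norm, every vector you apply $U_A\otimes U_B$ to is $\varepsilon$-close to the range of $V_A\otimes V_B$, and on that range $U_A\otimes U_B$ acts as $(\cdot)\mapsto V_A^*\otimes V_B^*(\cdot)\otimes\ket 0\ket 0$.
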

\begin{proof}
    Without loss of generality, assume that $S_n$ has local dimension $n$, and $S_m$ has local dimension $m$. Since $S_m\xhookrightarrow[V_A\otimes V_B,\ket{0}_{A'}\otimes\ket{0}_{B'}]{\varepsilon}S_n$, then
    \begin{align*}
        (V_A\otimes V_B)(A_{sa}^{(m)}\otimes B_{tb}^{(m)})\ket{\psi_m}\approx_\varepsilon&(\ket{0}_{A'}\ket{0}_{B'})\otimes(A_{sa}^{(n)}\otimes B_{tb}^{(n)})\ket{\psi_n}\\
        =&(\mathbbm{1}_{n'_A\times n}\otimes \mathbbm{1}_{n'_B\times n})(A_{sa}^{(n)}\otimes B_{tb}^{(n)})\ket{\psi_n}
    \end{align*}
    where $n'_A=n\times\dim\mathcal{H}_{A'}$, $n'_B=n\times\dim\mathcal{H}_{B'}$, and $\mathbbm{1}_{x\times y}$ ($y\le x$) denotes the first $y$ columns of the $x\times x$ identity matrix, which is an isometry. 
    
    Express $V_A,V_B$ as $V_A=U_A\mathbbm{1}_{n'_A\times m},V_B=U_B\mathbbm{1}_{n'_B\times m}$, where $U_A,U_B$ are unitaries. Then 
    \begin{align*}
    &(\mathbbm{1}_{n'_A\times m}\otimes \mathbbm{1}_{n'_B\times m})(A_{sa}^{(m)}\otimes B_{tb}^{(m)})\ket{\psi_m}\\
    \approx_\varepsilon&(U_A^*\mathbbm{1}_{n'_A\times n}\otimes U_B^*\mathbbm{1}_{n'_B\times n})(A_{sa}^{(n)}\otimes B_{tb}^{(n)})\ket{\psi_n}.
    \end{align*}
    Take the smallest (or any) $n''\ge\max\{n'_A,n'_B\}$ such that $n''$ is a multiple of $m$. Then
    \begin{align*}
    &(\mathbbm{1}_{n''\times n'_A}U_A^*\mathbbm{1}_{n'_A\times n}\otimes \mathbbm{1}_{n''\times n'_B}U_B^*\mathbbm{1}_{n'_B\times n})(A_{sa}^{(n)}\otimes B_{tb}^{(n)})\ket{\psi_n}\\
    \approx_\varepsilon&(\mathbbm{1}_{n''\times m}\otimes \mathbbm{1}_{n''\times m})(A_{sa}^{(m)}\otimes B_{tb}^{(m)})\ket{\psi_m}\\
    =&(\ket{0}_{A''}\ket{0}_{B''})\otimes (A_{sa}^{(m)}\otimes B_{tb}^{(m)})\ket{\psi_m},
    \end{align*}
    where $\ket{0}_{A''}\in\mathcal{H}_{A''}\cong\mathbb{C}^{n''/m}$, $\ket{0}_{B''}\in\mathcal{H}_{B''}\cong\mathbb{C}^{n''/m}$. It is clear that both $V_A':=\mathbbm{1}_{n''\times n'_A}U_A^*\mathbbm{1}_{n'_A\times n}$ and $V_B':=\mathbbm{1}_{n''\times n'_B}U_B^*\mathbbm{1}_{n'_B\times n}$ are isometries. So $S_n\xhookrightarrow[V_A'\otimes V_B',\ket{0}_{A''}\otimes\ket{0}_{B''}]{\epsilon}S_m$.
\end{proof}

\subsection{Nearly support-preserving strategies}

We introduce the idea of \emph{support-preserving strategies} \cite{David2021}. Given a pure strategy $S=(\ket{\psi},\allowbreak\{A_{sa}\},\{B_{tb}\})$, in the case where $\ket{\psi}$ is not full-rank, the support of $\ket{\psi}$ may or may not be an invariant subspace of the measurement operators. In a \emph{support-preserving} strategy, the measurement operators map the state still inside the support of the state. That is, a quantum strategy $S=(\ket{\psi},\{A_{sa}\},\{B_{tb}\})$ is called \emph{support-preserving} if
\begin{equation*}
        \Supp_A\left((A_{sa} \tensor \mathbbm 1_B)\ket\psi\right) \subseteq \Supp_A(\ket\psi),~
        \Supp_B\left((\mathbbm 1_A\tensor B_{tb}) \ket\psi\right) \subseteq \Supp_B(\ket\psi),
\end{equation*}
holds for all $s,a,t,b$. Alternatively, one also can think of it as the measurement operators being block-diagonal in the Schmidt basis of the state, as what the authors of \cite{paddock2023operatoralgebraic} independently defined therein, which they refer to as ``centrally-supported''. It is given by the following condition:
\begin{equation*}
    [A_{sa},\Pi_A]=[B_{tb},\Pi_B]=0,
\end{equation*}
where $\Pi_A$ and $\Pi_B$ is the projection onto $\Supp_A(\ket\psi)$ and $\Supp_B(\ket\psi)$, respectively. As the latter form is easier to be generalised in the case of robust self-testing, we adopt it to the following definition of \emph{nearly support-preserving} strategies.

\begin{definition}[nearly support-preserving]
    Let $\epsilon\ge0$. A pure strategy $S=(\ket{\psi} \in \mathcal{H}_{A}\otimes \mathcal{H}_B,\{A_{sa}\},\{B_{tb}\})$ is called \emph{$\varepsilon$-support-preserving} if
    \begin{align*}
        \|[\Pi_A,A_{sa}]\|_{\sigma_A}\le\varepsilon,~\|[\Pi_B,B_{tb}]\|_{\sigma_B}\le\varepsilon,
    \end{align*}
    hold for all $s,t,a,b$, where $\Pi_A$ is the projection onto $\Supp_A(\ket\psi)$ (likewise for $\Pi_B$ on Bob), $\sigma_A=\tr_B[\proj{\psi}]$ is the density matrix on Alice (likewise for $\sigma_B$ on Bob), and the state dependent norm is defined as $\|X\|_{\sigma}:=\sqrt{\tr[X^*X\sigma]}$. If further $\varepsilon=0$, $S$ is called \emph{support-preserving} for simplicity.
\end{definition}

Note that $\|[\Pi_A,A_{sa}]\|_{\sigma_A}=\|[\Pi_A,A_{sa}]\otimes \mathbbm{1}_B\ket{\psi}\|=\sqrt{\braket{\psi|(A_{sa}^2-A_{sa}\Pi_AA_{sa})\otimes\mathbbm{1}_B|\psi}}$. This identity is useful in the calculation. Also note that all full-rank strategies are support-preserving by definition.

We will show that  support-preservingness is an invariant property under local dilation. That is, if $S\xhookrightarrow{}\tilde S$, then $S$ is support-preserving if and only if $\tilde S$ is. So this characteristic would not change as we move along `$\xhookrightarrow{}$'. To prove this, the following characterization of near support-preservingness, inspired by \cite[Lemma 4.3]{paddock2023operatoralgebraic}, is needed.

\begin{lemma}
    Let $S=(\ket{\psi},\{A_{sa}\},\{B_{tb}\})$ be a pure strategy. 
    \begin{enumerate}[(a)]
        \item If $S$ is $\varepsilon$-support-preserving, then there exist operators $\hat{A}_{sa}\in\mathcal{H}_B, \hat{B}_{tb}\in\mathcal{H}_A$ such that $A_{sa}\otimes\mathbbm{I}_B\ket{\psi}\approx_\varepsilon\mathbbm{I}_A\otimes\hat{A}_{sa}\ket{\psi}$ and $\mathbbm{I}_A\otimes{B}_{tb}\ket{\psi}\approx_\varepsilon \hat{B}_{tb}\otimes\mathbbm{I}_B\ket{\psi}$ for all $a,s,t,b$.
        \item If there exist operators $\hat{A}_{sa}\in\mathcal{H}_B, \hat{B}_{tb}\in\mathcal{H}_A$ such that $A_{sa}\otimes\mathbbm{I}_B\ket{\psi}\approx_\varepsilon\mathbbm{I}_A\otimes\hat{A}_{sa}\ket{\psi}$ and $\mathbbm{I}_A\otimes{B}_{tb}\ket{\psi}\approx_\varepsilon \hat{B}_{tb}\otimes\mathbbm{I}_B\ket{\psi}$ for all $a,s,t,b$, then $S$ is $2\varepsilon$-support-preserving.
    \end{enumerate}
\label{lem:switchrob}
\end{lemma}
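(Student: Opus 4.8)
The plan is to treat the two parts separately, and in each case to reduce everything to the identity $\|[\Pi_A,A_{sa}]\|_{\sigma_A}=\|((\mathbbm{1}_A-\Pi_A)A_{sa}\otimes\mathbbm{1}_B)\ket{\psi}\|$ recorded just after the definition of nearly support-preserving strategies, together with the trivial observation that $(\mathbbm{1}_A-\Pi_A)\otimes\mathbbm{1}_B$ annihilates $\ket{\psi}$ (since $(\Pi_A\otimes\mathbbm{1}_B)\ket{\psi}=\ket{\psi}$). By the symmetry between Alice and Bob it is enough to construct $\hat A_{sa}$ in (a), and to bound $\|[\Pi_A,A_{sa}]\|_{\sigma_A}$ in (b); the statements for $\hat B_{tb}$, resp.\ for $\|[\Pi_B,B_{tb}]\|_{\sigma_B}$, are obtained the same way with the two factors interchanged.

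First I would record the one genuinely structural ingredient, a standard ``mirror operator'' fact for a pure state of Schmidt rank $k$: the linear map $Y\mapsto(\mathbbm{1}_A\otimes Y)\ket{\psi}$ from $B(\Supp_B\ket{\psi})$ into $\Supp_A\ket{\psi}\otimes\Supp_B\ket{\psi}$ is a bijection. Injectivity is immediate from the Schmidt decomposition $\ket{\psi}=\sum_i\alpha_i\ket{e_i}\ket{f_i}$ with $\alpha_i>0$ (if $(\mathbbm{1}_A\otimes Y)\ket{\psi}=0$, then $Y\ket{f_i}=0$ for all $i$, so $Y$ vanishes on $\Supp_B\ket{\psi}$), and surjectivity then follows by comparing dimensions ($k^2=k^2$). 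The analogous statement holds with $A$ and $B$ swapped.

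For part (a): the vector $(\Pi_A A_{sa}\Pi_A\otimes\mathbbm{1}_B)\ket{\psi}$ lies in $\Supp_A\ket{\psi}\otimes\Supp_B\ket{\psi}$, so the mirror fact produces an operator $\hat A_{sa}\in B(\mathcal{H}_B)$ (extended by zero off $\Supp_B\ket{\psi}$) with $(\mathbbm{1}_A\otimes\hat A_{sa})\ket{\psi}=(\Pi_A A_{sa}\Pi_A\otimes\mathbbm{1}_B)\ket{\psi}$, and using $(\Pi_A\otimes\mathbbm{1}_B)\ket{\psi}=\ket{\psi}$ this equals $(\Pi_A A_{sa}\otimes\mathbbm{1}_B)\ket{\psi}$. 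Hence
\[
\big\|(A_{sa}\otimes\mathbbm{1}_B)\ket{\psi}-(\mathbbm{1}_A\otimes\hat A_{sa})\ket{\psi}\big\|=\big\|((\mathbbm{1}_A-\Pi_A)A_{sa}\otimes\mathbbm{1}_B)\ket{\psi}\big\|=\|[\Pi_A,A_{sa}]\|_{\sigma_A}\le\varepsilon ,
\]
and $\hat B_{tb}$ is built symmetrically, applying the mirror map on the other factor to $(\mathbbm{1}_A\otimes\Pi_B B_{tb}\Pi_B)\ket{\psi}$. This finishes (a).

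For part (b): set $P:=(\mathbbm{1}_A-\Pi_A)\otimes\mathbbm{1}_B$, a projection with $\|P\|\le1$ that kills $\ket{\psi}$ and commutes with $\mathbbm{1}_A\otimes\hat A_{sa}$, so $P(\mathbbm{1}_A\otimes\hat A_{sa})\ket{\psi}=(\mathbbm{1}_A\otimes\hat A_{sa})P\ket{\psi}=0$. Applying $P$ to the hypothesis $(A_{sa}\otimes\mathbbm{1}_B)\ket{\psi}\approx_{\varepsilon}(\mathbbm{1}_A\otimes\hat A_{sa})\ket{\psi}$ then gives
\[
\|[\Pi_A,A_{sa}]\|_{\sigma_A}=\big\|P(A_{sa}\otimes\mathbbm{1}_B)\ket{\psi}\big\|=\big\|P\big((A_{sa}\otimes\mathbbm{1}_B)\ket{\psi}-(\mathbbm{1}_A\otimes\hat A_{sa})\ket{\psi}\big)\big\|\le\varepsilon\le2\varepsilon ,
\]
and the same argument with $\mathbbm{1}_A\otimes(\mathbbm{1}_B-\Pi_B)$ and the hypothesis on $\hat B_{tb}$ yields $\|[\Pi_B,B_{tb}]\|_{\sigma_B}\le2\varepsilon$, so $S$ is $2\varepsilon$-support-preserving (the argument in fact gives $\varepsilon$). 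I expect the only real obstacle to be the mirror-operator fact used in part (a) — namely that every vector of $\Supp_A\ket{\psi}\otimes\Supp_B\ket{\psi}$ has the form $(\mathbbm{1}_A\otimes Y)\ket{\psi}$; once that is in hand, both parts are short computations with the state-dependent norm.
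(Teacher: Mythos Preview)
Your argument is correct. Part (a) is essentially the paper's proof: both construct an operator $\hat A_{sa}$ on $\mathcal H_B$ satisfying $(\mathbbm{1}_A\otimes\hat A_{sa})\ket{\psi}=(\Pi_A A_{sa}\otimes\mathbbm{1}_B)\ket{\psi}$ and then identify the difference with $\|[\Pi_A,A_{sa}]\|_{\sigma_A}$; the only distinction is that the paper writes $\hat A_{sa}$ explicitly as $\lambda_{A\to B}A_{sa}^\intercal\lambda_{B\to A}^{-1}$ via the Schmidt data, whereas you invoke surjectivity of $Y\mapsto(\mathbbm{1}_A\otimes Y)\ket{\psi}$ by dimension counting. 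Part (b) is where the two genuinely diverge: the paper runs a two-step triangle inequality $\Pi_A A_{sa}\ket{\psi}\approx_\varepsilon(\mathbbm{1}\otimes\hat A_{sa})\ket{\psi}\approx_\varepsilon A_{sa}\Pi_A\ket{\psi}$ to get $2\varepsilon$, while you apply the contraction $(\mathbbm{1}_A-\Pi_A)\otimes\mathbbm{1}_B$ directly to the hypothesis and use that it annihilates both $\ket{\psi}$ and $(\mathbbm{1}_A\otimes\hat A_{sa})\ket{\psi}$, obtaining the sharper bound $\varepsilon$. Your route is shorter and yields a better constant; the paper's has the minor advantage of exhibiting the mirror operator in closed form.
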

    
\begin{proof}
To prove (a), consider the Schmidt decomposition of the state
    $$\ket{\psi}=\sum_i\lambda_i\ket{e_i}\ket{f_i},\lambda_i>0.
    $$
    Define operators
    \begin{align*}
        &\lambda_{A\to B}:=\sum_{i}\lambda_i\ket{f_i}\!\bra{e_i},\\
        &\lambda^{-1}_{A\to B}:=\sum_{i}\lambda^{-1}_i\ket{f_i}\!\bra{e_i},\\
        &\lambda_{B\to A}:=\sum_{i}\lambda_i\ket{e_i}\!\bra{f_i}=(\lambda_{A\to B})^*,\\
        &\lambda^{-1}_{B\to A}:=\sum_{i}\lambda^{-1}_i\ket{e_i}\!\bra{f_i}=(\lambda^{-1}_{A\to B})^*,
    \end{align*}
    and let $\hat{A}_{sa}:=\lambda_{A\to B}A_{sa}^\intercal\lambda^{-1}_{B\to A}\in\mathcal{H}_B, \hat{B}_{tb}:=\lambda_{B\to A}B_{tb}^\intercal\lambda^{-1}_{A\to B}\in\mathcal{H}_A$, where the transpose are with respect to the bases $\{\ket{e_i}_A\},\{\ket{f_i}_B\}$, respectively. Then
    \begin{align*}
        \mathbbm{I}_A\otimes\hat{A}_{sa}\ket{\psi}=&\mathbbm{1}_A\otimes\lambda_{A\to B}A^\intercal_{sa}\sum_i\ket{e_i}\ket{e_i}\\
        =&\sum_{i,j}\lambda_j\braket{e_j|A_{sa}^\intercal|e_i}\ket{e_i}\ket{f_j}\\
        =&\sum_{i,j}\lambda_j\braket{e_i|A_{sa}|e_j}\ket{e_i}\ket{f_j}\\
        =&\Pi_AA_{sa}\otimes\mathbbm{1}_B\ket{\psi}.
    \end{align*}
    In the last equation we use the identity $\Pi_A=\sum_i\proj{e_i}$.
    So 
    \begin{align*}
        &\|A_{sa}\otimes\mathbbm{I}_B\ket{\psi}-\mathbbm{I}_A\otimes\hat{A}_{sa}\ket{\psi}\|\\
        =&\|A_{sa}\Pi_A\otimes\mathbbm{I}_B\ket{\psi}-\Pi_AA_{sa}\otimes\mathbbm{I}_B\ket{\psi}\|=\|[\Pi_A,A_{sa}]\|_{\sigma_A}.
    \end{align*}
    Then $A_{sa}\otimes\mathbbm{I}_B\ket{\psi}\approx_\varepsilon\mathbbm{I}_A\otimes\hat{A}_{sa}\ket{\psi}$ if $\|[\Pi_A,A_{sa}]\|_{\sigma_A}\le\varepsilon$. The similar argument also works for Bob's operators.

    To prove (b), note that $\|[\Pi_A,A_{sa}]\|_{\sigma_A}=\|\Pi_AA_{sa}\otimes \mathbbm{1}\ket{\psi}-A_{sa}\Pi_A\otimes \mathbbm{1}\ket{\psi}\|$. Then
    \begin{align*}
        \Pi_AA_{sa}\otimes \mathbbm{1}\ket{\psi}&\approx_\varepsilon\Pi_A\otimes \hat{A}_{sa}\ket{\psi}\\
        &=\mathbbm{1}\otimes \hat{A}_{sa}\ket{\psi}\\
        &\approx_\varepsilon A_{sa}\otimes \mathbbm{1}\ket{\psi}=A_{sa}\Pi_A\otimes \mathbbm{1}\ket{\psi}.
    \end{align*}
    So $\Pi_AA_{sa}\otimes \mathbbm{1}\ket{\psi}\approx_{2\varepsilon}A_{sa}\Pi_A\otimes \mathbbm{1}\ket{\psi}$. The similar argument also works for Bob's operators.
\end{proof}

The invariance of support-preservingness under local dilation can be stated as follows:

\begin{proposition}
\label{prop:invariantSupp}
Let $S$ and $\tilde S$ be two pure strategies.
\begin{enumerate}[(a)]
    \item If $S\xhookrightarrow{} \tilde{S}$, then ${S}$ is $\varepsilon$-support-preserving if and only if $\tilde S$ is $\varepsilon$-support-preserving.
    \item If $S\xhookrightarrow{\varepsilon'} \tilde{S}$, then $\tilde{S}$ being $\varepsilon$-support-preserving implies that $S$ is $(4\varepsilon'+2\varepsilon)$-support-preserving, and ${S}$ being $\varepsilon$-support-preserving implies that $\tilde S$ is $(4\varepsilon'+2\varepsilon)$-support-preserving.
\end{enumerate}
\end{proposition}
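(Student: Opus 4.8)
\textbf{Proof proposal for Proposition \ref{prop:invariantSupp}.}

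The plan is to reduce everything to the characterization in Lemma \ref{lem:switchrob}, which translates ``$\varepsilon$-support-preserving'' into the statement that the local operators $A_{sa}\otimes\mathbbm 1$ act on $\ket\psi$ approximately like operators $\mathbbm 1\otimes\hat A_{sa}$ living on the opposite tensor factor (and symmetrically for Bob). The key observation is that this ``switch-sides'' property is exactly the kind of thing that is stable under composition with a local isometry: if $U=U_A\otimes U_B$ is the isometry witnessing $S\xhookrightarrow{\varepsilon'}\tilde S$ with auxiliary state $\ket\aux$, then applying $U$ to $A_{sa}\otimes\mathbbm 1_B\ket\psi\approx_\varepsilon \mathbbm 1_A\otimes\hat A_{sa}\ket\psi$ and using the defining inequalities of the dilation should yield $\tilde A_{sa}\otimes\mathbbm 1_{\tilde B}\ket{\tilde\psi}\otimes\ket\aux \approx \mathbbm 1_{\tilde A}\otimes (\text{something})\,\ket{\tilde\psi}\otimes\ket\aux$, which by Lemma \ref{lem:switchrob}(b) makes $\tilde S$ nearly support-preserving — but I have to be careful that the operator on the right-hand side really lives only on $\mathcal H_{\tilde B}$ (tensored with the auxiliary spaces), not on all of $\mathcal H_{\tilde B}\otimes\mathcal H_{\hat B}$.

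Concretely, I would proceed as follows. First, prove (b) for one direction, say ``$S$ $\varepsilon$-support-preserving $\Rightarrow$ $\tilde S$ $(4\varepsilon'+2\varepsilon)$-support-preserving''. By Lemma \ref{lem:switchrob}(a) there are operators $\hat A_{sa}\in B(\mathcal H_B)$ with $A_{sa}\otimes\mathbbm 1_B\ket\psi\approx_\varepsilon\mathbbm 1_A\otimes\hat A_{sa}\ket\psi$. Apply the isometry $U=U_A\otimes U_B$ to both sides; the left side becomes $\approx_\varepsilon U(A_{sa}\otimes\mathbbm 1_B)\ket\psi$, which by \eqref{eq:localdilation} (for a pure state, so the purifying space is trivial) is within $\varepsilon'$ of $(\tilde A_{sa}\otimes\mathbbm 1_{\tilde B})\ket{\tilde\psi}\otimes\ket\aux$. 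The right side becomes $U_A\otimes U_B(\mathbbm 1_A\otimes\hat A_{sa})\ket\psi = (U_A\otimes\mathbbm 1)(\mathbbm 1\otimes U_B\hat A_{sa})\ket\psi$; I want to commute $U_B\hat A_{sa}$ past and recognize this as $(\mathbbm 1_{\tilde A\hat A}\otimes \tilde{\hat A}_{sa})$ applied to $U\ket\psi\approx_{\varepsilon'}\ket{\tilde\psi}\otimes\ket\aux$, where $\tilde{\hat A}_{sa}$ is the operator $U_B\hat A_{sa}U_B^*$ compressed to the range of $U_B$ (extended by zero, or by anything, outside). This identifies an operator on Bob's side (the $\mathcal H_{\tilde B}\otimes\mathcal H_{\hat B}$ factor of $U_B$'s codomain) that acts on $\ket{\tilde\psi}\otimes\ket\aux$ approximately like $\tilde A_{sa}\otimes\mathbbm 1$; chaining the triangle inequalities gives a $(2\varepsilon'+\varepsilon)$-approximation with a ``wrong-side'' operator for the \emph{extended} strategy $(\ket{\tilde\psi}\otimes\ket\aux,\dots)$. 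One then has to descend from this extended strategy back to $\tilde S$ itself — tensoring with a fixed auxiliary state does not change the support structure, so a nearly-support-preserving certificate for $\tilde S\otimes(\text{aux})$ gives one for $\tilde S$ up to the same constant — and invoke Lemma \ref{lem:switchrob}(b), which costs a factor of $2$, landing at roughly $2(2\varepsilon'+\varepsilon)=4\varepsilon'+2\varepsilon$. The reverse direction of (b) is symmetric, using that $U$ is an isometry so $U^*$ recovers $\ket\psi$ from $U\ket\psi$ up to the dilation error. Part (a) is then immediate: put $\varepsilon'=0$, note the estimate becomes $2\varepsilon$ in each direction, and then iterate/bootstrap — if $\tilde S$ is $\varepsilon$-support-preserving then $S$ is $2\varepsilon$-support-preserving and vice versa, but when $\varepsilon'=0$ the isometry $U$ is exact, so the ``switch operator'' $\hat A_{sa}$ transports \emph{exactly} (no $2\varepsilon'$ slack and the compression is exact on the range), giving the clean ``if and only if'' at the same $\varepsilon$ without the factor $2$; I would phrase (a) by redoing the $\varepsilon'=0$ computation directly rather than specializing the lossy bound of (b).

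The main obstacle I anticipate is bookkeeping the tensor-factor structure of $U_B$'s codomain $\mathcal H_{\tilde B}\otimes\mathcal H_{\hat B}$ and making sure the operator I produce to witness near-support-preservation of $\tilde S$ genuinely acts as $(\text{something on }\mathcal H_{\tilde B}\otimes\mathcal H_{\hat B})$ rather than leaking onto $\mathcal H_{\tilde A}\otimes\mathcal H_{\hat A}$ — i.e., that conjugating Bob's switch-operator $\hat A_{sa}$ by $U_B$ and extending off the range really stays on Bob's side. This is a standard but fiddly manipulation: the cleanest route is probably to work with $P U_B \hat A_{sa} U_B^* P$ where $P$ is the projection onto $\operatorname{ran} U_B$, observe $P U_B = U_B$, and check the two approximations survive. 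A secondary nuisance is confirming that the constants really come out as $4\varepsilon'+2\varepsilon$ and not something slightly larger; if the naive triangle-inequality count gives, say, $4\varepsilon'+2\varepsilon$ only after also using $\|A_{sa}\|\le 1$ and $\|U_B\hat A_{sa}U_B^*\|\le 1$ (which hold since POVM elements have norm at most one and isometric conjugation is norm-nonincreasing on the range), I would make those norm bounds explicit at the points where I split a product like $U_B\hat A_{sa}(U\ket\psi - \ket{\tilde\psi}\otimes\ket\aux)$.
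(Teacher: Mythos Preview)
Your treatment of part (b) is essentially the paper's proof: both directions go through Lemma~\ref{lem:switchrob}, transport the ``switch operator'' across the local isometry by conjugation (the paper sets $\hat{\tilde A}_{sa}:=V_B\hat A_{sa}V_B^*$ in one direction and $\hat A_{sa}:=V_B^*(\hat{\tilde A}_{sa}\otimes\mathbbm 1_{\hat B})V_B$ in the other), chain three $\approx$-steps to get $2\varepsilon'+\varepsilon$, and then pay the factor $2$ from Lemma~\ref{lem:switchrob}(b). Your worry about the operator living on $\mathcal H_{\tilde B}\otimes\mathcal H_{\hat B}$ rather than $\mathcal H_{\tilde B}$ is handled exactly as you suggest: one checks $\|[\tilde\Pi_A,\tilde A_{sa}]\|_{\sigma_{\tilde A}}=\|[\tilde\Pi_A\otimes\Pi_{A'},\tilde A_{sa}\otimes\mathbbm 1_{A'}]\|_{\sigma_{\tilde A,A'}}$, so the extended and unextended support-preservation quantities coincide. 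One caution: your proposed norm bound $\|U_B\hat A_{sa}U_B^*\|\le 1$ is \emph{not} available --- the switch operator from Lemma~\ref{lem:switchrob}(a) involves $\lambda^{-1}$ and need not be a contraction --- so when you split terms you should route the approximation through vectors where the non-contractive operator hits $V_B|\psi\rangle$ exactly (using $V_B^*V_B=\mathbbm 1$) rather than through a generic operator-norm estimate.

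For part (a) there is a genuine gap. Specializing your (b)-argument to $\varepsilon'=0$ still goes switchrob(a)~$\to$~transport~$\to$~switchrob(b), and the last step irreversibly costs a factor $2$; the exactness of the dilation does not help, because the loss lives in Lemma~\ref{lem:switchrob}, not in the transport. Your sentence ``redoing the $\varepsilon'=0$ computation directly'' does not say what replaces this. The paper avoids Lemma~\ref{lem:switchrob} entirely for (a): since $(V_A\otimes V_B)|\psi\rangle=|\tilde\psi\rangle\otimes|\aux\rangle$ exactly, one has $V_A\sigma_A V_A^*=\tilde\sigma_{\tilde A}\otimes\sigma_{\hat A}$ and hence $V_A\Pi_A V_A^*=\tilde\Pi_A\otimes\Pi_{A'}$. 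Plugging this and $V(A_{sa}\otimes\mathbbm 1)|\psi\rangle=(\tilde A_{sa}\otimes\mathbbm 1)|\tilde\psi\rangle\otimes|\aux\rangle$ into $\|[\Pi_A,A_{sa}]\|_{\sigma_A}^2=\langle\psi|(A_{sa}^2-A_{sa}\Pi_A A_{sa})\otimes\mathbbm 1|\psi\rangle$ yields the \emph{equality} $\|[\Pi_A,A_{sa}]\|_{\sigma_A}=\|[\tilde\Pi_A,\tilde A_{sa}]\|_{\sigma_{\tilde A}}$, which is the sharp ``if and only if'' you need.
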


\begin{proof}
Let $V_A\otimes V_B$ be the local isometry and $\ket{\aux}$ be the auxiliary state in the exact/near local-dilation.

To prove (a), note that $V_A\Pi_A V_A^*=\tilde{\Pi}_A\otimes\Pi_{A'}$, where $\tilde{\Pi}_A$ and $\Pi_{A'}$ are projections onto $\Supp_A{\ket{\tilde\psi}}$ and $\Supp_A{\ket{\aux}}$, respectively. Then
\begin{align*}
    \|[\Pi_A,A_{sa}]\|_{\sigma_A}^2=&\braket{\psi|(A_{sa}^2-A_{sa}\Pi_AA_{sa})\otimes\mathbbm{1}_B|\psi}\\
    =&\braket{\psi|A_{sa}V_A^*V_A(A_{sa}-\Pi_AV_A^*V_AA_{sa})\otimes V_B^*V_B|\psi}\\
    =&\braket{\tilde\psi,\aux|[(\tilde{A}_{sa}\otimes\mathbbm{1}_{A'})(\tilde{A}_{sa}\otimes\mathbbm{1}_{A'}-V_A\Pi_AV_A^*(\tilde{A}_{sa}\otimes\mathbbm{1}_{A'}))]\otimes \mathbbm{1}_{\tilde B,\hat{B}}|\tilde \psi,\aux}\\
    =&\braket{\tilde\psi,\aux|(\tilde{A}^2_{sa}-\tilde{A}_{sa}\tilde{\Pi}_A\tilde{A}_{sa})\otimes {\Pi}_{A'}\otimes\mathbbm{1}_{\tilde B,\hat{B}}|\tilde \psi,\aux}\\
    =&\braket{\tilde\psi|(\tilde{A}^2_{sa}-\tilde{A}_{sa}\tilde{\Pi}_A\tilde{A}_{sa})\otimes\mathbbm{1}_{\tilde B}|\tilde \psi}=\|[\tilde{\Pi}_A,\tilde{A}_{sa}]\|_{\sigma_{\tilde{A}}}^2.
\end{align*}

So $\|[\Pi_A,A_{sa}]\|_{\sigma_A}\le\varepsilon$ if and only if $\|[\tilde{\Pi}_A,\tilde{A}_{sa}]\|_{\sigma_{\tilde{A}}}\le\varepsilon$. The similar argument also works for Bob's operators.

In (b), we first prove the first implication.

Since $\tilde{S}$ is $\varepsilon$-support-preserving, by Lemma \ref{lem:switchrob} there exist $\hat{\tilde{A}}_{sa}$ such that $\tilde{A}_{sa}\otimes \mathbbm{1}\ket{\tilde{\psi}}\approx_{\varepsilon}\mathbbm{1}\otimes \hat{\tilde{A}}_{sa}\ket{\tilde{\psi}}$. From the near local dilation, we have that 
\begin{align}
    (V_AV_A^*\otimes V_BV_B^*)(\tilde{A}_{sa}\otimes\mathbbm{1}_{\tilde B}\ket{\tilde{\psi}}\otimes\ket{\aux})\approx_\varepsilon(V_A\otimes V_B)(A_{sa}\otimes\mathbbm{1}_B)\ket{\psi}.
    \label{eq:vv*}
\end{align}

Consider the operator $\hat{A}_{sa}:=V_B^*(\hat{\tilde{A}}_{sa}\otimes \mathbbm{1}_{\hat{B}})V_B$, then
\begin{align}
    V_A\otimes V_B(\mathbbm{1}_A\otimes \hat{A}_{sa}\ket{\psi})&=V_A\otimes V_B(\mathbbm{1}_A\otimes V_B^*(\hat{\tilde{A}}_{sa}\otimes \mathbbm{1}_{\hat{B}})V_B\ket{\psi})\nonumber\\
    &=(V_AV_A^*\otimes V_BV_B^*(\hat{\tilde{A}}_{sa}\otimes \mathbbm{1}_{\hat{B}}))(V_A\otimes V_B)\ket{\psi}\nonumber\\
    &\approx_{\varepsilon'}(V_AV_A^*\otimes V_BV_B^*\hat{\tilde{A}}_{sa})(\ket{\tilde{\psi}}\otimes\ket{\aux})\nonumber\\
    &\approx_{\varepsilon}(V_AV_A^*\otimes V_BV_B^*)((\tilde{A}_{sa}\otimes\mathbbm{1}_{\tilde B})\ket{\tilde{\psi}}\otimes\ket{\aux})\nonumber\\
    &\approx_{\varepsilon'} V_A\otimes V_B({A}_{sa}\otimes \mathbbm{1}_B)\ket{\psi}.
\end{align}
So $V_A\otimes V_B(\mathbbm{1}_A\otimes \hat{A}_{sa}\ket{\psi})\approx_{2\varepsilon'+\varepsilon}V_A\otimes V_B({A}_{sa}\otimes \mathbbm{1}_B\ket{\psi})$, which implies $(\mathbbm{1}_A\otimes \hat{A}_{sa}\ket{\psi})\approx_{2\varepsilon'+\varepsilon}({A}_{sa}\otimes \mathbbm{1}_B\ket{\psi})$. By Lemma \ref{lem:switchrob}, $\|[\Pi_A,A_{sa}]\|_{\sigma_A}\le{4\varepsilon'+2\varepsilon}$ for all $a,s$. The similar argument holds for Bob's operators. So we conclude that $S$ is $(4\varepsilon'+2\varepsilon)$-support-preserving. 

For the second implication of (b), given the existence of $\hat{A}_{sa}$, consider $\hat{\tilde A}_{sa}:=V_B\hat{A}_{sa}V_B^*$, then
\begin{align*}
    \mathbbm{1}_{\tilde A}\otimes\hat{\tilde A}_{sa}\ket{\tilde\psi}\ket{\aux}&=\mathbbm{1}_{{\tilde A},A'}\otimes V_B\hat{ A}_{sa}V_B^*(\ket{\tilde\psi}\ket{\aux})\\
    &\approx_{\varepsilon'}V_A\otimes V_B\hat{A}_{sa}\ket{\psi}\\
    &\approx_{\varepsilon}V_AA_{sa}\otimes V_B\ket{\psi}\\
    &\approx_{\varepsilon'}\tilde{A}_{sa}\otimes\mathbbm{1}_{\tilde B}\ket{\tilde\psi}\ket{\aux}.
\end{align*}
So by Lemma \ref{lem:switchrob}, $\|[\tilde{\Pi}_A,\tilde{A}_{sa}]\|_{\sigma_{\tilde{A}}}=\|[\tilde{\Pi}_A\otimes\Pi_{A'},\tilde{A}_{sa}\otimes\mathbbm{1}_{A'}]\|_{\sigma_{\tilde{A},A'}}\le{4\varepsilon'+2\varepsilon}$ for all $a,s$. The similar argument holds for Bob's operators. So we conclude that $\tilde S$ is $(4\varepsilon'+2\varepsilon)$-support-preserving. 
\end{proof}

(It has come to our attention that the exact ($\varepsilon=0$) case of the "only if" direction of part (a) of Proposition \ref{prop:invariantSupp} was independently developed by \cite[Proposition 4.6]{paddock2023operatoralgebraic}.)

\subsection{Nearly projective strategies}

We introduce the definition of \emph{nearly projective} strategies. This notion quantifies `how projective a strategy is on its state'.

\begin{definition}[nearly projective]
    Let $\varepsilon\ge0$. A strategy $S=(\ket{\psi} \in \mathcal{H}_{A}\otimes \mathcal{H}_B,\allowbreak\{A_{sa}\},\{B_{tb}\})$ is called \emph{$\varepsilon$-projective} if
    \begin{align*}
        &\braket{\mathbbm{1}_A-A_{sa},A_{sa}}_{\sigma_A}\le\varepsilon^2,\\
        &\braket{\mathbbm{1}_B-B_{tb},A_{tb}}_{\sigma_B}\le\varepsilon^2
    \end{align*}
    hold for all $s,t,a,b$. Here $\braket{X,Y}_{\sigma}:=\tr[X^*Y\sigma]$.
\end{definition}

Note that $\braket{\mathbbm{1}_A-A_{sa},A_{sa}}_{\sigma_A}=\braket{\psi|(\mathbbm{1}_A-A_{sa})A_{sa}\otimes \mathbbm{1}_B|\psi}$, and this identity is useful in some calculations. Also note that being $0$-projective does not necessarily imply being projective: a non-projective strategy might be only non-projective outside of the support of the state, so it could be $0$-projective. But for full-rank strategies, being projective and $0$-projective are equivalent.

The projectiveness of strategies is another invariant property under local dilation. Namely,

\begin{proposition}
Let $S$ and $\tilde S$ be two pure strategies.
\begin{enumerate}[(a)]
    \item If $S\xhookrightarrow{} \tilde{S}$, then ${S}$ is $\varepsilon$-projective if and only if $\tilde S$ is $\varepsilon$-projective.
    \item If $S\xhookrightarrow{\varepsilon'} \tilde{S}$, then $\tilde{S}$ being $\varepsilon$-projective implies that $S$ is $(\sqrt{3\varepsilon'}+\varepsilon)$-projective, and ${S}$ being $\varepsilon$-projective implies that $\tilde S$ is $(\sqrt{3\varepsilon'}+\varepsilon)$-projective.
\end{enumerate}
    \label{prop:invariantProj}
\end{proposition}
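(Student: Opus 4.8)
The plan is to follow the template used for Proposition~\ref{prop:invariantSupp}, now with the ``projectivity defect'' $\braket{\mathbbm 1_A-A_{sa},A_{sa}}_{\sigma_A}$ in place of the commutator norm; here no analogue of Lemma~\ref{lem:switchrob} will be needed, since a direct transport of an inner product through the (near-)isometry of the local dilation suffices. The preliminary observation I would record is that, writing $\ket{a_{sa}}:=(A_{sa}\otimes\mathbbm 1_B)\ket\psi$ and using $A_{sa}^*=A_{sa}$,
\[
\braket{\mathbbm 1_A-A_{sa},A_{sa}}_{\sigma_A}=\braket{\psi|(\mathbbm 1_A-A_{sa})A_{sa}\otimes\mathbbm 1_B|\psi}=\braket{\psi-a_{sa},a_{sa}},
\]
where $\ket{\psi-a_{sa}}=((\mathbbm 1_A-A_{sa})\otimes\mathbbm 1_B)\ket\psi$, and analogously on Bob's side. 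Since a local isometry $V=V_A\otimes V_B$ preserves inner products, this quantity equals $\braket{V\psi-Va_{sa},Va_{sa}}$ (throughout, $V\psi$ abbreviates $V\ket\psi$, etc.), and it will be essential to keep it as a \emph{single} inner product rather than split it into $\braket{\psi|A_{sa}\otimes\mathbbm 1_B|\psi}$ and $\|(A_{sa}\otimes\mathbbm 1_B)\ket\psi\|^2$.

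For part~(a) I would use that an exact local dilation gives $V\ket\psi=\ket{\tilde\psi}\otimes\ket{\aux}$ and $V\ket{a_{sa}}=(\tilde A_{sa}\otimes\mathbbm 1_{\tilde B})\ket{\tilde\psi}\otimes\ket{\aux}$; since $\ket{\aux}$ is a unit vector and $\tilde A_{sa}$ acts trivially on the ancilla, $\braket{V\psi-Va_{sa},Va_{sa}}$ collapses exactly to $\braket{\mathbbm 1_{\tilde A}-\tilde A_{sa},\tilde A_{sa}}_{\sigma_{\tilde A}}$, and symmetrically for Bob. Hence $S$ and $\tilde S$ have identical projectivity defects for every question/answer pair, and the ``if and only if'' follows with the same $\varepsilon$.

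For part~(b) I would set $\ket{\tilde\Psi}:=\ket{\tilde\psi}\otimes\ket{\aux}$ and $\ket{\tilde a_{sa}}:=(\tilde A_{sa}\otimes\mathbbm 1)\ket{\tilde\Psi}$, so that $\braket{\tilde\Psi-\tilde a_{sa},\tilde a_{sa}}=\braket{\mathbbm 1_{\tilde A}-\tilde A_{sa},\tilde A_{sa}}_{\sigma_{\tilde A}}$. The defining inequalities of an $\varepsilon'$-local dilation (Definition~\ref{def:localdilation}) give $\|V\psi-\tilde\Psi\|\le\varepsilon'$ and $\|Va_{sa}-\tilde a_{sa}\|\le\varepsilon'$, hence $\|(V\psi-Va_{sa})-(\tilde\Psi-\tilde a_{sa})\|\le2\varepsilon'$. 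Writing $u=V\psi-Va_{sa}$ and $\tilde u=\tilde\Psi-\tilde a_{sa}$ and expanding
\[
\braket{u,Va_{sa}}-\braket{\tilde u,\tilde a_{sa}}=\braket{u,Va_{sa}-\tilde a_{sa}}+\braket{u-\tilde u,\tilde a_{sa}},
\]
then applying Cauchy--Schwarz together with the trivial bounds $\|Va_{sa}\|,\|\tilde a_{sa}\|\le1$ and $\|u\|=\|((\mathbbm 1_A-A_{sa})\otimes\mathbbm 1_B)\ket\psi\|\le1$ (all of which follow from $0\le A_{sa}\le\mathbbm 1$ and $0\le\tilde A_{sa}\le\mathbbm 1$), I get $|\braket{\mathbbm 1_A-A_{sa},A_{sa}}_{\sigma_A}-\braket{\mathbbm 1_{\tilde A}-\tilde A_{sa},\tilde A_{sa}}_{\sigma_{\tilde A}}|\le\varepsilon'+2\varepsilon'=3\varepsilon'$, and the same for Bob's operators. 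This bound is symmetric in $S$ and $\tilde S$, so in either direction: whenever one of the two strategies is $\varepsilon$-projective, the other one's defect is at most $\varepsilon^2+3\varepsilon'\le(\varepsilon+\sqrt{3\varepsilon'})^2$, i.e.\ the other strategy is $(\sqrt{3\varepsilon'}+\varepsilon)$-projective, which is exactly what is claimed. (For a robust self-test with an explicit $(\varepsilon,\delta)$-dependence one simply composes this with the hypothesis.)

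The one point requiring care — and the main obstacle — is extracting the constant $3$ rather than $4$: estimating $\braket{\psi|A_{sa}\otimes\mathbbm 1_B|\psi}$ and $\|(A_{sa}\otimes\mathbbm 1_B)\ket\psi\|^2$ separately costs $2\varepsilon'$ each and leaves only $\varepsilon^2+4\varepsilon'$, which is \emph{not} bounded by $(\sqrt{3\varepsilon'}+\varepsilon)^2$ for small $\varepsilon$, so one is forced to keep the defect as the single inner product $\braket{\psi-a_{sa},a_{sa}}$ and transport that through $V$. Everything else is routine triangle-inequality and Cauchy--Schwarz bookkeeping, so I do not anticipate a genuine difficulty.
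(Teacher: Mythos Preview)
Your proposal is correct and follows essentially the same approach as the paper: transport the single inner product $\braket{(\mathbbm 1-A_{sa})\psi,A_{sa}\psi}$ through the (near-)isometry, obtain the $3\varepsilon'$ bound on the difference of the two projectivity defects via Cauchy--Schwarz, and then pass to square roots. The paper derives (a) as the $\varepsilon'=0$ case of (b) rather than treating it separately, and it bounds $|\sqrt{a}-\sqrt{b}|\le\sqrt{|a-b|}$ directly instead of writing $\varepsilon^2+3\varepsilon'\le(\varepsilon+\sqrt{3\varepsilon'})^2$, but these are cosmetic differences.
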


\begin{proof}
Since (a) is an implication of (b) (by taking $\varepsilon'=0$), we only need to prove (b).

Given that $S\xhookrightarrow{\varepsilon'}\tilde S$, there exists a local isometry and auxiliary state such that
    \begin{align}
    &V[A_{sa}\otimes \mathbbm{1}\ket{\psi}]\approx_{\varepsilon'}(\tilde{A}_{sa}\otimes \mathbbm{1}\ket{\tilde{\psi}})\otimes\ket{\aux}\label{eq:1},\forall \ a,s\\
    &V[\ket{\psi}]\approx_{\varepsilon'}\ket{\tilde{\psi}}\otimes\ket{\aux}\label{eq:2}
\end{align}

$\eqref{eq:2}-\eqref{eq:1}$:
\begin{align}
    V[(\mathbbm{1}-A_{sa})\otimes \mathbbm{1}\ket{\psi}]\approx_{2\varepsilon'}((\mathbbm{1}-\tilde{A}_{sa})\otimes \mathbbm{1}\ket{\tilde{\psi}})\otimes\ket{\aux}\label{eq:3}
\end{align}

Then the inner product of \eqref{eq:1} and \eqref{eq:3}:
\begin{align}
    \braket{\psi|(A_{sa}-A_{sa}^2)\otimes \mathbbm{1}|\psi}\approx_{3\varepsilon'}\braket{\tilde{\psi}|(\tilde{A}_{sa}-\tilde{A}_{sa}^2)\otimes \mathbbm{1}|\tilde{\psi}}\nonumber.
\end{align}

Note that both sides are real positive numbers, then 
\begin{align*}
    &|\sqrt{\braket{\psi|(A_{sa}-A_{sa}^2)\otimes \mathbbm{1}|\psi}}-\sqrt{\braket{\tilde{\psi}|(\tilde{A}_{sa}-\tilde{A}_{sa}^2)\otimes \mathbbm{1}|\tilde{\psi}}}|\\
    \le&|\sqrt{\braket{\psi|(A_{sa}-A_{sa}^2)\otimes \mathbbm{1}|\psi}}+\sqrt{\braket{\tilde{\psi}|(\tilde{A}_{sa}-\tilde{A}_{sa}^2)\otimes \mathbbm{1}|\tilde{\psi}}}|\\
    =&\frac{|\braket{\psi|(A_{sa}-A_{sa}^2)\otimes \mathbbm{1}|\psi}-\braket{\tilde{\psi}|(\tilde{A}_{sa}-\tilde{A}_{sa}^2)\otimes \mathbbm{1}|\tilde{\psi}}|}{|\sqrt{\braket{\psi|(A_{sa}-A_{sa}^2)\otimes \mathbbm{1}|\psi}}-\sqrt{\braket{\tilde{\psi}|(\tilde{A}_{sa}-\tilde{A}_{sa}^2)\otimes \mathbbm{1}|\tilde{\psi}}}|}\\
    \implies&|\sqrt{\braket{\psi|(A_{sa}-A_{sa}^2)\otimes \mathbbm{1}|\psi}}-\sqrt{\braket{\tilde{\psi}|(\tilde{A}_{sa}-\tilde{A}_{sa}^2)\otimes \mathbbm{1}|\tilde{\psi}}}|\le\sqrt{3\varepsilon'}.
\end{align*}

Then the two implications in (b) follows immediately.
\end{proof}

\subsection{Restrictions of nonlocal strategies}

When a pure strategy $S$ is not full-rank, that is, the Schmidt rank of the state is strictly smaller than the local dimension, a projective/non-projective strategy might be non-projective/projective on the support. This naturally leads us to study the behaviour of the measurements on the support of the state. To this end, we define the \emph{restriction} of a strategy as follows.

\begin{definition}[Restriction of a strategy]
    Let $S=(\ket{\psi},\{A_{sa}\},\{B_{tb}\})$ be a strategy. Consider the Schmidt decomposition of $\ket{\psi}$,
	\begin{equation*}
		\ket{\psi} = \sum_{i = 0}^{d-1} \lambda_i \ket{e_i}\ket{f_i}.
	\end{equation*}
	We define the isometries
	\begin{equation*}
	    U_A = \sum_{i = 0}^{d-1} \ket{e_i}_A\!\bra{i} \qquad \text{and}\qquad U_B =  \sum_{i = 0}^{d-1} \ket{f_i}_B\!\bra{i}
	\end{equation*}
    The \emph{restriction} of $S$ is the strategy $S_{\text{res}}=(\ket{\psi'},\{A_{sa}'\},\{B_{tb}'\})$, where
	\begin{align*}
	    A_{sa}' &= U_A^*  A_{sa} U_A\\  
	    B_{tb}' &= U_B^*  B_{tb} U_B\\
	    \ket{\psi'} &=\sum_{i = 0}^{d-1} \lambda_i \ket{i}\ket{i}= U_A^*\otimes U_B^* \ket{\psi}.
	\end{align*}
	It is indeed a well-defined POVM full-rank strategy. 
 \label{def:res}
\end{definition}

Note that the projections on the supports of $S$ can be written as $\Pi_A=U_AU_A^*,\Pi_B=U_BU_B^*$.

We will now see that if a non-full-rank strategy $S$ is exactly/nearly support-preserving, then $S$ and its restriction $S_{\operatorname{res}}$ (defined as in Definition \ref{def:res}) can be mutually exactly/nearly local-dilated. 

\begin{proposition}
    If a pure strategy $S$ is $\varepsilon$-support-preserving, then $S_{\operatorname{res}}\xhookrightarrow{\varepsilon}S$ and $S\xhookrightarrow{\varepsilon}S_{\operatorname{res}}$, where $S_{\operatorname{res}}$ is the restriction of $S$.
    \label{prop:restrictrob}
\end{proposition}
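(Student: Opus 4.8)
The plan is to write down explicit local isometries realising both dilations, built from the support‑extracting isometries $U_A,U_B$ of Definition~\ref{def:res}. The one computational fact driving everything is the identity
\[
\|[\Pi_A,A_{sa}]\|_{\sigma_A}=\|((\mathbbm{1}_A-\Pi_A)A_{sa}\otimes\mathbbm{1}_B)\ket\psi\|
\]
(and its mirror on Bob's side), which follows immediately from $(\Pi_A\otimes\mathbbm{1}_B)\ket\psi=\ket\psi$, so that $(A_{sa}\Pi_A\otimes\mathbbm{1}_B)\ket\psi=(A_{sa}\otimes\mathbbm{1}_B)\ket\psi$; by the $\varepsilon$-support-preserving hypothesis the right-hand side is $\le\varepsilon$. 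I will also use the elementary relations $U_AU_A^*=\Pi_A$, $U_A^*U_A=\mathbbm{1}$, and $U_A^*\Pi_A=U_A^*$.

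For $S_{\operatorname{res}}\xhookrightarrow{\varepsilon}S$ I would take the local isometry $U_A\otimes U_B$ together with the trivial one‑dimensional ancilla. The state condition then holds exactly, since $(U_A\otimes U_B)\ket{\psi'}=(\Pi_A\otimes\Pi_B)\ket\psi=\ket\psi$. For the measurement conditions, substituting $A_{sa}'=U_A^*A_{sa}U_A$ and $\ket{\psi'}=(U_A^*\otimes U_B^*)\ket\psi$ collapses $(U_A\otimes U_B)(A_{sa}'\otimes\mathbbm{1})\ket{\psi'}$ to $(\Pi_AA_{sa}\otimes\mathbbm{1}_B)\ket\psi$, so its deviation from $(A_{sa}\otimes\mathbbm{1}_B)\ket\psi$ has norm exactly $\|((\mathbbm{1}_A-\Pi_A)A_{sa}\otimes\mathbbm{1}_B)\ket\psi\|=\|[\Pi_A,A_{sa}]\|_{\sigma_A}\le\varepsilon$; Bob's condition is symmetric.

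For the reverse dilation $S\xhookrightarrow{\varepsilon}S_{\operatorname{res}}$ the quickest route is to observe that the ancilla produced above is trivial, hence separable, so Proposition~\ref{prop:otherdirection} yields the reverse $\varepsilon$-dilation immediately. If an explicit witness is preferred, one pads $U_A^*$ to a genuine isometry $W_A:\mathcal{H}_A\to\mathbb{C}^d\otimes\mathcal{H}_{\hat A}$ with $\mathcal{H}_{\hat A}=\mathbb{C}\oplus(\mathbbm{1}_A-\Pi_A)\mathcal{H}_A$, sending $\Supp_A\ket\psi$ through $U_A^*$ into $\mathbb{C}^d\otimes\ket{0}_{\hat A}$ and the orthogonal complement identically into $\ket{0}_{\mathbb{C}^d}\otimes(\mathbbm{1}_A-\Pi_A)\mathcal{H}_A$, and similarly $W_B$. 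Then $(W_A\otimes W_B)\ket\psi=\ket{\psi'}\otimes\ket{0}_{\hat A}\ket{0}_{\hat B}$ holds exactly, while $(W_A\otimes W_B)(A_{sa}\otimes\mathbbm{1})\ket\psi$ equals the target $(A_{sa}'\otimes\mathbbm{1})\ket{\psi'}\otimes\ket{0}_{\hat A}\ket{0}_{\hat B}$ plus a single ``out‑of‑support leakage'' vector whose norm is again $\|((\mathbbm{1}_A-\Pi_A)A_{sa}\otimes\mathbbm{1}_B)\ket\psi\|=\|[\Pi_A,A_{sa}]\|_{\sigma_A}\le\varepsilon$.

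The step needing the most care --- the nearest thing to an obstacle --- is the bookkeeping in the explicit reverse construction: one must check that $W_A$ really is an isometry (its two image subspaces $\mathbb{C}^d\otimes\ket{0}_{\hat A}$ and $\ket{0}_{\mathbb{C}^d}\otimes(\mathbbm{1}_A-\Pi_A)\mathcal{H}_A$ are orthogonal precisely because $\ket{0}_{\hat A}$ is orthogonal to $(\mathbbm{1}_A-\Pi_A)\mathcal{H}_A$ inside $\mathcal{H}_{\hat A}$), and that, using orthonormality of the Schmidt vectors, the leakage norm reduces to $\sum_i\lambda_i^2\braket{e_i|A_{sa}(\mathbbm{1}_A-\Pi_A)A_{sa}|e_i}=\|((\mathbbm{1}_A-\Pi_A)A_{sa}\otimes\mathbbm{1}_B)\ket\psi\|^2$. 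Conceptually there is nothing deeper than the displayed identity tying $\|[\Pi_A,A_{sa}]\|_{\sigma_A}$ to the size of the component of $(A_{sa}\otimes\mathbbm{1}_B)\ket\psi$ lying outside $\Supp_A\ket\psi\otimes\mathcal{H}_B$; the exact ($\varepsilon=0$) case is then immediate and does not even need Proposition~\ref{prop:otherdirection}.
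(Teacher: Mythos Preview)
Your proof is correct and follows essentially the same approach as the paper: use the isometry $U_A\otimes U_B$ (with trivial ancilla) to show $S_{\operatorname{res}}\xhookrightarrow{\varepsilon}S$, then invoke Proposition~\ref{prop:otherdirection} for the reverse direction. Your optional explicit padding construction for the reverse direction is extra content the paper does not include, but it is sound.
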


\begin{proof}
    We show that $S_{\text{res}}\xhookrightarrow{\varepsilon}S$ with a separable auxiliary state, then $S\xhookrightarrow{\varepsilon}S_{\operatorname{res}}$ follows from Proposition \ref{prop:otherdirection}.
    
    Consider isometries $U_A,U_B$ in Definition \ref{def:res}, and recall that $U_AU_A^*=\Pi_A,U_BU_B^*=\Pi_B$. Then 
    \begin{align*}
	U_A\otimes U_B ( A'_{st} \otimes \mathbbm{1}_B) \ket{ \psi'} &= U_AU_A^* A_{sa} U_AU_A^* \otimes U_BU_B^* \ket{ \psi}\\
        &=\Pi_A A_{sa} \Pi_A \otimes \Pi_B \ket{ \psi}\\
        &\approx_\varepsilon A_{sa} \Pi_A \otimes \Pi_B \ket{ \psi}\\
	    &=A_{sa} \otimes \mathbbm{1}_B \ket{ \psi}.
    \end{align*}
A similar argument holds for Bob's operators. So $ S_{\operatorname{res}}\xhookrightarrow{\varepsilon} S$.
\end{proof}

As a consequence of Proposition \ref{prop:restrictrob}, if a game $G$ pure self-tests a support-preserving canonical strategy $\tilde S$, $G$ also pure self-tests its restriction $\tilde S_{\text{res}}$. In other words, in this case, we can always take a full-rank canonical strategy.

In general, a projective strategy might become non-projective under restriction. Here, we show that the other way around can never happen: whenever a restriction is projective, the original strategy must be both projective and support-preserving.

\begin{theorem}
    Let $S=(\ket{\psi},\{A_{sa}\},\{B_{tb}\})$ be a pure strategy and $S_{\operatorname{res}}$ be its restriction. 
    \begin{enumerate}[(a)]
        \item If $S$ is $\varepsilon_1$-support-preserving and $\varepsilon_2$-projective, then $S_{\operatorname{res}}$ is $(\varepsilon_1+\varepsilon_2)$-projective.
        \item If $S_{\operatorname{res}}$ is $\varepsilon_3$-projective, then $S$ is $\varepsilon_3$-support-preserving and $\varepsilon_3$-projective.
    \end{enumerate}
    \label{thm:resProj}
\end{theorem}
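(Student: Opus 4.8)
\textbf{Proof plan for Theorem \ref{thm:resProj}.}

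The plan is to work directly with the state-dependent quantities that define ``$\varepsilon$-support-preserving'' and ``$\varepsilon$-projective'', translating everything into statements about vectors of the form $(X\otimes\mathbbm 1)\ket\psi$ and $(X\otimes\mathbbm 1)\ket{\psi'}$, and then exploit the fact that the restriction isometries satisfy $U_AU_A^*=\Pi_A$, $U_A^*U_A=\mathbbm 1$ (and likewise for Bob). The key bookkeeping identity is that for any operator $X$ on $\mathcal H_A$, writing $X'=U_A^*XU_A$, we have $(X'\otimes\mathbbm 1)\ket{\psi'}=(U_A^*\otimes U_B^*)(\Pi_AX\Pi_A\otimes\mathbbm 1)\ket\psi$, because $\ket{\psi'}=(U_A^*\otimes U_B^*)\ket\psi$, $U_AU_A^*=\Pi_A$, and $\Pi_B\otimes\mathbbm 1$ acts as identity on $\ket\psi$. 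Since $U_A^*\otimes U_B^*$ is a partial isometry that is norm-preserving on the support of $\ket\psi$, state-dependent norms downstairs equal the corresponding norms of $(\Pi_AX\Pi_A\otimes\mathbbm 1)\ket\psi$ upstairs. I would record this as a short preliminary computation before splitting into (a) and (b).

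For part (b), assume $S_{\operatorname{res}}$ is $\varepsilon_3$-projective, i.e.\ $\|(\mathbbm 1-A_{sa}')^{1/2}A_{sa}'^{1/2}\otimes\mathbbm 1\ket{\psi'}\|$ — more precisely $\braket{\psi'|(\mathbbm 1-A_{sa}')A_{sa}'\otimes\mathbbm 1|\psi'}\le\varepsilon_3^2$. Translating via the identity above, this reads $\braket{\psi|\,\Pi_A(\mathbbm 1-\Pi_AA_{sa}\Pi_A)\Pi_AA_{sa}\Pi_A\otimes\mathbbm 1\,|\psi}\le\varepsilon_3^2$, and since $\Pi_A\otimes\mathbbm 1$ fixes $\ket\psi$ this simplifies to $\braket{\psi|(A_{sa}-\Pi_AA_{sa}\Pi_AA_{sa})\otimes\mathbbm 1|\psi}\le\varepsilon_3^2$, using $\Pi_A A_{sa}^k\Pi_A$ sandwiched against $\ket\psi$ reduces correctly. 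I would expand $A_{sa}-\Pi_AA_{sa}\Pi_AA_{sa} = (A_{sa}-A_{sa}^2) + (A_{sa}^2 - A_{sa}\Pi_AA_{sa}) + (A_{sa}\Pi_AA_{sa}-\Pi_AA_{sa}\Pi_AA_{sa})$; the first bracket gives $\varepsilon$-projectivity of $S$ itself, the second is exactly $\|[\Pi_A,A_{sa}]\|_{\sigma_A}^2$ (using the remark after the definition), and the third should be seen to be a positive term as well, so that a single nonnegativity/monotonicity argument shows each piece is at most the whole, hence $\le\varepsilon_3^2$. That yields both $\|[\Pi_A,A_{sa}]\|_{\sigma_A}\le\varepsilon_3$ (support-preserving) and $\braket{\mathbbm 1_A-A_{sa},A_{sa}}_{\sigma_A}\le\varepsilon_3^2$ ($\varepsilon_3$-projective); the same for Bob.

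For part (a), assume $S$ is $\varepsilon_1$-support-preserving and $\varepsilon_2$-projective. Now I want to bound $\braket{\psi'|(\mathbbm 1-A_{sa}')A_{sa}'\otimes\mathbbm 1|\psi'}=\braket{\psi|(A_{sa}-\Pi_AA_{sa}\Pi_AA_{sa})\otimes\mathbbm 1|\psi}$ from above. The natural route is the triangle inequality on the vector level: write $(A_{sa}'-A_{sa}'^2)^{1/2}$-type quantities as norms and compare the vector $(\Pi_AA_{sa}\Pi_A\otimes\mathbbm 1)\ket\psi$ with $(A_{sa}\otimes\mathbbm 1)\ket\psi$. Concretely, $\|(\mathbbm 1-\Pi_AA_{sa}\Pi_A)\Pi_AA_{sa}\Pi_A\otimes\mathbbm 1\ket\psi\|^{1/2}$-style bounds: since $\|(\Pi_AA_{sa}\Pi_A - A_{sa})\otimes\mathbbm 1\ket\psi\| = \|[\Pi_A,A_{sa}]\|_{\sigma_A}\le\varepsilon_1$ (again the remark) and the residual projectivity defect of $A_{sa}$ on $\ket\psi$ is $\le\varepsilon_2$, I expect the subadditivity of $X\mapsto\|(\mathbbm 1-X^{\dagger})^{?}\ldots\|$ combined with $\|A_{sa}\otimes\mathbbm 1\ket\psi\|\le 1$ and $\|\Pi_AA_{sa}\Pi_A\otimes\mathbbm 1\ket\psi\|\le1$ to give the clean bound $\varepsilon_1+\varepsilon_2$. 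The cleanest formulation: $\braket{\psi|(A_{sa}-\Pi_AA_{sa}\Pi_AA_{sa})\otimes\mathbbm 1|\psi}=\braket{\psi|(A_{sa}-A_{sa}^2)\otimes\mathbbm 1|\psi}+\braket{\psi|(A_{sa}^2-\Pi_AA_{sa}\Pi_AA_{sa})\otimes\mathbbm 1|\psi}$; bound the first term by $\varepsilon_2^2$, and recognize the second as $\braket{(A_{sa}-\Pi_AA_{sa}\Pi_A)\otimes\mathbbm 1\,\psi\,|\,A_{sa}\otimes\mathbbm 1\,\psi}$ whose modulus is $\le\|[\Pi_A,A_{sa}]\|_{\sigma_A}\cdot\|A_{sa}\otimes\mathbbm 1\ket\psi\|\le\varepsilon_1\cdot 1$ by Cauchy–Schwarz, so the sum of the square roots is controlled and one concludes $(\varepsilon_1+\varepsilon_2)$-projectivity after taking a square root and using $\sqrt{x+y}\le\sqrt x+\sqrt y$. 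The main obstacle I anticipate is chasing the $\Pi_A$'s through the non-commuting products carefully enough that the ``$+$'' in $\varepsilon_1+\varepsilon_2$ (rather than something worse like $2\varepsilon_1+\varepsilon_2$ or a square-root blow-up) comes out, so I would be careful to peel off exactly one commutator and one projectivity defect and use Cauchy–Schwarz with the tight operator-norm-$\le 1$ bounds on $A_{sa}$ and $\Pi_A$.
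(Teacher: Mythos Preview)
Your overall approach matches the paper's: both proofs reduce everything to the scalar quantity $\braket{\psi|(A_{sa}-A_{sa}\Pi_A A_{sa})\otimes\mathbbm 1|\psi}$, which is exactly $\braket{\mathbbm 1-A'_{sa},A'_{sa}}_{\sigma'_A}$, and then split it as the sum of two nonnegative pieces. Your part (b) is correct and essentially identical to the paper's argument (the third bracket in your three-term decomposition vanishes against $\bra\psi$, leaving precisely the two terms the paper uses).

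However, part (a) as written does \emph{not} give the claimed constant. You bound the second piece $\braket{\psi|(A_{sa}^2-\Pi_AA_{sa}\Pi_AA_{sa})\otimes\mathbbm 1|\psi}$ by Cauchy--Schwarz as $\le\|[\Pi_A,A_{sa}]\|_{\sigma_A}\cdot\|A_{sa}\otimes\mathbbm 1\ket\psi\|\le\varepsilon_1$, and then write $\sqrt{\varepsilon_2^2+\varepsilon_1}\le\sqrt{\varepsilon_2^2}+\sqrt{\varepsilon_1}=\varepsilon_2+\sqrt{\varepsilon_1}$, which is \emph{not} $\varepsilon_1+\varepsilon_2$. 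The Cauchy--Schwarz step is strictly too lossy here. The fix is the observation you already made in (b): since $\bra\psi\Pi_A=\bra\psi$, that second piece is not just bounded by but \emph{equal to} $\braket{\psi|(A_{sa}^2-A_{sa}\Pi_AA_{sa})\otimes\mathbbm 1|\psi}=\|[\Pi_A,A_{sa}]\|_{\sigma_A}^2\le\varepsilon_1^2$. Then the total is $\le\varepsilon_1^2+\varepsilon_2^2\le(\varepsilon_1+\varepsilon_2)^2$ and you are done without any Cauchy--Schwarz or $\sqrt{x+y}\le\sqrt x+\sqrt y$.

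In fact this is exactly how the paper organizes the proof: it establishes the single exact identity
\[
\braket{\mathbbm 1-A'_{sa},A'_{sa}}_{\sigma'_A}=\|[\Pi_A,A_{sa}]\|_{\sigma_A}^2+\braket{\mathbbm 1-A_{sa},A_{sa}}_{\sigma_A}
\]
once, and then both (a) and (b) are immediate one-line consequences. Reusing your (b) computation for (a), rather than introducing a separate inequality step, would close the gap and make the two halves symmetric.
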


\begin{proof}
    We prove for Alice's side, and the same argument works also for Bob. By definition,
    \begin{align*}
        \|[\Pi_A,A_{sa}]\|_{\sigma_A}^2=&\braket{\psi|(A_{sa}\Pi_A-\Pi_A A_{sa})(\Pi_A A_{sa}-A_{sa}\Pi_A)\otimes\mathbbm{1}|\psi}\\
        =&\braket{\psi|(A_{sa}\Pi_A A_{sa}-A_{sa}\Pi_A A_{sa}\Pi_A-\Pi_A A_{sa}\Pi_A A_{sa}+\Pi_AA^2_{sa}\Pi_A)\otimes\mathbbm{1}|\psi}\\
        =&\braket{\psi|(A_{sa}^2-A_{sa}\Pi_A A_{sa})\otimes\mathbbm{1}|\psi}.
    \end{align*}

    On the other hand, recall that $S_{\text{res}}=(\ket{\psi'},\{A_{sa}'\},\{B_{tb}'\})$ where $\ket{\psi'}=U_A^*\otimes U_B^*\ket{\psi}$, $U'_A=U_A^*A_{sa}U_A$, and $U_A$ satisfies $U_AU_A^*=\Pi_A$. So 
    \begin{align*}
        \braket{\mathbbm{1}-A'_{sa},A'_{sa}}_{\sigma'_A}=&\braket{\psi|(U_A\otimes U_B)((\mathbbm{1}-U_A^*A_{sa}U_A)U_A^*A_{sa}U_A\otimes\mathbbm{1})(U^*_A\otimes U^*_B)|\psi}\\
        =&\braket{\psi|(\Pi_AA_{sa}\Pi_A-\Pi_AA_{sa}\Pi_AA_{sa}\Pi_A)\otimes\mathbbm{1}|\psi}\\
        =&\braket{\psi|(A_{sa}-A_{sa}\Pi_AA_{sa})\otimes\mathbbm{1}|\psi}.
    \end{align*}
    Therefore
    \begin{align*}
        \braket{\mathbbm{1}-A'_{sa},A'_{sa}}_{\sigma'_A}-\|[\Pi_A,A_{sa}]\|_{\sigma_A}^2=&\braket{\psi|(A_{sa}-A_{sa}^2)\otimes\mathbbm{1}|\psi}\\
        =&\braket{\psi|((\mathbbm{1}-A_{sa})A_{sa})\otimes\mathbbm{1}|\psi}\\
        =&\braket{\mathbbm{1}-A_{sa},A_{sa}}_{\sigma_A}.
    \end{align*}

    Then (a) is clear. For (b), note that both $\braket{\mathbbm{1}-A_{sa},A_{sa}}_{\sigma_A}$ and $\|[\Pi_A,A_{sa}]\|_{\sigma_A}^2$ are positive. So if $\braket{\mathbbm{1}-A'_{sa},A'_{sa}}_{\sigma'_A}\le\varepsilon_3$ then $\braket{\mathbbm{1}-A_{sa},A_{sa}}_{\sigma_A}\le\varepsilon_3$ and $\|[\Pi_A,A_{sa}]\|_{\sigma_A}^2\le\varepsilon_3$.
\end{proof}

\begin{corollary}\label{cor:restrictionprojective}
    The restriction $S_{\operatorname{res}}$ is projective if and only $S$ is support-preserving and $0$-projective (i.e. projective on the support of the state). 
\end{corollary}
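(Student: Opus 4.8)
The plan is to obtain the corollary directly from Theorem~\ref{thm:resProj} by specializing all tolerance parameters to zero, combined with the remark recorded just after the definition of nearly projective strategies: on a \emph{full-rank} state, being $0$-projective coincides with being genuinely projective. The observation that makes this applicable is that, by Definition~\ref{def:res}, the restriction $S_{\operatorname{res}}$ is always a full-rank strategy, so for $S_{\operatorname{res}}$ the notions ``projective'' and ``$0$-projective'' are interchangeable, whereas for $S$ itself we should only expect (and only claim) ``$0$-projective'', i.e.\ projective on the support of $\ket\psi$.

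For the ``if'' direction, assume $S$ is support-preserving (i.e.\ $0$-support-preserving) and $0$-projective. Applying Theorem~\ref{thm:resProj}(a) with $\varepsilon_1 = \varepsilon_2 = 0$ yields that $S_{\operatorname{res}}$ is $0$-projective; since $S_{\operatorname{res}}$ is full-rank, this is equivalent to $S_{\operatorname{res}}$ being projective.

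For the ``only if'' direction, assume $S_{\operatorname{res}}$ is projective. Being full-rank, $S_{\operatorname{res}}$ is then $0$-projective, so Theorem~\ref{thm:resProj}(b) with $\varepsilon_3 = 0$ gives that $S$ is $0$-support-preserving---that is, support-preserving---and $0$-projective, which is exactly the claimed conclusion.

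I expect no genuine obstacle here, as all the content is already in Theorem~\ref{thm:resProj}; the only points requiring care are (i) invoking full-rankness of $S_{\operatorname{res}}$ to pass between ``projective'' and ``$0$-projective'', and (ii) not over-claiming on the $S$ side---the identity $\braket{\mathbbm{1}-A'_{sa},A'_{sa}}_{\sigma'_A}-\|[\Pi_A,A_{sa}]\|_{\sigma_A}^2=\braket{\mathbbm{1}-A_{sa},A_{sa}}_{\sigma_A}$ from the proof of Theorem~\ref{thm:resProj} only controls $S$ on the support, so ``$0$-projective'' is the correct statement and cannot in general be strengthened to ``projective''.
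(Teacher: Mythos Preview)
Your proposal is correct and is exactly the intended derivation: the paper states the corollary without proof as an immediate consequence of Theorem~\ref{thm:resProj}, and your argument—setting $\varepsilon_1=\varepsilon_2=0$ in part~(a) and $\varepsilon_3=0$ in part~(b), together with the observation that $S_{\operatorname{res}}$ is full-rank so ``projective'' and ``$0$-projective'' coincide for it—is precisely how one unpacks this. Your care in point~(ii), noting that on the $S$ side only $0$-projectivity (not genuine projectivity) can be concluded, is also the right reading of the statement.
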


\subsection{Naimark dilation of nonlocal strategies}

The Naimark dilation theorem provides an essential framework for characterizing POVMs, having significant influence not only in this study but also in the broader domains of operator theory and quantum information theory. For a given (finite) set of POVMs, it Naimark dilation can be defined as the following:

\begin{definition}
    Let $\{R_{ij}\}_{j=1}^{m_i}$, $1\leq i \leq n$, be a family of POVMs on $\mathcal{H}$. $(\{P_{ij}\}_{j=1}^{m_i},V)$ is called a \emph{Naimark dilation} of $\{R_{ij}\}_{j=1}^{m_i}$, if $\{P_{ij}\}_{j=1}^{m_i}$ is a family of PVMs on $\mathcal{H}'$, $V:\mathcal{H}\to\mathcal{H}'$, and $R_{ij}=V^*P_{ij}V$ for all $i,j$.
    \label{def:Naimark}
\end{definition}

Within the range of its diverse forms, we here introduce a specific variant of the Naimark dilation theorem, particularly suited clearly to handle finite-dimensional POVMs. This construction can also be found in \cite[Proposition 9.6 and Theorem 9.8]{PaulsenNaimark}. It's important to note that while this construction serves to illuminate the intuition behind Naimark dilations, the results we present later in the paper are not tied to this specific example. Our general theorem applies to any Naimark dilation, regardless of its particular structure.

\begin{construction}
    Let $\{R_{ij}\}_{j=1}^{m_i}$, $1\leq i \leq n$, be a family of POVM's on $\mathcal{H}$. Construct projective measurements $\{P_{ij}\}_{j=1}^{m_i}$ and an isometry $V$ such that $R_{ij}=V^*P_{ij}V$ as follows:

  For a single POVM $\{R_j\}_{j=1}^{m}$, define $P_j=\mathbbm{1}_{B(\mathcal{H})}\otimes e_{j}e_j^*$ in $B(\mathcal{H}\otimes \mathbb{C}^m)$, $1 \leq j \leq m$ and $V:\mathcal{H}\to \mathcal{H} \otimes \mathbb{C}^m$, $\varphi \mapsto \sum_{j=1}^m \sqrt{R_j}\varphi \otimes e_j$. 
  
    Now assume we did the construction for $n$ POVM's. Take a family $\{R_{ij}\}_{j=1}^{m_i}$, $1\leq i \leq n+1$. Then we have PVM's $\{Q_{ij}\}_{j=1}^{m_i}$, $1\leq i \leq n$, on $\mathcal{H}\otimes \mathbb{C}^{m_1}\otimes\cdots\otimes\mathbb{C}^{m_n}$ and an isometry $V_1:\mathcal{H}\to \mathcal{H}\otimes \mathbb{C}^{m_1}\otimes\cdots\otimes\mathbb{C}^{m_n}$ such that $R_{ij}=V_1^*Q_{ij}V_1$. From $\{R_{n+1j}\}_{j=1}^{m_{n+1}}$, we get a POVM on $\mathcal{H}\otimes \mathbb{C}^{m_1}\otimes\cdots\otimes\mathbb{C}^{m_n}$ by taking 
\begin{align*}
    \tilde{R}_{n+1j}&=V_1R_{n+1j}V_1^*\quad\intertext{for $j\neq 1$ and }\,\tilde{R}_{(n+1)1}&=V_1R_{(n+1)1}V_1^* + (\mathbbm{1}_{B(\mathcal{H}\otimes \mathbb{C}^{m_1}\otimes\cdots\otimes\mathbb{C}^{m_n})}-V_1V_1^*).
\end{align*} 
Similar to the single POVM case, we define 
\begin{align*}
    P_{n+1j}=\mathbbm{1}_{B(\mathcal{H}\otimes \mathbb{C}^{m_1}\otimes\cdots\otimes\mathbb{C}^{m_n})}\otimes e_{j}e_j^*
\end{align*}
in $B(\mathcal{H}\otimes \mathbb{C}^{m_1}\otimes\cdots\otimes\mathbb{C}^{m_{n+1}})$, $1 \leq j \leq m$ and 
\begin{align*}
    V_2:\mathcal{H}\otimes \mathbb{C}^{m_1}\otimes\cdots\otimes\mathbb{C}^{m_n} \to \mathcal{H}\otimes \mathbb{C}^{m_1}\otimes\cdots\otimes\mathbb{C}^{m_{n+1}}, \varphi \mapsto \sum_{j=1}^{m_{n+1}} \sqrt{\tilde{R}_{n+1j}}\varphi \otimes e_j. 
\end{align*}
For $1\leq i \leq n$, we let
 \begin{align*}
    P_{ij}=V_2Q_{ij}V_2^*\quad\text{for $i\neq 1$ and }\,P_{i1}=V_2Q_{i1}V_2^* + (\mathbbm{1}_{B(\mathcal{H}\otimes \mathbb{C}^{m_1}\otimes\cdots\otimes\mathbb{C}^{m_{n+1}})}-V_2V_2^*).
 \end{align*}
Finally, we define $V:=V_2\circ V_1$. 
    \label{con:Naimark}
\end{construction}

The following theorem shows that this construction  gives us a valid projective measurement. 

\begin{theorem}\cite[Theorem 9.8]{PaulsenNaimark}
 Let $\{R_{ij}\}_{j=1}^{m_i}$, $1\leq i \leq n$, be a family of POVM's on $\mathcal{H}$. Then the $\{P_{ij}\}_{j=1}^{m_i}$ given in Construction \ref{con:Naimark} are projective measurements, and we have $R_{ij}=V^*P_{ij}V$ for the isometry $V$ given therein.
\end{theorem}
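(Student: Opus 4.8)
The plan is to establish both assertions (that each family $\{P_{ij}\}_j$ is a PVM, and that $R_{ij}=V^*P_{ij}V$) simultaneously by induction on the number $n$ of POVMs, following the recursive structure of Construction~\ref{con:Naimark}. The induction hypothesis is exactly the full statement of the theorem for $n$ POVMs: for each fixed $i$ the operators $\{P_{ij}\}_j$ form a PVM, the map $V$ is an isometry, and $R_{ij}=V^*P_{ij}V$ for all $i,j$.

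\textbf{Base case $n=1$.} With $P_j=\mathbbm 1_{B(\mathcal H)}\otimes e_je_j^*$ and $V\varphi=\sum_j\sqrt{R_j}\,\varphi\otimes e_j$: the $P_j$ are pairwise-orthogonal projections summing to $\mathbbm 1$ because $\{e_je_j^*\}_j$ is; $V$ is an isometry since $\langle V\varphi,V\psi\rangle=\sum_j\langle\sqrt{R_j}\varphi,\sqrt{R_j}\psi\rangle=\langle\varphi,(\sum_jR_j)\psi\rangle=\langle\varphi,\psi\rangle$ using $\sum_jR_j=\mathbbm 1$; and a one-line adjoint computation gives $V^*(\eta\otimes e_j)=\sqrt{R_j}\,\eta$, whence $V^*P_jV\varphi=V^*(\sqrt{R_j}\varphi\otimes e_j)=R_j\varphi$, i.e. $R_j=V^*P_jV$.

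\textbf{Inductive step.} Apply the hypothesis to the first $n$ POVMs of the family $\{R_{ij}\}_{1\le i\le n+1}$ to obtain PVMs $\{Q_{ij}\}_j$ on $\mathcal K:=\mathcal H\otimes\mathbb{C}^{m_1}\otimes\cdots\otimes\mathbb{C}^{m_n}$ and an isometry $V_1:\mathcal H\to\mathcal K$ with $R_{ij}=V_1^*Q_{ij}V_1$. First check that the $\tilde R_{(n+1)j}$ defined in the construction form a POVM on $\mathcal K$: positivity is clear since $\mathbbm 1_{\mathcal K}-V_1V_1^*$ is a projection, and $\sum_j\tilde R_{(n+1)j}=V_1(\sum_jR_{(n+1)j})V_1^*+\mathbbm 1_{\mathcal K}-V_1V_1^*=\mathbbm 1_{\mathcal K}$. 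Applying the base case to this single POVM on $\mathcal K$ yields the PVM $\{P_{(n+1)j}\}_j$ on $\mathcal K\otimes\mathbb{C}^{m_{n+1}}$ together with the isometry $V_2$ satisfying $\tilde R_{(n+1)j}=V_2^*P_{(n+1)j}V_2$. For $i\le n$, the operators $P_{ij}$ are obtained by compressing $Q_{ij}$ to $V_2Q_{ij}V_2^*$ and adding the defect $\mathbbm 1-V_2V_2^*$ to a single outcome, as in the construction; since $V_2^*V_2=\mathbbm 1_{\mathcal K}$ and the $Q_{ij}$ are projections, each $V_2Q_{ij}V_2^*$ is a projection orthogonal to $\mathbbm 1-V_2V_2^*$, so $\{P_{ij}\}_j$ is again a PVM, and $V_2^*P_{ij}V_2=Q_{ij}$ because $V_2^*(\mathbbm 1-V_2V_2^*)V_2=0$. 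Finally set $V=V_2V_1$, an isometry as a composite of isometries: for $i\le n$ one gets $V^*P_{ij}V=V_1^*(V_2^*P_{ij}V_2)V_1=V_1^*Q_{ij}V_1=R_{ij}$, while for $i=n+1$ one gets $V^*P_{(n+1)j}V=V_1^*\tilde R_{(n+1)j}V_1=R_{(n+1)j}$, the last step using $V_1^*(\mathbbm 1_{\mathcal K}-V_1V_1^*)V_1=0$ to annihilate the defect inside $\tilde R_{(n+1)1}$. This closes the induction.

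\textbf{Main obstacle.} There is no deep difficulty; the result is essentially a verification, and the argument is routine operator algebra. The two points that must be handled carefully are: (i) each completion term $\mathbbm 1-VV^*$ is a projection orthogonal to the compressed projections $VQV^*$, which is what guarantees the enlarged families remain projective and sum to the identity; and (ii) sandwiching by $V^*(\cdot)V$ kills these completion terms, which is what preserves the compression identity $R=V^*PV$ through each recursive step. One must also keep track of the growing tower of tensor factors $\mathbb{C}^{m_1}\otimes\cdots\otimes\mathbb{C}^{m_{n+1}}$ adjoined at each stage, but this is bookkeeping rather than a genuine obstruction.
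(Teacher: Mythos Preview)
Your proof is correct. The paper itself does not supply a proof of this theorem: it simply cites \cite[Theorem~9.8]{PaulsenNaimark} and moves on, so there is nothing in the paper to compare against. Your inductive argument precisely tracks the recursive structure of Construction~\ref{con:Naimark} and handles the only two nontrivial verifications (that the defect projections $\mathbbm 1-V_kV_k^*$ are orthogonal to the compressed projections, and that they vanish under $V_k^*(\cdot)V_k$) cleanly; this is the standard proof one would expect to find in the cited reference.
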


In our subsequent analysis, we show that our results hold for all Naimark dilations, thus are not limited by the specific details of this construction.

Given the Naimark dilation of multiple POVMs, one can talk about the Naimark dilation of a strategy:

\begin{definition}[Naimark dilation of quantum strategies]
    Given a pure strategy $S=(\ket{\psi},\{A_{sa}\},\{B_{tb}\})$, a PVM strategy $S_{\operatorname{Naimark}}=(V_A\otimes V_B\ket{\psi},\{P_{sa}\},\{Q_{tb}\})$ is called a Naimark dilation of $S$, if $(\{P_{sa}\},V_A)$ is a Naimark dilation of $\{A_{sa}\}$, and $(\{Q_{tb}\},V_B)$ is a Naimark dilation of $\{B_{tb}\}$. 
\end{definition}

And not surprisingly, they generate the same statistics:
\begin{lemma}
    Any pure strategy gives the same correlation as its Naimark dilations.\label{lem:Naimarksamecorrelation}
\end{lemma}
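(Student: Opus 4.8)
\section*{Proof proposal}

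The plan is to compute the correlation produced by $S_{\operatorname{Naimark}}$ directly and check that it agrees, entry by entry, with the correlation produced by $S$. Recall that the correlation associated with a pure strategy $(\ket{\psi},\{A_{sa}\},\{B_{tb}\})$ is $p(a,b\mid s,t)=\bra{\psi}\,A_{sa}\otimes B_{tb}\,\ket{\psi}$ (the specialization to pure states of the formula in Lemma~\ref{lem:bck_score}). For $S_{\operatorname{Naimark}}=(V_A\otimes V_B\ket{\psi},\{P_{sa}\},\{Q_{tb}\})$ the correlation is therefore $p'(a,b\mid s,t)=\bra{\psi}\,(V_A^*\otimes V_B^*)(P_{sa}\otimes Q_{tb})(V_A\otimes V_B)\,\ket{\psi}$.

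The key step is to factor this expression using the mixed-product property of the tensor product, $(X_1\otimes X_2)(Y_1\otimes Y_2)=(X_1Y_1)\otimes(X_2Y_2)$, which gives $(V_A^*\otimes V_B^*)(P_{sa}\otimes Q_{tb})(V_A\otimes V_B)=(V_A^*P_{sa}V_A)\otimes(V_B^*Q_{tb}V_B)$. Then I would invoke the defining property of a Naimark dilation of a family of POVMs (Definition~\ref{def:Naimark}): since $(\{P_{sa}\},V_A)$ is a Naimark dilation of $\{A_{sa}\}$ we have $A_{sa}=V_A^*P_{sa}V_A$ for all $s,a$, and since $(\{Q_{tb}\},V_B)$ is a Naimark dilation of $\{B_{tb}\}$ we have $B_{tb}=V_B^*Q_{tb}V_B$ for all $t,b$. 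Substituting yields $p'(a,b\mid s,t)=\bra{\psi}\,A_{sa}\otimes B_{tb}\,\ket{\psi}=p(a,b\mid s,t)$ for all $s,t,a,b$, which is exactly the claim.

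There is essentially no obstacle in this argument; the only point worth a remark is that the identities $A_{sa}=V_A^*P_{sa}V_A$ and $B_{tb}=V_B^*Q_{tb}V_B$ are precisely the data packaged into the notion of a Naimark dilation of the respective POVM families, so for this particular statement one does not even need to use that the $P_{sa}$ and $Q_{tb}$ are projective (that fact is relevant only for $S_{\operatorname{Naimark}}$ being a \emph{PVM} strategy, not for the correlation computation). If one prefers to work with the operator $W$ of Lemma~\ref{lem:bck_score} rather than with individual probabilities, the same computation shows $\omega(S,G)=\omega(S_{\operatorname{Naimark}},G)$ for every nonlocal game $G$, since the winning probability depends on the strategy only through its correlation.
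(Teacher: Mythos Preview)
Your proof is correct and follows essentially the same approach as the paper: both use the defining identities $A_{sa}=V_A^*P_{sa}V_A$ and $B_{tb}=V_B^*Q_{tb}V_B$ together with the mixed-product property of the tensor product to identify $\bra{\psi}A_{sa}\otimes B_{tb}\ket{\psi}$ with $\bra{V\psi}P_{sa}\otimes Q_{tb}\ket{V\psi}$. The only cosmetic difference is that the paper starts from the correlation of $S$ and rewrites it as that of $S_{\operatorname{Naimark}}$, whereas you go in the opposite direction.
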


\begin{proof}
    Let $S=(\ket{\psi},\{A_{sa}\},\{B_{tb}\})$ and $S_{\operatorname{Naimark}}=(V_A\otimes V_B\ket{\psi},\{P_{sa}\},\{Q_{tb}\})$. Using $A_{sa}=V_A^*P_{sa}V_A,B_{tb}=V_B^*Q_{tb}V_B$,we get
    $$
    \braket{\psi|A_{sa}\otimes B_{tb}|\psi}=\braket{\psi|V^*VA_{sa}\otimes B_{tb}V^*V|\psi}=\braket{V\psi|P_{sa}\otimes Q_{tb}|V\psi},
    $$
    where $V=V_A\otimes V_B$.
\end{proof}

As an analog of Proposition \ref{prop:restrictrob}, we will show that $S$ and $S_{\operatorname{Naimark}}$ are mutually locally dilated if $S$ is projective. To prove this, we need the following lemma:

\begin{lemma}
\label{lem:naimarklocaldilationrob}
Let $\{R_{ij}\}_{j=1}^m$, $1 \leq i \leq n$, be a collection of POVM's on $\mathcal{H}$, $\sigma$ be a density matrix on $\mathcal{H}$, and $\ket{\psi}$ be a purification of $\sigma$. Then any Naimark dilation $(\{P_{ij}\},V)$ of $\{R_{ij}\}$ satisfies 
\begin{align*}
    \|VR_{ij}\otimes \mathbbm{1}\ket{\psi}- P_{ij}V\otimes \mathbbm{1}\ket{\psi}\|^2=\braket{\psi|(\mathbbm{1}-R_{ij})R_{ij}\otimes \mathbbm{1}_B|\psi}.
\end{align*}
\end{lemma}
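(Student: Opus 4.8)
The statement is a pointwise norm identity, so the natural approach is a direct expansion of $\|VR_{ij}\otimes\mathbbm 1\ket\psi - P_{ij}V\otimes\mathbbm 1\ket\psi\|^2$ using $\|x-y\|^2 = \braket{x,x} + \braket{y,y} - 2\operatorname{Re}\braket{x,y}$, and then simplifying each of the three resulting terms using only the defining relation $R_{ij}=V^*P_{ij}V$ of a Naimark dilation (Definition \ref{def:Naimark}), the facts that $V$ is an isometry ($V^*V=\mathbbm 1$), that $P_{ij}$ is a projection ($P_{ij}^2=P_{ij}=P_{ij}^*$), and that $R_{ij}$ is self-adjoint. Crucially, the argument should not depend on Construction \ref{con:Naimark} at all, matching the paper's stated goal that the results hold for \emph{any} Naimark dilation.

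First I would reduce to the left tensor factor: since everything acts as $(\,\cdot\,)\otimes\mathbbm 1$ on $\ket\psi$ and $\sigma=\tr_B[\proj\psi]$, for any operator $X$ on $\mathcal H$ we have $\|X\otimes\mathbbm 1\ket\psi\|^2 = \braket{\psi|X^*X\otimes\mathbbm 1|\psi} = \tr[X^*X\sigma]$; more generally $\braket{\psi|(X^*Y)\otimes\mathbbm 1|\psi}=\tr[X^*Y\sigma]$. So the whole computation can be pushed through $\sigma$, though one can equally just keep working with $\ket\psi$ directly as the lemma statement does. Expanding the squared norm gives three terms:
\begin{align*}
\braket{\psi|R_{ij}V^*VR_{ij}\otimes\mathbbm 1|\psi}, \quad \braket{\psi|V^*P_{ij}^*P_{ij}V\otimes\mathbbm 1|\psi}, \quad \braket{\psi|R_{ij}V^*P_{ij}V\otimes\mathbbm 1|\psi}
\end{align*}
(the last one appearing with a factor $-2$ and a real part). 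For the first term, $V^*V=\mathbbm 1$ gives $\braket{\psi|R_{ij}^2\otimes\mathbbm 1|\psi}$. For the second, $P_{ij}^*P_{ij}=P_{ij}$ and then $V^*P_{ij}V=R_{ij}$ gives $\braket{\psi|R_{ij}\otimes\mathbbm 1|\psi}$. For the third, $V^*P_{ij}V=R_{ij}$ gives $\braket{\psi|R_{ij}^2\otimes\mathbbm 1|\psi}$, which is real and positive, so the real part is itself. Collecting: $\braket{\psi|R_{ij}^2\otimes\mathbbm1|\psi} + \braket{\psi|R_{ij}\otimes\mathbbm1|\psi} - 2\braket{\psi|R_{ij}^2\otimes\mathbbm1|\psi} = \braket{\psi|(R_{ij}-R_{ij}^2)\otimes\mathbbm1|\psi} = \braket{\psi|(\mathbbm1-R_{ij})R_{ij}\otimes\mathbbm1_B|\psi}$, which is exactly the claimed right-hand side.

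There is essentially no serious obstacle here — the only things to be careful about are bookkeeping of adjoints (writing $P_{ij}V\otimes\mathbbm 1$ has adjoint $V^*P_{ij}^*\otimes\mathbbm 1 = V^*P_{ij}\otimes\mathbbm 1$ since $P_{ij}$ is self-adjoint) and noting explicitly that the cross term is real so that $\operatorname{Re}$ can be dropped (this follows because $\braket{\psi|R_{ij}^2\otimes\mathbbm 1|\psi} = \|R_{ij}\otimes\mathbbm 1\ket\psi\|^2 \geq 0$, using that $R_{ij}$ is self-adjoint). I would also remark in passing that the purification $\ket\psi$ and the density matrix $\sigma$ play no role beyond providing the vector on which the operators act; the identity is really just $\|VR_{ij}\xi - P_{ij}V\xi\|^2 = \braket{\xi|(\mathbbm1-R_{ij})R_{ij}\xi}$ for any $\xi$, tensored trivially with $\mathcal H_B$. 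This lemma will then feed into an argument (presumably Proposition~\ref{prop:Naimarkrob}) showing that $S\xhookrightarrow{\varepsilon}S_{\operatorname{Naimark}}$ precisely when $S$ is $\varepsilon$-projective, since the right-hand side $\braket{\psi|(\mathbbm1-R_{ij})R_{ij}\otimes\mathbbm1|\psi}$ is exactly $\braket{\mathbbm1-R_{ij},R_{ij}}_{\sigma}$ from the definition of $\varepsilon$-projective.
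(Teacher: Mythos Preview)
Your proposal is correct and is essentially identical to the paper's proof: the paper also expands the squared norm, uses $V^*V=\mathbbm 1$, $P_{ij}^2=P_{ij}$, and $V^*P_{ij}V=R_{ij}$ to reduce the four resulting terms to $R_{ij}^2 + R_{ij} - R_{ij}^2 - R_{ij}^2 = (\mathbbm 1 - R_{ij})R_{ij}$. The only cosmetic difference is that the paper writes the cross term as two separate terms $-V^*P_{ij}VR_{ij}-R_{ij}V^*P_{ij}V$ rather than $-2\operatorname{Re}(\cdot)$, and does not pause to comment on reality or the role of $\sigma$.
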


\begin{proof}
Using $V^*P_{ij}V=R_{ij}$, we get
\begin{align*}
    &\|VR_{ij}\otimes \mathbbm{1}\ket{\psi}- P_{ij}V\otimes \mathbbm{1}\ket{\psi}\|^2\\
    =&\braket{\psi|(R_{ij}V^*VR_{ij}+V^*P_{ij}P_{ij}V-V^*P_{ij}VR_{ij}-R_{ij}V^*P_{ij}V)\otimes\mathbbm{1}|\psi}\\
    =&\braket{\psi|(R_{ij}^2+R_{ij}-R^2_{ij}-R_{ij}^2)\otimes\mathbbm{1}|\psi}\\
    =&\braket{\psi|(\mathbbm{1}-R_{ij})R_{ij}\otimes \mathbbm{1}_B|\psi}.
\end{align*}
\end{proof}

Applying Lemma \ref{lem:naimarklocaldilationrob} in the context of nonlocal strategies, we have the following proposition:

\begin{proposition}
    If a pure strategy $S$ is $\varepsilon$-projective, then $S\xhookrightarrow{\varepsilon}S_{\operatorname{Naimark}}$ and $S_{\operatorname{Naimark}}\xhookrightarrow{\varepsilon}S$, where $S_{\operatorname{Naimark}}$ is any Naimark dilation of $S$.
\label{prop:Naimarkrob}
\end{proposition}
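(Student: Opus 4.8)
The plan is to mirror the proof of Proposition~\ref{prop:restrictrob}, using the Naimark analogue of the restriction isometries. I would first show $S \xhookrightarrow{\varepsilon} S_{\operatorname{Naimark}}$ directly, and then obtain the reverse dilation $S_{\operatorname{Naimark}} \xhookrightarrow{\varepsilon} S$ from Proposition~\ref{prop:otherdirection}, provided the auxiliary state produced in the first step is separable. Note that for a Naimark dilation the dilating isometry $V_A : \mathcal{H}_A \to \mathcal{H}_A'$ need not land in a space of the form $\mathcal{H}_{\tilde A} \otimes \mathcal{H}_{\hat A}$ on the nose (in Construction~\ref{con:Naimark} it does, being $\mathcal{H}_A \otimes \mathbb{C}^{m_1} \otimes \cdots$, but we want to stay construction-agnostic), so a small bookkeeping step is needed to view $V_A \otimes V_B$ as a local isometry with a tensor-factor ancilla; this is the same padding trick already used in Proposition~\ref{prop:otherdirection}, so I would either invoke it or simply take $\mathcal{H}_{\hat A}$ to be a space large enough to embed $\mathcal{H}_A'$ and let $\ket{\aux} = \ket{0}_{\hat A}\ket{0}_{\hat B}$ after composing with a fixed embedding $\mathcal{H}_A' \hookrightarrow \mathcal{H}_A \otimes \mathcal{H}_{\hat A}$.

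The core estimate is an application of Lemma~\ref{lem:naimarklocaldilationrob}. Take $\ket{\psi}$ itself as the purification of $\sigma_A = \tr_B[\proj{\psi}]$ (here $\mathcal{H}_B$ plays the role of the purifying space), and apply the lemma to the family $\{A_{sa}\}_s$ with its Naimark dilation $(\{P_{sa}\}, V_A)$. This gives
\begin{align*}
    \|(V_A \otimes \mathbbm{1}_B)(A_{sa} \otimes \mathbbm{1}_B)\ket{\psi} - (P_{sa} V_A \otimes \mathbbm{1}_B)\ket{\psi}\|^2 = \braket{\psi|(\mathbbm{1}_A - A_{sa})A_{sa} \otimes \mathbbm{1}_B|\psi} = \braket{\mathbbm{1}_A - A_{sa}, A_{sa}}_{\sigma_A} \le \varepsilon^2,
\end{align*}
where the last inequality is exactly the $\varepsilon$-projectivity hypothesis on $S$. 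Since $S_{\operatorname{Naimark}}$ has state $(V_A \otimes V_B)\ket{\psi}$, measurement operators $P_{sa}, Q_{tb}$, and the required auxiliary state is the trivial $\ket{0}\ket{0}$ (the isometry $V_A \otimes V_B$ acts as $U$), the first defining inequality of a local $\varepsilon$-dilation holds with $\varepsilon = 0$ and the measurement inequalities hold with the bound just derived. The same argument on Bob's side with $\{B_{tb}\}$ and $(\{Q_{tb}\}, V_B)$ completes the verification that $S \xhookrightarrow{\varepsilon} S_{\operatorname{Naimark}}$, and crucially the ancilla is separable (indeed a product of fixed basis vectors), so Proposition~\ref{prop:otherdirection} yields $S_{\operatorname{Naimark}} \xhookrightarrow{\varepsilon} S$.

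I expect the only real obstacle to be the one flagged above: reconciling the abstract Naimark isometry $V_A : \mathcal{H}_A \to \mathcal{H}_A'$ with the tensor-product form $U_A : \mathcal{H}_A \to \mathcal{H}_{\tilde A} \otimes \mathcal{H}_{\hat A}$ demanded by Definition~\ref{def:localdilation}, and similarly checking that the state of $S_{\operatorname{Naimark}}$, namely $(V_A \otimes V_B)\ket{\psi}$, is exactly $\ket{\tilde\psi}$ in the dilation (it is, by definition of the Naimark dilation of a strategy) so that no extra auxiliary factor is secretly needed there. Everything else is a direct substitution into Lemma~\ref{lem:naimarklocaldilationrob} together with the definition of $\varepsilon$-projectivity. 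One minor point worth stating explicitly in the write-up: the lemma is applied with purifying space $\mathcal{H}_B$ rather than an external purifying space, which is legitimate because $\ket{\psi}$ is a purification of $\sigma_A$, and the tensored $\mathbbm{1}_B$ in the lemma's conclusion then lines up with the $\mathbbm{1}_B$ appearing in the dilation inequalities for Alice's operators (and symmetrically for Bob).
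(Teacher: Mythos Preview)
Your proposal is correct and follows exactly the same approach as the paper: apply Lemma~\ref{lem:naimarklocaldilationrob} together with the $\varepsilon$-projectivity hypothesis to obtain $S\xhookrightarrow[V_A\otimes V_B]{\varepsilon}S_{\operatorname{Naimark}}$ with a separable (indeed trivial) ancilla, and then invoke Proposition~\ref{prop:otherdirection} for the reverse direction. The paper's proof is in fact a two-sentence version of what you wrote; your additional remarks about reconciling the abstract Naimark isometry with the tensor-product form required by Definition~\ref{def:localdilation} are a legitimate bookkeeping point that the paper simply glosses over.
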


\begin{proof}
    It is clear from Lemma \ref{lem:naimarklocaldilationrob} that $S\xhookrightarrow[V_A\otimes V_B]{\varepsilon}S_{\operatorname{Naimark}}$, where $V_A,V_B$ are isometries given in Definition \ref{def:Naimark}. Then $S_{\operatorname{Naimark}}\xhookrightarrow{\varepsilon}S$ follows from Proposition \ref{prop:otherdirection}.
\end{proof}

We now show that if a Naimark dilation of a strategy is support-preserving, then the original one must be both projective and support-preserving (an analog of Theorem \ref{thm:resProj}).

\begin{theorem}
    Let $S=(\ket{\psi},\{A_{sa}\},\{B_{tb}\})$ be a pure strategy and $S_{\operatorname{Naimark}}$ be any Naimark dilation of $S$.
    \begin{enumerate}[(a)]
        \item If $S$ is $\varepsilon_1$-support-preserving and $\varepsilon_2$-projective, then $S_{\operatorname{Naimark}}$ is $(\varepsilon_1+\varepsilon_2)$-support-preserving.
        \item If $S_{\operatorname{Naimark}}$ is $\varepsilon_3$-support-preserving, then $S$ is $\varepsilon_3$-support-preserving and $\varepsilon_3$-projective.
    \end{enumerate}
    \label{thm:NaiSupp}
\end{theorem}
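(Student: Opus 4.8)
The plan is to mirror the proof of Theorem \ref{thm:resProj}, working out both sides of the relevant state-dependent quantities in the Naimark picture and comparing them to the corresponding quantities for $S$. Write $S_{\operatorname{Naimark}}=(V_A\otimes V_B\ket{\psi},\{P_{sa}\},\{Q_{tb}\})$ with $A_{sa}=V_A^*P_{sa}V_A$ and $B_{tb}=V_B^*Q_{tb}V_B$. As always it suffices to argue on Alice's side, the Bob side being identical. Let $\ket{\Psi}:=(V_A\otimes V_B)\ket{\psi}$ be the dilated state, $\tilde\Pi_A$ the projection onto $\Supp_A\ket{\Psi}$, and $\sigma_{\Psi_A}=\tr_B[\proj{\Psi}]$. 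Note $\sigma_{\Psi_A}=V_A\sigma_A V_A^*$, so $\Supp_A\ket{\Psi}=V_A(\Supp_A\ket{\psi})$ and hence $\tilde\Pi_A = V_AV_A^*$ (the isometry $V_A$ maps the support of $\sigma_A$ isometrically onto the support of $\sigma_{\Psi_A}$, and the latter is contained in the range of $V_A$ which is exactly $V_AV_A^*$). Here I would need to be slightly careful: if $\ket{\psi}$ is itself not full-rank then $\Supp_A\ket{\psi}$ is a proper subspace of $\mathcal H_A$ and $V_AV_A^*$ may strictly contain $\tilde\Pi_A$; but $V_A$ being injective on $\Supp_A\ket\psi$ still gives $\tilde\Pi_A \le V_AV_A^*$ and, crucially, $\tilde\Pi_A$ acts as identity on the relevant vectors, so the computation below goes through with $\tilde\Pi_A$ wherever it matters. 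This bookkeeping about which projection is which is the main place to be careful.

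The key computation is to expand $\|[\tilde\Pi_A,P_{sa}]\|_{\sigma_{\Psi_A}}^2$. As in the proof of Theorem \ref{thm:resProj}, using that $\tilde\Pi_A$ is a projection one gets
\begin{align*}
\|[\tilde\Pi_A,P_{sa}]\|_{\sigma_{\Psi_A}}^2 &= \braket{\Psi|(P_{sa}^2 - P_{sa}\tilde\Pi_A P_{sa})\otimes\mathbbm 1|\Psi} \\
&= \braket{\Psi|(P_{sa} - P_{sa}\tilde\Pi_A P_{sa})\otimes\mathbbm 1|\Psi},
\end{align*}
since $P_{sa}$ is a projection. Now insert $\tilde\Pi_A = V_AV_A^*$ where it acts trivially and use $V_A^*P_{sa}V_A=A_{sa}$, $\braket{\Psi|\cdot|\Psi}=\braket{\psi|V_A^*(\cdot)V_A\otimes V_B^*(\cdot)V_B|\psi}$: the first term becomes $\braket{\psi|A_{sa}\otimes\mathbbm 1|\psi}$ and the second becomes $\braket{\psi|A_{sa}\Pi_A A_{sa}\otimes\mathbbm 1|\psi}$ (the middle $V_AV_A^*$ inserted between the two $P_{sa}$'s absorbs into $\Pi_A$ on the $\psi$ side exactly as in the restriction argument, after using $V_AV_A^*\ge\tilde\Pi_A$ and that the vectors live in its range). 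Hence
\begin{align*}
\|[\tilde\Pi_A,P_{sa}]\|_{\sigma_{\Psi_A}}^2 &= \braket{\psi|(A_{sa} - A_{sa}\Pi_A A_{sa})\otimes\mathbbm 1|\psi} \\
&= \underbrace{\braket{\psi|(A_{sa}-A_{sa}^2)\otimes\mathbbm 1|\psi}}_{=\braket{\mathbbm 1-A_{sa},A_{sa}}_{\sigma_A}} + \underbrace{\braket{\psi|(A_{sa}^2 - A_{sa}\Pi_A A_{sa})\otimes\mathbbm 1|\psi}}_{=\|[\Pi_A,A_{sa}]\|_{\sigma_A}^2}.
\end{align*}
This is the exact analog of the identity in Theorem \ref{thm:resProj}: $\|[\tilde\Pi_A,P_{sa}]\|_{\sigma_{\Psi_A}}^2 = \braket{\mathbbm 1-A_{sa},A_{sa}}_{\sigma_A} + \|[\Pi_A,A_{sa}]\|_{\sigma_A}^2$, with all three terms nonnegative.

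From this identity both parts follow immediately, just as for Theorem \ref{thm:resProj}. For (a): if $S$ is $\varepsilon_1$-support-preserving and $\varepsilon_2$-projective then the right-hand side is at most $\varepsilon_1^2+\varepsilon_2^2\le(\varepsilon_1+\varepsilon_2)^2$, so $\|[\tilde\Pi_A,P_{sa}]\|_{\sigma_{\Psi_A}}\le\varepsilon_1+\varepsilon_2$, i.e.\ $S_{\operatorname{Naimark}}$ is $(\varepsilon_1+\varepsilon_2)$-support-preserving. For (b): if $S_{\operatorname{Naimark}}$ is $\varepsilon_3$-support-preserving then the left-hand side is at most $\varepsilon_3^2$; since both summands on the right are nonnegative, each is at most $\varepsilon_3^2$, giving $\|[\Pi_A,A_{sa}]\|_{\sigma_A}\le\varepsilon_3$ (so $S$ is $\varepsilon_3$-support-preserving) and $\braket{\mathbbm 1-A_{sa},A_{sa}}_{\sigma_A}\le\varepsilon_3^2$ (so $S$ is $\varepsilon_3$-projective). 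The only genuine obstacle is the first paragraph's bookkeeping — verifying that inserting $V_AV_A^*$ between the two copies of $P_{sa}$ really does reduce to $\Pi_A$ on the $\ket\psi$ side, which hinges on $\ket\psi\in\Supp_A\ket\psi\otimes\Supp_B\ket\psi$ and $V_A$ being an isometry; everything after that is the same bookkeeping-free algebra as in Theorem \ref{thm:resProj}.
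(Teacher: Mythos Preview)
Your approach is exactly the paper's: derive the single identity
\[
\|[\tilde\Pi_A,P_{sa}]\|_{\sigma_{\Psi_A}}^2 \;=\; \braket{\mathbbm 1-A_{sa},A_{sa}}_{\sigma_A} \;+\; \|[\Pi_A,A_{sa}]\|_{\sigma_A}^2
\]
and read off (a) and (b) from nonnegativity. That is precisely what the paper does.

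The one genuine slip is your identification of the support projection of the dilated state. You correctly observe $\Supp_A\ket\Psi = V_A(\Supp_A\ket\psi)$, but from this the projection is $\tilde\Pi_A = V_A\Pi_A V_A^*$, \emph{not} $V_AV_A^*$; the latter projects onto all of $V_A(\mathcal H_A)$ and only coincides with $\tilde\Pi_A$ when $\ket\psi$ is full-rank. You caught the discrepancy but then never wrote down the correct formula, which is why the ``bookkeeping'' paragraph and the phrase ``the middle $V_AV_A^*$ \dots\ absorbs into $\Pi_A$'' come out muddled. Indeed, inserting $V_AV_A^*$ between the two $P_{sa}$'s gives $\braket{\psi|A_{sa}^2\otimes\mathbbm 1|\psi}$, not $\braket{\psi|A_{sa}\Pi_A A_{sa}\otimes\mathbbm 1|\psi}$, so that line of reasoning does not work as stated.

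With the correct $\tilde\Pi_A = V_A\Pi_A V_A^*$ the second term is immediate:
\[
\braket{\Psi|P_{sa}\tilde\Pi_A P_{sa}\otimes\mathbbm 1|\Psi}
= \braket{\psi|V_A^*P_{sa}V_A\,\Pi_A\,V_A^*P_{sa}V_A\otimes\mathbbm 1|\psi}
= \braket{\psi|A_{sa}\Pi_A A_{sa}\otimes\mathbbm 1|\psi},
\]
and all the hedging in your first and last paragraphs disappears. The paper simply writes $V\Pi V^*$ from the outset and avoids the detour. After this one-line fix, your argument and the paper's are identical.
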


\begin{proof}
    We prove for Alice’s side, and the same argument works also for Bob. Let $\Pi$ be the projection on the support of $\ket{\psi}$ on Alice's side. Let $(\{P_{sa}\},V)$ be the Naimark dilation of $\{A_{sa}\}$. Note that 
\begin{align*}
    \|[V\Pi V^*,P_{sa}]\|^2_{V\sigma V^*}=&\|P_{sa}V\Pi V^*V\otimes\mathbbm{1}\ket{\psi}-V\Pi V^*P_{sa}V\otimes\mathbbm{1}\ket{\psi}\|^2\\
    =&\braket{\psi|A_{sa}\otimes\mathbbm{1}|\psi}-\braket{\psi|A_{sa}\Pi A_{sa}\otimes\mathbbm{1}|\psi}.
\end{align*}
And
\begin{align*}
    \|[\Pi,A_{sa}]\|^2_{\sigma}=&\|\Pi A_{sa}\otimes\mathbbm{1}\ket{\psi}-A_{sa}\Pi\otimes\mathbbm{1}\ket{\psi}\|^2\\
    =&\braket{\psi|A^2_{sa}\otimes\mathbbm{1}|\psi}-\braket{\psi|A_{sa}\Pi A_{sa}\otimes\mathbbm{1}|\psi}.
\end{align*}
So
\begin{align*}
    &\|[V\Pi V^*,P_{sa}]\|^2_{V\sigma V^*}-\braket{\mathbbm{1}- A_{sa},A_{sa}}_{\sigma}\\
    =&\braket{\psi|A^2_{sa}\otimes\mathbbm{1}|\psi}-\braket{\psi|A_{sa}\Pi A_{sa}\otimes\mathbbm{1}|\psi}\\
    =&\|[\Pi,A_{sa}]\|^2_{\sigma}.
\end{align*}
    
    Then (a) is clear. For (b), note that both $\braket{\mathbbm{1}- A_{sa},A_{sa}}_{\sigma}$ and $\|[\Pi,A_{sa}]\|^2_{\sigma}$ are positive. So $S_{\operatorname{Naimark}}$ being $\varepsilon_3$-support-preserving implies that $S$ is $\varepsilon_3$-projective and $\varepsilon_3$-support-preserving.
\end{proof}

\begin{corollary}\label{cor:Naimarksupportpreserving}
    The Naimark dilation $S_{\operatorname{Naimark}}$ is support-preserving if and only if $S$ is support-preserving and $0$-projective (i.e. projective on the support of the state). 
\end{corollary}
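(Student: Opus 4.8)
The plan is to read off this corollary as the exact ($\varepsilon=0$) specialization of Theorem~\ref{thm:NaiSupp}. Before doing so, I would record one small preliminary fact: for a \emph{pure} strategy, being $0$-support-preserving is the same as being support-preserving in the operator sense $[\Pi_A,A_{sa}]=[\Pi_B,B_{tb}]=0$, and (as observed right after the definition of nearly projective strategies, and as used already in Corollary~\ref{cor:restrictionprojective}) being $0$-projective is exactly ``projective on the support of the state''. Since the Naimark dilation $S_{\operatorname{Naimark}}=(V_A\otimes V_B\ket{\psi},\{P_{sa}\},\{Q_{tb}\})$ again carries a pure state, the first equivalence applies to it too.

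To verify that equivalence I would argue as follows. If $\|[\Pi_A,A_{sa}]\|_{\sigma_A}=0$ then $[\Pi_A,A_{sa}]$ annihilates the range of $\sigma_A=\operatorname{Tr}_B\proj{\psi}$, which is precisely $\Supp_A\ket{\psi}=\operatorname{ran}\Pi_A$; hence $[\Pi_A,A_{sa}]\Pi_A=0$, i.e.\ $A_{sa}\Pi_A=\Pi_A A_{sa}\Pi_A$. Taking adjoints (both $A_{sa}$ and $\Pi_A$ are self-adjoint) yields $\Pi_A A_{sa}=\Pi_A A_{sa}\Pi_A$, and therefore $A_{sa}\Pi_A=\Pi_A A_{sa}$, that is $[\Pi_A,A_{sa}]=0$. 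The converse implication and Bob's side are identical.

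With this settled, the two directions are immediate. For ``if'', assume $S$ is support-preserving and projective on its support, i.e.\ $0$-support-preserving and $0$-projective; applying Theorem~\ref{thm:NaiSupp}(a) with $\varepsilon_1=\varepsilon_2=0$ gives that $S_{\operatorname{Naimark}}$ is $0$-support-preserving, hence support-preserving. For ``only if'', assume $S_{\operatorname{Naimark}}$ is support-preserving, hence $0$-support-preserving; applying Theorem~\ref{thm:NaiSupp}(b) with $\varepsilon_3=0$ gives that $S$ is both $0$-support-preserving and $0$-projective, i.e.\ support-preserving and projective on the support of the state. I do not expect any genuine obstacle here: the whole content is the $\varepsilon=0$ case of Theorem~\ref{thm:NaiSupp}, and the only point needing an explicit argument is the (routine) identification of $0$-support-preservingness with support-preservingness carried out above.
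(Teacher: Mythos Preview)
Your proposal is correct and matches the paper's intent: the corollary is stated without proof immediately after Theorem~\ref{thm:NaiSupp}, as it is precisely the $\varepsilon=0$ specialization. Your extra verification that $0$-support-preserving in the state-dependent norm sense coincides with the operator commutation $[\Pi_A,A_{sa}]=0$ is a nice bit of care that the paper takes for granted (it simply declares ``if further $\varepsilon=0$, $S$ is called support-preserving for simplicity'').
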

	\section{Lifting Assumptions}\label{sec:lifting}

In this section, we aim to lift the assumptions that are commonly made in the literature. Specifically, we will establish the following theorem, which is our main result.

\begin{theorem}[Main theorem]
\label{thm:bigtheorem}
    Let $\tilde S$ be a pure strategy that is optimal for a nonlocal game $G$. Then the following two implications hold:
    \begin{enumerate}[(a)]
        \item If $\tilde S$ is full-rank and $G$ is a robust pure PVM self-test for $\tilde S$, then $G$ is a robust assumption-free self-test for $\tilde S$.
        \item If $\tilde S$ is projective and $G$ is a robust pure full-rank self-test for $\tilde S$, then $G$ is a robust assumption-free self-test for $\tilde S$.
    \end{enumerate}
\end{theorem}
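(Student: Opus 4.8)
My plan is to lift the three assumptions one at a time, in each step replacing the given arbitrary strategy $S$ by a ``tamer'' strategy $S^{\sharp}$ that has the \emph{same correlation} as $S$ (hence is $\delta$-optimal whenever $S$ is), satisfies the restrictive hypothesis under which the self-test is already known, and can be shown to be mutually $\varepsilon'$-dilatable with $S$ for an $\varepsilon'$ controlled by $\varepsilon$. Feeding $S^{\sharp}$ into the restricted self-test yields $S^{\sharp}\xhookrightarrow{\varepsilon_1}\tilde S$, and transitivity of local dilation then gives $S\xhookrightarrow{\varepsilon'+\varepsilon_1}\tilde S$. For part~(a) I would first pass from \emph{pure PVM} to \emph{pure} (lifting projectivity) and then from \emph{pure} to \emph{assumption-free} (lifting purity); for part~(b) I would first pass from \emph{pure full-rank} to \emph{pure} (lifting the rank constraint) and then lift purity. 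All the $\varepsilon$--$\delta$ relations produced are explicit, so the argument also applies verbatim to robust self-tests with a prescribed $(\varepsilon,\delta)$-modulus.

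For the projectivity-lifting step I would take a $\delta$-optimal \emph{pure} (not necessarily projective) strategy $S=(\ket\psi,\{A_{sa}\},\{B_{tb}\})$ and let $S^{\sharp}=S_{\operatorname{Naimark}}$ be any Naimark dilation of it. Then $S^{\sharp}$ is a pure PVM strategy with the same correlation as $S$ (Lemma~\ref{lem:Naimarksamecorrelation}), hence $\delta$-optimal, so the hypothesised robust pure PVM self-test gives $S_{\operatorname{Naimark}}\xhookrightarrow{\varepsilon_1}\tilde S$. Now I use that $\tilde S$ is full-rank, hence ($0$-)support-preserving: by Proposition~\ref{prop:invariantSupp}(b), $S_{\operatorname{Naimark}}$ is then $4\varepsilon_1$-support-preserving; by Theorem~\ref{thm:NaiSupp}(b) this forces the original $S$ (whose Naimark dilation $S_{\operatorname{Naimark}}$ is) to be $4\varepsilon_1$-projective; and by Proposition~\ref{prop:Naimarkrob}, near-projectivity of $S$ yields $S\xhookrightarrow{4\varepsilon_1}S_{\operatorname{Naimark}}$. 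Transitivity gives $S\xhookrightarrow{5\varepsilon_1}\tilde S$, i.e.\ $G$ is a robust pure self-test for $\tilde S$.

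The rank-lifting step is completely parallel, with ``restriction'' in place of ``Naimark dilation''. Given a $\delta$-optimal pure (possibly non-full-rank) $S$, let $S^{\sharp}=S_{\operatorname{res}}$ be its restriction (Definition~\ref{def:res}); a short Schmidt-basis computation shows $S_{\operatorname{res}}$ is a pure full-rank strategy with the same correlation, hence $\delta$-optimal, so the robust pure full-rank self-test gives $S_{\operatorname{res}}\xhookrightarrow{\varepsilon_1}\tilde S$. Since $\tilde S$ is projective it is ($0$-)projective; by Proposition~\ref{prop:invariantProj}(b), $S_{\operatorname{res}}$ is then $\sqrt{3\varepsilon_1}$-projective; by Theorem~\ref{thm:resProj}(b), $S$ is $\sqrt{3\varepsilon_1}$-support-preserving; and by Proposition~\ref{prop:restrictrob}, $S\xhookrightarrow{\sqrt{3\varepsilon_1}}S_{\operatorname{res}}$. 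Transitivity gives $S\xhookrightarrow{\sqrt{3\varepsilon_1}+\varepsilon_1}\tilde S$, so again $G$ is a robust pure self-test for $\tilde S$.

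Finally, both parts need to upgrade a robust \emph{pure} self-test to a robust \emph{assumption-free} one, and this is the step I expect to be the main obstacle. Given a $\delta$-optimal mixed $S=(\rho_{AB},\{A_{sa}\},\{B_{tb}\})$, I would fix a purification $\ket\psi\in\mathcal H_A\otimes\mathcal H_B\otimes\mathcal H_P$, attach $\mathcal H_P$ to Bob, and form the pure strategy $\hat S=(\ket\psi,\{A_{sa}\},\{B_{tb}\otimes\mathbbm 1_P\})$, which has the same correlation as $S$ and is therefore $\delta$-optimal; the pure self-test then gives $\hat S\xhookrightarrow{\varepsilon_2}\tilde S$ via a local isometry $U_A\otimes\hat U_B$ with $\hat U_B\colon\mathcal H_B\otimes\mathcal H_P\to\mathcal H_{\tilde B}\otimes\mathcal H_{\hat B}$. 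The catch is that Definition~\ref{def:localdilation} for a mixed strategy requires Bob's isometry to act on $\mathcal H_B$ \emph{alone}, leaving $\mathcal H_P$ untouched (though possibly entangled into $\ket{\aux}$), whereas $\hat U_B$ is a priori allowed to entangle $\mathcal H_B$ with $\mathcal H_P$. Bridging this gap is where the real work lies: I would exploit (i) that it suffices to verify the mixed-dilation conditions for one conveniently chosen purification, since any two purifications differ by an isometry on the purifying space that can be absorbed into $\ket{\aux}$, and (ii) that every Bob operator of $\hat S$ acts trivially on $\mathcal H_P$, so the $\mathcal H_P$-factor is ``passive'' and the invariance results (Propositions~\ref{prop:invariantSupp} and~\ref{prop:invariantProj}), together with a restriction/Naimark argument on Bob's side, should let one replace $\hat U_B$ by one of the form $U_B\otimes\mathbbm 1_P$ without worsening the error. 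Once the dilation is in this form, reading off $U_A\otimes U_B$ and $\ket{\aux}\in\mathcal H_{\hat A}\otimes\mathcal H_{\hat B}\otimes\mathcal H_P$ gives $S\xhookrightarrow{\varepsilon_2}\tilde S$, and composing the modulus relations from the three steps produces the final explicit $(\varepsilon,\delta)$-dependence.
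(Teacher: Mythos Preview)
Your projectivity-lifting and rank-lifting steps are exactly the paper's Theorems~\ref{thm:PVMtoPOVMrob} and~\ref{thm:full_to_any_rank_rob}, with the same chain of implications (invariance $\Rightarrow$ near-projective/near-support-preserving $\Rightarrow$ mutual $\varepsilon$-dilation $\Rightarrow$ transitivity) and the same constants.

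The gap is in the purity-lifting step. Your plan to factor $\hat U_B$ as $U_B\otimes\mathbbm 1_P$ via ``invariance results plus a restriction/Naimark argument on Bob's side'' does not go through: those tools control support-preservingness and projectivity, not the tensor structure of an isometry, and there is no reason the isometry produced by the pure self-test should respect the $B\!\mid\!P$ split. The paper's actual mechanism (Lemmas~\ref{lem1robmixed}--\ref{lem3robmixed}) is different in two essential ways. First, it applies the pure self-test \emph{twice}, once with $P$ adjoined to Bob and once with $P$ adjoined to Alice, obtaining isometries $V_{AP}\otimes V_B$ and $W_A\otimes W_{BP}$; the desired local isometry is then the hybrid $X:=W_A\otimes V_B\otimes\mathbbm 1_P$, built from the two ``good'' halves. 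Second, $X\ket\psi$ is not automatically of the form $\ket{\tilde\psi}\otimes\ket{\aux}$; one only knows that $(X\ket\psi,\{\tilde A_{sa}\otimes\mathbbm 1\},\{\tilde B_{tb}\otimes\mathbbm 1\})$ is near-optimal, and the conclusion is forced by a \emph{spectral} argument: the game operator $\tilde W=\sum\pi(s,t)\mathcal V(a,b|s,t)\tilde A_{sa}\otimes\tilde B_{tb}$ has a one-dimensional top eigenspace (Proposition~\ref{prop:dim1}, via Lemmas~\ref{lem:spacebound} and~\ref{lem:minselftest}), so any near-optimal state must be close to $\ket{\tilde\psi}$ tensor something.

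This spectral step is where the full-rank hypothesis on $\tilde S$ is actually used (Lemma~\ref{lem:spacebound} needs every nonzero vector in the eigenspace to have full Schmidt rank). Consequently your outline for part~(b) has a second gap: $\tilde S$ there is only assumed projective, not full-rank, so you cannot feed it directly into the purity-lifting theorem. The paper handles this by first passing to the restriction $\tilde S_{\operatorname{res}}$ (legitimate since the rank-lifting step already showed $\tilde S$ is support-preserving), applying the purity-lifting to the full-rank $\tilde S_{\operatorname{res}}$, and then using $\tilde S_{\operatorname{res}}\xhookrightarrow{}\tilde S$ to transfer the conclusion back.
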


We will prove Theorem \ref{thm:bigtheorem} $(a)$ and $(b)$ both in two steps. For part $(a)$, we first show how to get rid of the PVM assumption in the following subsections. Similarly, we show that we can lift the full-rank assumption in part $(b)$ in Subsection \ref{sec:rank}. Then, for both Theorem \ref{thm:bigtheorem} $(a)$ and $(b)$, we get from pure to mixed states in Subsection \ref{sec:mainres}. 

Theorem \ref{thm:bigtheorem} tells us that, if a strategy $\tilde S$ is proved self-tested with the assumption of either full-rank states or projective measurements (for the arbitrary strategy), and $\tilde S$ itself is full-rank and projective, then those assumptions can be lifted for free. Moreover, we show that for $\tilde S$ to be assumption-free self-tested, it has to be both support-preserving and $0$-projective:
\begin{theorem}
    \label{thm:originalpartC}
    If $G$ is an assumption-free self-test for $\tilde S$, then $\tilde S$ must be $0$-projective and support-preserving. Moreover, $G$ is an assumption-free self-test for the restriction of $\tilde S$.
\end{theorem}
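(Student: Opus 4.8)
The plan is to exploit the two canonical constructions developed in Section~\ref{sec:lemmas} --- \emph{restriction} and \emph{Naimark dilation} --- both of which leave the correlation (hence the winning probability) of a pure strategy unchanged, and therefore turn $\tilde S$ into new \emph{optimal} strategies to which the assumption-free self-test must apply. The idea is then to read off the desired properties of $\tilde S$ from the invariance of support-preservingness and projectivity under local dilation (Propositions~\ref{prop:invariantSupp} and~\ref{prop:invariantProj}).

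First I would show that $\tilde S$ is support-preserving. The key observation is that the restriction $\tilde S_{\operatorname{res}}$ always realises the same correlation as $\tilde S$: writing $\Pi_{\tilde A},\Pi_{\tilde B}$ for the support projections of $\ket{\tilde\psi}$, we have $(\Pi_{\tilde A}\otimes\Pi_{\tilde B})\ket{\tilde\psi}=\ket{\tilde\psi}$, so for all $s,t,a,b$,
\begin{align*}
\braket{\tilde\psi'|\tilde A'_{sa}\otimes\tilde B'_{tb}|\tilde\psi'}
&=\braket{\tilde\psi|\Pi_{\tilde A}\tilde A_{sa}\Pi_{\tilde A}\otimes\Pi_{\tilde B}\tilde B_{tb}\Pi_{\tilde B}|\tilde\psi}\\
&=\braket{\tilde\psi|\tilde A_{sa}\otimes\tilde B_{tb}|\tilde\psi}.
\end{align*}
Hence $\tilde S_{\operatorname{res}}$ is an optimal (assumption-free) strategy, so by hypothesis $\tilde S_{\operatorname{res}}\xhookrightarrow{}\tilde S$. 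Since $\tilde S_{\operatorname{res}}$ is full-rank it is support-preserving, and by Proposition~\ref{prop:invariantSupp}(a) (the ``only if'' direction, with $\varepsilon=0$) support-preservingness transfers along the dilation to $\tilde S$.

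Next I would show $\tilde S$ is $0$-projective by the mirror-image argument on the other side of Figure~\ref{fig:visualizingConcepts}: by Lemma~\ref{lem:Naimarksamecorrelation} any Naimark dilation $\tilde S_{\operatorname{Naimark}}$ has the same correlation as $\tilde S$, hence is an optimal strategy, hence $\tilde S_{\operatorname{Naimark}}\xhookrightarrow{}\tilde S$. As $\tilde S_{\operatorname{Naimark}}$ is a PVM strategy it is $0$-projective (for a projector $P$ one has $\tr[(\mathbbm 1-P)P\sigma]=0$), and Proposition~\ref{prop:invariantProj}(a) with $\varepsilon=0$ then forces $\tilde S$ to be $0$-projective. (As an immediate byproduct, since a full-rank strategy is projective exactly when it is $0$-projective, no full-rank non-projective strategy can be assumption-free self-tested.)

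Finally, for the ``moreover'' statement: having established that $\tilde S$ is support-preserving, Proposition~\ref{prop:restrictrob} (with $\varepsilon=0$) gives $\tilde S\xhookrightarrow{}\tilde S_{\operatorname{res}}$. Combining this with the hypothesis that $S\xhookrightarrow{}\tilde S$ for every optimal $S$ and with the transitivity of local dilation yields $S\xhookrightarrow{}\tilde S_{\operatorname{res}}$ for every optimal $S$; since $\tilde S_{\operatorname{res}}$ is a pure strategy (as a reference strategy must be), this says precisely that $G$ is an assumption-free self-test for $\tilde S_{\operatorname{res}}$. There is no deep obstacle here once the tools of Section~\ref{sec:lemmas} are in place; the only point that needs care is the \emph{direction} in which the invariance Propositions~\ref{prop:invariantSupp} and~\ref{prop:invariantProj} are used --- one must transport the property from the constructed optimal strategy \emph{back} to $\tilde S$ --- together with the elementary implications ``full-rank $\Rightarrow$ support-preserving'' and ``PVM $\Rightarrow$ $0$-projective''.
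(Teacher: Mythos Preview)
Your proposal is correct and follows essentially the same approach as the paper: construct optimal strategies via restriction and Naimark dilation, apply the self-test hypothesis to obtain local dilations into $\tilde S$, and transport support-preservingness and $0$-projectivity back to $\tilde S$ via Propositions~\ref{prop:invariantSupp} and~\ref{prop:invariantProj}; then use Proposition~\ref{prop:restrictrob} and transitivity for the ``moreover''. The only cosmetic difference is that the paper phrases the first step as ``take any pure optimal $S$'' before forming $S_{\operatorname{res}}$ and $S_{\operatorname{Naimark}}$, whereas you take $S=\tilde S$ directly --- which is of course enough.
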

\begin{proof}
    Take any pure strategy $S$ that wins $G$ optimally, then so does its Naimark dilation $S_{\operatorname{Naimark}}$ and its restriction $S_{\operatorname{res}}$. Since $G$ is an assumption-free self-test, it is also a pure self-test. So $S_{\operatorname{res}}\xhookrightarrow{}\tilde S$, which implies that $\tilde S$ is support-preserving using Proposition \ref{prop:invariantSupp} and the fact that $S_{\operatorname{res}}$ is support-preserving. Similarly, $S_{\operatorname{Naimark}}\xhookrightarrow{}\tilde S$ implies $\tilde S$ to be $0$-projective by Proposition \ref{prop:invariantProj} and the fact that $S_{\operatorname{Naimark}}$ is projective.

    Since $\tilde S$ is $0$-projective and support-preserving, $\tilde S\xhookrightarrow{}\tilde S_{\text{res}}$. So $G$ is an assumption-free self-test for $\tilde S_{\text{res}}$.
\end{proof}

(The same conclusion for pure self-tests from correlation has been shown also in \cite[Proposition 4.14]{paddock2023operatoralgebraic} via a different approach.) Notice that $\tilde S_{\text{res}}$ is both projective and full-rank in the above proof. This means that for assumption-free self-tests, we can always take the canonical strategy $\tilde S$ to be both projective and full-rank. This is somehow also the best one can hope for, because we can never show that the canonical strategy \emph{is} projective and full-rank: consider $\tilde S'=(\ket{\tilde\psi}\otimes \ket{0}_A\ket{0}_B,\{\tilde A_{sa}\otimes \mathbbm{1}\},\{\tilde B_{tb}\otimes \mathbbm{1}\})$. Then $\tilde S'\xhookrightarrow{}\tilde S$ and $\tilde S\xhookrightarrow{}\tilde S'$. So $G$ also self-tests $\tilde S'$.

As a final remark before we go into the steps of the proof of Theorem \ref{thm:bigtheorem}, we note that most of the results in this section also apply to correlation self-tests, while one of them (namely, ones in Subsection \ref{sec:mainres}) additionally requires that the correlation is extreme in the quantum set. See Appendix \ref{app:probdist} for a detailed discussion.

\subsection{Lifting the PVM assumption}

Here we show that robust pure PVM self-test implies robust pure self-test if the canonical strategy is full-rank, with the building blocks from Section \ref{sec:lemmas}. 

\begin{theorem}[robust pure PVM implies robust pure]
    Let a game $G$ be a robust pure PVM self-test for a full-rank canonical strategy $\tilde S$. Then 
    \begin{enumerate}[(a)]
        \item $\tilde S$ is a projective strategy, and
        \item $G$ is also a robust pure self-test for $\tilde S$.
    \end{enumerate}
\label{thm:PVMtoPOVMrob}
\end{theorem}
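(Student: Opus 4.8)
\textbf{Proof plan for Theorem~\ref{thm:PVMtoPOVMrob}.}
The plan is to reduce the POVM case to the PVM case via Naimark dilation, using the invariance machinery from Section~\ref{sec:lemmas}. First I would prove part~(a): since $G$ is a robust pure PVM self-test for $\tilde S$, I can feed it an arbitrary optimal pure PVM strategy $S$, and by definition $S\xhookrightarrow{}\tilde S$. A projective strategy is $0$-projective, and $0$-projectivity is invariant under local dilation by Proposition~\ref{prop:invariantProj}(a); since $\tilde S$ is full-rank, $0$-projective plus full-rank is equivalent to projective (as noted after the definition of nearly projective). Hence $\tilde S$ is projective. (One should double-check that an optimal pure PVM strategy actually exists, e.g.\ by taking a Naimark dilation of an optimal pure strategy, which is optimal by Lemma~\ref{lem:Naimarksamecorrelation}.)

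For part~(b), the key step is the following chain of local dilations. Let $S=(\ket\psi,\{A_{sa}\},\{B_{tb}\})$ be an arbitrary $\delta$-optimal \emph{pure} strategy for $G$ (not necessarily projective). Let $S_{\operatorname{Naimark}}$ be any Naimark dilation of $S$; by Lemma~\ref{lem:Naimarksamecorrelation} it has the same correlation, hence is also $\delta$-optimal, and it is a pure PVM strategy. Applying the robust pure PVM self-test hypothesis to $S_{\operatorname{Naimark}}$ gives a function $\delta\mapsto\varepsilon$ with $S_{\operatorname{Naimark}}\xhookrightarrow{\varepsilon}\tilde S$. Now I want to go back from $S_{\operatorname{Naimark}}$ to $S$. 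The trick is to exploit that $\tilde S$ is full-rank, hence support-preserving; since $S_{\operatorname{Naimark}}\xhookrightarrow{\varepsilon}\tilde S$, Proposition~\ref{prop:invariantSupp}(b) implies $S_{\operatorname{Naimark}}$ is $O(\varepsilon)$-support-preserving, and likewise Proposition~\ref{prop:invariantProj}(b) together with the projectivity of $\tilde S$ (from part~(a), which is $0$-projective) implies $S_{\operatorname{Naimark}}$ is $O(\sqrt{\varepsilon})$-projective. By Theorem~\ref{thm:NaiSupp}(b), $S_{\operatorname{Naimark}}$ being $O(\varepsilon)$-support-preserving forces $S$ itself to be $O(\varepsilon)$-projective. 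Then Proposition~\ref{prop:Naimarkrob} gives $S\xhookrightarrow{O(\varepsilon)}S_{\operatorname{Naimark}}$, and composing with $S_{\operatorname{Naimark}}\xhookrightarrow{\varepsilon}\tilde S$ via transitivity of local dilation yields $S\xhookrightarrow{O(\varepsilon)}\tilde S$. Since $O(\varepsilon)\to 0$ as $\delta\to 0$, this shows $G$ is a robust pure self-test for $\tilde S$.

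I expect the main obstacle to be bookkeeping the error parameters correctly: one needs a self-consistent loop in which the $\varepsilon$ coming out of the PVM self-test, after being re-inflated by the $4\varepsilon'+2\varepsilon$ and $\sqrt{3\varepsilon'}+\varepsilon$ factors of Propositions~\ref{prop:invariantSupp} and~\ref{prop:invariantProj} and then routed through Theorem~\ref{thm:NaiSupp} and Proposition~\ref{prop:Naimarkrob}, still goes to $0$ with $\delta$. Since all these bounds are continuous and vanish at $0$, the qualitative statement is immediate, and for the explicit-$(\varepsilon,\delta)$ version one simply tracks constants and a square root. A minor subtlety worth spelling out: Proposition~\ref{prop:invariantProj}(b) needs $\tilde S$ to be \emph{exactly} $0$-projective for the back-direction, which is exactly what part~(a) delivers, so parts~(a) and~(b) must be proved in this order. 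It is also worth remarking that the argument is uniform over all Naimark dilations, so no canonical choice of dilation is needed.
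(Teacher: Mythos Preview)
Your proposal is correct and follows essentially the same route as the paper: Naimark-dilate the arbitrary pure strategy, apply the PVM self-test to get $S_{\operatorname{Naimark}}\xhookrightarrow{\varepsilon'}\tilde S$, pull support-preservingness back via Proposition~\ref{prop:invariantSupp}(b), use Theorem~\ref{thm:NaiSupp}(b) to deduce $S$ is $O(\varepsilon')$-projective, then Proposition~\ref{prop:Naimarkrob} and transitivity. The paper does exactly this with explicit constant $\varepsilon'=\varepsilon/5$.

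One small redundancy worth cleaning up: your invocation of Proposition~\ref{prop:invariantProj}(b) to conclude that $S_{\operatorname{Naimark}}$ is $O(\sqrt{\varepsilon})$-projective is unnecessary, since $S_{\operatorname{Naimark}}$ is a PVM strategy and hence $0$-projective by construction. Consequently, your remark that part~(a) must precede part~(b) is also not quite right: the paper's argument for~(b) uses only that $\tilde S$ is full-rank (hence support-preserving), never that $\tilde S$ is projective. Part~(a) is logically independent of part~(b).
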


\begin{proof}
    We first prove (a). Note that robust self-tests always imply exact self-tests by taking $\varepsilon=0$ (which causes $\delta=0$).

    For any $S$ that generates the same correlation of $\tilde S$, consider its Naimark dilation $S_{\operatorname{Naimark}}$. Since $G$ is a PVM self-test for $\tilde S$, it holds that $S_{\operatorname{Naimark}}\xhookrightarrow{}\tilde S$. By the invariance of projectiveness (Proposition \ref{prop:invariantProj}), $\tilde S$ is 0-projective, thus projective.

    Now we prove (b). For any $\varepsilon$, let $\varepsilon'=\varepsilon/5$. Since $G$ is a robust pure PVM self-test, for such $\varepsilon'$ there exist $\delta'$ such that, any $\delta'$-optimal pure projective strategy $S_{\operatorname{proj}}$ for $G$ satisfies $S_{\operatorname{proj}}\xhookrightarrow{\varepsilon'}\tilde{S}$. 

    Consider a non-projective strategy $S_{\operatorname{non-proj}}$ that is $\delta'$-optimal for $G$. Since its Naimark dilation $S_{\operatorname{Naimark}}$ is projective and $\delta'$-optimal, it holds that $S_{\operatorname{Naimark}}\xhookrightarrow{\varepsilon'}\tilde S$. Note that $\tilde S$ is assumed to be full-rank (thus support-preserving), by the invariance of support-preservingness (Proposition \ref{prop:invariantSupp}) $S_{\operatorname{Naimark}}$ is $4\varepsilon'$-support-preserving. Then by Theorem \ref{thm:NaiSupp} $S_{\operatorname{non-proj}}$ is $4\varepsilon'$-support-preserving. Then $S_{\operatorname{non-proj}}\xhookrightarrow{4\varepsilon'}S_{\operatorname{Naimark}}$ by Proposition \ref{prop:Naimarkrob}. 
    By transitivity, $S_{\operatorname{non-full}}\xhookrightarrow{\varepsilon'+4\varepsilon'=\varepsilon}\tilde{S}$. 
    
    Let $\delta=\delta'$. So we conclude that $S_{\operatorname{non-proj}}\xhookrightarrow{\varepsilon}\tilde{S}$ for any $\delta$-optimal $S_{\operatorname{non-full}}$, that is, $G$ is also a robust pure self-test.
\end{proof}

\begin{remark}~
    \begin{itemize}
    \item Previously, work \cite[Theorem 3.7]{paddock2023operatoralgebraic} shows that in some special cases where the correlation is synchronous or binary, PVM assumption can be lifted for exact self-tests. Here we show that this in fact be done in a more general scenario, and for robust self-tests as well. 
    \item Exact version of the (b) part of the theorem and its proof hold automatically by taking $\varepsilon=0$ (which causes $\delta=0$).
    \item If there is already an explicit $(\delta,\varepsilon)$ dependence in the PVM self-test, \emph{e.g.}, $\varepsilon=O(\delta^2)$, then our proof still works and give the result that any $\delta$-optimal strategy is a $5O(\delta^2)$-local-dilation.
\end{itemize}
\end{remark}
\subsection{Lifting the full-rank assumption}\label{sec:rank}

Once again, using the tools from Section \ref{sec:lemmas}, we will now show we can get rid of the full-rank assumption if our canonical strategy is projective. 

\begin{theorem}
    Let a game $G$ be a robust pure full-rank self-test for a projective canonical strategy $\tilde{S}$. Then 
    \begin{enumerate}[(a)]
        \item $\tilde S$ is support-preserving, and
        \item $G$ is also a robust pure self-test for $\tilde{S}$.
    \end{enumerate}
\label{thm:full_to_any_rank_rob}
\end{theorem}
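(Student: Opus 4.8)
The plan is to run the argument in complete analogy with Theorem~\ref{thm:PVMtoPOVMrob}, systematically exchanging the roles of \emph{restriction} and \emph{Naimark dilation}, and of \emph{projectivity} and \emph{support-preservingness}. The key structural inputs are that Theorem~\ref{thm:resProj} is the restriction-analogue of Theorem~\ref{thm:NaiSupp}, and Proposition~\ref{prop:restrictrob} is the restriction-analogue of Proposition~\ref{prop:Naimarkrob}. As in Theorem~\ref{thm:PVMtoPOVMrob}, part~(a) is just the exact ($\varepsilon=0$) shadow of part~(b), and the nontrivial content of both is to show that a local-dilation statement about the \emph{restriction} $S_{\operatorname{res}}$ of an arbitrary-rank strategy $S$ can be promoted to one about $S$ itself.

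For part~(a): apply the self-test to the canonical (hence optimal) strategy $\tilde S$ itself. Its restriction $\tilde S_{\operatorname{res}}$ is full-rank and, since restriction leaves the correlation (hence the game value) unchanged, still optimal, so the exact full-rank self-test — the $\varepsilon=0$ instance of the robustness hypothesis, which forces $\delta=0$ — gives $\tilde S_{\operatorname{res}}\xhookrightarrow{}\tilde S$. A full-rank strategy is support-preserving, and support-preservingness is a local-dilation invariant by Proposition~\ref{prop:invariantSupp}(a); hence $\tilde S$ is support-preserving.

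For part~(b): fix $\varepsilon>0$ and pick $\varepsilon'>0$ small enough that $\varepsilon'+\sqrt{3\varepsilon'}\le\varepsilon$ (possible since the left side tends to $0$ with $\varepsilon'$). By hypothesis there is $\delta'>0$ such that every $\delta'$-optimal pure full-rank strategy $\varepsilon'$-locally-dilates to $\tilde S$. Now take an arbitrary $\delta'$-optimal pure strategy $S$ of any Schmidt rank and consider its restriction $S_{\operatorname{res}}$: it is pure, full-rank and $\delta'$-optimal, so $S_{\operatorname{res}}\xhookrightarrow{\varepsilon'}\tilde S$. The promotion step then runs as follows: since $\tilde S$ is projective it is $0$-projective, so invariance of projectivity (Proposition~\ref{prop:invariantProj}(b)) makes $S_{\operatorname{res}}$ be $\sqrt{3\varepsilon'}$-projective; because $S_{\operatorname{res}}$ is the restriction of $S$, Theorem~\ref{thm:resProj}(b) then forces $S$ to be $\sqrt{3\varepsilon'}$-support-preserving; and Proposition~\ref{prop:restrictrob} turns this into $S\xhookrightarrow{\sqrt{3\varepsilon'}}S_{\operatorname{res}}$. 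Composing with $S_{\operatorname{res}}\xhookrightarrow{\varepsilon'}\tilde S$ by transitivity of local dilation yields $S\xhookrightarrow{\sqrt{3\varepsilon'}+\varepsilon'}\tilde S$ with $\sqrt{3\varepsilon'}+\varepsilon'\le\varepsilon$. Setting $\delta=\delta'$ gives the claim.

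The conceptual crux — the step I expect to be least automatic — is this promotion: unlike in the full-rank-canonical setting of Theorem~\ref{thm:PVMtoPOVMrob}, here we cannot take $S$ to be support-preserving a priori, and must bootstrap it from the projectivity of the canonical strategy through the chain ``$\tilde S$ projective $\Rightarrow$ $S_{\operatorname{res}}$ nearly projective $\Rightarrow$ $S$ nearly support-preserving''. This is exactly dual to the bootstrap in Theorem~\ref{thm:PVMtoPOVMrob}, where full-rankness of $\tilde S$ was used to make the Naimark dilation nearly support-preserving and hence the original strategy nearly projective. One should also flag in a remark (as was done after Theorem~\ref{thm:PVMtoPOVMrob}) that the error here degrades like a square root rather than linearly — the $\sqrt{3\varepsilon'}$ term coming from Proposition~\ref{prop:invariantProj} — so if the full-rank self-test carries an explicit dependence $\varepsilon'=f(\delta)$, the lifted self-test only gives roughly $\sqrt{f(\delta)}$; and that the exact version of (b) is the special case $\varepsilon=0$.
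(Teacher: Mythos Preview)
Your proof is correct and follows essentially the same approach as the paper's own proof: both parts use the restriction $S_{\operatorname{res}}$ of an arbitrary strategy, apply the full-rank self-test to it, transfer projectivity from $\tilde S$ to $S_{\operatorname{res}}$ via Proposition~\ref{prop:invariantProj}, promote this to near support-preservingness of $S$ via Theorem~\ref{thm:resProj}(b), and close with Proposition~\ref{prop:restrictrob} and transitivity, with the same $\varepsilon'+\sqrt{3\varepsilon'}$ bookkeeping. Your remarks on the square-root error degradation and on the exact case also match the paper's post-theorem remark.
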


\begin{proof}
    We first prove (a). Note that robust self-tests always imply exact self-tests by taking $\varepsilon=0$ (which causes $\delta=0$).
    
    For any $S$ that generates the same correlation of $\tilde S$, consider its restriction $S_{\operatorname{res}}$. Since $G$ is a full-rank self-test for $\tilde S$ it holds that $S_{\operatorname{res}}\xhookrightarrow{}\tilde S$. By the invariance of support-preservingness (Proposition \ref{prop:invariantSupp}), $\tilde S$ is support-preserving.

    Now we prove (b). For any $\varepsilon$, let $\varepsilon'$ be the positive number such that $\varepsilon'+\sqrt{3\varepsilon'}=\varepsilon$. Since $G$ is a robust pure full-rank self-test, for such $\varepsilon'$ there exist $\delta'$ such that, any $\delta'$-optimal pure full-rank strategy $S_{\operatorname{full}}$ for $G$ satisfies $S_{\operatorname{full}}\xhookrightarrow{\varepsilon'}\tilde{S}$. 

    Consider a non-full-rank strategy $S_{\operatorname{non-full}}$ that is $\delta'$-optimal for $G$. Since its restriction $S_{\operatorname{res}}$ is full-rank and $\delta'$-optimal, it holds that $S_{\operatorname{res}}\xhookrightarrow{\varepsilon'}\tilde S$. Note that $\tilde S$ is assumed to be projective, by the invariance of projectiveness (Proposition \ref{prop:invariantProj}) $S_{\operatorname{res}}$ is $\sqrt{3\varepsilon'}$-projective. Then by Theorem \ref{thm:resProj}, $S_{\operatorname{non-full}}$ is $\sqrt{3\varepsilon'}$-support-preserving. Then $S_{\operatorname{non-full}}\xhookrightarrow{\sqrt{3\varepsilon'}}S_{\operatorname{res}}$ by Proposition \ref{prop:restrictrob}. 
    By transitivity, $S_{\operatorname{non-full}}\xhookrightarrow{\varepsilon'+\sqrt{3\varepsilon'}=\varepsilon}\tilde{S}$. 
    
    Let $\delta=\delta'$. So we conclude that $S_{\operatorname{non-full}}\xhookrightarrow{\varepsilon}\tilde{S}$ for any $\delta$-optimal $S_{\operatorname{non-full}}$, that is, $G$ is also a robust pure self-test.
\end{proof}

\begin{remark}~
    \begin{itemize}
        \item Exact version of the (b) part of the theorem and its proof hold automatically by taking $\varepsilon=0$ (which causes $\delta=0$).
        \item If there is already an explicit $(\delta,\varepsilon)$ dependence in the full-rank self-test, \emph{e.g.}, $\varepsilon=O(\delta^2)$, then our proof still works and give the result that any $\delta$-optimal strategy is a $O(\delta)$-local-dilation.
    \end{itemize}
\end{remark}
\subsection{Lifting the purity assumption}\label{sec:mainres}

Until now, we focused on strategies using a pure state. In general, there might be strategies that use mixed states. Note that many self-testing theorems are shown only for pure strategies, therefore it is worth investigating whether or not those theorems also hold for mixed strategies. We address this task and show that a pure self-test is a mixed self-test as long as the canonical strategy has a pure, full-rank state. The main result in this subsection is the following. 

\begin{theorem}
\label{thm:pure_to_mix_rob}
Let $t\subseteq\{\text{PVM}\}$ and $G$ robust pure $t$ self-tests $\tilde{S}$, where $\ket{\tilde{\psi}}$ has full Schmidt rank. Then $G$ robust mixed $t$ self-tests $\tilde{S}$.
\end{theorem}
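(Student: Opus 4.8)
The plan is to reduce to the pure case, which is the hypothesis, by purifying the shared state and handing the purifying register to Bob. Let $S=(\rho_{AB},\{A_{sa}\},\{B_{tb}\})$ be a $\delta$-optimal mixed $t$-strategy for $G$. Fix a purification $\ket\psi\in\mathcal H_A\otimes\mathcal H_B\otimes\mathcal H_P$ of $\rho_{AB}$ and form the purified strategy
\[
S_P:=\bigl(\,\ket\psi\in\mathcal H_A\otimes(\mathcal H_B\otimes\mathcal H_P),\ \{A_{sa}\},\ \{B_{tb}\otimes\mathbbm 1_P\}\,\bigr),
\]
in which $\mathcal H_P$ is appended to Bob's side. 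Because $\operatorname{Tr}\bigl[(A_{sa}\otimes B_{tb}\otimes\mathbbm 1_P)\proj\psi\bigr]=\operatorname{Tr}\bigl[(A_{sa}\otimes B_{tb})\rho_{AB}\bigr]$, the strategy $S_P$ realises the same correlation as $S$, hence is $\delta$-optimal; and $B_{tb}\otimes\mathbbm 1_P$ is projective whenever $B_{tb}$ is, so $S_P$ is a pure $t$-strategy. Applying the robust pure $t$ self-test hypothesis to $S_P$ produces a local isometry $\hat U_A\otimes\hat U_B$ with $\hat U_A:\mathcal H_A\to\mathcal H_{\tilde A}\otimes\mathcal H_{\hat A}$ and $\hat U_B:\mathcal H_B\otimes\mathcal H_P\to\mathcal H_{\tilde B}\otimes\mathcal H_{\hat B}$, together with an auxiliary state $\ket{\aux}\in\mathcal H_{\hat A}\otimes\mathcal H_{\hat B}$, such that $S_P\xhookrightarrow[\hat U_A\otimes\hat U_B,\ket{\aux}]{\varepsilon'}\tilde S$, where $\varepsilon'=\varepsilon'(\delta)\to0$ as $\delta\to0$ (and $\varepsilon'$ inherits any explicit $(\varepsilon',\delta)$-dependence of the pure $t$ self-test).

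Here is the key observation. Reading the purification $\ket\psi$ as a fixed witness, the conditions of Definition~\ref{def:localdilation} that certify $S\xhookrightarrow{\varepsilon}\tilde S$ for the \emph{mixed} strategy $S$ coincide, term by term, with the pure local-dilation conditions for $S_P$ subject to the single extra demand that Bob's isometry act as the identity on $\mathcal H_P$; moreover it suffices to check them for this one purification, since any other purification differs from $\ket\psi$ by an isometry on $\mathcal H_P$, which can be absorbed into the auxiliary state. So the whole theorem reduces to the claim that the pure dilation $S_P\xhookrightarrow{\varepsilon'}\tilde S$ can be realised, up to $O(\varepsilon')$, by an isometry whose Bob part has the product form $U_B\otimes\mathbbm 1_P$ with $U_B:\mathcal H_B\to\mathcal H_{\tilde B}\otimes\mathcal H_{\hat B}'$. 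This is the only step using full Schmidt rank of $\ket{\tilde\psi}$, and it is indispensable: tracing out Alice in $(\hat U_A\otimes\hat U_B)\ket\psi\approx_{\varepsilon'}\ket{\tilde\psi}\otimes\ket{\aux}$ shows that $\hat U_B$ carries $\rho_{BP}$ approximately to $\tilde\rho_{\tilde B}\otimes\aux_{\hat B}$ with $\tilde\rho_{\tilde B}=\operatorname{Tr}_{\tilde A}\proj{\tilde\psi}$ of \emph{full} rank; matching this against the Schmidt decomposition $\ket{\tilde\psi}=\sum_a\alpha_a\ket{\tilde e_a}\ket{\tilde f_a}$ (all $\alpha_a>0$, with $\dim\mathcal H_{\tilde B}$ terms), and using that each $B_{tb}\otimes\mathbbm 1_P$ leaves $\mathcal H_P$ untouched, forces $\hat U_B$ on $\operatorname{supp}(\rho_{BP})$ into a block form over the orthonormal family $\{\ket{\tilde f_a}\}_a$ from which one reads off $U_B$ together with a correction isometry $Z$ acting only on the $\hat B$-register and satisfying $\hat U_B\approx_{O(\varepsilon')}(\mathbbm 1_{\tilde B}\otimes Z)(U_B\otimes\mathbbm 1_P)$ on all the relevant vectors. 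Feeding $U_A:=\hat U_A$, $U_B$, and the correspondingly adjusted auxiliary state (which now lives on $\mathcal H_{\hat A}\otimes\mathcal H_{\hat B}'\otimes\mathcal H_P$) back into Definition~\ref{def:localdilation} then yields $S\xhookrightarrow{\varepsilon}\tilde S$ with $\varepsilon=O(\varepsilon')$.

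Letting $\delta\to0$ (so $\varepsilon'(\delta)\to0$ and hence $\varepsilon\to0$) gives the robust mixed $t$ self-test, with the $(\varepsilon,\delta)$-dependence preserved up to a multiplicative constant; the exact statement is the case $\varepsilon=\delta=0$. I expect the block-factorisation of $\hat U_B$ in the previous paragraph to be the main obstacle: it is the only genuinely new geometric input, and it is exactly the point that breaks when $\ket{\tilde\psi}$ is not full rank, in which case the purifying register can be irreducibly entangled with Bob's canonical system — precisely the mechanism behind the counterexample correlation $\tilde p$ of Section~\ref{sect:counterexamples}. Keeping the error bound quantitative through this factorisation is the secondary technical point to watch.
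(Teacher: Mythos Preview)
Your reduction to the pure case via purification is correct and natural, and you correctly identify that the only issue is producing a Bob isometry that factors through $\mathbbm 1_P$. But the factorisation claim you make --- that $\hat U_B\approx_{O(\varepsilon')}(\mathbbm 1_{\tilde B}\otimes Z)(U_B\otimes\mathbbm 1_P)$ on the relevant support --- is asserted, not proved, and you yourself flag it as ``the main obstacle''. Nothing in the hypotheses forces this: knowing $\hat U_B\rho_{BP}\hat U_B^*\approx\tilde\rho_{\tilde B}\otimes\sigma_{\hat B}$ and that $\hat U_B$ intertwines $B_{tb}\otimes\mathbbm 1_P$ with $\tilde B_{tb}\otimes\mathbbm 1_{\hat B}$ on the state does not, on its own, prevent $\hat U_B$ from mixing $\mathcal H_B$ and $\mathcal H_P$ before extracting the $\tilde B$ register. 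The ``block form over $\{\ket{\tilde f_a}\}$'' sketch is not an argument; you would need to show that the $\tilde B$-component of $\hat U_B\ket{b}\ket{p}$ is (approximately) independent of $p$, and that is precisely the non-trivial content.

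The paper sidesteps this factorisation problem entirely by a different trick: it applies the pure self-test \emph{twice}, once with $P$ appended to Bob and once with $P$ appended to Alice. From the first application one extracts an isometry $V_B:\mathcal H_B\to\mathcal H_{\tilde B}\otimes\mathcal H_{\hat B}$ acting on $\mathcal H_B$ alone, and from the second an isometry $W_A:\mathcal H_A\to\mathcal H_{\tilde A}\otimes\mathcal H_{\check A}$ acting on $\mathcal H_A$ alone; the combined $X=W_A\otimes V_B\otimes\mathbbm 1_P$ then has the right tensor structure by construction (Lemma~\ref{lem1robmixed}). The full-Schmidt-rank hypothesis enters at a completely different point: once $X$ is in hand, the state $X\ket\psi$ need not be of product form $\ket{\tilde\psi}\otimes\ket{\aux}$, and the paper uses full rank (via Lemmas~\ref{lem:spacebound} and~\ref{lem:minselftest}, yielding Proposition~\ref{prop:dim1}) to show the top eigenspace of the game operator $\tilde W$ is one-dimensional, which then forces the product structure (Lemma~\ref{lem3robmixed}). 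Your diagnosis that full rank is needed for the factorisation of $\hat U_B$ is therefore not how the hypothesis is actually used.
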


\subsubsection{Eigenspace of game operator}

Let $\tilde{S}=(\ket{\tilde\psi},\{\tilde{A}_{sa}\},\{\tilde{B}_{tb}\})$ be the canonical strategy self-tested by a game $G = (\mathcal S,\mathcal T,\mathcal A,\allowbreak\mathcal B, \pi, \mathcal V)$. We want to understand the set $\tilde{Q}$ of all quantum states that yield optimal quantum strategies when measured using the measurements from $\tilde{S}$. We will show that $\ket{\tilde\psi}$ is the only state that wins the game optimally. Define the operator
\begin{equation*}
	\tilde{W} := \sum_{a,b,s,t}  \pi(s,t)\mathcal{V}(a,b|s,t)(\tilde{A}_{sa}\otimes \tilde{B}_{tb}).
\end{equation*}
Then that is equivalent to that the eigenspace of the largest eigenvalue of $\tilde{W}$ is 1-dimensional. We introduce key lemmas, then present the proof of this property.

Given an optimal strategy $S=(\ket\psi,\{A_{sa}\},\{B_{tb}\})$ for $G$, the following lemma characterizes this set $Q$, the set of all quantum states that yield optimal quantum strategies, in terms of the $W$ operator. 
\begin{lemma}\label{lem:ptm_space}
	Let ${S}=(\ket\psi\in\mathcal{H}_A\otimes\mathcal{H}_B,\{A_{sa}\},\{B_{tb}\})$ be any optimal pure strategy for some game $G = (\mathcal S,\mathcal T,\mathcal A,\mathcal B, \pi, \mathcal V)$. Define the set 
	$$Q := \{\ket\phi \in \mathcal H_A \tensor \mathcal H_B : \omega((\ket\phi,\{A_{sa}\},\{B_{tb}\}),G)=\omega_q(G)\}.$$
	Define the operator
	\begin{equation*}
	    W := \sum_{a,b,s,t}  \pi(s,t)\mathcal{V}(a,b|s,t)(A_{sa}\otimes B_{tb}).
	\end{equation*}
	Then $Q = \Statespace(V_{\max(\sigma(W))})$, where $V_\lambda$ denotes the $\lambda$-eigenspace of $W$ and $\sigma(W)$ is the spectrum of $W$.
\end{lemma}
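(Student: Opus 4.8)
The plan is to reduce the statement to the Rayleigh--Ritz characterisation of the top eigenspace of the self-adjoint operator $W$. First I would observe that, by Lemma~\ref{lem:bck_score} applied to the pure strategy $(\ket\phi,\{A_{sa}\},\{B_{tb}\})$, we have $\omega((\ket\phi,\{A_{sa}\},\{B_{tb}\}),G)=\tr(W\proj\phi)=\braket{\phi|W|\phi}$ for every unit vector $\ket\phi\in\mathcal H_A\tensor\mathcal H_B$. Since each $A_{sa}$ and $B_{tb}$ is positive and self-adjoint and the coefficients $\pi(s,t)\mathcal V(a,b|s,t)$ are nonnegative reals, $W$ is self-adjoint, so it admits a spectral decomposition $W=\sum_{\lambda\in\sigma(W)}\lambda P_\lambda$ with orthogonal spectral projections $P_\lambda$ onto the eigenspaces $V_\lambda$.

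Next I would record the elementary spectral fact that for a unit vector $\ket\phi$ one has $\braket{\phi|W|\phi}=\sum_\lambda\lambda\,\|P_\lambda\ket\phi\|^2$, a convex combination of the eigenvalues since $\sum_\lambda\|P_\lambda\ket\phi\|^2=\|\phi\|^2=1$. Hence $\braket{\phi|W|\phi}\le\lambda_{\max}:=\max\sigma(W)$, with equality if and only if $\|P_\lambda\ket\phi\|=0$ for every $\lambda<\lambda_{\max}$, i.e. if and only if $\ket\phi\in V_{\lambda_{\max}}$. In particular $\max_{\|\phi\|=1}\braket{\phi|W|\phi}=\lambda_{\max}$, and this maximum is attained exactly on the unit vectors of $V_{\lambda_{\max}}$.

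It then remains to identify $\lambda_{\max}$ with $\omega_q(G)$, and this is the only place the hypothesis that $S$ is optimal is used. On one hand, since $\ket\psi$ is a unit vector, $\omega_q(G)=\omega(S,G)=\braket{\psi|W|\psi}\le\lambda_{\max}$. On the other hand, for any unit vector $\ket\phi\in V_{\lambda_{\max}}$ the triple $(\ket\phi,\{A_{sa}\},\{B_{tb}\})$ is again a legitimate tensor-product strategy for $G$, so $\lambda_{\max}=\braket{\phi|W|\phi}=\omega((\ket\phi,\{A_{sa}\},\{B_{tb}\}),G)\le\omega_q(G)$. Therefore $\lambda_{\max}=\omega_q(G)$. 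Combining this with the previous paragraph, $Q=\{\ket\phi:\braket{\phi|W|\phi}=\omega_q(G)\}=\{\ket\phi:\braket{\phi|W|\phi}=\lambda_{\max}\}$, which by the equality case of the Rayleigh quotient is precisely the set of unit vectors in $V_{\lambda_{\max}}=V_{\max(\sigma(W))}$, i.e. $Q=\Statespace(V_{\max(\sigma(W))})$.

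I do not expect a genuine obstacle: the mathematical content is just the Rayleigh quotient characterisation of the maximal eigenspace of a self-adjoint operator. The one point worth flagging is that $\omega_q(G)$ must first be pinned to $\lambda_{\max}$ using the optimality of the \emph{given} strategy $S$; without that hypothesis the set $Q$ could be empty even though $V_{\max(\sigma(W))}$ is not, so the argument genuinely relies on having $S$ in hand rather than on the choice of measurements alone.
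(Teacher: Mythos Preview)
Your proof is correct and follows essentially the same approach as the paper: both reduce to the Rayleigh--Ritz characterisation of the top eigenspace after identifying $\omega((\ket\phi,\{A_{sa}\},\{B_{tb}\}),G)=\braket{\phi|W|\phi}$ via Lemma~\ref{lem:bck_score}. Your version is in fact more explicit than the paper's, which leaves the identification $\lambda_{\max}=\omega_q(G)$ (and hence the use of the optimality hypothesis on $S$) implicit; your careful two-sided inequality argument for this point is exactly what is needed to make the paper's terse final sentence rigorous.
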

\begin{proof}
Let $\lambda_{\max} := \max\left(\sigma(W)\right)$. 
A key thing to observe is that $\bra{\psi'} W \ket{\psi'}$ is the success probability of strategy $S'=\left(\{A_{sa}\}, \{B_{tb}\}, \ket{\psi'}\right)$ (see Lemma~\ref{lem:bck_score}). It now follows that
\begin{equation} 
    \bra{\psi'} W \ket{\psi'} = \omega_q(G)         
    \quad \Longleftrightarrow \quad 
    \text{$S'$ is an optimal strategy for $G$}
    \quad \Longleftrightarrow \quad
    \ket{\psi'}\in Q.
\label{eq:stateQ}
\end{equation}
Since the unit vectors $\ket\psi$ that maximize the expression $\bra{\psi'} W \ket{\psi'}$ are precisely the quantum states in $\Statespace\left(V_{\lambda_{\max}}\right)$, the desired statement follows.   
\end{proof}

Our main use for this lemma is that it, for pure self-tested strategies, allows us to characterise $Q$ in relation to a vector space. In particular, $Q = \Statespace(\Span(Q))$.

While the above result already characterises the state space of optimal states for a given set of measurement operators, it turns out that in some cases, it is actually possible to say more; namely that if all those states have full Schmidt rank, then that space necessarily has dimension 1. The statement and proof of this result due to Cubitt, Montanaro and Winter \cite{Cubitt_2008}, though it is restated here for the sake of completeness.
\begin{lemma}[\cite{Cubitt_2008}]\label{lem:spacebound}
	Let $S$ be a subspace of the bipartite space $\mathcal{H}_A \otimes \mathcal{H}_B$, where $\dim \mathcal{H}_A = \dim \mathcal{H}_B = d$. If every nonzero state in $S$ has Schmidt rank $d$, then $\dim S = 1$. 
\end{lemma}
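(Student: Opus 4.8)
The plan is to argue by contradiction: suppose $\dim S \geq 2$, so we can pick two linearly independent states $\ket{\phi},\ket{\psi}\in S$. Consider the one-parameter family $\ket{\chi_z} := \ket{\phi} + z\ket{\psi}$ for $z\in\mathbb{C}$; by hypothesis every nonzero vector in $S$ has full Schmidt rank $d$, so in particular $\ket{\chi_z}$ has Schmidt rank $d$ for every $z$ (it is nonzero since $\ket\phi,\ket\psi$ are independent). First I would translate ``Schmidt rank $d$'' into a statement about a determinant. Writing $\ket{\chi_z} = \sum_{i,j} M_{ij}(z)\,\ket{i}_A\ket{j}_B$ in fixed product bases, the Schmidt rank of $\ket{\chi_z}$ equals the rank of the $d\times d$ matrix $M(z) = M_\phi + zM_\psi$, where $M_\phi, M_\psi$ are the coefficient matrices of $\ket\phi,\ket\psi$. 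So the requirement is that $\det(M_\phi + zM_\psi)\neq 0$ for all $z\in\mathbb{C}$.

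The key step is then purely algebraic: $p(z) := \det(M_\phi + zM_\psi)$ is a polynomial in $z$ of degree at most $d$. Over $\mathbb{C}$, a polynomial with no roots must be a nonzero constant. Hence $\det M_\psi = 0$ (it is the leading-type coefficient; more precisely the coefficient of $z^d$ is $\det M_\psi$, so it must vanish for $p$ to have degree $<d$, and iterating, all non-constant coefficients vanish), which already contradicts the assumption that $\ket\psi\in S$ has full Schmidt rank $d$ — unless $p$ has degree exactly $0$, i.e. $M_\psi$ is not merely singular but $p\equiv \det M_\phi$. To get a clean contradiction I would instead reason as follows: since $p(z)\neq 0$ for all $z\in\mathbb C$ and $\deg p \le d$, the fundamental theorem of algebra forces $p$ to be constant, so the coefficient of $z^d$, namely $\det M_\psi$, is zero. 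Thus $\ket\psi$ has Schmidt rank strictly less than $d$, contradicting the hypothesis. Therefore no two linearly independent states can coexist in $S$, i.e. $\dim S \le 1$; and since $S$ is assumed to contain a nonzero (full-rank) state, $\dim S = 1$.

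The main obstacle — really the only subtlety — is handling the boundary behavior of the determinant polynomial correctly: one must be careful that $\det(M_\phi + zM_\psi)$ could in principle be identically zero if one is not careful about which matrix plays which role, but this cannot happen here because $\ket{\chi_0}=\ket\phi$ has full rank, so $p(0)=\det M_\phi \neq 0$ and $p$ is not the zero polynomial. With $p$ a genuine nonzero polynomial of degree $\le d$ that never vanishes on $\mathbb C$, the fundamental theorem of algebra closes the argument. (One should note explicitly that this uses algebraic closure of $\mathbb C$ in an essential way — the analogous statement fails over $\mathbb R$, e.g. $2\times 2$ rotation-type families.)
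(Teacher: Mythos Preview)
Your proof is correct and follows essentially the same approach as the paper: both argue by contradiction, pass to the $d\times d$ coefficient matrix so that Schmidt rank becomes matrix rank, and invoke the fundamental theorem of algebra on the determinant polynomial $p(z)=\det(M_\phi+zM_\psi)$. The only cosmetic difference is that the paper observes $p$ has degree exactly $d$ (since $\det M_\psi\neq 0$) and hence possesses a root $z_0$ giving a low-rank vector, whereas you run the contrapositive and conclude $\det M_\psi=0$; these are the same argument read in opposite directions.
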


\begin{proof}
    For contradiction, we assume that there exists an (at least) two-dimensional subspace $S$ where every unit vector in $S$ has Schmidt rank $d$. Let $\ket\varphi ,\ket\psi \in S\subseteq\mathbb C^d \otimes \mathbb C^d$ be linearly independent. We will show that there exists $x\in\mathbb{C}$ such that the (unnormalised) $\ket{\phi_x} = \ket\varphi + x\ket\psi$ has Schmidt rank less than $d$, which contradicts the hypothesis.

	Arrange the coefficients of vectors in the computational basis $\{\ket{i}\ket{j}\}_{i,j=0,\ldots,d-1}$, into a $d\times d$ matrix. Then the Schmidt rank of the state vector equals the
	linear rank of the associated matrix. So, $\ket{\psi_x}$ has Schmidt rank less than $d$ if and only if the determinant of the associated matrix of $\ket{\psi_x}$ is $0$. Also note that the determinant is a non-constant polynomial in $x$ of degree
	$d$. Hence, it must have a root $x_0\in\mathbb{C}$, and
	the corresponding $\ket{\phi_{x_0}}$ has Schmidt rank less than $d$.
\end{proof}
Apart from this, we will need another result characterising the Schmidt rank of optimal states; namely that they all have minimal Schmidt rank. This will be useful for proving that the conditions of the above result holds.
\begin{lemma}\label{lem:minselftest}
	Let $G$ be nonlocal game that pure self-tests the strategy $\tilde S = (\ket{\tilde\psi},\{\tilde{A}_{sa}\},\{\tilde{B}_{tb}\})$. Then $\ket{\tilde\psi}$ has minimum Schmidt rank across the states of all optimal pure strategies. 
\end{lemma}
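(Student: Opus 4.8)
The plan is to argue by contradiction using the local dilation guaranteed by the self-test together with the fact that a local isometry can only increase the Schmidt rank of a state when we append an auxiliary factor. Suppose there is an optimal pure strategy $S = (\ket{\psi},\{A_{sa}\},\{B_{tb}\})$ whose state $\ket{\psi}$ has strictly smaller Schmidt rank than $\ket{\tilde\psi}$. Since $G$ pure self-tests $\tilde S$, we have $S \xhookrightarrow{} \tilde S$, so there is a local isometry $U = U_A \otimes U_B$ and an auxiliary state $\ket{\aux} \in \mathcal H_{\hat A} \otimes \mathcal H_{\hat B}$ such that, in particular, $U\ket{\psi} = \ket{\tilde\psi}\otimes\ket{\aux}$.

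The key observation is that applying a local isometry does not change the Schmidt rank: $\operatorname{SchmidtRank}(U_A\otimes U_B \ket{\psi}) = \operatorname{SchmidtRank}(\ket{\psi})$, because $U_A$ and $U_B$ send an orthonormal set to an orthonormal set, so a Schmidt decomposition of $\ket{\psi}$ is carried to a Schmidt decomposition of $U\ket{\psi}$ with the same coefficients. On the other hand, for a product state across the $(\tilde A\hat A \mid \tilde B\hat B)$ cut, the Schmidt rank is multiplicative: $\operatorname{SchmidtRank}(\ket{\tilde\psi}\otimes\ket{\aux}) = \operatorname{SchmidtRank}(\ket{\tilde\psi}) \cdot \operatorname{SchmidtRank}(\ket{\aux}) \ge \operatorname{SchmidtRank}(\ket{\tilde\psi})$. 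Combining, $\operatorname{SchmidtRank}(\ket{\psi}) = \operatorname{SchmidtRank}(U\ket{\psi}) = \operatorname{SchmidtRank}(\ket{\tilde\psi}\otimes\ket{\aux}) \ge \operatorname{SchmidtRank}(\ket{\tilde\psi})$, contradicting the assumption that $\ket{\psi}$ has strictly smaller Schmidt rank. Hence $\ket{\tilde\psi}$ has minimum Schmidt rank among all optimal pure strategies.

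I expect no serious obstacle here; the only point requiring a little care is the clean statement of the two Schmidt-rank facts (invariance under local isometries, multiplicativity under tensor product), both of which are elementary and could even be used without proof. One subtlety worth a sentence: the definition of local dilation allows the auxiliary state $\ket{\aux}$ to be entangled across $\hat A \mid \hat B$, but this does not affect the argument, since we only need the lower bound $\operatorname{SchmidtRank}(\ket{\tilde\psi}\otimes\ket{\aux}) \ge \operatorname{SchmidtRank}(\ket{\tilde\psi})$, which holds regardless of whether $\ket{\aux}$ is a product state. If one instead wants to match the phrasing "minimum Schmidt rank" to include ties, the same computation shows every optimal pure strategy has Schmidt rank at least that of $\ket{\tilde\psi}$, which is exactly the claim.
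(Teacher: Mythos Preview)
Your proposal is correct and follows essentially the same argument as the paper: assume for contradiction an optimal pure strategy with strictly smaller Schmidt rank, apply the local dilation $U\ket{\psi}=\ket{\tilde\psi}\otimes\ket{\aux}$, and combine invariance of Schmidt rank under local isometries with the fact that tensoring by $\ket{\aux}$ cannot decrease the Schmidt rank of $\ket{\tilde\psi}$. Your version is slightly more explicit about the multiplicativity of Schmidt rank under tensor products, but the underlying idea is identical.
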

\begin{proof}
	Let $t$ be the Schmidt rank of $\ket{\tilde\psi}$. For contradiction, assume that there exists a strategy $S = (\ket{\phi},\{{A}_{sa}\},\{{B}_{tb}\})$ with Schmidt rank $s < t$. By the definition of pure self-testing, there exist local isometries $V_A,V_B$ and a state $\ket{\aux}$ such that
	\begin{equation*}
		(V_A \otimes V_B) \ket{\phi} = \ket{\tilde\psi} \otimes \ket{\aux}.
	\end{equation*}
	Since local isometries preserve Schmidt rank, the Schmidt rank (with respect to Alice/Bob partition) of $\ket{\tilde\psi} \otimes \ket{\aux}$ is $s$. This is a contradiction, since tensoring with a state $\ket{\aux}$ cannot decrease the Schmidt rank of $\ket{\tilde\psi}$.
\end{proof}

Now we are ready to present the following proposition:

\begin{proposition}
\label{prop:dim1}
    Let $\tilde{S}=(\ket{\tilde\psi},\{\tilde{A}_{sa}\},\{\tilde{B}_{tb}\})$ be the canonical full-rank strategy self-tested by a game $G = (\mathcal S,\mathcal T,\mathcal A,\mathcal B, \pi, \mathcal V)$. Define the operator
\begin{equation*}
	\tilde{W} := \sum_{a,b,s,t}  \pi(s,t)\mathcal{V}(a,b|s,t)(\tilde A_{sa}\otimes \tilde B_{tb}),
\end{equation*}
and the set
\begin{equation*}
    \tilde Q:=\{\ket{\phi}:\braket{\phi|\tilde W|\phi}=\omega_q(G)\}.
\end{equation*}
Then $\dim\Span\{\tilde Q\}=1$. 
\end{proposition}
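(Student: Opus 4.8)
The plan is to combine the three lemmas that precede the statement. By Lemma \ref{lem:ptm_space}, applied to the canonical strategy $\tilde S$ itself (which is optimal for $G$), the set $\tilde Q$ is exactly the quantum state space of the top eigenspace $V_{\lambda_{\max}}$ of $\tilde W$; in particular $\tilde Q = \Statespace(\Span\{\tilde Q\})$, so it suffices to show that the subspace $\Span\{\tilde Q\} \subseteq \mathcal H_{\tilde A}\otimes \mathcal H_{\tilde B}$ has dimension $1$. To invoke Lemma \ref{lem:spacebound} on this subspace I need two things: that $\mathcal H_{\tilde A}$ and $\mathcal H_{\tilde B}$ have equal dimension $d$ (true since $\tilde S$ is full-rank, so $\ket{\tilde\psi}$ has full Schmidt rank and the two local dimensions both equal the Schmidt rank), and that every nonzero vector in $\Span\{\tilde Q\}$ has Schmidt rank exactly $d$.

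So the crux is: every optimal state (for the fixed measurements $\{\tilde A_{sa}\},\{\tilde B_{tb}\}$) has full Schmidt rank $d$. First, $\ket{\tilde\psi}\in\tilde Q$ and it has Schmidt rank $d$ by the full-rank assumption. Now take any $\ket\phi\in\tilde Q$. Then $S' := (\ket\phi,\{\tilde A_{sa}\},\{\tilde B_{tb}\})$ is an optimal pure strategy for $G$. By Lemma \ref{lem:minselftest} (applied with the canonical strategy being $\tilde S$, which $G$ self-tests), $\ket{\tilde\psi}$ has the \emph{minimum} Schmidt rank among states of all optimal pure strategies, so $\ket\phi$ has Schmidt rank $\geq d$; being a vector in a $d\otimes d$ space, it has Schmidt rank exactly $d$. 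Finally, an arbitrary nonzero vector $v\in\Span\{\tilde Q\}$ is, after normalization, a quantum state in $\Statespace(\Span\{\tilde Q\}) = \tilde Q$ (using $\tilde Q = \Statespace(\Span\{\tilde Q\})$ from Lemma \ref{lem:ptm_space}), hence has Schmidt rank $d$. Thus $\Span\{\tilde Q\}$ satisfies the hypothesis of Lemma \ref{lem:spacebound}, giving $\dim\Span\{\tilde Q\} = 1$.

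The one subtlety to be careful about — and the main (minor) obstacle — is the interplay between ``$\tilde Q$ as a set of unit vectors yielding optimal strategies'' and ``$\Span\{\tilde Q\}$ as a vector space'': I must make sure that an arbitrary nonzero element of the span is again optimal, not merely a linear combination of optimal states. This is precisely the content of $Q = \Statespace(V_{\lambda_{\max}})$ in Lemma \ref{lem:ptm_space}: the set of optimal states is the set of \emph{all} unit vectors in a single eigenspace, which is closed under linear combination up to normalization. Once this identification is in hand, Lemma \ref{lem:minselftest} upgrades ``Schmidt rank $\geq d$'' to ``$= d$'' for every such vector, and Lemma \ref{lem:spacebound} closes the argument. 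No nontrivial computation is required; the proof is essentially a bookkeeping exercise chaining the three lemmas together.
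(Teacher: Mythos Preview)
Your proposal is correct and follows essentially the same route as the paper's proof: identify $\tilde Q$ with the unit vectors of the top eigenspace of $\tilde W$ via Lemma~\ref{lem:ptm_space}, use Lemma~\ref{lem:minselftest} together with the full-rank hypothesis to see that every nonzero vector there has Schmidt rank $d$, and conclude with Lemma~\ref{lem:spacebound}. If anything, you are slightly more explicit than the paper about why arbitrary nonzero elements of $\Span\{\tilde Q\}$ (not just unit vectors in $\tilde Q$) are full-rank, which is indeed the only point worth pausing at.
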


\begin{proof}
    Let $\lambda_0$ be the largest eigenvalue of $\tilde W$, which is also the quantum value of $G$. Let $V_{\lambda_0}$ be the eigenspace of $\tilde W$ corresponding to $\lambda_0$. Then $\tilde Q$ coincides the set of all unit vectors in $V_{\lambda_0}$. 
    
    By Lemma \ref{lem:minselftest}, $\ket{\tilde\psi}$ has the minimal rank among states in $\tilde Q$. Then in $V_{\lambda_0}$ all non-zero vectors are full-rank. By Lemma \ref{lem:spacebound}, $\dim V_{\lambda_0}=1=\dim\Span(\tilde Q)$.
\end{proof}

\subsubsection{Pure self-tests imply mixed self-tests}

The following lemma can be seen as a first step in the proof of showing that a pure, robust self-test is a mixed, robust self-test, if the canonical quantum strategy has a state of full Schmidt rank. It shows that any purification of a mixed quantum strategy can be $\varepsilon'$-dilated to a quantum strategy that uses the operators of the canonical strategy of the pure self-test. 

\begin{lemma}\label{lem1robmixed}
Let $G$ be a robust, pure self-test for $\tilde S=\left(\ket{\tilde\psi},\{\tilde A_{sa}\},\{\tilde B_{tb}\}\right)$. Let $\rho_{AB}$ be a mixed state for which $S=\left(\rho_{AB},\{ A_{sa}\},\{ B_{tb}\}\right)$ is a $\delta$-optimal strategy and consider $S^{(1)}=\left(\ket{\psi}_{ABP},\{ A_{sa}\},\{ B_{tb}\otimes \mathbbm{1}_P\}\right)$, where $\ket{\psi}_{ABP}$ is a purification of $\rho_{AB}$. Then $S^{(2)}=\left(X\ket{\psi}_{ABP},\{ \tilde A_{sa}\otimes \mathbbm{1}_{\check{A}}\},\{ \tilde B_{tb}\otimes \mathbbm{1}_{\hat{B}}\otimes \mathbbm{1}_P\}\right)$ is a local $2 \varepsilon$-dilation of $S^{(1)}$, where $X$ is an isometry obtained from the robust, pure self-test. 
\end{lemma}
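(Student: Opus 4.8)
The plan is to observe that $S^{(1)}$ is itself a \emph{pure} strategy with exactly the same winning probability as $S$, feed it into the robust pure self-test of $G$ to extract $X$, and then notice that the $\varepsilon$-dilation of $S^{(1)}$ onto $\tilde S$ obtained this way is, after a relabelling of Hilbert spaces, already a $2\varepsilon$-dilation of $S^{(1)}$ onto $S^{(2)}$.

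First I would check that $S^{(1)}$ is a pure, $\delta$-optimal strategy for $G$. Purity is immediate, and the game operator attached to $S^{(1)}$ is $W_{AB}\otimes\mathbbm 1_P$, where $W_{AB}=\sum_{s,t,a,b}\pi(s,t)\mathcal V(a,b|s,t)\,A_{sa}\otimes B_{tb}$ is the game operator of $S$; hence Lemma \ref{lem:bck_score}, together with $\tr_P[\proj{\psi_{ABP}}]=\rho_{AB}$ (which expresses that $\ket{\psi}_{ABP}$ purifies $\rho_{AB}$), gives $\omega(S^{(1)},G)=\tr(W_{AB}\rho_{AB})=\omega(S,G)\ge\omega_q(G)-\delta$. (Tensoring $B_{tb}$ with $\mathbbm 1_P$ also preserves projectivity, so the same observation works if $G$ is only assumed to be a robust pure PVM self-test and $S$ is projective, which is the situation actually needed for Theorem \ref{thm:pure_to_mix_rob}.) For the $\varepsilon$ in the statement let $\delta$ be the corresponding threshold supplied by Definition \ref{def:robustselftest}; then $S^{(1)}$, being pure and $\delta$-optimal, satisfies $S^{(1)}\xhookrightarrow{\varepsilon}\tilde S$, i.e. there is a local isometry $X=U_A\otimes U_B$ and an auxiliary state $\ket{\aux}$ with
\begin{align*}
\|X\ket{\psi}_{ABP}-\ket{\tilde\psi}\otimes\ket{\aux}\|&\le\varepsilon,\\
\|X(A_{sa}\otimes\mathbbm 1)\ket{\psi}_{ABP}-(\tilde A_{sa}\otimes\mathbbm 1_{\tilde B})\ket{\tilde\psi}\otimes\ket{\aux}\|&\le\varepsilon,\\
\|X(\mathbbm 1_A\otimes B_{tb}\otimes\mathbbm 1_P)\ket{\psi}_{ABP}-(\mathbbm 1_{\tilde A}\otimes\tilde B_{tb})\ket{\tilde\psi}\otimes\ket{\aux}\|&\le\varepsilon
\end{align*}
for all $s,t,a,b$. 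This $X$ is the isometry named in the statement; the junk register on Bob's side produced by the self-test can be taken to contain a copy of $\mathcal H_P$, and on Alice's side it is $\mathcal H_{\check A}$, which is what makes $S^{(2)}$ come out in the form written in the claim.

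Next I would take $X$ \emph{itself}, together with the trivial one-dimensional auxiliary state, as the candidate witnessing $S^{(1)}\xhookrightarrow{2\varepsilon}S^{(2)}$. The state of $S^{(2)}$ is by definition $X\ket{\psi}_{ABP}$, so the first requirement of Definition \ref{def:localdilation} holds with slack $0$. For the measurement requirements, the point is that the operator $\tilde A_{sa}\otimes\mathbbm 1_{\check A}$ of $S^{(2)}$ acts on $X\ket{\psi}_{ABP}$ as $\tilde A_{sa}$ on $\mathcal H_{\tilde A}$ and the identity everywhere else, i.e. it is the same operator that multiplies $\ket{\tilde\psi}\otimes\ket{\aux}$ in the second displayed inequality above. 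A triangle inequality routed through $(\tilde A_{sa}\otimes\mathbbm 1_{\tilde B})\ket{\tilde\psi}\otimes\ket{\aux}$ then gives, for all $s,a$,
\[
\|X(A_{sa}\otimes\mathbbm 1)\ket{\psi}_{ABP}-(\tilde A_{sa}\otimes\mathbbm 1)X\ket{\psi}_{ABP}\|\le\varepsilon+\|\tilde A_{sa}\|_{\mathrm{op}}\,\varepsilon\le2\varepsilon,
\]
where the first $\varepsilon$ is the bound from the second displayed inequality and the second term comes from applying $\tilde A_{sa}\otimes\mathbbm 1$ — a contraction, as $\tilde A_{sa}$ is a POVM element so $\|\tilde A_{sa}\|_{\mathrm{op}}\le1$ — to the first displayed inequality. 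The same computation with $B_{tb},\tilde B_{tb}$ in place of $A_{sa},\tilde A_{sa}$ bounds Bob's term by $2\varepsilon$, and these three bounds are precisely what Definition \ref{def:localdilation} requires for $S^{(1)}\xhookrightarrow{2\varepsilon}S^{(2)}$.

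The mathematical content is light: beyond the triangle inequality and $\|\tilde A_{sa}\|_{\mathrm{op}},\|\tilde B_{tb}\|_{\mathrm{op}}\le1$, nothing is estimated. The only thing needing care is the bookkeeping of tensor factors — confirming that $X=U_A\otimes U_B$ is genuinely \emph{local} (it is, being the isometry of a local dilation) and that every identity factor is attached to the correct Hilbert space, so that the two terms of the triangle inequality cancel against the same vectors. I therefore do not anticipate a real obstacle; this lemma is the \emph{transport} step which, combined later with transitivity of local dilation and the uniqueness of the optimal state forced by full Schmidt rank (Proposition \ref{prop:dim1}), upgrades the pure self-test to a mixed one.
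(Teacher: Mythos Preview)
Your argument is correct for the lemma as literally stated, but it takes a genuinely different route from the paper and the difference matters downstream. You apply the robust pure self-test \emph{once}, to $S^{(1)}$ alone, obtaining $X=U_A\otimes U_B$ with $U_B:\mathcal H_B\otimes\mathcal H_P\to\mathcal H_{\tilde B}\otimes\mathcal H_{\hat B'}$, and then formally relabel $\hat B'$ as $\hat B\otimes P$. The paper instead applies the self-test \emph{twice}: once to $S^{(1)}$ (purifying register on Bob's side) to get $V_{AP}\otimes V_B$, and once to the companion strategy $(\ket\psi_{ABP},\{A_{sa}\otimes\mathbbm 1_P\},\{B_{tb}\})$ (purifying register on Alice's side) to get $W_A\otimes W_{BP}$. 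It then sets $X:=W_A\otimes V_B\otimes\mathbbm 1_P$, so that the Alice piece comes from one self-test and the Bob piece from the other, with $P$ left untouched.

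What this buys the paper is that their $X$ genuinely factors through $\mathbbm 1_P$. That is precisely what Definition~\ref{def:localdilation} demands for a dilation of the \emph{mixed} strategy $S$: the isometry must be of the form $U_A\otimes U_B\otimes\mathbbm 1_P$ with $U_B$ acting on $\mathcal H_B$ alone. Your $U_B$ acts on $\mathcal H_B\otimes\mathcal H_P$ jointly and there is no reason it should split as $(\text{something on }B)\otimes\mathbbm 1_P$; the relabelling $\hat B'\cong\hat B\otimes P$ is purely formal and does not recover the original $P$ register. So after Lemma~\ref{lem3robmixed} and transitivity you would obtain $S^{(1)}\xhookrightarrow{}\tilde S$ as pure strategies, but you could not read off $S\xhookrightarrow{}\tilde S$ in the mixed sense, which is the whole point of Theorem~\ref{thm:pure_to_mix_rob}. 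In short: your proof of the lemma is fine, but the isometry it produces is not the one the rest of the argument needs; the paper's two-sided trick is there for a reason.
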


\begin{proof}
We have two pure strategies, $S^{(1)}$ and $\left(\ket{\psi}_{ABP},\{ A_{sa}\otimes \mathbbm{1}_P\},\{ B_{tb}\}\right)$, which are $\delta$-optimal. Then by the pure robustness, we have that\allowdisplaybreaks
\begin{align}
    &V_{AP}\otimes V_B[(A_{sa}\otimes \mathbbm{1}_P)\otimes \mathbbm{1}_B\ket{\psi}_{ABP}]\approx_{\varepsilon}(\tilde{A}_{sa}\otimes \mathbbm{1}_{\tilde B}\ket{\tilde{\psi}})\otimes\ket{\operatorname{aux}_1},\nonumber\\
     &V_{AP}\otimes V_B[(\mathbbm{1}_A\otimes \mathbbm{1}_P)\otimes B_{tb}\ket{\psi}_{ABP}]\approx_{\varepsilon}(\mathbbm{1}_{\tilde A}\otimes \tilde{B}_{tb}\ket{\tilde{\psi}})\otimes\ket{\operatorname{aux}_1},\label{eq:s1m}\\
    &V_{AP}\otimes V_B[\ket{\psi}_{ABP}]\approx_{\varepsilon}\ket{\tilde{\psi}}\otimes\ket{\operatorname{aux}_1},\label{eq:s1s}\\
    &W_{A}\otimes W_{BP}[A_{sa}\otimes (\mathbbm{1}_B\otimes \mathbbm{1}_P)\ket{\psi}_{ABP}]\approx_{\varepsilon}(\tilde{A}_{sa}\otimes \mathbbm{1}_{\tilde B}\ket{\tilde{\psi}})\otimes\ket{\operatorname{aux}_2},\label{eq:s2m}\\
    &W_{A}\otimes W_{BP}[\mathbbm{1}_A\otimes (B_{tb}\otimes \mathbbm{1}_P)\ket{\psi}_{ABP}]\approx_{\varepsilon}(\mathbbm{1}_{\tilde A}\otimes \tilde{B}_{tb}\ket{\tilde{\psi}})\otimes\ket{\operatorname{aux}_2},\nonumber\\
    &W_{A}\otimes W_{BP}[\ket{\psi}_{ABP}]\approx_{\varepsilon}\ket{\tilde{\psi}}\otimes\ket{\operatorname{aux}_2}\label{eq:s2s}.
\end{align}

Let $X:=W_A\otimes V_B \otimes \mathbbm{1}_P$. We need to show that
\begin{align*}
    X[A_{sa}\otimes \mathbbm{1}_B\otimes \mathbbm{1}_P\ket{\psi}_{ABP}]&\approx_{2\varepsilon}(\tilde{A}_{sa}\otimes \mathbbm{1}_{\tilde B}\otimes \mathbbm{1}_{\check{A} \hat{B} P})X\ket{\psi}_{ABP}, \\
    X[\mathbbm{1}_A\otimes B_{tb}\otimes \mathbbm{1}_P\ket{\psi}_{ABP}]&\approx_{2\varepsilon}(\mathbbm{1}_{\tilde A}\otimes \tilde{B}_{tb}\otimes \mathbbm{1}_{\check{A} \hat{B} P})X\ket{\psi}_{ABP}
\end{align*}
for all $a,b,s,t$.

Equations \eqref{eq:s1m}, \eqref{eq:s1s} imply 
\begin{align*}
    &V_{AP}\otimes V_B[(\mathbbm{1}_A\otimes \mathbbm{1}_P)\otimes B_{tb}\ket{\psi}_{ABP}]\approx_{2\varepsilon}(\mathbbm{1}_{\tilde{A}}\otimes \tilde{B}_{tb}\otimes \mathbbm{1}_{\hat{A} \hat{B} P})(V_{AP}\otimes V_B)[\ket{\psi}_{ABP}].
\end{align*}

Applying $V_{AP}^*\otimes \mathbbm{1}_{\tilde B \hat{B}}$ to the left of both sides yields
\begin{align}
    \mathbbm{1}_{AP}\otimes V_BB_{tb}[\ket{\psi}_{ABP}]\approx_{2\varepsilon}&\mathbbm{1}_{AP}\otimes (\tilde{B}_{tb}\otimes \mathbbm{1}_{\hat{B}})V_B[\ket{\psi}_{ABP}]\label{eq:vb}.
\end{align}
Similarly, we obtain 
\begin{align}
    &W_{A}A_{sa}\otimes \mathbbm{1}_{BP}[\ket{\psi}_{ABP}]\approx_{2\varepsilon}(\tilde{A}_{sa}\otimes \mathbbm{1}_{\check{A}})W_A\otimes \mathbbm{1}_{BP}[\ket{\psi}_{ABP}]\label{eq:wa}
\end{align}
from the equations \eqref{eq:s2m}, \eqref{eq:s2s}. 

Now, applying $W_A\otimes \mathbbm{1}_{\tilde B \hat{B} P}$ to the left of both sides of equation \eqref{eq:vb} gives us
\begin{align*}
    &W_{A}\otimes V_B\otimes \mathbbm{1}_P[\mathbbm{1}_A\otimes B_{tb}\otimes \mathbbm{1}_P\ket{\psi}_{ABP}]\approx_{2\varepsilon}(\mathbbm{1}_{\tilde A}\otimes \tilde{B}_{tb}\otimes \mathbbm{1}_{\check{A} \hat{B} P})(W_{A}\otimes V_B\otimes \mathbbm{1}_P)\ket{\psi}_{ABP}.
\end{align*}

Finally, we deduce
\begin{align*}
    W_{A}\otimes V_B\otimes \mathbbm{1}_P[A_{sa}\otimes \mathbbm{1}_B \otimes \mathbbm{1}_P\ket{\psi}_{ABP}]&\approx_{2\varepsilon}(\tilde{A}_{sa}\otimes \mathbbm{1}_{\tilde B}\otimes \mathbbm{1}_{\check{A} \hat{B} P})(W_{A}\otimes V_B\otimes \mathbbm{1}_P)\ket{\psi}_{ABP}
\end{align*}
from applying $V_B\otimes \mathbbm{1}_{\tilde A \check{A} P}$ to the left of both sides of equation \eqref{eq:wa}.
\end{proof}

The next lemma shows that the strategy we constructed before is actually almost optimal for the nonlocal game.

\begin{lemma}\label{lem2robmixed}
Let $G$ be a pure, robust self-test of $\tilde{S}$ and let $S^{(2)}$ be as in Lemma \ref{lem1robmixed}. Then $S^{(2)}$ is $(\delta+C\varepsilon)$-optimal, where $C$ depends on $G$. 
\end{lemma}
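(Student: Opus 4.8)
The plan is to compute the winning probability $\omega(S^{(2)}, G)$ and compare it with $\omega(S^{(1)}, G)$, which in turn is the same as $\omega(S, G) = \tr(W\rho_{AB}) \geq \omega_q(G) - \delta$ by Lemma \ref{lem:bck_score} (a purification and tensoring by $\mathbbm 1_P$ do not change the statistics). So it suffices to show $|\omega(S^{(2)}, G) - \omega(S^{(1)}, G)| \leq C\varepsilon$ for a constant $C$ depending only on $G$. First I would recall from Lemma \ref{lem1robmixed} that $S^{(2)}$ is a local $2\varepsilon$-dilation of $S^{(1)}$ via the isometry $X = W_A \otimes V_B \otimes \mathbbm 1_P$, so that for every $s,t,a,b$ we have
\begin{align*}
X(A_{sa}\otimes\mathbbm 1_B\otimes\mathbbm 1_P)\ket\psi_{ABP} &\approx_{2\varepsilon} (\tilde A_{sa}\otimes\mathbbm 1_{\tilde B}\otimes\mathbbm 1_{\check A\hat B P})X\ket\psi_{ABP},\\
X(\mathbbm 1_A\otimes B_{tb}\otimes\mathbbm 1_P)\ket\psi_{ABP} &\approx_{2\varepsilon} (\mathbbm 1_{\tilde A}\otimes\tilde B_{tb}\otimes\mathbbm 1_{\check A\hat B P})X\ket\psi_{ABP}.
\end{align*}

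Next I would expand the game value. Writing $\ket{\Psi} := X\ket\psi_{ABP}$, the winning probability of $S^{(2)}$ is $\omega(S^{(2)},G) = \sum_{s,t,a,b}\pi(s,t)\mathcal V(a,b|s,t)\braket{\Psi|(\tilde A_{sa}\otimes\tilde B_{tb}\otimes\mathbbm 1)|\Psi}$, while $\omega(S^{(1)},G) = \sum_{s,t,a,b}\pi(s,t)\mathcal V(a,b|s,t)\braket{\psi_{ABP}|(A_{sa}\otimes B_{tb}\otimes\mathbbm 1)|\psi_{ABP}}$; since $X$ is an isometry the latter equals $\sum \pi\mathcal V\braket{\Psi|X(A_{sa}\otimes B_{tb}\otimes\mathbbm 1)X^*|\Psi}$ — but it is cleaner to keep the vectors. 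The key estimate is the standard ``swapping'' trick: for each fixed $s,t,a,b$, the vector $X(A_{sa}\otimes B_{tb}\otimes\mathbbm 1)\ket\psi$ is within $4\varepsilon$ of $(\tilde A_{sa}\otimes\tilde B_{tb}\otimes\mathbbm 1)\ket\Psi$, obtained by applying the first dilation inequality with the operator $B_{tb}$ inserted and then the second, using that all measurement operators have norm at most $1$ (POVM elements and projectors are contractions) so that applying a bounded operator to both sides of an $\varepsilon$-approximation preserves it up to the same $\varepsilon$. Then
\[
\left|\braket{\Psi|(\tilde A_{sa}\otimes\tilde B_{tb}\otimes\mathbbm 1)|\Psi} - \braket{\psi|(A_{sa}\otimes B_{tb}\otimes\mathbbm 1)|\psi}\right|
= \left|\braket{\Psi|(\tilde A_{sa}\otimes\tilde B_{tb}\otimes\mathbbm 1)|\Psi} - \braket{\Psi|X(A_{sa}\otimes B_{tb}\otimes\mathbbm 1)|\psi}\right| \leq 4\varepsilon
\]
by Cauchy--Schwarz, since both $\ket\Psi$ and $X(A_{sa}\otimes B_{tb}\otimes\mathbbm 1)\ket\psi$ have norm at most $1$. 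Summing against the nonnegative weights $\pi(s,t)\mathcal V(a,b|s,t)$ gives $|\omega(S^{(2)},G)-\omega(S^{(1)},G)| \leq 4\varepsilon \sum_{s,t,a,b}\pi(s,t)\mathcal V(a,b|s,t) =: C\varepsilon$, where $C = 4\sum_{s,t,a,b}\pi(s,t)\mathcal V(a,b|s,t) \leq 4|\mathcal A||\mathcal B|$ depends only on the game $G$.

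Combining, $\omega(S^{(2)},G) \geq \omega(S^{(1)},G) - C\varepsilon = \omega(S,G) - C\varepsilon \geq \omega_q(G) - \delta - C\varepsilon$, so $S^{(2)}$ is $(\delta + C\varepsilon)$-optimal, as claimed. The main obstacle — really the only thing requiring care — is bookkeeping the tensor-factor placements ($\check A$, $\hat B$, $P$ and the identities acting on them) so that the swapping argument is applied to the correct operators on the correct legs; the analytic content is just the triangle inequality plus Cauchy--Schwarz together with the contractivity of POVM elements. One should also double-check the constant: a factor-$2$ loss could appear depending on whether one chains the two $2\varepsilon$-dilation inequalities once or symmetrizes, but in any case the bound is $O(\varepsilon)$ with a $G$-dependent constant, which is all that is needed.
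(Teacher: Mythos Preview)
Your proposal is correct and follows essentially the same route as the paper: bound each $|p_1(a,b|s,t)-p_2(a,b|s,t)|$ via Cauchy--Schwarz and the $2\varepsilon$-dilation relation from Lemma~\ref{lem1robmixed}, then sum against the nonnegative weights $\pi(s,t)\mathcal V(a,b|s,t)$. Your constant $C=4\sum_{s,t,a,b}\pi(s,t)\mathcal V(a,b|s,t)$ is in fact tighter than the paper's, which carries an extra $\max\{|O_A|,|O_B|\}$ factor; the only point to make precise is that ``inserting $B_{tb}$'' into the $A$-side dilation inequality is legitimate because $X=W_A\otimes V_B\otimes\mathbbm 1_P$ is a tensor product, so the insertion amounts to applying the contraction $\mathbbm 1\otimes V_B B_{tb}V_B^*\otimes\mathbbm 1$ to both sides --- exactly the bookkeeping you flag as the one thing requiring care.
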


\begin{proof}
Let $p_1(a,b|s,t)$ and $p_2(a,b|st)$ be the correlation of $S^{(1)}$ and $S^{(2)}$, respectively. It holds 
\begin{align*}
    &|p_1(a,b|s,t)-p_2(a,b|s,t)|\\
    &=|\mathrm{Tr} (((A_{sa} \otimes B_{tb} \otimes \mathbbm{1}_P) \\
    &\hspace{2cm}- (W_A \otimes V_B \otimes \mathbbm{1}_P)^* (\tilde{A}_{sa} \otimes \tilde{B}_{tb} \otimes \mathbbm{1}_{\check{A}\hat{B}P})(W_A \otimes V_B \otimes \mathbbm{1}_P)) \ket{\psi}\bra{\psi}_{ABP})|\\
     &=|\langle (W_A \otimes V_B \otimes \mathbbm{1}_P) \ket{\psi}, ( (W_A \otimes V_B \otimes \mathbbm{1}_P)(A_{sa} \otimes B_{tb} \otimes \mathbbm{1}_P)\\
      &\hspace{7cm}-(\tilde{A}_{sa} \otimes \tilde{B}_{tb} \otimes \mathbbm{1}_{\check{A}\hat{B}P})(W_A \otimes V_B \otimes \mathbbm{1}_P)) \ket{\psi} \rangle |\\
     &\leq|| (W_A \otimes V_B \otimes \mathbbm{1}_P) \ket{\psi} || \cdot || ( (W_A \otimes V_B \otimes \mathbbm{1}_P)(A_{sa} \otimes B_{tb} \otimes \mathbbm{1}_P)\\ 
     &\hspace{7cm} -(\tilde{A}_{sa} \otimes \tilde{B}_{tb} \otimes \mathbbm{1}_{\check{A}\hat{B}P})(W_A \otimes V_B \otimes \mathbbm{1}_P)) \ket{\psi} ||\\
     &=|| ( (W_A \otimes V_B \otimes \mathbbm{1}_P)(A_{sa} \otimes B_{tb} \otimes \mathbbm{1}_P) - (\tilde{A}_{sa} \otimes \tilde{B}_{tb} \otimes \mathbbm{1}_{\check{A}\hat{B}P})(W_A \otimes V_B \otimes \mathbbm{1}_P)) \ket{\psi} ||
\end{align*}
for all $a,b,s,t$, where the inequality comes from the Cauchy-Schwarz inequality. Since $S^{(2)}$ is a local $2\epsilon$-dilation of $S^{(1)}$ by Lemma \ref{lem1robmixed}, we know 
\begin{align*}
    || ( (W_A \otimes V_B \otimes \mathbbm{1}_P)(A_{sa} \otimes B_{tb} \otimes \mathbbm{1}_P) - (\tilde{A}_{sa} \otimes \tilde{B}_{tb} \otimes \mathbbm{1}_{\check{A}\hat{B}P})(W_A \otimes &V_B \otimes \mathbbm{1}_P)) \ket{\psi} ||\\
    &\leq 2\mathrm{max}\{|O_A|,|O_B|\}\varepsilon .
\end{align*}
Thus, we get 
\begin{align*}
    |w(S^{(1)},G)-&w(S^{(2)},G)|\\
    &=|\sum_{s,t}\pi(s,t)\sum_{a,b}\mathcal{V}(a,b|s,t)(p_1(a,b|s,t)-p_2(a,b|s,t))|\\
    &\leq\sum_{s,t}\pi(s,t)\sum_{a,b}\mathcal{V}(a,b|s,t)|(p_1(a,b|s,t)-p_2(a,b|s,t))|\\
    &\leq 2 \left(\sum_{s,t}\pi(s,t)\sum_{a,b}\mathcal{V}(a,b|s,t)\right)\mathrm{max}\{|O_A|,|O_B|\}\varepsilon.
\end{align*}
Since $S^{(1)}$ is $\delta$-optimal, we deduce that $S^{(2)}$ is $(\delta + C\varepsilon)$-optimal.
\end{proof}

Finally, we will see that the almost optimal strategy from the previous lemma can be $\varepsilon'$-dilated to the canonical strategy of the pure, robust self-test. 

\begin{lemma}\label{lem3robmixed}
Let $G$ be a pure, robust self-test of $\tilde{S}$, where $\ket{\tilde{\psi}}$ has full Schmidt rank and let $S^{(2)}$ be as in Lemma \ref{lem1robmixed}. Then $\tilde{S}$ is a local $(\sqrt{2\frac{\delta+C\varepsilon}{\Delta}})$-dilation of $S^{(2)}$, where $\Delta$ depends on $\tilde{S}$ and $G$. 
\end{lemma}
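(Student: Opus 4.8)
The plan is to exploit that $\tilde S$ is full-rank via Proposition~\ref{prop:dim1}: the game operator $\tilde W$ has a \emph{unique} optimal eigenvector (up to phase), namely $\ket{\tilde\psi}$, hence a strictly positive spectral gap. The strategy $S^{(2)}$ uses precisely the measurement operators of $\tilde S$ (tensored with identities), so its game operator is $\tilde W\otimes\mathbbm 1_{\check A\hat B P}$; by Lemma~\ref{lem2robmixed} it is $(\delta+C\varepsilon)$-optimal, so the state $\ket{\Phi}:=X\ket{\psi}_{ABP}$ satisfies $\braket{\Phi|\tilde W\otimes\mathbbm 1|\Phi}\ge\omega_q(G)-(\delta+C\varepsilon)$. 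I would then run the standard ``energy gap forces closeness to the ground state'' argument and repackage the conclusion as a local dilation with the identity isometry.

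Concretely, let $\lambda_0=\omega_q(G)$ be the top eigenvalue of $\tilde W$ and $\lambda_1$ the next-largest one (if $\dim(\mathcal H_{\tilde A}\otimes\mathcal H_{\tilde B})=1$ the statement is trivial, so assume $\lambda_1$ exists), and set $\Delta:=\lambda_0-\lambda_1>0$ using Proposition~\ref{prop:dim1}. Since $\ket{\tilde\psi}$ spans the top eigenspace, the projector $P_0:=\ket{\tilde\psi}\!\bra{\tilde\psi}\otimes\mathbbm 1$ commutes with $\tilde W\otimes\mathbbm 1$, which acts as $\lambda_0$ on $\operatorname{range}(P_0)$ and as at most $\lambda_1$ on $\operatorname{range}(\mathbbm 1-P_0)$. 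Writing $p:=\|P_0\ket{\Phi}\|^2$ and splitting $\braket{\Phi|\tilde W\otimes\mathbbm 1|\Phi}$ along $P_0$ and $\mathbbm 1-P_0$ (the cross terms vanish because $\tilde W\otimes\mathbbm 1$ preserves both subspaces) gives $\omega_q(G)-(\delta+C\varepsilon)\le \lambda_0 p+\lambda_1(1-p)=\lambda_0-\Delta(1-p)$, hence $1-p\le (\delta+C\varepsilon)/\Delta$.

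Finally, decompose $\ket{\Phi}=\ket{\tilde\psi}\otimes\ket{\xi_0}+\ket{\Phi^\perp}$ with $\ket{\xi_0}=(\bra{\tilde\psi}\otimes\mathbbm 1)\ket{\Phi}$ (so $\|\xi_0\|^2=p$) and $\ket{\Phi^\perp}\perp\operatorname{range}(P_0)$, and set $\ket{\aux}:=\ket{\xi_0}/\sqrt p$ (any unit vector if $p=0$). An elementary computation gives $\|\ket{\Phi}-\ket{\tilde\psi}\otimes\ket{\aux}\|^2=2(1-\sqrt p)\le 2(1-p)\le 2(\delta+C\varepsilon)/\Delta$. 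Taking the local isometry in Definition~\ref{def:localdilation} to be the identity (identifying $\mathcal H_{\check A}$ with the new Alice-ancilla and $\mathcal H_{\hat B}\otimes\mathcal H_P$ with the new Bob-ancilla), the first dilation condition is exactly this bound, while the two measurement conditions reduce to $\|(\tilde A_{sa}\otimes\mathbbm 1)(\ket{\Phi}-\ket{\tilde\psi}\otimes\ket{\aux})\|\le\|\ket{\Phi}-\ket{\tilde\psi}\otimes\ket{\aux}\|$ and its Bob-analogue, using $\|\tilde A_{sa}\|,\|\tilde B_{tb}\|\le 1$ since they are POVM elements. Hence $S^{(2)}\xhookrightarrow{\sqrt{2(\delta+C\varepsilon)/\Delta}}\tilde S$, with $\Delta$ the spectral gap of $\tilde W$, which depends only on $\tilde S$ and $G$.

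I expect the only genuinely delicate point to be the ``no cross terms / bounded by $\lambda_1$'' step, i.e.\ verifying that the one-dimensionality of the optimal eigenspace supplied by Proposition~\ref{prop:dim1} upgrades to a genuine spectral gap and survives tensoring with $\mathbbm 1$; everything else is bookkeeping about tensor factors together with the elementary inequality $1-\sqrt p\le 1-p$ for $p\in[0,1]$.
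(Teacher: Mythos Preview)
Your proposal is correct and follows essentially the same approach as the paper: both use Proposition~\ref{prop:dim1} to obtain a strict spectral gap $\Delta=\lambda_0-\lambda_1$ for $\tilde W$, run the same energy-gap argument to deduce $1-p\le(\delta+C\varepsilon)/\Delta$ (where $p$ is the squared overlap with $\ket{\tilde\psi}\otimes(\cdot)$), compute $\|X\ket{\psi}-\ket{\tilde\psi}\otimes\ket{\aux}\|^2=2(1-\sqrt p)\le 2(1-p)$, and finish by contracting with the POVM elements. Your use of the projector $P_0=\proj{\tilde\psi}\otimes\mathbbm 1$ is arguably a cleaner way to package the decomposition than the paper's eigenvector expansion, but the two arguments are otherwise the same.
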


\begin{proof}
We will show that
\begin{align*}
    \|X\ket{\psi}_{ABP}-\ket{\tilde{\psi}}\otimes\ket{\operatorname{aux}}\|\le\sqrt{2\frac{\delta+C\varepsilon}{\Delta}}.
\end{align*}

Define the game operator $\tilde W:=\sum_{a,b,s,t}\pi(s,t)\mathcal{V}(a,b|s,t)(\tilde{A}_{sa}\otimes\tilde{B}_{tb})$. Let $\{\lambda_i\}$ be the eigenvalues of $\tilde W$ (ordered decreasingly), then by Proposition \ref{prop:dim1} we have $\lambda_0>\lambda_1$.
Consider the decomposition of $X\ket{\psi}_{ABP}$ in the eigenspaces of $\tilde W$
\begin{align*}
    X\ket{\psi}_{ABP}=\sum_{j=0}^{d-1}\sqrt{p_j}\ket{\varphi_j}_{\tilde{A}\tilde{B}}\ket{\operatorname{aux}_j}_{\check{A}\hat{B}P},
\end{align*}
where $\{\ket{\varphi_j}\}$ are eigenvectors of $\tilde W$. By definition $\ket{\varphi_0}=\ket{\tilde{\psi}}$. We will now lower bound $p_0$ to see that $X\ket{\psi}_{ABP}$ is close to $\ket{\tilde{\psi}}\otimes\ket{\operatorname{aux}_0}$. 
It holds
\begin{align*}
    w(S^{(2)},G)=\tr[W\left(\sum_ip_i\proj{\varphi_i}\right)]=\sum_ip_i\lambda_i\le p_0\lambda_0+(1-p_0)\lambda_1
\end{align*}
for the game value of $S^{(2)}$. On the other hand, we know that $w(S^{(2)},G)\ge\lambda_0-(\delta+C\varepsilon)$ by Lemma \ref{lem2robmixed}. Therefore, we have
\begin{align*}
    \lambda_0-(\delta+C\varepsilon)\le p_0\lambda_0+(1-p_0)\lambda_1,
\end{align*}
which implies $p_0\ge1-\frac{\delta+C\varepsilon}{\Delta}$, where $\Delta:=\lambda_0-\lambda_1$. Then
\begin{align*}
    \|X\ket{\psi}_{ABP}-\ket{\tilde{\psi}}\otimes\ket{\operatorname{aux}_0}\|=&\sqrt{(1-\sqrt{p_0})^2+\sum_{j>0}p_j}\nonumber\\
    =&\sqrt{2-2\sqrt{p_0}}\\
    \le&\sqrt{2\frac{\delta+C\varepsilon}{\Delta}}.
\end{align*}
We therefore get 
\begin{align*}
    \|(\mathbbm{1}_{\tilde A}\otimes \tilde{B}_{tb}\otimes \mathbbm{1}_{\hat{B}P})X\ket{\psi}_{ABP}&-(\mathbbm{1}_{\tilde A}\otimes \tilde{B}_{tb})\ket{\tilde{\psi}}\otimes\ket{\operatorname{aux}_0}\|\\
    &\leq \|X\ket{\psi}_{ABP}-\ket{\tilde{\psi}}\otimes\ket{\operatorname{aux}_0}\|\\
    &\leq\sqrt{2\frac{\delta+C\varepsilon}{\Delta}},
\end{align*}
since $\mathbbm{1}_{\tilde A}\otimes \tilde{B}_{tb}\otimes \mathbbm{1}_{\hat{B}P}$ is a contraction. Similarly, we obtain
\begin{align*}
 \|(\tilde{A}_{sa}\otimes \mathbbm{1}_{\tilde B}\otimes \mathbbm{1}_{\hat{B}P})X\ket{\psi}_{ABP}&-(\tilde{A}_{sa}\otimes \mathbbm{1}_{\tilde B})\ket{\tilde{\psi}}\otimes\ket{\operatorname{aux}_0}\|\leq\sqrt{2\frac{\delta+C\varepsilon}{\Delta}}.
\end{align*}
This finishes the proof. 
\end{proof}

By putting together the previous lemmas, we can prove Theorem \ref{thm:pure_to_mix_rob}.

\begin{proof}[Proof of Theorem \ref{thm:pure_to_mix_rob}]
 Let $\varepsilon\geq0$ and let $S$ be a $\delta$-optimal, mixed strategy, where we choose $\delta$ as in the robust pure self-test. Then by Lemmas \ref{lem1robmixed} and \ref{lem3robmixed} as well as transitivity, we know that $\tilde{S}$ is a local $(2\varepsilon+ \sqrt{2\frac{\delta+C\varepsilon}{\Delta}})$-dilation of the pure quantum strategy $S^{(1)}$ associated to $S$.
\end{proof}

We note that, Theorem \ref{thm:pure_to_mix_rob}, or more specifically, Lemma \ref{lem3robmixed} does not directly translate to self-testing from correlation, essentially because that there is no game operator for self-testing from correlation. Nevertheless, the result still holds if we impose an additional requirement of the correlation to be extreme. See Appendix \ref{app:probdist} for a full proof.
\subsection{Proof of Theorem \ref{thm:bigtheorem}}

We are now ready to prove our main theorem:

\begin{proof}[Proof of Theorem \ref{thm:bigtheorem}]
    (a): by Theorem \ref{thm:PVMtoPOVMrob}, $G$ is a robust pure self-test for $\tilde S$. Then by Theorem \ref{thm:pure_to_mix_rob} $G$ is an assumption-free self-test.

    (b): by Theorem \ref{thm:full_to_any_rank_rob}, $G$ is a robust pure self-test for $\tilde S$. By Theorem \ref{thm:originalpartC}, $\tilde S$ is support-preserving. So we take its restriction $\tilde S_{\text{res}}$, and $G$ also robust pure self-tests $\tilde S_{\text{res}}$. Then using Theorem \ref{thm:pure_to_mix_rob} $G$ is an assumption-free self-test for $\tilde S_{\text{res}}$. From Proposition \ref{prop:restrictrob}, $G$ is an assumption-free self-test for $\tilde S$.
\end{proof}
    \section{Equivalence of Definitions}\label{sect:equivalence}

In this section, we examine two commonly cited definitions of local dilation in existing literature. We demonstrate that under specific circumstances, each of these definitions is equivalent to with the ones we have adopted.

\subsection{Local dilation in a matrix form}

When the arbitrary strategy is mixed, rather than considering local dilation condition (Definition \ref{def:localdilation}) in a ``vector form'' one could instead consider a ``matrix-form'' condition (see Appendix~C in \cite{GeoQCor}):
\begin{definition}[Local dilation (alternative)]\label{def:dilation_mixed}
Given two strategies
    \begin{align*}
        S&=(\rho_{AB}\in B({\mathcal H}_A \tensor {\mathcal H}_B),\{ A_{sa}\}_{s\in\mathcal{S},a\in\mathcal{A}},\{ B_{tb}\}_{t\in\mathcal{T},b\in\mathcal{B}}) 
        \text{ and}
        \\
        \tilde S&=(\ket{\tilde \psi} \in {\mathcal H}_{\tilde{A}} \tensor {\mathcal H}_{\tilde{B}},\{\tilde A_{sa}\}_{s\in\mathcal{S},a\in\mathcal{A}},\{\tilde B_{tb}\}_{t\in\mathcal{T},b\in\mathcal{B}})
    \end{align*}
we write $S\xhookrightarrow{}_1 \tilde S$ if there exist spaces $\mathcal H_{\hat A},\mathcal H_{\hat B},$ a local isometry $U = U_A \tensor U_B$, with $U_A : \mathcal H_A \to \mathcal H_{\tilde A} \tensor \mathcal H_{\hat A}$, $U_B : \mathcal H_B \to \mathcal H_{\tilde B} \tensor \mathcal H_{\hat B}$ and a state $\ket{\aux} \in B(\mathcal H_{\hat A}\tensor\mathcal H_{\hat B})$ such that for all $s,t,a,b$ we have
   \begin{equation}
        \label{eq:localdilationAlt}
        U(A_{sa} \tensor B_{tb})\rho_{AB}U^* = (\tilde A_{sa} \tensor \tilde B_{tb})\proj{\tilde\psi} \tensor \sigma_\aux.
    \end{equation}
\end{definition}
This kind of definition of local dilation can also be used to construct an alternative definition for self-testing. We will in this appendix prove that the two definitions for local dilations in fact are equivalent. 
\begin{lemma}\label{lem:asym_to_pure}
    Let $\ket\phi \in \mathcal{H}_S$ and $\ket\psi \in \mathcal{H}_T$ be states and let $\{S_i\}_i\subset B(\mathcal{H}_S)$ and $\{T_i\}_i \subseteq B(\mathcal{H}_T)$ be POVMs. Let $U: \mathcal{H}_S \to \mathcal{H}_T$ be an isometry. If
    \begin{equation}\label{eq:asym_isometry}
        US_i\proj{\phi}U^* = (T_i\proj{\psi})\tensor \proj{\aux}
    \end{equation}
    for all $i$, then there exists a state $\ket{\aux'}$, such that
    \begin{equation}\label{eq:asym_to_pure_eq_pure}
        US_i\ket{\phi} = (T_i \ket{\psi})\ket{\aux'}
    \end{equation}
    and 
    \begin{equation}\label{eq:asym_to_pure_eq_mixed}
        \proj{\aux} = \proj{\aux'}
    \end{equation}
\end{lemma}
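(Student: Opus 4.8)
The plan is to first extract a statement about the global vector by summing the hypothesis over the POVM index, and then to recover \eqref{eq:asym_to_pure_eq_pure} for each fixed $i$ by recognising both sides of \eqref{eq:asym_isometry} as rank-one operators sharing a common bra. First I would sum \eqref{eq:asym_isometry} over all $i$: since $\{S_i\}_i$ and $\{T_i\}_i$ are POVMs, $\sum_i S_i = \mathbbm{1}$ and $\sum_i T_i = \mathbbm{1}$, so the left side collapses to $U\proj{\phi}U^*$ and the right side to $\proj{\psi}\otimes\proj{\aux}$, giving
\[
U\proj{\phi}U^* = \proj{\psi}\otimes\proj{\aux},
\]
i.e. the rank-one projector onto $U\ket{\phi}$ equals the one onto $\ket{\psi}\otimes\ket{\aux}$. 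Two rank-one projectors onto unit vectors coincide iff the vectors agree up to a global phase, so $U\ket{\phi} = c\,\big(\ket{\psi}\otimes\ket{\aux}\big)$ for some $c\in\mathbb{C}$ with $|c|=1$. I would then define $\ket{\aux'} := c\,\ket{\aux}$; immediately $\proj{\aux'} = |c|^{2}\proj{\aux} = \proj{\aux}$, which is \eqref{eq:asym_to_pure_eq_mixed}, and also $U\ket{\phi} = \ket{\psi}\otimes\ket{\aux'}$, hence $\bra{\phi}U^* = \bra{\psi}\otimes\bra{\aux'}$.

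Next, fixing $i$ and using that POVM elements are self-adjoint ($S_i^* = S_i$), the left side of \eqref{eq:asym_isometry} is the rank-one operator $\big(US_i\ket{\phi}\big)\big(\bra{\phi}U^*\big) = \big(US_i\ket{\phi}\big)\big(\bra{\psi}\otimes\bra{\aux'}\big)$, while the right side, using $\ket{\aux}\bra{\aux} = \ket{\aux'}\bra{\aux'}$, equals $\big(T_i\ket{\psi}\otimes\ket{\aux'}\big)\big(\bra{\psi}\otimes\bra{\aux'}\big)$. Subtracting,
\[
\big(US_i\ket{\phi} - T_i\ket{\psi}\otimes\ket{\aux'}\big)\big(\bra{\psi}\otimes\bra{\aux'}\big) = 0 ,
\]
and evaluating both sides on the unit vector $\ket{\psi}\otimes\ket{\aux'}$ removes the common bra factor and yields $US_i\ket{\phi} = T_i\ket{\psi}\otimes\ket{\aux'}$, which is \eqref{eq:asym_to_pure_eq_pure}. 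Since the phase $c$ is independent of $i$, the single vector $\ket{\aux'}$ works simultaneously for all $i$.

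I do not expect a serious obstacle. The one point requiring care is that $S_i$ need not be invertible, so one cannot simply "cancel" $\proj{\phi}$ in \eqref{eq:asym_isometry}; routing the argument through rank-one operators with the shared bra $\bra{\psi}\otimes\bra{\aux'}$ circumvents this and, as a bonus, handles the degenerate case $S_i\ket{\phi}=0$ automatically (both sides of \eqref{eq:asym_isometry} then vanish and the conclusion is $0=0$). The remaining work is just bookkeeping of the global phase.
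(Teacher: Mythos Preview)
Your proposal is correct and follows essentially the same approach as the paper: sum over $i$ to obtain $U\proj{\phi}U^* = \proj{\psi}\otimes\proj{\aux}$, extract the phase relating $U\ket{\phi}$ and $\ket{\psi}\ket{\aux}$, absorb it into $\ket{\aux'}$, and then apply the hypothesis to the vector $\ket{\psi}\otimes\ket{\aux'}=U\ket{\phi}$ to recover the vector identity for each $i$. The paper phrases this last step as ``right-multiplying'' \eqref{eq:asym_isometry} by $U\ket{\phi}$, which is exactly your evaluation on $\ket{\psi}\otimes\ket{\aux'}$.
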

\begin{proof}
    Sum \pref{eq:asym_isometry} over $i$ to get 
    \begin{equation}\label{eq:sym_isometry}
        U\proj{\phi}U^* = \proj{\psi}\tensor\proj{\aux}
    \end{equation}
    using the fact that $\{S_i\}_i$ and $\{T_i\}_i$ are POVMs. Both of these operators have rank one, and therefore have a single non-zero eigenvalue. Observe that the eigenspace with the non-zero eigenvalue of the left-hand side is spanned by $U\ket\phi$ and the eigenspace with the non-zero eigenvalue of the right-hand side is spanned by $\ket{\psi}\ket\aux$. Furthermore, these two spaces are equivalent. We can therefore conclude that 
    \begin{equation}\label{eq:auxdef}
        U\ket{\phi} = e^{-i\gamma}\ket{\psi}\ket\aux,
    \end{equation}
    for some $\gamma \in \mathbb{R}$. They are in other words they are equal up to global phase. This phase change can simply be absorbed into $\ket\aux$, creating a new state $\ket{\aux'} = e^{-i\gamma}\ket\aux$. Right multiply \pref{eq:asym_isometry} with \pref{eq:auxdef} and we get the desired equation of
    \begin{equation}\label{eq:pure_asym_measure_isometry}
        US_i\ket{\phi} = (T_i\ket{\psi})\ket{\aux'},
    \end{equation}
    proving \pref{eq:asym_to_pure_eq_pure}. Finally, \pref{eq:asym_to_pure_eq_mixed} follows directly from the definition of $\ket{\aux'}$.
\end{proof}

We are also going to use the following observation, proven in \cite{GeoQCor} as Observation C.1:
\begin{lemma}[\cite{GeoQCor}]\label{lem:sum_to_pure}
Let $\rho^0_{ST},\rho^1_{ST} \in B(\mathcal{H}_S \tensor \mathcal{H}_T)$ be positive semidefinite operators. If
$$\rho^0_{ST} + \rho^1_{ST} = \proj{\psi}_S \tensor \sigma_T$$
for some $\proj{\psi}_S\in B(\mathcal{H}_S)$ and $\sigma_T \in B(\mathcal{H}_T)$, then 
$$\rho^i_{ST} = \proj{\psi}_S \tensor \sigma^i_T$$
for $i\in\{0,1\}$ and some $\sigma^i_T \in B(\mathcal{H}_T)$
\end{lemma}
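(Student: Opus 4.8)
The plan is to deduce the tensor-product structure of each $\rho^i_{ST}$ purely from a support argument, exploiting the operator inequality that the two summands must satisfy. First I would observe that since $\rho^0_{ST},\rho^1_{ST}\succeq 0$ and $\rho^0_{ST}+\rho^1_{ST}=\proj{\psi}_S\otimes\sigma_T$, each summand obeys $0\preceq\rho^i_{ST}\preceq\proj{\psi}_S\otimes\sigma_T$. Then I would invoke the elementary fact that $0\preceq A\preceq B$ forces $\ker B\subseteq\ker A$: if $B\ket v=0$ then $0\le\bra vA\ket v\le\bra vB\ket v=0$, and positivity of $A$ forces $A\ket v=0$. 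Consequently $\operatorname{supp}(\rho^i_{ST})\subseteq\operatorname{supp}(\proj{\psi}_S\otimes\sigma_T)\subseteq\mathbb{C}\ket{\psi}_S\otimes\mathcal H_T$.

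Writing $P:=\proj{\psi}_S$, the containment of the support in $\operatorname{ran}(P)\otimes\mathcal H_T$ means $(P\otimes\mathbbm 1_T)\,\rho^i_{ST}\,(P\otimes\mathbbm 1_T)=\rho^i_{ST}$, which immediately gives $\rho^i_{ST}=\proj{\psi}_S\otimes\sigma^i_T$ with $\sigma^i_T:=(\bra{\psi}_S\otimes\mathbbm 1_T)\,\rho^i_{ST}\,(\ket{\psi}_S\otimes\mathbbm 1_T)$. Each $\sigma^i_T$ is positive semidefinite, being a compression of the positive operator $\rho^i_{ST}$, which is exactly what is claimed; as a byproduct, sandwiching the hypothesis between $\bra{\psi}_S$ and $\ket{\psi}_S$ on the $S$-system yields $\sigma^0_T+\sigma^1_T=\sigma_T$, although that identity is not needed for the statement as phrased.

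I do not expect a genuine obstacle: the entire content is the one-line observation that a positive operator dominated by $\proj{\psi}_S\otimes\sigma_T$ must be supported inside $\mathbb C\ket{\psi}_S\otimes\mathcal H_T$. The only point worth a moment's care is stating the kernel/support inclusion correctly when $\sigma_T$ is not faithful, but even that can be sidestepped since the rank-one $S$-factor alone already enforces the product form.
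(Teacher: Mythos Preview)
Your argument is correct. The support inclusion $0\preceq\rho^i_{ST}\preceq\proj{\psi}_S\otimes\sigma_T$ indeed forces the range of $\rho^i_{ST}$ into $\mathbb{C}\ket{\psi}_S\otimes\mathcal H_T$, and from there the compression identity $(P\otimes\mathbbm 1_T)\rho^i_{ST}(P\otimes\mathbbm 1_T)=\rho^i_{ST}$ together with $P=\ket{\psi}\bra{\psi}$ being rank one gives the factorization $\rho^i_{ST}=\proj{\psi}_S\otimes\sigma^i_T$ immediately.

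Note that the paper does not actually supply a proof of this lemma; it merely quotes the statement as Observation~C.1 of \cite{GeoQCor}. So there is no in-paper argument to compare against, and your self-contained support argument fills that gap cleanly. Your closing remark that the rank-one factor on $S$ alone already forces the product form (so one need not worry about whether $\sigma_T$ is faithful) is exactly the right observation.
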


The next theorem shows that the ``matrix-form'' condition~\eqref{eq:localdilationAlt} is in fact equivalent to the vector-form condition~\eqref{eq:localdilation}. The take-away here is that it does not matter which of the two variants of local dilation we base our self-testing definition on.

\begin{theorem}
Let $S$ and $\tilde S$ be two strategies. Then
    \[
    S \xhookrightarrow{} \tilde S 
    \quad \Longleftrightarrow \quad
    S \xhookrightarrow{}_1 \tilde S 
    \]
\end{theorem}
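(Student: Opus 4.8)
The plan is to establish the two implications separately; in both directions I would retain the \emph{same} local isometry $U=U_A\otimes U_B$ and only adjust the auxiliary data. The whole point is to reconcile the ``matrix-form'' condition~\eqref{eq:localdilationAlt}, in which the purifying and auxiliary registers have already been traced out, with the ``vector-form'' condition~\eqref{eq:localdilation}, which in addition is required to hold for \emph{every} purification of $\rho_{AB}$.

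For the implication $S\xhookrightarrow{}\tilde S\Rightarrow S\xhookrightarrow{}_1\tilde S$, I would fix a purification $\ket\psi\in\mathcal H_A\otimes\mathcal H_B\otimes\mathcal H_P$ of $\rho_{AB}$ together with the data $U,\ket\aux$ provided by Definition~\ref{def:localdilation}, and take $\sigma_\aux:=\tr_P\proj\aux$. The first step is to upgrade the two single-operator identities in~\eqref{eq:localdilation} to the joint identity $(U\otimes\mathbbm{1}_P)(A_{sa}\otimes B_{tb}\otimes\mathbbm{1}_P)\ket\psi=(\tilde A_{sa}\otimes\tilde B_{tb})\ket{\tilde\psi}\otimes\ket\aux$. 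This is bookkeeping that only uses that $U_A,U_B$ are isometries: applying $U_B^*$ to the Bob register of Alice's identity turns it into $(U_AA_{sa}\otimes\mathbbm{1}_{BP})\ket\psi=((\tilde A_{sa}\otimes\mathbbm{1}_{\hat A})U_A\otimes\mathbbm{1}_{BP})\ket\psi$, symmetrically one gets $(\mathbbm{1}_A\otimes U_BB_{tb}\otimes\mathbbm{1}_P)\ket\psi=(\mathbbm{1}_A\otimes(\tilde B_{tb}\otimes\mathbbm{1}_{\hat B})U_B\otimes\mathbbm{1}_P)\ket\psi$, and composing these two (they act on disjoint tensor factors) produces the joint identity. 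Right-multiplying it by $\bra\psi(U^*\otimes\mathbbm{1}_P)=\bra{\tilde\psi}\otimes\bra\aux$ gives $(U\otimes\mathbbm{1}_P)(A_{sa}\otimes B_{tb}\otimes\mathbbm{1}_P)\proj\psi(U^*\otimes\mathbbm{1}_P)=(\tilde A_{sa}\otimes\tilde B_{tb})\proj{\tilde\psi}\otimes\proj\aux$, and tracing out $\mathcal H_P$ yields precisely~\eqref{eq:localdilationAlt} (note that $\sigma_\aux=\tr_P\proj\aux$ is a density operator on $\mathcal H_{\hat A}\otimes\mathcal H_{\hat B}$).

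For the converse $S\xhookrightarrow{}_1\tilde S\Rightarrow S\xhookrightarrow{}\tilde S$, summing~\eqref{eq:localdilationAlt} over all $a,b$ gives $U\rho_{AB}U^*=\proj{\tilde\psi}\otimes\sigma_\aux$; since $U$ is injective this forces $U$ to map $\operatorname{range}\rho_{AB}$ into $\ket{\tilde\psi}\otimes\mathcal H_{\hat A\hat B}$. Fix any purification $\ket\psi$ of $\rho_{AB}$ and take its Schmidt decomposition across the $AB\,|\,P$ cut, $\ket\psi=\sum_k\sqrt{q_k}\,\ket{\psi_k}\ket{k}_P$ with $q_k>0$ and $\{\ket{\psi_k}\},\{\ket{k}\}$ orthonormal. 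Each $\ket{\psi_k}\in\operatorname{range}\rho_{AB}$, so I can write $U\ket{\psi_k}=\ket{\tilde\psi}\otimes\ket{\gamma_k}$, where the $\ket{\gamma_k}$ are orthonormal since $U$ is an isometry; substituting back shows $\sigma_\aux=\sum_k q_k\proj{\gamma_k}$. Inserting $\rho_{AB}=\sum_k q_k\proj{\psi_k}$ and this expression for $\sigma_\aux$ into~\eqref{eq:localdilationAlt}, both sides become sums of rank-one operators with the \emph{same} covectors $\bra{\tilde\psi}\otimes\bra{\gamma_k}$; evaluating on $\ket{\tilde\psi}\otimes\ket{\gamma_j}$ and using orthonormality of the $\ket{\gamma_k}$ isolates the $j$-th term and gives $U(A_{sa}\otimes B_{tb})\ket{\psi_j}=(\tilde A_{sa}\otimes\tilde B_{tb})\ket{\tilde\psi}\otimes\ket{\gamma_j}$ for every $j$ --- this is where the content of Lemmas~\ref{lem:asym_to_pure} and~\ref{lem:sum_to_pure} is being used. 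Finally, with $\ket\aux:=\sum_k\sqrt{q_k}\,\ket{\gamma_k}\ket{k}_P\in\mathcal H_{\hat A}\otimes\mathcal H_{\hat B}\otimes\mathcal H_P$, recombining the per-$j$ identities against $\sqrt{q_j}\ket{j}_P$ and summing them over $b$, over $a$, and over both, recovers the three conditions of~\eqref{eq:localdilation}, so $S\xhookrightarrow{}\tilde S$ with isometry $U$ and auxiliary state $\ket\aux$.

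I expect the converse direction to be the main obstacle: the matrix-form condition retains no information about how the purifying register of $\ket\psi$ is correlated with the other systems, and this correlation has to be reconstructed by hand. The Schmidt decomposition together with the fact that orthonormal Schmidt vectors are linearly independent is exactly what allows one to ``de-average'' the single operator equation~\eqref{eq:localdilationAlt} into a family of vector equations indexed by the Schmidt components. A couple of minor points warrant a line of care --- the possible vanishing of $(\tilde A_{sa}\otimes\mathbbm{1}_{\tilde B})\ket{\tilde\psi}$ or of some $\tilde B_{tb}\ket{\tilde\psi}$, and the unavoidable global phase in identifying $\ket\aux$ --- but neither causes trouble.
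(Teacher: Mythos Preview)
Your proof is correct and follows essentially the same approach as the paper: the forward direction passes through the joint identity, takes an outer product, and traces out $P$; the reverse direction sums over $a,b$, uses the Schmidt decomposition of the purification across the $AB|P$ cut, extracts per-component vector identities from the rank-one structure, and recombines. Your reverse direction is packaged slightly more directly --- arguing via $\operatorname{range}(U\rho_{AB}U^*)$ and evaluating both sides on $\ket{\tilde\psi}\otimes\ket{\gamma_j}$ rather than invoking Lemmas~\ref{lem:sum_to_pure} and~\ref{lem:asym_to_pure} as black boxes and right-multiplying --- but the content is the same.
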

\begin{proof}
    Let $\tilde S=(\ket{\tilde \psi},\{\tilde A_{sa}\},\{\tilde B_{tb}\})$ and $S=(\rho_{AB},\{A_{sa}\},\{B_{tb}\})$.

    We start by showing $S \xhookrightarrow{} \tilde S \Rightarrow S \xhookrightarrow{}_1 \tilde S$. $S \xhookrightarrow{} \tilde S$ implies that for any purification $\ket{\psi}_{ABP}$ of $\rho_{AB}$ there exists local isometries $U = U_A\tensor U_B$ and a state $\ket{\aux}$ such that
    \begin{equation}\label{eq:mixed_sb_to_our_selftest_measurement}
        (U\otimes\mathbbm{1}_P)(A_{sa} \tensor B_{tb}\otimes \mathbbm{1}_P)\ket{\psi}_{ABP} = (\tilde A_{sa} \tensor \tilde B_{tb})\ket{\tilde\psi}_{\tilde A \tilde B} \tensor \ket{\aux}_{\hat A \hat B}.
    \end{equation}
    If we sum over $a,b$, using $\{A_{sa}\otimes B_{tb}\}_{a,b}$ is a POVM, we get
    \begin{equation}\label{eq:mixed_sb_to_our_selftest_pure}
        (U\otimes\mathbbm{1}_P)\ket{\psi}_{ABP} = \ket{\tilde\psi}_{\tilde A \tilde B} \tensor \ket{\aux}_{\hat A \hat B}.
    \end{equation}
    We can now combine \pref{eq:mixed_sb_to_our_selftest_measurement} and \pref{eq:mixed_sb_to_our_selftest_pure} through an outer product to get
    \begin{equation*}
        (U\tensor \mathbbm{1})(A_{sa} \tensor B_{tb} \tensor \mathbbm{1}_P)\proj{\psi}_{ABP}(U\tensor \mathbbm{1}_P)^* = (\tilde A_{sa} \tensor \tilde B_{tb})\proj{\tilde\psi}_{\tilde A \tilde B} \tensor \proj{\aux}_{\hat A \hat B P}.
    \end{equation*}
    Finally, by using that $\proj{\psi}_{ABP}$ is a purification of $\rho_{AB}$ and tracing out the purification space, we get 
    \begin{equation*}
        U(A_{sa} \tensor B_{tb})\rho_{AB}U^* = (\tilde A_{sa} \tensor \tilde B_{tb})\proj{\tilde\psi}_{\tilde A \tilde B} \tensor \tr_P(\proj{\aux}_{\hat A \hat B P}).
    \end{equation*}
    which shows $S \xhookrightarrow{}_1 \tilde S$.
    
    Secondly, we will show $S \xhookrightarrow{}_1 \tilde S \Rightarrow S \xhookrightarrow{} \tilde S$. $S \xhookrightarrow{}_1 \tilde S$ implies there exist a local isometry $U=U_A\tensor U_B$ such that
    \begin{equation}\label{eq:mixed_sb_to_our_selftest_mixed_condition}
        U(A_{sa} \tensor B_{tb})\rho_{AB}U^* = (\tilde A_{sa} \tensor \tilde B_{tb})\proj{\tilde\psi} \tensor \sigma_\aux.
    \end{equation}
    for some state $\sigma_\aux \in B(\mathcal{H}_{\hat A}\tensor \mathcal{H}_{\hat B})$. If we sum both sides over $a,b$, using $\{A_{sa}\otimes B_{tb}\}_{a,b}$ is a POVM, then we get 
    \begin{equation}\label{eq:mixed_sb_to_our_selftest_no_op}
        U\rho_{AB}U^* = \proj{\tilde\psi} \tensor \sigma_\aux.
    \end{equation}
    Now, consider any purification $\ket{\psi}_{ABP}$ of $\rho_{AB}$. Consider the Schmidt decomposition of $\ket{\psi}_{ABP}$ over the spaces $(\mathcal{H}_A\tensor \mathcal{H}_B)$ and $\mathcal{H}_P$. This gives
    \begin{equation}\label{eq:mixed_sb_to_our_schmidt}
        \ket{\psi}_{ABP} = \sum_{i}\lambda_i \ket{\alpha_i}_{AB}\ket{\beta_i}_{P}.
    \end{equation}
    If we trace out the purification space of this state, using that it indeed is a purification, we get that
    \begin{equation}\label{eq:mixed_sb_to_our_selftest_schmidt}
        \rho_{AB} = \sum_{i} \lambda_i^2 \proj{\alpha_i}.
    \end{equation}
    If we now substitute \pref{eq:mixed_sb_to_our_selftest_schmidt} into \pref{eq:mixed_sb_to_our_selftest_no_op2} we arrive at
    \begin{equation}\label{eq:mixed_sb_to_our_selftest_no_op2}
        \sum_{i}\lambda_i^2 U\proj{\alpha_i}U^* = \proj{\tilde\psi} \tensor \sigma_\aux.
    \end{equation}
    Observe that this is a sum of positive semidefinite operators, and therefore by \pref{lem:sum_to_pure} we have 
    \begin{equation}\label{eq:auxidef}
        U\proj{\alpha_i}U^* = \proj{\tilde\psi} \tensor \proj{\aux_i},
    \end{equation}
    for some pure state $\ket{\aux_i}\in\mathcal{H}_A\tensor\mathcal{H}_B$. By \pref{lem:asym_to_pure}, there exists a state $\ket{\aux'_i}$ such that
    \begin{equation}\label{eq:mixed_sb_to_our_pure}
        U\ket{\alpha_i} = \ket{\tilde\psi}\ket{\aux_i'}.
    \end{equation}
    and $\proj{\aux'_i}=\proj{\aux_i}$.
    We now fix $i$ and right-multiply \pref{eq:mixed_sb_to_our_selftest_mixed_condition} with \pref{eq:mixed_sb_to_our_pure}, after simplifying both sides, this gives
    \begin{equation}\label{eq:mixed_sb_to_our_schmidt_applied}
        U(A_{sa} \tensor B_{tb})\ket{\alpha_i} = \left((\tilde A_{sa} \tensor \tilde B_{tb})\ket{\tilde\psi}\right) \ket{\aux_i'}
    \end{equation}
    where we have used the fact that $\ket{\alpha_i}$ and $\ket{\alpha_j}$ are orthogonal and $\ket{\aux_i'}$ and $\ket{\aux_j'}$ are orthogonal when $i\neq j$. Finally, apply $U\tensor\mathbbm{1}$ to $\ket{\psi}_{ABP}$ as
    \begin{align*}
        (U\tensor\mathbbm{1}) (A_{sa}\tensor B_{tb}\tensor \mathbbm{1}_P)\ket{\psi}_{ABP} &= \sum_{i}\lambda_i U(A_{sa}\tensor B_{tb})\ket{\alpha_i}_{AB}\ket{\beta_i}_{P} \\
        &=  (\tilde A_{sa} \tensor \tilde B_{tb})\ket{\tilde\psi} \tensor \left(\sum_{i}\lambda_i\ket{\aux_i'}\ket{\beta_i}_{P}\right)
    \end{align*}
    where the first equality used the Schmidt decomposition of $\ket{\psi}_{ABP}$ and the second equality substituted in \pref{eq:mixed_sb_to_our_schmidt_applied}. Setting $$\ket{\aux'}=\left(\sum_{i}\lambda_i\ket{\aux_i'}\ket{\beta_i}_{P}\right)$$ 
    implies $S \xhookrightarrow{} \tilde S$.
\end{proof}

\subsection{Extraction local dilation}
\label{sec:extraction}
Here we look at a slightly different definition for local dilation between pure full-rank strategies. The idea is that we can map the POVMs from one to the other via the conjugation of unitaries.

\begin{definition}[Extraction local dilation]\label{def:dilation_measurement}
    Given two pure full-rank strategies
    \begin{align*}
        S&=(\ket{\psi}\in {\mathcal H}_A \tensor {\mathcal H}_B,\{ A_{sa}\}_{s\in\mathcal{S},a\in\mathcal{A}},\{ B_{tb}\}_{t\in\mathcal{T},b\in\mathcal{B}}) 
        \text{ and}
        \\
        \tilde S&=(\ket{\tilde \psi} \in {\mathcal H}_{\tilde{A}} \tensor {\mathcal H}_{\tilde{B}},\{\tilde A_{sa}\}_{s\in\mathcal{S},a\in\mathcal{A}},\{\tilde B_{tb}\}_{t\in\mathcal{T},b\in\mathcal{B}})
    \end{align*}
    we write $S\xhookrightarrow{}_2 \tilde S$ if there exist a local unitary $U = U_A \tensor U_B$, with $U_A : \mathcal H_A \to \mathcal H_{\tilde A} \tensor \mathcal H_{\hat A}$, $U_B : \mathcal H_B \to \mathcal H_{\tilde B} \tensor \mathcal H_{\hat B}$ such that for all $s,t,a,b$ we have
    \begin{align}
        U\ket{\psi} &= \ket{\tilde\psi} \tensor \ket{\aux},\label{eq:dilation_measurement_state}\\
        U_A A_{sa}U_A^* &= \tilde A_{sa}\otimes \mathbbm{1}_{\hat A},\label{eq:dilation_measurement_A}\\
        U_B B_{tb} U_B^* &=  \tilde B_{tb}\otimes \mathbbm{1}_{\hat B}.\label{eq:dilation_measurement_B}
    \end{align}
In case we want to name the local isometry and the auxiliary state from \eqref{eq:dilation_measurement_state}, we write $S \xhookrightarrow[U,\ket{\aux}]{}_2 \tilde{S}$.
\end{definition}

We show that for full rank strategies, this type of local dilations is in fact equivalent to the one presented in \pref{def:localdilation}. One thing that is important to note in the following lemma is that $S\xhookrightarrow{}_2 \tilde S$ is only defined when both $S$ and $\tilde S$ are pure full-rank strategies.

\begin{lemma}
Let $S$ and $\tilde S$ be two pure full-rank strategies. Then
    \[
    S \xhookrightarrow{} \tilde S 
    \quad \Longleftrightarrow \quad
    S \xhookrightarrow{}_2 \tilde S 
    \]
\end{lemma}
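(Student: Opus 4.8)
The plan is to prove the two implications separately; the reverse one is essentially bookkeeping, while the forward one is where the full-rank hypothesis enters decisively.

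\medskip\noindent\textbf{From $S\xhookrightarrow{}_2\tilde S$ to $S\xhookrightarrow{}\tilde S$.} First I would take the local unitary $U=U_A\otimes U_B$ and auxiliary state $\ket{\aux}$ witnessing $S\xhookrightarrow{}_2\tilde S$. Since $U_A$ and $U_B$ are unitaries, $U_A A_{sa}=(U_A A_{sa}U_A^*)U_A=(\tilde A_{sa}\otimes\mathbbm{1}_{\hat A})U_A$ and similarly for Bob, hence
\[
U(A_{sa}\otimes B_{tb})\ket{\psi}=(\tilde A_{sa}\otimes\tilde B_{tb}\otimes\mathbbm{1}_{\hat A\hat B})\,U\ket{\psi}=(\tilde A_{sa}\otimes\tilde B_{tb})\ket{\tilde\psi}\otimes\ket{\aux},
\]
using \eqref{eq:dilation_measurement_state}. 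Reordering the tensor factors, this is exactly the $\varepsilon=0$ instance of Definition~\ref{def:localdilation}, so $S\xhookrightarrow{}\tilde S$ via the same $U$ and $\ket{\aux}$.

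\medskip\noindent\textbf{From $S\xhookrightarrow{}\tilde S$ to $S\xhookrightarrow{}_2\tilde S$.} Let $W=W_A\otimes W_B$ and $\ket{\aux}\in\mathcal H_{\hat A}\otimes\mathcal H_{\hat B}$ be the isometry and auxiliary state from the local dilation, so that $W\ket{\psi}=\ket{\tilde\psi}\otimes\ket{\aux}$ and $W(A_{sa}\otimes\mathbbm{1})\ket{\psi}=(\tilde A_{sa}\otimes\mathbbm{1})\ket{\tilde\psi}\otimes\ket{\aux}$ (and symmetrically on Bob's side). The key point is that $W_A$ becomes a \emph{unitary} after shrinking the auxiliary space. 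Tracing out Bob's systems in $W\proj{\psi}W^*=\proj{\tilde\psi}\otimes\proj{\aux}$ gives $W_A\rho_A W_A^*=\tilde\rho_{\tilde A}\otimes\rho^{\aux}_{\hat A}$, where $\rho_A=\tr_B\proj{\psi}$, $\tilde\rho_{\tilde A}=\tr_{\tilde B}\proj{\tilde\psi}$ and $\rho^{\aux}_{\hat A}=\tr_{\hat B}\proj{\aux}$. Because $S$ is full-rank, $\rho_A$ has full support, so $\operatorname{Im}(W_A)=\Supp\!\bigl(W_A\rho_A W_A^*\bigr)=\mathcal H_{\tilde A}\otimes\Supp(\rho^{\aux}_{\hat A})$, where I used that $\tilde S$ is full-rank so $\tilde\rho_{\tilde A}$ has full support. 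Now I would replace $\mathcal H_{\hat A}$ by $\mathcal H_{\hat A}':=\Supp(\rho^{\aux}_{\hat A})$ and $\mathcal H_{\hat B}$ by $\mathcal H_{\hat B}':=\Supp(\rho^{\aux}_{\hat B})$, noting that $\ket{\aux}\in\mathcal H_{\hat A}'\otimes\mathcal H_{\hat B}'$ since these are exactly its local supports. With the codomains restricted, $W_A$ and $W_B$ are surjective isometries, i.e.\ unitaries $U_A:\mathcal H_A\to\mathcal H_{\tilde A}\otimes\mathcal H_{\hat A}'$ and $U_B:\mathcal H_B\to\mathcal H_{\tilde B}\otimes\mathcal H_{\hat B}'$; set $U:=U_A\otimes U_B$, so \eqref{eq:dilation_measurement_state} is just $W\ket{\psi}=\ket{\tilde\psi}\otimes\ket{\aux}$.

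\medskip\noindent To get \eqref{eq:dilation_measurement_A} (and symmetrically \eqref{eq:dilation_measurement_B}), I would rewrite the local-dilation condition for $A_{sa}$ using unitarity of $U_A$ as
\[
\bigl(U_A A_{sa}U_A^*\otimes\mathbbm{1}\bigr)(\ket{\tilde\psi}\otimes\ket{\aux})=\bigl(\tilde A_{sa}\otimes\mathbbm{1}_{\hat A'}\otimes\mathbbm{1}\bigr)(\ket{\tilde\psi}\otimes\ket{\aux}),
\]
so the operator $M:=U_A A_{sa}U_A^*-\tilde A_{sa}\otimes\mathbbm{1}_{\hat A'}$ on $\mathcal H_{\tilde A}\otimes\mathcal H_{\hat A}'$ satisfies $(M\otimes\mathbbm{1})(\ket{\tilde\psi}\otimes\ket{\aux})=0$. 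Since the reduced state of $\ket{\tilde\psi}\otimes\ket{\aux}$ on $\mathcal H_{\tilde A}\otimes\mathcal H_{\hat A}'$ is $\tilde\rho_{\tilde A}\otimes\rho^{\aux}_{\hat A}$, which has \emph{full} support on $\mathcal H_{\tilde A}\otimes\mathcal H_{\hat A}'$ by construction, any operator on that space annihilating the vector must vanish, so $M=0$, giving \eqref{eq:dilation_measurement_A}; Bob's case is identical. I expect the only real subtlety to be the clean handling of the isometry-versus-unitary distinction — namely checking that restricting the auxiliary spaces to $\Supp(\rho^{\aux}_{\hat A})$ and $\Supp(\rho^{\aux}_{\hat B})$ simultaneously makes $W_A,W_B$ unitary and keeps $\ket{\aux}$ inside the smaller spaces — and this is precisely where both full-rank hypotheses (on $S$ and on $\tilde S$) are used.
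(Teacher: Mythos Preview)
Your proof is correct and follows essentially the same strategy as the paper's: both directions match. For the nontrivial implication $S\xhookrightarrow{}\tilde S\Rightarrow S\xhookrightarrow{}_2\tilde S$, the paper takes the Schmidt decomposition of $\ket{\aux}$ with Schmidt rank $r$, defines explicit isometries $T_{\hat A},T_{\hat B}$ onto the local supports, sets $W_A:=(\mathbbm{1}_{\tilde A}\otimes T_{\hat A}^*)V_A$, and checks $W_AW_A^*=\mathbbm{1}$ directly; you instead trace out Bob, use full-rankness of $\rho_A$ and $\tilde\rho_{\tilde A}$ to read off $\operatorname{Im}(W_A)=\mathcal H_{\tilde A}\otimes\Supp(\rho^{\aux}_{\hat A})$, and then simply restrict the codomain. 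These are the same idea in different dress --- your density-matrix argument is arguably a bit cleaner since it sidesteps the paper's somewhat terse justification that $V_AV_A^*$ acts as identity on $\mathcal H_{\tilde A}\otimes\Supp_{\hat A}(\ket{\aux})$. The final step (full Schmidt rank forces the operator identity) is identical in both.
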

\begin{proof}
    Let $\tilde S=(\ket{\tilde \psi},\{\tilde A_{sa}\},\{\tilde B_{tb}\})$ and $S=(\ket{\psi},\{A_{sa}\},\{B_{tb}\})$.

    We start by showing $S \xhookrightarrow{U,\ket\aux}_2 \tilde S \Rightarrow S \xhookrightarrow{} \tilde S$. Tensor \eqref{eq:dilation_measurement_A} and \eqref{eq:dilation_measurement_B}, and right-multiply with \eqref{eq:dilation_measurement_state} to get
    $$(U_A A_{sa}U_A^*\tensor U_B B_{tb} U_B^*)U\ket{\psi} = (\tilde A_{sa}\otimes \mathbbm{1}_{\hat A} \tensor \tilde B_{tb}\otimes \mathbbm{1}_{\hat B})\ket{\tilde\psi} \tensor \ket{\aux}.$$
    By $U$ being a unitary, implying $U^*U = \mathbbm{1}$, we have 
    $$U (A_{sa}\tensor B_{tb})\ket{\psi} = \left[(\tilde A_{sa} \tensor \tilde B_{tb})\ket{\tilde\psi}\right] \tensor \ket{\aux}.$$
    implying $S \xhookrightarrow{} \tilde S$.

    We then show  $S \xhookrightarrow{V,\ket{\aux}} \tilde S \Rightarrow S \xhookrightarrow{}_2 \tilde S$. Consider the Schmidt decomposition
    \begin{equation}
        \ket{\aux}_{\hat A\hat B} = \sum_{i=0}^{r-1}\lambda_i \ket{\alpha_i} \ket{\beta_i}
    \end{equation}
    where $r$ is the Schmidt rank of $\ket{\aux}_{\hat A\hat B}$. Furthermore, observe that $\dim(\mathcal{H}_{A}) = r\cdot\dim(\mathcal{H}_{\tilde A})$. Define the isometries 
    $$T_{\hat A} := \sum_{i=0}^{r-1} \ket{\alpha_i}\!\bra{i},\qquad T_{\hat B} := \sum_{i=0}^{r-1} \ket{\beta_i}\!\bra{i}$$
    where $\ket{i}\in\mathbb{C}^r$, and observe that $T_{\hat A}T_{\hat A}^*$ is a projection onto $\Supp_{\hat A}(\ket{\aux}_{\hat A\hat B})$ and $T_{\hat B}T_{\hat B}^*$ is a projection onto $\Supp_{\hat B}(\ket{\aux}_{\hat A\hat B})$.

    Next note that
    $$\ket{\tilde\psi}\ket{\aux}=VV^*V\ket{\psi}=VV^*\ket{\tilde\psi}\ket{\aux}$$
    and so $V_AV_A^*$ act with identity on $\Supp_{\tilde A\hat A}(\ket{\tilde \psi}\ket{\aux}) = \mathcal{H}_{\tilde A} \tensor \Supp_{\hat A}(\ket{\aux})$, and similarly for $V_BV_B^*$ on $ \mathcal{H}_{\tilde B}$. Define 
    $$W_A := (\mathbbm{1}_{\tilde A}\tensor T_{\hat A}^*)V_A,\qquad W_B := (\mathbbm{1}_{\tilde B}\tensor T_{\hat B}^*)V_B, \qquad \ket{\aux'} :=(T_{\hat A}^*\tensor T_{\hat B}^*)\ket{\aux}. $$
    Observe that $\ket{\aux'} \in \mathbb{C}^r\tensor\mathbb{C}^r$ has full Schmidt rank and that $W_A$ and $W_B$ are square matrices. We claim that $W_A$ is unitary. This can be seen by
    $$W_AW_A^* =  (\mathbbm{1}_{\tilde A}\tensor T_{\hat A}^*)V_AV_A^* (\mathbbm{1}_{\tilde A}\tensor T_{\hat A}) = (\mathbbm{1}_{\tilde A}\tensor T_{\hat A}^*)(\mathbbm{1}_{\tilde A}\tensor T_{\hat A}) = \mathbbm{1}_{\tilde A}\tensor \mathbbm{1}_r$$
    using that $V_AV_A^*$ act with identity on $\mathcal{H}_{\tilde A} \tensor \Supp_{\hat A}(\ket{\aux})$ and that $T_{\hat A}$ is an isometry. A similar argument shows that $W_B$ is unitary. Now we have 
    $$(W_A\tensor W_B)(A_{sa}\tensor B_{tb})\ket{\psi} = (\mathbbm{1}_{\tilde A\tilde B}\tensor T_{\hat A \hat B})\left[(\tilde A_{sa} \tensor \tilde B_{tb})\ket{\tilde\psi}\right] \tensor \ket{\aux} = \left[(\tilde A_{sa} \tensor \tilde B_{tb})\ket{\tilde\psi}\right] \tensor \ket{\aux'}.$$
    Hence 
    $$(W_AA_{sa}W_A^*\tensor W_BB_{tb}W_B^*)(W_A\tensor W_B)\ket{\psi} = \left[(\tilde A_{sa} \tensor \tilde B_{tb})\ket{\tilde\psi}\right] \tensor \ket{\aux'}.$$
    Since $\ket{\tilde\psi}\tensor \ket{\aux'}$ has full Schmidt rank, it follows that 
    $$W_AA_{sa}W_A^* = \tilde A_{sa} \tensor \mathbbm{1}_{r},\qquad 
    W_BB_{tb}W_B^* = \tilde B_{tb}\tensor \mathbbm{1}_{r}.$$
    From this, we can conclude $S \xhookrightarrow{}_2 \tilde S$.
\end{proof}
    \section{Separation of Definitions}\label{sect:counterexamples}

\subsection{Separating pure full-rank self-tests and pure self-tests}

In Section \ref{sec:extraction} we showed that the extraction definition of local dilation and the definition we adopt are equivalent for full-rank strategies. Then it is clear that, pure full-rank ``extraction'' self-tests and pure full-rank self-tests are equivalent. Here we however shows that, there exist Bell inequalities that pure full-rank self-tests but does not pure self-test a full-rank strategy $\tilde S$. Note that this specific $\tilde S$ is not projective, therefore does not violate Theorem \ref{thm:full_to_any_rank_rob}, and moreover, it showcases the necessity of the projectiveness of $\tilde S$ as 
an assumption of Theorem \ref{thm:full_to_any_rank_rob}.

Consider the canonical strategy of CHSH game $\tilde S_{\text{CHSH}}=(\ket{\Phi^+},\{\mathcal{X},\mathcal{Z}\},\{\mathcal{H},\mathcal{G}\})$, where 
$\ket{\Phi^+}=(\ket{00}+\ket{11})/\sqrt{2}$, and $\mathcal{X},\mathcal{Z},\mathcal{H},\mathcal{G}$ are the measurements corresponding to the binary observables $X,Z,H:=\frac{1}{\sqrt{2}}(X+Z),G:=\frac{1}{\sqrt{2}}(X-Z)$, respectively. (That is, $\mathcal{X}=\{\proj{+},\proj{-}\}$, $\mathcal{Z}=\{\proj{0},\proj{1}\}$, etc.) It is well-known that the CHSH game is an assumption-free self-test for $\tilde S_{\text{CHSH}}$ \cite{McKague_2012}.

Then we incorporate a three-output POVM $\mathcal{M}=\{M_0,M_1,M_2\}$ to Bob's side, where
\begin{align*}
M_{0} &= \frac{1}{3} ( \mathbbm{1} + Z ),\\
M_{1} &= \frac{1}{3} ( \mathbbm{1} - \frac{1}{2} Z + \frac{ \sqrt{3} }{2} X ),\\
M_{2} &= \frac{1}{3} ( \mathbbm{1} - \frac{1}{2} Z - \frac{ \sqrt{3} }{2} X ).
\end{align*}

It is clear that $M_i\ge0,\sum_iM_i=\mathbbm{1}$, so $\mathcal{M}$ is a valid (non-projective) POVM. We will show that the strategy $\tilde S=(\ket{\Phi^+},\{\mathcal{X},\mathcal{Z}\},\{\mathcal{H},\mathcal{G},\mathcal{M}\})$ is full-rank self-tested. For this we need Holder's inequality 
\begin{align*}
    \tr[AB]\le\|A\|_\infty\|B\|_1,
\end{align*}
where $\|A\|_\infty:=\sup_{\|v\|=1}\|Av\|$ is the infinity norm, and $\|B\|_1:=\tr|B|=\tr[\sqrt{B^*B}]$ is the trace norm.

\begin{proposition}
Consider pure non-projective strategy $\tilde S=(\ket{\Phi^+},\{\mathcal{X},\mathcal{Z}\},\{\mathcal{H},\mathcal{G},\mathcal{M}\})$. 
\begin{enumerate}[(a)]
    \item The correlation $\tilde p$ generated by $\tilde S$ is extreme (in the quantum set of correlation).
    \item $\tilde p$ pure full-rank self-tests $\tilde S$.
\end{enumerate}
\label{prop:POVM-fullrank-selftest}
\end{proposition}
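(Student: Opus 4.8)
The plan is to bootstrap from the known rigidity of CHSH. Write $M_i=\tfrac13(\mathbbm{1}+\vec n_i\cdot\vec\sigma)$ with the unit Bloch vectors $\vec n_0=(0,0,1)$, $\vec n_1=(\tfrac{\sqrt3}{2},0,-\tfrac12)$, $\vec n_2=(-\tfrac{\sqrt3}{2},0,-\tfrac12)$; these form a ``trine'' in the $XZ$-plane with $\sum_i\vec n_i=0$ and $\vec n_1-\vec n_2\parallel\hat x$, and each $M_i=\tfrac23\proj{\phi_i}$ is rank one with $\ket{\phi_i}$ real in the computational basis. A direct computation in $\tilde S$ gives, for every $i$,
\[
\braket{\Phi^+|(\mathbbm{1}-\vec n_i\cdot\vec\sigma)_A\otimes M_i|\Phi^+}=\braket{\Phi^+|\mathbbm{1}\otimes M_i|\Phi^+}-\braket{\Phi^+|(\vec n_i\cdot\vec\sigma)_A\otimes M_i|\Phi^+}=\tfrac13-\tfrac13=0,
\]
and the left-hand side is an affine function of the entries of $\tilde p$ (it combines the $\mathcal M$-marginal with the $\mathcal X$--$\mathcal M$ and $\mathcal Z$--$\mathcal M$ correlations). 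This identity --- the vanishing of the expectation of a product of two positive operators --- is the engine of the whole argument.

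For part (b), let $S=(\ket{\psi},\{\mathcal X',\mathcal Z'\},\{\mathcal H',\mathcal G',\mathcal M'\})$ be any pure full-rank strategy producing $\tilde p$. Restricting to the CHSH inputs produces the canonical CHSH correlation, and since CHSH is an assumption-free (hence pure full-rank) self-test of $\tilde S_{\mathrm{CHSH}}$, the CHSH part of $S$ locally dilates to $\tilde S_{\mathrm{CHSH}}$; as both strategies are full-rank we may pass to the ``extraction'' form of Section~\ref{sec:extraction} and obtain local unitaries $W_A\otimes W_B$ with $W_A\mathcal X'_aW_A^*=\mathcal X_a\otimes\mathbbm{1}$, $W_A\mathcal Z'_aW_A^*=\mathcal Z_a\otimes\mathbbm{1}$, $W_B\mathcal H'W_B^*=\mathcal H\otimes\mathbbm{1}$, $W_B\mathcal G'W_B^*=\mathcal G\otimes\mathbbm{1}$, and $(W_A\otimes W_B)\ket{\psi}=\ket{\Phi^+}\otimes\ket{\aux}$; a Schmidt-rank count shows $\ket{\aux}$ is automatically full rank. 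Put $\tilde M_i:=W_B\mathcal M'_iW_B^*$. Conjugation by $W_A\otimes W_B$ preserves every correlation, so the affine combination above, evaluated in the transformed strategy, still vanishes; tracing out Alice and using $(\proj{\phi_i^\perp}\otimes\mathbbm{1})\ket{\Phi^+}=\tfrac1{\sqrt2}\ket{\phi_i^\perp}_A\ket{\phi_i^\perp}_B$ (here the reality of $\ket{\phi_i}$ enters), this becomes $\tr[\tilde M_i(\proj{\phi_i^\perp}\otimes\sigma_{\hat B})]=0$ with both factors positive and $\sigma_{\hat B}$ full rank. Hence $\tilde M_i$ is supported on $\ket{\phi_i}_B\otimes\mathcal H_{\hat B}$, i.e.\ $\tilde M_i=\proj{\phi_i}\otimes G_i$ for some $G_i\ge0$. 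Feeding this into $\sum_i\tilde M_i=\mathbbm{1}\otimes\mathbbm{1}$ and reading off the Pauli components on Bob's qubit (using $\proj{\phi_i}=\tfrac12(\mathbbm{1}+\vec n_i\cdot\vec\sigma)$ and the trine relations) forces $G_0=G_1=G_2=\tfrac23\mathbbm{1}$, so $\tilde M_i=M_i\otimes\mathbbm{1}$. Combined with the CHSH extraction this yields $(W_A\otimes W_B)(A'\otimes B')\ket{\psi}=(\tilde A\otimes\tilde B)\ket{\Phi^+}\otimes\ket{\aux}$ for all measurements, including $\mathcal M'$; that is, $S\xhookrightarrow{}\tilde S$.

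For part (a), suppose $\tilde p=\lambda p_1+(1-\lambda)p_2$ with $0<\lambda<1$ and $p_1,p_2$ quantum. Affinity of the CHSH functional together with uniqueness of its Tsirelson-optimal correlation (a consequence of CHSH self-testing) forces $p_1|_{\mathrm{CHSH}}=p_2|_{\mathrm{CHSH}}=p_{\mathrm{CHSH}}$. Realize each $p_j$ by a pure full-rank strategy $S_j$ and run the construction of part (b) up to the transformed measurements $\tilde M_i^{(j)}$. The quantity $q_j:=\braket{\Phi^+\!\otimes\aux_j|(\mathbbm{1}-\vec n_i\cdot\vec\sigma)_A\otimes\tilde M_i^{(j)}|\Phi^+\!\otimes\aux_j}$ is nonnegative (expectation of a product of positive operators) and is the same affine combination of $p_j$'s entries, so $\lambda q_1+(1-\lambda)q_2=0$ and therefore $q_1=q_2=0$. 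The remainder of part (b)'s argument then pins each $\tilde M_i^{(j)}$ to $M_i\otimes\mathbbm{1}$, so $p_j|_{\mathcal M}=\tilde p|_{\mathcal M}$ as well; hence $p_1=p_2=\tilde p$ and $\tilde p$ is extreme.

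The main obstacle is the passage from ``the expectation of the positive operator $(\mathbbm{1}-\vec n_i\cdot\vec\sigma)_A\otimes\tilde M_i$ vanishes on the (full-rank) state'' to a clean operator-level statement about the support of $\tilde M_i$: one must first invoke CHSH rigidity to know that $\vec n_i\cdot\vec\sigma$ is realized on Alice's side as a genuine $\pm1$ observable (so that $\mathbbm{1}-\vec n_i\cdot\vec\sigma\ge0$ at all --- before the CHSH isometry this fails, since $|n_i^x|+|n_i^z|>1$), and then carefully track supports while the $\mathcal M'$ are merely POVM elements rather than projectors. This bookkeeping, and the estimates comparing operators against the reduced state, is exactly where the Hölder inequality $\tr[AB]\le\|A\|_\infty\|B\|_1$ recorded above is used; everything else is CHSH self-testing imported as a black box plus elementary linear algebra of the trine configuration.
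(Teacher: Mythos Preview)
Your proof is correct and follows the same overall strategy as the paper: first invoke CHSH rigidity to fix Alice's observables and the state up to local unitaries, then pin down Bob's trine POVM on the remaining tensor factor. The packaging differs in two minor respects worth recording. First, the paper bundles the three trine constraints into a single Bell functional $\beta_1$ and upper-bounds it by $1$ via H\"older, with equality forcing each effective qubit operator $G_j$ onto the positive eigenspace of $T_j$; you instead isolate, for each $i$ separately, the vanishing of $\tr[\tilde M_i(\proj{\phi_i^\perp}\otimes\sigma_{\hat B})]$ and argue directly from positivity together with full rank of $\sigma_{\hat B}$. This is the same content unpacked termwise, and in fact bypasses H\"older altogether --- so your closing remark that ``this is exactly where H\"older is used'' is a bit misleading, since your own route never invokes it. Second, for part~(a) the paper shows that the correlation achieving $\beta_0=2\sqrt2$ and $\beta_1=1$ is \emph{unique}, hence exposed on the $\beta_0=2\sqrt2$ face of the quantum set and therefore extreme; you give a direct convex-decomposition argument that recycles the machinery of part~(b). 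Both are valid; the paper's route yields the marginally stronger conclusion of exposedness, while yours makes the dependence on part~(b) transparent.
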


\begin{proof}
Consider a pure full-rank strategy $S=(\ket{\psi},\{\mathcal{A}_0,\mathcal{A}_1\},\{\mathcal{B}_0,\mathcal{B}_1,\mathcal{B}_2\})$ that generates $\tilde p$. Let $\mathcal{A}_i=\{A_i^+,A_i^-\}$, $\mathcal{B}_i=\{B_i^+,B_i^-\}$ for $i=0,1$ where $A_i^+,A_i^-,B_i^+,B_i^-$ are POVM elements. Define observables $A_i:=A_i^+-A_i^-$ and $B_i:=B_i^+-B_i^-$. Let $\mathcal{B}_2=\{F_0,F_1,F_2\}$. Define the following two functionals:
\begin{align*}
\beta_0 &:= \braket{\psi| A_{0}\otimes B_{0} +  A_{0}\otimes  B_{1} + A_{1}\otimes  B_{0} - A_{1} \otimes B_{1} |\psi},\\
\beta_1 &:= \braket{\psi| A_{0}\otimes  F_{0} - \frac{1}{2}  A_{0}\otimes  F_{1} +\frac{\sqrt{3}}{2} A_{1} \otimes F_{1}  - \frac{1}{2} A_{0}\otimes  F_{2}- \frac{\sqrt{3}}{2} A_{1}\otimes  F_{2}|\psi }.
\end{align*}
And, from direct calculation, one can see that $\tilde S$ satisfies $\beta_0=2\sqrt{2}$ and $\beta_1=1$. 

To prove (a), we will show that the correlation satisfying $\beta_0=2\sqrt{2}$ and $\beta_1=1$ is unique in the quantum set. 

Since the CHSH inequality is a full-rank self-test, achieving $\beta_0 = 2 \sqrt{2}$ implies that there exist unitary $U_A,U_B$ such that
\begin{align*}
U_AA_{0}U_A^* &= Z \otimes \mathbbm{1}_{A'},\\
U_AA_{1}U_A^* &= X \otimes \mathbbm{1}_{A'},\\
U_A\otimes U_B\ket{\psi} &= \ket{\Phi^{+}}_{AB} \otimes \ket{\aux}_{A'B'}.
\end{align*}
Let us now consider the three-outcome measurement and define operators:
\begin{align*}
G_{j} := \tr_{B'} \big[ (\mathbbm{1}_{B} \otimes \sigma_{B'}^{1/2})U_B^*F_{j}U_B (\mathbbm{1}_{B} \otimes \sigma_{B'}^{1/2} )\big],
\end{align*}
where $\sigma_{B'}=\tr_{A'}[\ket{\aux}\!\bra{\aux}]$. It is easy to see that the effective operators $G_{j}$ fully determine the correlation, since the observables of Alice completely ignore the $A'$ system.

Let us also define $\{ T_{j} \}_{j = 0}^{2}$ and note that they can be computed explicitly:
\begin{align*}
T_{0} &:= \tr_{AA'B'} \big[ U_A^*A_{0}U_A \otimes \mathbbm{1}_{BB'} \proj{\psi} \big] = \frac{1}{2} Z,\\
T_{1} &:= \tr_{AA'B'} \Big[ U_A^*\big( - \frac{1}{2} A_{0} + \frac{\sqrt{3}}{2} A_{1} \big)U_A \otimes \mathbbm{1}_{BB'} \proj{\psi} \Big] = \frac{1}{2} \Big( - \frac{1}{2} Z + \frac{\sqrt{3}}{2} X \Big),\\
T_{2} &:= \tr_{AA'B'} \Big[ U_A^*\big( - \frac{1}{2} A_{0} - \frac{\sqrt{3}}{2} A_{1} \big)U_A \otimes \mathbbm{1}_{BB'} \proj{\psi} \Big] = \frac{1}{2} \Big( - \frac{1}{2} Z - \frac{\sqrt{3}}{2} X \Big).
\end{align*}
It is easy to verify that the functional $\beta_1$ can be rewritten as:
\begin{align*}
\beta_1 = \sum_{j} \tr (T_{j} G_{j}).
\end{align*}
Each term can be upper-bounded using Holder's inequality:
\begin{align*}
\beta_1 \leq \sum_{j} \norm{ T_{j} }_{\infty} \norm{ G_{j} }_{1} = \frac{1}{2} \sum_{j} \tr G_{j} = 1,
\end{align*}
where we used the fact that $\norm{T_{j}}_{\infty} = \frac{1}{2}$. It is easy to determine the conditions under which these inequalities hold as equalities: since for every $T_{j}$ the positive part is one-dimensional, the $G_{j}$ operator must be proportional to these rank-1 projectors. The completeness condition allows us to deduce the proportionality constants, and finally we conclude that:
\begin{align*}
G_{0} &= \frac{1}{3} ( \mathbbm{1} + Z )=M_0,\\
G_{1} &= \frac{1}{3} ( \mathbbm{1} - \frac{1}{2} Z + \frac{ \sqrt{3} }{2} X )=M_1,\\
G_{2} &= \frac{1}{3} ( \mathbbm{1} - \frac{1}{2} Z - \frac{ \sqrt{3} }{2} X )=M_2.
\end{align*}
This allows us to fully compute the statistics, which means that it is indeed the unique correlation satisfying $\beta_0=2\sqrt{2}$ and $\beta_1=1$. Therefore, this point is an exposed point of the $\beta_0 = 2 \sqrt{2}$ face of the quantum set, and it must be (at least) extreme within the entire quantum set.

To prove (b), consider
\begin{align*}
H_{j} := ( \mathbbm{1} \otimes \sigma_{B'}^{1/2} ) U_B^*F_{j}U_B ( \mathbbm{1} \otimes \sigma_{B'}^{1/2} )
\end{align*}
and note that $G_{j} = \tr_{B'} H_{j}$. Since $G_{j}$ are rank-1 PSD operators, we must have
\begin{align*}
H_{j} = G_{j} \otimes K_{j},
\end{align*}
for some $K_{j} \geq 0$ satisfying $\tr K_{j} = 1$. Now, if $\sigma_{B'}$ is full-rank we can actually reconstruct the original measurement operators:
\begin{align*}
F_{j} = G_{j} \otimes ( \sigma_{B'}^{-1/2} K_{j} \sigma_{B'}^{-1/2} ).
\end{align*}
Using the completeness relation $\sum_{j} F_{j} = \mathbbm{1}$ and the fact that the $G_{j}$ operators correspond to an extremal three-outcome measurement on a qubit, we find that the only solution is $K_{j} = \sigma_{B'}$. Then $F_j=U_B^*(M_j\otimes\mathbbm{1}_{B'})U_B$. So $\tilde S$ is a full-rank self-tested.
\end{proof}

On the other hand, from Theorem \ref{thm:bigtheorem} (part (c)), $\tilde S$  cannot be assumption-free self-tested since it is not 0-projective. So we conclude that $\tilde S$ gives an example where a pure full-rank self-test does not imply a pure self-test.

\begin{corollary}\label{cor:fullrankselftestnoselftest}
There exists a correlation that is a pure full-rank self-test, but not a pure self-test.
\end{corollary}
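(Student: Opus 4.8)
The plan is to take the correlation $\tilde p$ and the non-projective full-rank strategy $\tilde S=(\ket{\Phi^+},\{\mathcal{X},\mathcal{Z}\},\{\mathcal{H},\mathcal{G},\mathcal{M}\})$ produced by Proposition~\ref{prop:POVM-fullrank-selftest}. Part~(b) of that proposition already gives one half of the statement for free: $\tilde p$ is a pure full-rank self-test of $\tilde S$. Hence the whole task reduces to showing that $\tilde p$ is \emph{not} a pure self-test of \emph{any} pure canonical strategy, and I would establish this by contradiction, along the same lines as the proof of Theorem~\ref{thm:originalpartC} (only now ``self-test'' refers to a correlation, and the hypothesis is merely a pure self-test rather than an assumption-free one).

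So suppose $\tilde p$ is a pure self-test of some pure canonical strategy $\hat S$. First, $\tilde S$ is pure and realises $\tilde p$, so $\tilde S\xhookrightarrow{}\hat S$. Second, by Lemma~\ref{lem:Naimarksamecorrelation} any Naimark dilation $\tilde S_{\operatorname{Naimark}}$ of $\tilde S$ is again a \emph{pure} strategy realising $\tilde p$; since the hypothesis is a pure self-test, this gives $\tilde S_{\operatorname{Naimark}}\xhookrightarrow{}\hat S$ as well. Because $\tilde S_{\operatorname{Naimark}}$ is projective it is in particular $0$-projective, so invariance of projectiveness under local dilation (Proposition~\ref{prop:invariantProj}(a), $\varepsilon=0$) forces $\hat S$ to be $0$-projective, and feeding $\tilde S\xhookrightarrow{}\hat S$ into the same proposition then forces $\tilde S$ to be $0$-projective. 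This is the desired contradiction: the state $\ket{\Phi^+}\in\mathbb{C}^2\otimes\mathbb{C}^2$ has full Schmidt rank, for full-rank strategies $0$-projectivity coincides with projectivity, yet Bob's POVM $\mathcal{M}=\{M_0,M_1,M_2\}$ is genuinely non-projective. Therefore no pure canonical strategy is self-tested by $\tilde p$, and together with Proposition~\ref{prop:POVM-fullrank-selftest}(b) this proves the corollary.

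I do not anticipate a real obstacle here: all the substantive work is already done, either in Proposition~\ref{prop:POVM-fullrank-selftest} (where the full-rank self-testing property and the extremality of $\tilde p$ are proven) or in the invariance results of Section~\ref{sec:lemmas}. The only point requiring (routine) attention is that Lemma~\ref{lem:Naimarksamecorrelation} and Proposition~\ref{prop:invariantProj} are formulated purely in terms of strategies, correlations and the local-dilation relation, hence apply unchanged in the correlation setting; note that, unlike in Proposition~\ref{prop:POVM-fullrank-selftest}, extremality of $\tilde p$ is not used in this part of the argument.
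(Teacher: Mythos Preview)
Your proposal is correct and follows essentially the same approach as the paper. The paper derives the corollary from the sentence immediately preceding it, which invokes the reasoning of Theorem~\ref{thm:originalpartC} (the reference to ``Theorem~\ref{thm:bigtheorem} (part (c))'' appears to be a typo): the Naimark dilation of $\tilde S$ is a pure projective strategy realising $\tilde p$, so any canonical strategy self-tested by $\tilde p$ would be $0$-projective by Proposition~\ref{prop:invariantProj}, and hence so would $\tilde S$, contradicting that $\tilde S$ is full-rank and genuinely non-projective. You spell this out more carefully than the paper does---in particular, you correctly quantify over an arbitrary canonical $\hat S$ rather than just $\tilde S$---but the underlying argument is the same.
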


Interestingly, this result provides the first instance where a quantum correlation has no pure full-rank PVM realization.

\subsection{Separating pure PVM self-tests and pure self-tests}

Here, we show that pure PVM self-tests do not necessarily imply pure self-tests with a non-full-rank canonical $\tilde S$. Specifically, we will employ the correlation given in Proposition \ref{prop:POVM-fullrank-selftest}. Also, we will need the concept of \emph{minimal} Naimark dilation \cite{Beneduci_2020}. A Naimark dilation $\{P_i\in\mathcal{B}(\mathcal{H}')\}_{i=1}^{m}$ of POVM $\{R_i\in\mathcal{B}(\mathcal{H})\}_{i=1}^{m}$ is minimal if and only if $\mathcal{H}'=\operatorname{span}\{P_{i}V\ket{\psi}:\ket{\psi}\in\mathcal{H},i\in[1,m]\}$. One important fact about minimal Naimark dilation is that it is unique up to unitary.

\begin{theorem}[\cite{Beneduci_2020}, Theorem 2.22]
\label{thm:minimalsingle}
    Let $(\{P_i\}_{i=1}^{m},V)$, $(\{P'_i\}_{i=1}^{m},V')$ be two minimal Naimark dilations of $\{R_i\}_{i=1}^{m}$. Then there exists unitary $U$ such that $V'=UV$ and $UP_iU^*=P'_i$.
\end{theorem}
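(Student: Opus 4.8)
The plan is to build the intertwining unitary $U$ by hand on the canonical spanning set coming from minimality, and to verify everything through one Gram-matrix identity.

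First I would record the key identity. Since $\{P_i\}_{i=1}^m$ is a PVM on $\mathcal H'$ and $R_i = V^*P_iV$, for any $\ket\psi,\ket\phi\in\mathcal H$ and any $i,j$ we have
\[
\langle P_i V\psi \mid P_j V\phi\rangle = \langle V\psi \mid P_iP_j V\phi\rangle = \delta_{ij}\langle \psi \mid V^*P_iV \mid \phi\rangle = \delta_{ij}\langle \psi \mid R_i \mid \phi\rangle .
\]
The crucial observation is that the right-hand side depends only on the POVM $\{R_i\}$, and the identical computation holds verbatim for the second dilation, giving $\langle P_i' V'\psi \mid P_j' V'\phi\rangle = \delta_{ij}\langle \psi \mid R_i \mid \phi\rangle$ as well.

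Next I would define $U:\mathcal H'\to\mathcal H''$ on finite linear combinations of elements of the spanning set by
\[
U\Big(\textstyle\sum_k c_k\, P_{i_k}V\psi_k\Big) := \textstyle\sum_k c_k\, P_{i_k}'V'\psi_k .
\]
Well-definedness and isometry follow simultaneously from the identity above, since
\[
\Big\|\textstyle\sum_k c_k\, P_{i_k}V\psi_k\Big\|^2 = \textstyle\sum_{k,l}\bar c_k c_l\,\delta_{i_k i_l}\langle\psi_k\mid R_{i_k}\mid\psi_l\rangle = \Big\|\textstyle\sum_k c_k\, P_{i_k}'V'\psi_k\Big\|^2 ,
\]
so a vanishing left-hand vector forces a vanishing image. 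By minimality of $(\{P_i\},V)$ the map $U$ is defined on all of $\mathcal H'$; by minimality of $(\{P_i'\},V')$ its range contains a spanning set of $\mathcal H''$, so $U$ is surjective; being an isometric surjection it is unitary.

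Finally I would check the two claimed relations. Because $\sum_i P_i = \mathbbm 1_{\mathcal H'}$, for every $\ket\psi$ we have $V\psi = \sum_i P_iV\psi$, hence $UV\psi = \sum_i P_i'V'\psi = V'\psi$, i.e.\ $V' = UV$. For the intertwining relation it suffices to check $UP_i = P_i'U$ on the spanning set: $UP_i(P_jV\psi) = \delta_{ij}P_i'V'\psi = P_i'P_j'V'\psi = P_i'U(P_jV\psi)$, so $UP_iU^* = P_i'$ for all $i$.

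\textbf{Main obstacle.} There is nothing deep here; the only point requiring care is the well-definedness of $U$, which is exactly the statement that the norm (indeed all inner products) of a combination drawn from the spanning set is determined by $\{R_i\}$ alone, so the two dilations present "the same" vectors. I expect the write-up to be short.
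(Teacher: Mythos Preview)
The paper does not give its own proof of this statement: it is quoted verbatim as \cite[Theorem~2.22]{Beneduci_2020} and used as a black box. There is therefore nothing in the paper to compare against. Your argument is correct and is exactly the standard uniqueness proof for minimal dilations (the same mechanism as for minimal Stinespring dilations): the Gram identity $\langle P_iV\psi, P_jV\phi\rangle=\delta_{ij}\langle\psi,R_i\phi\rangle$ shows the inner products on the canonical spanning sets depend only on $\{R_i\}$, whence the assignment $P_iV\psi\mapsto P_i'V'\psi$ extends to a well-defined surjective isometry, and the relations $V'=UV$ and $UP_iU^*=P_i'$ drop out of $\sum_iP_i=\mathbbm 1$ and $P_iP_j=\delta_{ij}P_i$. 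Nothing is missing.
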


We generalise the concept of minimal Naimark dilation in the context of nonlocal strategies.

\begin{definition}
    Let $\{R_{ij}\}_{j=1}^{m_i}$, $1\le i\le n$ be a family of POVMs. A Naimark dilation $(\{P_{ij}\}_{j=1}^{m_i},V)$ of $\{R_{ij}\}$ is \emph{minimal} if, for at least one $i_0\in[1,n]$, $(\{P_{i_0j}\}_j,V)$ is a minimal Naimark dilation of $\{R_{i_0j}\}_j$. 

    Let $S=(\ket{\psi},\{A_{sa}\},\{B_{tb}\})$ be a pure strategy. A pure PVM strategy $S'=(V_A\otimes V_B\ket{\psi},\{P_{sa}\},\{Q_{tb}\})$ is a \emph{minimal} Naimark dilation of $S$, if $(\{P_{sa}\},V_A)$ is a minimal Naimark dilation of $\{A_{sa}\}$, and $(\{Q_{tb}\},V_B)$ is a minimal Naimark dilation of $\{B_{tb}\}$. 
\end{definition}

Minimal Naimark dilation of nonlocal strategies always exists, but is not unique (up to local unitary) in general, since those PVM which are non-minimal could be very different outside the support of the state. Nevertheless, we can show that in a special case, the minimal Naimark dilations of $S$ are equivalent up to local dilation.

\begin{lemma}
    Let $\{R_{ij}\}$ be a family of POVMs on $\mathcal{H}$ with at most one non-projective measurement. Then for any two minimal Naimark dilations $(\{P_{ij}\},V)$, $(\{P'_{ij}\},V')$, there exist unitary $U$ such that
    \begin{align*}
        &UV=V'\\
        &UP_{ij}V\ket{\psi}=P_{ij}V'\ket{\psi},\forall\ket{\psi}\in\mathcal{H}.
    \end{align*}
    \label{lem:uniqueminimal}
\end{lemma}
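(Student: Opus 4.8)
The plan is to reduce everything to the single‑POVM uniqueness result, Theorem~\ref{thm:minimalsingle}. Relabel the measurements so that the (at most one) non‑projective measurement carries the index $i=1$; note that it may in fact happen to be projective, which is allowed. The heart of the argument is the following claim: \textbf{for any minimal Naimark dilation $(\{P_{ij}\},V)$ of the family, the pair $(\{P_{1j}\}_j,V)$ is automatically a minimal Naimark dilation of the single POVM $\{R_{1j}\}_j$.} Granting this, $(\{P_{1j}\}_j,V)$ and $(\{P'_{1j}\}_j,V')$ are two minimal Naimark dilations of $\{R_{1j}\}_j$, so Theorem~\ref{thm:minimalsingle} supplies a unitary $U$ with $V'=UV$ and $UP_{1j}U^*=P'_{1j}$ for all $j$.

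To prove the claim, recall that by the definition of a minimal Naimark dilation of a family there is an index $i_0$ for which $(\{P_{i_0j}\}_j,V)$ is a minimal Naimark dilation of the single POVM $\{R_{i_0j}\}_j$. If $i_0=1$ the claim is immediate. If $i_0\ge 2$ then $\{R_{i_0j}\}_j$ is projective, and the trivial pair $(\{R_{i_0j}\}_j,\mathbbm{1}_{\mathcal{H}})$ is also a minimal Naimark dilation of it (it is minimal because $\sum_j R_{i_0j}=\mathbbm{1}$ forces $\operatorname{span}\{R_{i_0j}\ket{\psi}:\ket{\psi}\in\mathcal{H},j\}=\mathcal{H}$); hence Theorem~\ref{thm:minimalsingle} forces $V$ to be a unitary. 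Now if $\{R_{1j}\}_j$ were genuinely non‑projective we would get $P_{1j}=VR_{1j}V^*$ non‑projective, contradicting that $\{P_{1j}\}_j$ is a PVM, so this situation cannot occur under our hypothesis. The only remaining possibility is $i_0\ge 2$ with $\{R_{1j}\}_j$ also projective, in which case $P_{1j}=VR_{1j}V^*$ and $\operatorname{span}\{P_{1j}V\ket{\psi}:\ket{\psi}\in\mathcal{H},j\}=V\operatorname{span}\{R_{1j}\ket{\psi}:\ket{\psi}\in\mathcal{H},j\}=V\mathcal{H}=\mathcal{H}'$, so $(\{P_{1j}\}_j,V)$ is minimal after all. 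This proves the claim.

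It then remains to verify the two displayed identities for every measurement index $i$. For $i=1$ we compute $UP_{1j}V=UP_{1j}U^*\,UV=P'_{1j}V'$. For $i\ge 2$ the POVM $\{R_{ij}\}_j$ is projective, and the standard rigidity for Naimark dilations of a projection applies: from $R_{ij}=V^*P_{ij}V$ with both $R_{ij}$ and $P_{ij}$ projections one obtains $P_{ij}V=VR_{ij}$ (the short computation is exactly the $\varepsilon=0$ case of the identity in Lemma~\ref{lem:naimarklocaldilationrob}), and likewise $P'_{ij}V'=V'R_{ij}$; hence $UP_{ij}V=UVR_{ij}=V'R_{ij}=P'_{ij}V'$. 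Together with $UV=V'$ this gives $UP_{ij}V\ket{\psi}=P'_{ij}V'\ket{\psi}$ for all $i,j$ and all $\ket{\psi}\in\mathcal{H}$, as required.

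The main obstacle is the claim in the second paragraph: although family‑minimality only requires that \emph{some} component PVM be a minimal single Naimark dilation, one must show that this already pins down the behaviour of the component coming from the non‑projective measurement. The key leverage is that a minimal single Naimark dilation of a \emph{projective} measurement must have a unitary $V$ (by comparison with the trivial dilation), which is incompatible with dilating a genuinely non‑projective POVM to projections; so in effect the ``reason for minimality'' must be the non‑projective measurement (or, when all measurements are projective, $V$ is a unitary and every component is minimal anyway). Once the claim is in place, the rest is routine manipulation with Theorem~\ref{thm:minimalsingle} and the projection‑rigidity identity $P_{ij}V=VR_{ij}$.
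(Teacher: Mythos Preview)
Your proof is correct and follows essentially the same route as the paper: reduce to Theorem~\ref{thm:minimalsingle} applied to the (at most one) non-projective measurement to obtain $U$, and then handle the remaining projective indices via the identity $P_{ij}V=VR_{ij}$. In fact you are more careful than the paper on one point: the paper simply writes ``By definition, $\{P_{1j}\}$ and $\{P'_{1j}\}$ are two minimal Naimark dilations of $\{R_{1j}\}$,'' whereas you actually justify this claim by showing that if the minimality witness $i_0$ were a projective index then $V$ would have to be unitary (by comparing with the trivial dilation via Theorem~\ref{thm:minimalsingle}), which either forces all measurements to be projective or yields a contradiction---so in every case $(\{P_{1j}\},V)$ is indeed minimal for $\{R_{1j}\}$.
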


\begin{proof}
    The case where all $\{R_{ij}\}$ are projections is trivial, because $\{R_{ij}\}$ is the minimal Naimark dilation of itself. Without loss of generality, we assume $\{R_{1j}\}_j$ to be the non-projective measurement. By definition, $\{P_{1j}\}$ and $\{P'_{1j}\}$ are two minimal Naimark dilations of $\{R_{1j}\}$. So, by Theorem \ref{thm:minimalsingle} there exist unitary $U$ such that $UV=V'$, and $UP_{1j}U^*=P'_{1j}$. Also note that $R_{ij}^2=R_{ij}$ for all $i\neq 1$, so
    \begin{align*}
        &\|[VV^*,P_{ij}]V\ket{\psi}\|^2\\
        =&\braket{\psi|V^*P_{ij}V|\psi}-\braket{\psi|V^*P_{ij}VV^*P_{ij}V|\psi}\\
        =&\braket{\psi|(R_{ij}-R_{ij}^2)|\psi}=0.
    \end{align*}
    So $P_{ij}V\ket{\psi}=P_{ij}VV^*V\ket{\psi}=VV^*P_{ij}V\ket{\psi}=VR_{ij}\ket{\psi}$. Similarly, $P'_{ij}V'\ket{\psi}=V'R_{ij}\ket{\psi}$.

    Then the following holds:
    \begin{align*}
        UP_{1j}V\ket{\psi}=&UP_{1j}U^*UV\ket{\psi}=P'_{1j}V'\ket{\psi}\\
        UP_{ij}V\ket{\psi}=&UVR_{ij}\ket{\psi}\\
        =&V'R_{ij}\ket{\psi}\\
        =&P'_{ij}V'\ket{\psi},~\forall i\neq1,
    \end{align*}
    as required.
\end{proof}

For the case of single POVM, any Naimark dilation of $\{R_i\}$ is a Naimark dilation of some minimal Naimark dilation of $\{R_i\}$. It is not true in the case of multiple POVMs or for non-local strategies. Nevertheless, we prove the following:

\begin{lemma}
    \label{lem:dilationofminimal}
    Let $\{R_{ij}\in\mathcal{B}(\mathcal{H})\}_{j=1}^{m_i}$, $1 \leq i \leq n$, be a family of POVMs with at most one non-projective measurement, and let $(\{P_{ij}\in\mathcal{B}(\mathcal{H}')\},V)$ be a Naimark dilation of $\{R_{ij}\}$. Then there exists a minimal Naimark dilation $(\{P^{\min}_{ij}\in\mathcal{B}(\mathcal{H}^{\min})\},V^{\min})$ of $\{R_{ij}\}$ and an isometry $V':\mathcal{H}^{\min}\to\mathcal{H}'$ such that
    \begin{align*}
        &V'V^{\min}=V,\\
        &V'P^{\min}_{ij}V^{\min}\ket{\psi}=P_{ij}V\ket{\psi},\forall\ket{\psi}\in\mathcal{H}.
    \end{align*}
\end{lemma}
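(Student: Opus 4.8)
The plan is to realise the minimal Naimark dilation \emph{inside} the given dilation space $\mathcal{H}'$, using the block coming from the (at most one) non-projective measurement, and then to re-dilate the remaining projective measurements trivially onto that subspace.

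First I would fix an index $i_0$ for which $\{R_{i_0 j}\}_j$ is the non-projective measurement; if all measurements are projective then $\{R_{ij}\}$ is already its own minimal Naimark dilation and the claim holds with $V^{\min}=\mathbbm{1}_{\mathcal{H}}$ and $V'=V$, so assume $i_0$ exists. Put
\[
\mathcal{H}^{\min}:=\operatorname{span}\{P_{i_0 j}V\ket{\psi}:\ket{\psi}\in\mathcal{H},\ 1\le j\le m_{i_0}\}\subseteq\mathcal{H}',
\]
let $V':\mathcal{H}^{\min}\to\mathcal{H}'$ be the inclusion isometry, and note that $V\mathcal{H}\subseteq\mathcal{H}^{\min}$ since $V\ket{\psi}=\sum_j P_{i_0 j}V\ket{\psi}$; hence $V^{\min}:=(V')^*V$ is an isometry $\mathcal{H}\to\mathcal{H}^{\min}$ with $V'V^{\min}=V$. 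Because $\{P_{i_0 j}\}_j$ is a PVM we have $P_{i_0 k}(P_{i_0 j}V\ket{\psi})=\delta_{kj}P_{i_0 j}V\ket{\psi}$, so $\mathcal{H}^{\min}$ is invariant under every $P_{i_0 j}$, and I would set $P^{\min}_{i_0 j}:=P_{i_0 j}|_{\mathcal{H}^{\min}}$, a PVM on $\mathcal{H}^{\min}$. Using this invariance together with $V\mathcal{H}\subseteq\mathcal{H}^{\min}$ one checks $(V^{\min})^*P^{\min}_{i_0 j}V^{\min}=V^*P_{i_0 j}V=R_{i_0 j}$ and $V'P^{\min}_{i_0 j}V^{\min}\ket{\psi}=P_{i_0 j}V\ket{\psi}$; moreover $\operatorname{span}\{P^{\min}_{i_0 j}V^{\min}\ket{\psi}\}=\mathcal{H}^{\min}$ by construction, so $(\{P^{\min}_{i_0 j}\}_j,V^{\min})$ is a minimal Naimark dilation of $\{R_{i_0 j}\}_j$, whence $(\{P^{\min}_{ij}\},V^{\min})$ will be minimal once the other PVMs are defined.

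For $i\neq i_0$ the POVM $\{R_{ij}\}_j$ is projective and $\mathcal{H}^{\min}$ need not be $P_{ij}$-invariant, so instead of restricting I would dilate $\{R_{ij}\}_j$ afresh onto $\mathcal{H}^{\min}$ through $V^{\min}$: set $P^{\min}_{ij}:=V^{\min}R_{ij}(V^{\min})^*$ for $j\neq 1$ and $P^{\min}_{i1}:=V^{\min}R_{i1}(V^{\min})^*+(\mathbbm{1}_{\mathcal{H}^{\min}}-V^{\min}(V^{\min})^*)$, mirroring Construction~\ref{con:Naimark}. Since the $R_{ij}$ are mutually orthogonal projections and $V^{\min}$ is an isometry, $\{P^{\min}_{ij}\}_j$ is a PVM on $\mathcal{H}^{\min}$ with $(V^{\min})^*P^{\min}_{ij}V^{\min}=R_{ij}$ and $P^{\min}_{ij}V^{\min}\ket{\psi}=V^{\min}R_{ij}\ket{\psi}$. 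To obtain the required identity $V'P^{\min}_{ij}V^{\min}\ket{\psi}=P_{ij}V\ket{\psi}$ I would reuse the computation from the proof of Lemma~\ref{lem:uniqueminimal}: projectivity of $R_{ij}$ gives $\|[VV^*,P_{ij}]V\ket{\psi}\|^2=\braket{\psi|(R_{ij}-R_{ij}^2)|\psi}=0$, so $P_{ij}V\ket{\psi}=VR_{ij}\ket{\psi}=V'V^{\min}R_{ij}\ket{\psi}=V'P^{\min}_{ij}V^{\min}\ket{\psi}$. Combined with the $i_0$ case this proves the lemma.

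The main obstacle is exactly the treatment of the measurements with $i\neq i_0$: $\mathcal{H}^{\min}$ is tailored to the non-projective measurement and carries no reason to be invariant under the other $P_{ij}$, so these must be replaced by new PVMs on $\mathcal{H}^{\min}$. What makes this work is that every such $\{R_{ij}\}_j$ is projective, which via the commutator identity above pins $P_{ij}V\ket{\psi}$ down to $VR_{ij}\ket{\psi}\in V\mathcal{H}\subseteq\mathcal{H}^{\min}$; consequently the freshly built $P^{\min}_{ij}$ and the ambient $P_{ij}$ automatically agree on all vectors of the form $V\ket{\psi}$, which is all that the statement demands. Everything else is routine bookkeeping with isometries and projections.
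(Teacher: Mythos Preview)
Your proposal is correct and follows essentially the same approach as the paper: both define $\mathcal{H}^{\min}$ as the span of the vectors $P_{i_0 j}V\ket{\psi}$, take $P^{\min}_{i_0 j}$ to be the restriction (equivalently $(V')^*P_{i_0 j}V'$) for the non-projective index, re-dilate the projective measurements via $V^{\min}R_{ij}(V^{\min})^*$ with the usual correction on $j=1$, and use the commutator identity $\|[VV^*,P_{ij}]V\ket{\psi}\|^2=\braket{\psi|(R_{ij}-R_{ij}^2)|\psi}=0$ to match $P_{ij}V\ket{\psi}$ with $VR_{ij}\ket{\psi}$ for $i\neq i_0$. Your explicit observation that $\mathcal{H}^{\min}$ is $P_{i_0 j}$-invariant because $\{P_{i_0 j}\}_j$ is a PVM makes the projectivity of $P^{\min}_{i_0 j}$ transparent.
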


\begin{proof}
    The case where all $\{R_{ij}\}$ are projections is trivial. We assume $\{R_{1j}\}_j$ to be the non-projective measurement. Consider the subspace
    $$
    \mathcal{H}^{\min}:=\bigoplus_{j\in[1,m_1]}\mathcal{H}^{\min}_j~\text{of }\mathcal{H}'~,~\text{where}~\mathcal{H}^{\min}_j:=\Span\{P_{1j}V\ket{\psi}:\ket{\psi}\in\mathcal{H}\}.
    $$
    Here $\bigoplus$ refers to the internal direct sum. It is clear that $V\mathcal{H}\subseteq\mathcal{H}^{\min}\subseteq\mathcal{H}'$. Let $V'^{\min}$ be the canonical embedding from $V\mathcal{H}$ to $\mathcal{H}^{\min}$, and $V'$ be the canonical embedding from $\mathcal{H}^{\min}$ to $\mathcal{H}'$. Let $U$ be the unitary from $\mathcal{H}$ to $V\mathcal{H}$. Let $V^{\min}:=V'^{\min}U$. 
    
    We construct 
    $$P^{\min}_{1j}:=(V')^*P_{1j}V',$$ 
    $$P^{\min}_{i1}:=V^{\min}R_{i1}(V^{\min})^*+(I-V^{\min}(V^{\min})^*), i\neq 1$$
    $$
    P^{\min}_{ij}:=V^{\min}R_{ij}(V^{\min})^*, i\neq1,j\neq1.
    $$
    It is clear that $P^{\min}_{ij}$ are projections for $i\neq1$. For $P^{\min}_{1j}$, note that $\mathcal{H}^{\min}_j\subseteq\operatorname{Range}(P_{1j})$, so $P_{1j}$ commutes with $(V')^*V'$. Then $(P^{\min}_{1j})^2=P^{\min}_{1j}$. Also note that $\mathcal{H}^{\min}=\Span\{P_{1j}V\ket{\psi}:\ket{\psi}\in\mathcal{H},j\in[1,m_1]\}$, so $\{P^{\min}_{ij}\}$ is a minimal Naimark dilation of $\{R_{ij}\}$. The following holds:
    \begin{align*}
        V'P^{\min}_{1j}V^{\min}\ket{\psi}=&V'(V')^*P_{1j}V'V^{\min}\ket{\psi}\\
        =&P_{1j}V'(V')^*V'V^{\min}\ket{\psi}=P_{1j}V\ket{\psi},\\
        V'P^{\min}_{ij}V^{\min}\ket{\psi}=&V'V^{\min}R_{ij}\ket{\psi}=VR_{ij}V^*V\ket{\psi}=P_{ij}V\ket{\psi},\forall i\neq1.
    \end{align*}
    So we conclude that $(\{P^{\min}_{ij}\},V^{\min})$ satisfies the required property.
\end{proof}

Applying Lemma \ref{lem:uniqueminimal} and \ref{lem:dilationofminimal} in the context of non-local strategies, we have the following:

\begin{proposition}
    Let $\tilde S$ be a pure full-rank strategy with at most one non-projective measurement on each side. Then any Naimark dilations of $\tilde S$ are local-dilations of each other.
    \label{prop:Naimarkequal}
\end{proposition}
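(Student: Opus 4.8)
The plan is to reduce the statement about nonlocal strategies to the two lemmas just proved for families of POVMs, by treating Alice's and Bob's sides separately. Let $\tilde S=(\ket{\tilde\psi},\{\tilde A_{sa}\},\{\tilde B_{tb}\})$ be a pure full-rank strategy with at most one non-projective measurement on each side, and let $S_1=(V_A\otimes V_B\ket{\tilde\psi},\{P_{sa}\},\{Q_{tb}\})$ and $S_2=(V'_A\otimes V'_B\ket{\tilde\psi},\{P'_{sa}\},\{Q'_{tb}\})$ be two Naimark dilations of $\tilde S$. First I would invoke Lemma \ref{lem:dilationofminimal} on Alice's side: since $\{\tilde A_{sa}\}$ has at most one non-projective measurement, the Naimark dilation $(\{P_{sa}\},V_A)$ factors through some minimal Naimark dilation $(\{P^{\min}_{sa}\},V^{\min}_A)$ via an isometry $W_A$ with $W_AV^{\min}_A=V_A$ and $W_AP^{\min}_{sa}V^{\min}_A\ket{\psi}=P_{sa}V_A\ket{\psi}$ for all $\ket{\psi}$; similarly $(\{P'_{sa}\},V'_A)$ factors through a minimal Naimark dilation $(\{P'^{\min}_{sa}\},V'^{\min}_A)$ via $W'_A$. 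By Lemma \ref{lem:uniqueminimal}, the two minimal Naimark dilations are related by a unitary $U_A$ with $U_AV^{\min}_A=V'^{\min}_A$ and $U_AP^{\min}_{sa}V^{\min}_A\ket{\psi}=P'^{\min}_{sa}V'^{\min}_A\ket{\psi}$. Running the identical argument on Bob's side produces $W_B,W'_B,U_B$ with the analogous properties.

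Next I would assemble these pieces into a local dilation $S_1\xhookrightarrow{}S_2$. The natural candidate isometry is $(W'_A U_A W_A^*)\otimes(W'_B U_B W_B^*)$ — but since the minimal dilation spaces may have smaller dimension than the ambient spaces, I should instead check what happens when this map is applied to the relevant vectors. Concretely, I would verify that $W'_A U_A W_A^* P_{sa}V_A\ket{\psi}=P'_{sa}V'_A\ket{\psi}$ by chasing through the three identities: $W_A^* P_{sa}V_A\ket{\psi}=P^{\min}_{sa}V^{\min}_A\ket{\psi}$ (using that $W_A^*W_A$ acts as identity on the minimal subspace, which contains $P^{\min}_{sa}V^{\min}_A\ket{\psi}$), then $U_A$ maps this to $P'^{\min}_{sa}V'^{\min}_A\ket{\psi}$, then $W'_A$ maps this to $P'_{sa}V'_A\ket{\psi}$. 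Applying the tensor of these maps to $(V_A\otimes V_B)(P_{sa}\otimes Q_{tb})$ acting on $\ket{\tilde\psi}$ (here I use that $\ket{\tilde\psi}=\sum_i\lambda_i\ket{e_i}\ket{f_i}$ so one can push Bob's operator across in the usual way, or more simply work with the purification form) should yield $(P'_{sa}\otimes Q'_{tb})(V'_A\otimes V'_B)\ket{\tilde\psi}$ up to an auxiliary state. Since Naimark dilations of the \emph{same} strategy share the same underlying state $\ket{\tilde\psi}$ which is full-rank, the resulting auxiliary state is forced to be consistent, and one gets $S_1\xhookrightarrow{}S_2$; by symmetry (swapping the roles of $S_1,S_2$) also $S_2\xhookrightarrow{}S_1$, so they are local dilations of each other as claimed.

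The main obstacle I anticipate is bookkeeping the embeddings carefully: $W_A,W'_A$ are isometries but generally not unitaries, so the composite $W'_AU_AW_A^*$ need not be an isometry on the whole ambient space, only a partial one that behaves correctly on the subspace spanned by the $\{P_{sa}V_A\ket{\psi}\}$. To get a genuine local dilation in the sense of Definition \ref{def:localdilation} I may need to pad these maps to honest isometries (as in the proof of Proposition \ref{prop:otherdirection}, enlarging the target by a few trivial dimensions), and check that the padding can be chosen locally and is compatible with tensoring on a fixed auxiliary state. The other point requiring a little care is that Lemma \ref{lem:uniqueminimal} and Lemma \ref{lem:dilationofminimal} only control the action of the PVM elements \emph{on vectors of the form} $V\ket{\psi}$, not everywhere; but since a local dilation of pure strategies only constrains how the measurement operators act on (the image of) the shared state, this is exactly enough, and I would make this matching explicit. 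Once these two technical points are handled, the rest is a direct diagram chase through the identities supplied by the two lemmas.
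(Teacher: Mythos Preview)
Your approach is correct and uses exactly the same two lemmas as the paper. The difference is purely one of packaging: rather than constructing the composite map $W'_A U_A W_A^*$ directly and then wrestling with the partial-isometry/padding issue you correctly anticipate, the paper simply observes that Lemma~\ref{lem:dilationofminimal} already says $\tilde S^{\min}_i \xhookrightarrow{} S_i$ (with trivial, hence separable, auxiliary), invokes Proposition~\ref{prop:otherdirection} once to flip this to $S_i \xhookrightarrow{} \tilde S^{\min}_i$, notes that Lemma~\ref{lem:uniqueminimal} gives $\tilde S^{\min}_1 \xhookrightarrow{} \tilde S^{\min}_2$ via a local unitary, and finishes by transitivity. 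In other words, the padding you plan to do by hand is exactly what Proposition~\ref{prop:otherdirection} already encapsulates, so there is no need to open it up again; your diagram chase through $W_A^*$, $U_A$, $W'_A$ is the content of the transitivity step, unrolled. Both routes work; the paper's is shorter because it treats each arrow as a local dilation in its own right and lets transitivity compose them, avoiding any discussion of extending partial isometries.
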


\begin{proof}
    Consider two Naimark dilations $S_1$ and $S_2$ of $\tilde S$. By Lemma \ref{lem:dilationofminimal}, there exists minimal Naimark dilations $\tilde{S}^{\min}_1$ and $\tilde{S}^{\min}_2$ of $\tilde S$ such that $\tilde{S}^{\min}_1\xhookrightarrow{}S_{1}$, $\tilde{S}^{\min}_2\xhookrightarrow{}S_{2}$. Then from Proposition \ref{prop:otherdirection}
    $$
    S_{1}\xhookrightarrow{}\tilde{S}^{\min}_1,S_2\xhookrightarrow{}\tilde{S}^{\min}_2.
    $$

    Also, from Lemma \ref{lem:uniqueminimal} we know that $\tilde{S}^{\min}_1$ and $\tilde{S}^{\min}_2$ are local dilations of each other. So we conclude that $S_{1}\xhookrightarrow{}S_2$ and $S_{2}\xhookrightarrow{}S_1$.
\end{proof}

\begin{theorem}
    Let $\tilde S$ be a pure full-rank strategy with at most one non-projective measurement on each side. Then if $p$ (or $G$) full-rank self-tests $\tilde S$, $p$ (or $G$) also PVM self-tests any Naimark dilation of $\tilde S$.
\end{theorem}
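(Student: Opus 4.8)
The plan is to reduce the claim to Proposition~\ref{prop:Naimarkequal} by passing from an arbitrary optimal PVM strategy to its restriction. Fix a Naimark dilation $\tilde S_{\operatorname{Naimark}}=(V_A\otimes V_B\ket{\tilde\psi},\{P_{sa}\},\{Q_{tb}\})$ of $\tilde S$ and let $S=(\ket\psi,\{A_{sa}\},\{B_{tb}\})$ be any optimal pure PVM strategy for $G$ (resp.\ any pure PVM strategy realizing $p$); we must exhibit a local dilation $S\xhookrightarrow{}\tilde S_{\operatorname{Naimark}}$. First I would pass to the restriction $S_{\operatorname{res}}=(\ket{\psi'},\{A'_{sa}\},\{B'_{tb}\})$ of $S$, with restriction isometries $U_A,U_B$ as in Definition~\ref{def:res}. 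Since restriction preserves the generated correlation, $S_{\operatorname{res}}$ is again optimal (resp.\ still realizes $p$), so the hypothesis that $G$ (resp.\ $p$) full-rank self-tests $\tilde S$ gives $S_{\operatorname{res}}\xhookrightarrow{}\tilde S$. As both strategies are pure and full-rank, the equivalence of local dilation with extraction local dilation for full-rank strategies (Section~\ref{sec:extraction}) upgrades this to local unitaries $W_A,W_B$ together with an auxiliary state $\ket{\aux}\in\mathcal H_{\hat A}\otimes\mathcal H_{\hat B}$ of \emph{full} Schmidt rank satisfying $W_AA'_{sa}W_A^*=\tilde A_{sa}\otimes\mathbbm{1}_{\hat A}$, $W_BB'_{tb}W_B^*=\tilde B_{tb}\otimes\mathbbm{1}_{\hat B}$ and $(W_A\otimes W_B)\ket{\psi'}=\ket{\tilde\psi}\otimes\ket{\aux}$.

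The heart of the argument is the observation that $S$ is itself a Naimark dilation of the padded strategy
\[
\tilde S' := \bigl(\ket{\tilde\psi}\otimes\ket{\aux},\ \{\tilde A_{sa}\otimes\mathbbm{1}_{\hat A}\},\ \{\tilde B_{tb}\otimes\mathbbm{1}_{\hat B}\}\bigr).
\]
Indeed, since each $A_{sa}$ is a projector and $A'_{sa}=U_A^*A_{sa}U_A$, the pair $(\{A_{sa}\},U_A)$ is a Naimark dilation of $\{A'_{sa}\}$; composing with $W_A^*$ and using the identities above, $(\{A_{sa}\},U_AW_A^*)$ is a Naimark dilation of $\{\tilde A_{sa}\otimes\mathbbm{1}_{\hat A}\}$, and likewise for Bob, while $(U_AW_A^*\otimes U_BW_B^*)(\ket{\tilde\psi}\otimes\ket{\aux})=(U_A\otimes U_B)\ket{\psi'}=\ket\psi$. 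Moreover $\tilde S'$ is pure and full-rank (both $\ket{\tilde\psi}$ and $\ket{\aux}$ have full Schmidt rank), and it still has at most one non-projective measurement per side, because $\tilde A_{sa}\otimes\mathbbm{1}_{\hat A}$ is a projector precisely when $\tilde A_{sa}$ is. Hence $\tilde S'$ satisfies the hypotheses of Proposition~\ref{prop:Naimarkequal}.

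Finally I would invoke Proposition~\ref{prop:Naimarkequal} for $\tilde S'$: all Naimark dilations of $\tilde S'$ are mutually local dilations of one another. One of them is $S$; another is the strategy $\tilde S'_{\operatorname{Naimark}}$ obtained from $\tilde S_{\operatorname{Naimark}}$ by attaching the ancilla $\ket{\aux}$, namely the strategy with state $(V_A\otimes V_B)\ket{\tilde\psi}\otimes\ket{\aux}$, measurements $\{P_{sa}\otimes\mathbbm{1}_{\hat A}\}$ and $\{Q_{tb}\otimes\mathbbm{1}_{\hat B}\}$, and dilating isometries $V_A\otimes\mathbbm{1}_{\hat A}$, $V_B\otimes\mathbbm{1}_{\hat B}$; a one-line check, $(V_A\otimes\mathbbm{1}_{\hat A})^*(P_{sa}\otimes\mathbbm{1}_{\hat A})(V_A\otimes\mathbbm{1}_{\hat A})=(V_A^*P_{sa}V_A)\otimes\mathbbm{1}_{\hat A}=\tilde A_{sa}\otimes\mathbbm{1}_{\hat A}$, confirms it is a Naimark dilation of $\tilde S'$. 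Therefore $S\xhookrightarrow{}\tilde S'_{\operatorname{Naimark}}$. Since $\tilde S'_{\operatorname{Naimark}}$ is exactly $\tilde S_{\operatorname{Naimark}}$ with the (possibly entangled) ancilla $\ket{\aux}$ appended, and attaching an ancilla is a local dilation in this direction, we get $\tilde S'_{\operatorname{Naimark}}\xhookrightarrow{}\tilde S_{\operatorname{Naimark}}$, and transitivity of local dilation yields $S\xhookrightarrow{}\tilde S_{\operatorname{Naimark}}$, as required. The correlation case is handled verbatim, using that restriction and Naimark dilation both leave the realized correlation unchanged (Lemma~\ref{lem:Naimarksamecorrelation}). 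I expect the only delicate points to be the tensor-factor bookkeeping in the two Naimark-dilation verifications and the recollection that the extraction-dilation ancilla may always be taken to have full Schmidt rank; with those in hand, Proposition~\ref{prop:Naimarkequal} does the real work.
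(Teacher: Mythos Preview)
Your proposal is correct and follows essentially the same approach as the paper. Both arguments pass to the restriction of the arbitrary PVM strategy, invoke the full-rank self-test together with the extraction form of local dilation (Section~\ref{sec:extraction}) to identify the restriction with $\tilde S$ padded by a full-Schmidt-rank ancilla $\ket{\aux}$, observe that the original PVM strategy is therefore a Naimark dilation of $\tilde S\otimes\ket{\aux}$, and conclude via Proposition~\ref{prop:Naimarkequal} and transitivity; the paper compresses the first steps into the sentence ``the restriction of $S_{\operatorname{PVM}}$ is equivalent to $\tilde S$ attached with some auxiliary state up to local unitary,'' while you spell out the Naimark-dilation verifications explicitly.
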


\begin{proof}
    Consider a pure PVM strategy $S_{\operatorname{PVM}}$ that generates the same correlation as $\tilde S=(\ket{\tilde\psi},\{\tilde A_{sa}\},\{\tilde B_{tb}\})$. From full-rank self-test, the restriction of $S_{\operatorname{PVM}}$ is equivalent to $\tilde S$ attached with some auxiliary state up to local unitary. In other words, $S_{\operatorname{PVM}}$ is a Naimark dilation of $\tilde S\otimes\ket{\aux}=(\ket{\tilde\psi}\ket{\aux},\{\tilde A_{sa}\otimes\mathbbm{1}_{\aux,A}\},\{\tilde B_{tb}\otimes\mathbbm{1}_{\aux,B}\})$. Note that $\ket{\tilde\psi}\ket{\aux}$ is also full-rank, then from Proposition \ref{prop:Naimarkequal} and the transitivity of local dilation, $$S_{\operatorname{PVM}}\xhookrightarrow{}\tilde{S}_{\operatorname{Naimark}}\otimes\ket{\aux}\xhookrightarrow{}\tilde{S}_{\operatorname{Naimark}}
    $$
    for any Naimark dilation $\tilde{S}_{\operatorname{Naimark}}$ of $\tilde S$.
\end{proof}

\begin{corollary}
\label{cor:PVMselftestnoselftest}
    Let $\tilde p$ be the correlation generated by pure non-projective strategy $\tilde S=(\ket{\Phi^+},\{\mathcal{X},\mathcal{Z}\},\{\mathcal{H},\mathcal{G},\mathcal{M}\})$. Then $\tilde p$ is a PVM self-test for any Naimark dilation of $\tilde S$, but not a pure self-test. 
\end{corollary}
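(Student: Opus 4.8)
The plan is to derive the corollary purely by assembling results already proved in this section. \textbf{For the positive part} — that $\tilde p$ is a PVM self-test of every Naimark dilation of $\tilde S$ — I would first invoke Proposition~\ref{prop:POVM-fullrank-selftest}(b), which states that $\tilde p$ pure full-rank self-tests $\tilde S$. Then I would check that $\tilde S=(\ket{\Phi^+},\{\mathcal X,\mathcal Z\},\{\mathcal H,\mathcal G,\mathcal M\})$ meets the hypotheses of the theorem immediately preceding the corollary: $\ket{\Phi^+}$ is a pure, full-rank state, Alice's measurements $\mathcal X,\mathcal Z$ are projective, and among Bob's measurements only $\mathcal M$ fails to be projective, so there is at most one non-projective measurement on each side. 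Applying that theorem then immediately yields that $\tilde p$ PVM self-tests any Naimark dilation of $\tilde S$.

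\textbf{For the negative part} — that $\tilde p$ is not a pure self-test — I would argue by contradiction. Suppose $\tilde p$ is a pure self-test of some pure canonical strategy $\tilde S'$. Since $\tilde S$ realizes $\tilde p$ and is pure, we get $\tilde S\xhookrightarrow{}\tilde S'$; and since by Lemma~\ref{lem:Naimarksamecorrelation} any Naimark dilation $\tilde S_{\mathrm{Naimark}}$ of $\tilde S$ also realizes $\tilde p$ and is pure, we get $\tilde S_{\mathrm{Naimark}}\xhookrightarrow{}\tilde S'$ as well. Now I would apply the invariance of (near) projectiveness under local dilation, Proposition~\ref{prop:invariantProj}(a): $\tilde S_{\mathrm{Naimark}}$ is projective, hence $0$-projective, so $\tilde S'$ must be $0$-projective; pushing this back through $\tilde S\xhookrightarrow{}\tilde S'$ forces $\tilde S$ to be $0$-projective too. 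But $\tilde S$ has a full-rank state, and on a full-rank strategy being $0$-projective coincides with being projective, whereas $\tilde S$ is visibly non-projective because $\mathcal M$ is a non-projective POVM on a qubit. This contradiction shows no such $\tilde S'$ exists. (Alternatively, since a pure self-test already applies its dilation condition to all pure strategies — restrictions and Naimark dilations included — one could invoke the argument of Theorem~\ref{thm:originalpartC} to deduce first that any canonical $\tilde S'$ is $0$-projective, and then obtain the same contradiction from $\tilde S\xhookrightarrow{}\tilde S'$.)

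I do not expect a serious obstacle; the entire content lies in invoking the earlier results correctly. The two small things to verify carefully are (i) that $\tilde S$ really has at most one non-projective measurement on each side, and (ii) that $\tilde S$ is genuinely \emph{not} $0$-projective — that is, the non-projectivity of $\mathcal M$ is not concealed outside the support of the state — which is automatic because $\ket{\Phi^+}$ is full-rank. The conceptual point worth highlighting is that extremality of $\tilde p$ (Proposition~\ref{prop:POVM-fullrank-selftest}(a)) is exactly what licenses treating \emph{both} $\tilde S$ and its Naimark dilations as bona fide strategies generating $\tilde p$; since these two disagree on a property that is invariant under local dilation, no single canonical strategy can dilate-dominate both, and that is the source of the failure of self-testing.
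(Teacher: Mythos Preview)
Your proposal is correct and matches the paper's intended argument: the positive part is exactly the combination of Proposition~\ref{prop:POVM-fullrank-selftest}(b) with the theorem immediately preceding the corollary, and the negative part is the invariance-of-projectiveness argument from Theorem~\ref{thm:originalpartC} (which the paper already invoked, with a mild typo, to obtain Corollary~\ref{cor:fullrankselftestnoselftest}). One small quibble: your closing remark that extremality of $\tilde p$ is what ``licenses'' treating $\tilde S$ and $\tilde S_{\mathrm{Naimark}}$ as strategies for $\tilde p$ is inaccurate---that is just Lemma~\ref{lem:Naimarksamecorrelation}, and extremality plays no role in this corollary.
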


Now we present a minimal Naimark dilation for $\tilde{S}$. Since the measurements for Alice are projective, they are minimal themselves. For Bob, let $V$ be the canonical embedding $\mathbb{C}^2\to\mathbb{C}^3$ (that is, in the computational basis $V=\mathbbm{1}_{3\times2}$). Then for $\mathcal{M}$, let rank-1 projections $M'_i=\proj{e_i}$ for $i=0,1,2$, where
\begin{align}
\ket{e_{0}} &= \frac{1}{\sqrt{3}}
\left( \begin{array}{c}
\sqrt{2} \\
0 \\
1
\end{array} \right),\\
\ket{e_{1}} &= \frac{1}{\sqrt{6}}
\left( \begin{array}{c}
-1 \\
-\sqrt{3} \\
\sqrt{2}
\end{array} \right),\\
\ket{e_{2}} &= \frac{1}{\sqrt{6}}
\left( \begin{array}{c}
- 1 \\
\sqrt{3} \\
\sqrt{2}
\end{array} \right).
\end{align}

And let $\mathcal{M}'=\{M'_0,M'_1,M'_2\}$. According to \cite{Beneduci_2020}, $(\mathcal{M}',V)$ is a minimal Naimark dilation of $\mathcal{M}$. For $\mathcal{G}$ and $\mathcal{H}$, since they are projective themselves, we just need to ensure their projectiveness outside the range of $V$ when we extend them. To do this, we let $H_{\pm}$, $G_{\pm}$ be the $\pm1$-eigenspace projection of $H,G$, respectively. Define $H'_{+}=VH_{+}V^*+\mathbbm{1}-VV^*$, $H'_{-}=VH_{-}V^*$ (that is, $H'_{+}=H_+\oplus\mathbbm{1},H'_{-}=H_-\oplus0$), and $G'_{+}=VG_{+}V^*+\mathbbm{1}-VV^*$, $G'_{-}=VG_{-}V^*$. Then $(\mathcal{H}'=\{H'_{+},H'_{-}\},V)$, $(\mathcal{G}'=\{G'_{+},G'_{-}\},V)$ are Naimark dilations if $\mathcal{H}$ and $\mathcal{G}$, respectively. So we conclude that $\tilde S_{PVM}=(\mathbbm{1\otimes}V\ket{\Phi^+},\{\mathcal{X},\mathcal{Z}\},\{\mathcal{H}',\mathcal{G}',\mathcal{M}'\})$ is a minimal Naimark dilation of $\tilde S$.

\subsection{Separating (standard) self-tests and abstract state self-tests}

We show that Corollary \ref{cor:PVMselftestnoselftest} also answers the open question raised in \cite{paddock2023operatoralgebraic}, separating abstract state self-testing defined therein and (standard) self-testing in a case where there is no full-rank strategy in a certain class of strategies (namely, the class of all pure PVM strategies). Recall that, in an abstract state self-test the higher order moments are the same for all strategy inducing the correlation $\tilde p$.

\begin{definition}[\cite{paddock2023operatoralgebraic}]
    Let $t\subseteq {\{\text{pure},\text{full-rank},\text{PVM}\}}$. A correlation $\tilde p$ is an abstract state $t$ self-test if for every $k,l\ge1$, $a_1,\dots,a_k\in\mathcal{A}$,$s_1,\dots,s_k\in\mathcal{S}$,$b_1,\dots,b_l\in\mathcal{B}$,$t_1,\dots,t_l\in\mathcal{T}$, the value
    $$
    \braket{\psi|A_{s_1a_1}\cdots A_{s_ka_k}\otimes B_{t_1b_1}\cdots B_{t_lb_l}|\psi}
    $$
    is the same across all $t$ strategies inducing the correlation $\tilde p$.
\end{definition}

\begin{proposition}
    Let $\tilde p$ be the correlation generated by the pure non-projective strategy $\tilde S=(\ket{\Phi^+},\{\mathcal{X},\mathcal{Z}\},\{\mathcal{H},\mathcal{G},\mathcal{M}\})$. Then $\tilde p$ is not an abstract state PVM self-test.
\end{proposition}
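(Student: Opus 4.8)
The plan is to argue by contradiction: assume $\tilde p$ is an abstract state PVM self-test and exhibit two pure PVM strategies inducing $\tilde p$ whose higher-order moments differ. First I would make a simplifying observation: if $\tilde p$ is an abstract state PVM self-test, then summing the moment $\braket{\psi|A_{s_1a_1}\otimes B_{t_1b_1}B_{t_2b_2}B_{t_3b_3}|\psi}$ over $a_1\in\mathcal{A}$ and using $\sum_{a_1}A_{s_1a_1}=\mathbbm{1}$ shows that the degree-three moment $\braket{\psi|\mathbbm{1}\otimes B_{t_1b_1}B_{t_2b_2}B_{t_3b_3}|\psi}$ of Bob's operators alone must also be constant across all pure PVM strategies inducing $\tilde p$. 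So it suffices to produce two such strategies that disagree on one such moment.

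The two strategies will both be Naimark dilations of $\tilde S$ with Bob's Hilbert space $\mathbb{C}^3$, differing only in how Bob's (already projective) measurement $\mathcal{H}$ is extended off the support. Let $V:\mathbb{C}^2\to\mathbb{C}^3$ be the canonical embedding and $H_\pm$ the spectral projections of $H=\tfrac{1}{\sqrt{2}}(X+Z)$. Take $\tilde S_1:=\tilde S_{PVM}$ (so $H'_+=VH_+V^*+(\mathbbm{1}-VV^*)$, $H'_-=VH_-V^*$), and let $\tilde S_2$ be obtained from $\tilde S_{PVM}$ by instead using $H''_+:=VH_+V^*$, $H''_-:=VH_-V^*+(\mathbbm{1}-VV^*)$; everything else ($\{\mathcal{X},\mathcal{Z}\}$, $\mathcal{G}'$, $\mathcal{M}'=\{\proj{e_i}\}$, and the state $(\mathbbm{1}\otimes V)\ket{\Phi^+}$) is the same. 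In each case one checks $H^{(k)}_\pm$ are orthogonal projections summing to $\mathbbm{1}$ with $V^*H^{(k)}_\pm V=H_\pm$, so both $\tilde S_k$ are pure PVM strategies that are Naimark dilations of $\tilde S$, hence induce $\tilde p$ by Lemma \ref{lem:Naimarksamecorrelation}.

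It then remains to compute the separating moment $m_k:=\braket{\tilde\psi_{PVM}|\mathbbm{1}\otimes M'_0H^{(k)}_+M'_1|\tilde\psi_{PVM}}$. Writing $\ket{a_i}:=V^*\ket{e_i}$ (so $\bra{e_i}V=\bra{a_i}$), using that $\{\ket{e_i}\}$ is orthonormal, and using $\braket{\tilde\psi_{PVM}|\mathbbm{1}\otimes B|\tilde\psi_{PVM}}=\tfrac12\tr[V^*BV]$ for the maximally entangled state, one gets
\[
m_k=\tfrac12\,\bra{e_0}H^{(k)}_+\ket{e_1}\,\braket{a_1|a_0}.
\]
Since $H^{(k)}_+=VH_+V^*+c_k(\mathbbm{1}-VV^*)$ with $c_1=1$, $c_2=0$, we have $\bra{e_0}H^{(k)}_+\ket{e_1}=\bra{a_0}H_+\ket{a_1}-c_k\braket{a_0|a_1}$, hence $m_1-m_2=-\tfrac12|\braket{a_0|a_1}|^2$. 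Plugging in $\ket{a_0}=\tfrac{1}{\sqrt3}(\sqrt2,0)^{\intercal}$ and $\ket{a_1}=\tfrac{1}{\sqrt6}(-1,-\sqrt3)^{\intercal}$ gives $\braket{a_0|a_1}=-\tfrac13$, so $m_1-m_2=-\tfrac1{18}\neq0$, contradicting the assumed abstract state PVM self-test.

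The part that needs care — and the reason the separating word has length three in Bob's operators rather than one or two — is that any word applied directly to $\ket{\tilde\psi}_{PVM}$ collapses onto the range of $V$, because $H^{(k)}_\pm V=VH_\pm$ regardless of $k$; so the extension-sensitive operator has to be sandwiched between two copies of a rank-one projector $M'_i$ whose vectors stick out of the range of $V$. Once that is set up, verifying that the moment genuinely changes reduces to the elementary fact that the projections of $\ket{e_0},\ket{e_1},\ket{e_2}$ onto $\mathbb{C}^2$ cannot be mutually orthogonal (three vectors in a plane), which the explicit numbers make quantitative.
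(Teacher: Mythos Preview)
Your proof is correct and takes essentially the same approach as the paper: both construct the identical pair of PVM Naimark dilations $\tilde S_1,\tilde S_2$ differing only in whether the extra basis vector of $\mathbb{C}^3$ is assigned to $H_+$ or $H_-$, and then exhibit a degree-three Bob word that distinguishes them. The only cosmetic difference is that the paper uses the moment $\braket{\tilde\psi_{PVM}|\mathbbm{1}\otimes M'_0H^{(k)}_+M'_0|\tilde\psi_{PVM}}$ (same projector on both sides, giving a difference of $\tfrac{1}{9}$) instead of your $M'_0H^{(k)}_+M'_1$; your explicit reduction to Bob-only moments via $\sum_a A_{sa}=\mathbbm{1}$ and your explanation of why a length-three word is required are useful clarifications that the paper leaves implicit.
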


\begin{proof}
    According to the definition of abstract state self-testing, it suffices to find two pure PVM strategies for $p$ which give different higher-order moments.

    Define $S^1_{PVM}=(\mathbbm{1\otimes}V\ket{\Phi^+},\{\mathcal{X},\mathcal{Z}\},\{\mathcal{H}',\mathcal{G}',\mathcal{M}'\})$ as in the previous subsection. Now consider another dilation $\mathcal{H}''$ of $\mathcal{H}$, namely, $H''_{+}=VH_{+}V^*$, $H''_{-}=VH_{-}V^*+\mathbbm{1}-VV^*$ (that is, $H''_{+}=H_+\oplus0,H''_{-}=H_-\oplus\mathbbm{1}$). Let $S^2_{PVM}=(\mathbbm{1\otimes}V\ket{\Phi^+},\{\mathcal{X},\mathcal{Z}\},\{\mathcal{H}'',\mathcal{G}',\mathcal{M}'\})$. Then direct calculation shows that
    \begin{align*}
        &\braket{\Phi^+|(\mathbbm{1}\otimes V^*)(\mathbbm{1}\otimes(M_0'H'_{+}M_0'))(\mathbbm{1}\otimes V)|\Phi^+}=\frac{4-\sqrt{2}}{18},\\
        &\braket{\Phi^+|(\mathbbm{1}\otimes V^*)(\mathbbm{1}\otimes(M_0'H''_{+}M_0'))(\mathbbm{1}\otimes V)|\Phi^+}=\frac{2-\sqrt{2}}{18}.
    \end{align*}
    So $S^1_{\text{PVM}}$ and $S^2_{\text{PVM}}$ are of different higher order moments.
\end{proof}

Note that by to \cite[Theorem 3.5]{paddock2023operatoralgebraic}, abstract state self-testing is equivalent to (standard) self-testing under the condition that $\tilde p$ is extreme and there exists a full-rank $t$ strategies inducing the correlation $\tilde p$. Therefore, our results indicates that the condition of \cite[Theorem 3.5]{paddock2023operatoralgebraic} is crucial: there exists extreme correlation $\tilde p$ such that, the class of PVM strategies admits no full-rank strategy for $\tilde p$, where $\tilde p$ is a (standard) PVM-self-test but not an abstract state PVM-self-test.
    \section{Acknowledgements}

This work is funded by the European Union under the Grant Agreement No 101078107, QInteract and Grant Agreement No 101017733, VERIqTAS as well as VILLUM FONDEN via the QMATH Centre of Excellence (Grant No 10059) and Villum Young Investigator grant (No 37532). S. S. was funded by the Deutsche Forschungsgemeinschaft (DFG, German Research Foundation) under Germany's Excellence Strategy - EXC 2092 CASA - 390781972. He furthermore has received funding from the European Union's Horizon 2020 research and innovation programme under the Marie Sklodowska-Curie grant agreement No. 101030346. P. B. acknowledges the support from CNPq. J. K. is supported by the HOMING grant from the Foundation for Polish Science. We thank Jurij Vol\v{c}i\v{c} for valuable discussion on Naimark dilation.

	\bibliographystyle{halpha}
    \bibliography{biblo}
    \begin{appendix}
    \section{Self-testing from correlation}\label{app:probdist}
In the main paper, we have mostly focused on self-testing from the perspective of non-local games. A different way to define self-testing is from the perspective of probability distributions, or correlations. In this setting, instead of having a game self-test a strategy $\tilde S$, we have a correlation $p$ that self-tests a strategy $\tilde S$. We will denote the correlation generated by the strategy $S$ as $p_S$. 

We say that the strategy $S = (\ket{\psi},\{A_{sa}\},\{B_{tb}\})$ generates the correlation $p$ if for all $a,b,s,t$

$$p(a,b|s,t) = \braket{\psi|A_{sa}\otimes B_{tb}|\psi}.$$

This is the definition of self-testing that is used in \cite{SB}, though again augmented with the same qualifiers as in \pref{def:selftest}.

\begin{definition}[Self-testing from correlation]\label{def:prob_selftest}
    Let $t\subseteq {\{\text{pure},\text{full-rank},\text{PVM}\}}$ and let $\tilde{S}$ be a pure strategy. We say that a correlation $p_{\tilde{S}}$ \emph{$t$-\emph{self-tests}} a (canonical) strategy $\tilde{S}$ if $S \xhookrightarrow{} \tilde{S}$ for every $t$-strategy $S$ where $p_S = p_{\tilde{S}}$.
\end{definition}

In self-testing from correlation, our result for lifting the full-rank or PVM assumption (Theorem \ref{thm:full_to_any_rank_rob} and \ref{thm:PVMtoPOVMrob}) still hold. However, our results for lifting the purity assumption (Theorem \ref{thm:pure_to_mix_rob}) does not translate directly into this setting. Here, we show how to translate the proof of Theorem \ref{thm:pure_to_mix_rob} for self-testing from correlation, by spotting significant points where they differ. In this appendix, we only consider exact self-testing.

The first one is in the proof of Lemma \ref{lem3robmixed}, where we have no game operator for correlation. Nevertheless, we can show the following analog of Proposition \ref{prop:dim1}:

\begin{proposition}
    Let $\tilde S = (\ket{\tilde \psi},\{\tilde{A}_{sa}\},\{\tilde{B}_{tb}\})$ be the canonical full-rank strategy $t$-self-tested by correlation $p_{\tilde S}$. Then any state $\ket\psi \in \mathcal{H}_{\tilde{A}} \otimes \mathcal{H}_{\tilde{B}}$ satisfying
$$ \braket{\psi|\tilde{A}_{sa}\otimes\tilde{B}_{tb}|\psi}=p_{\tilde{S}}(a,b|s,t)$$
for all $a,s,t,b$ must be that $\proj{\psi}=\proj{\tilde\psi}$.
\label{prop:1dim_correlation}
\end{proposition}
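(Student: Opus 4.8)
The plan is to mirror the proof of Proposition~\ref{prop:dim1}, substituting for the game operator $\tilde W$ a Bell operator built from the correlation itself. Set
\[
\tilde Q := \{\ket\phi \in \mathcal H_{\tilde A}\otimes\mathcal H_{\tilde B} : \|\phi\|=1,\ \braket{\phi|\tilde A_{sa}\otimes\tilde B_{tb}|\phi} = p_{\tilde S}(a,b|s,t)\ \text{for all }s,t,a,b\},
\]
so that the statement is exactly $\tilde Q = \{e^{i\theta}\ket{\tilde\psi} : \theta\in\mathbb{R}\}$. In the setting where this proposition is applied, $p_{\tilde S}$ is an exposed point of the quantum correlation set, so one can fix a Bell functional $\{c_{satb}\}$ for which $p_{\tilde S}$ is the unique quantum correlation maximizing $q\mapsto\sum_{s,t,a,b}c_{satb}\,q(a,b|s,t)$, and put $\tilde W := \sum_{s,t,a,b}c_{satb}\,(\tilde A_{sa}\otimes\tilde B_{tb})$, a self-adjoint operator on $\mathcal H_{\tilde A}\otimes\mathcal H_{\tilde B}$. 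For any unit $\ket\phi$ there, the numbers $q_\phi(a,b|s,t):=\braket{\phi|\tilde A_{sa}\otimes\tilde B_{tb}|\phi}$ form a valid quantum correlation, so $\braket{\phi|\tilde W|\phi}\le\braket{\tilde\psi|\tilde W|\tilde\psi}$ with equality iff $q_\phi=p_{\tilde S}$, i.e.\ iff $\ket\phi\in\tilde Q$. Exactly as in Lemma~\ref{lem:ptm_space}, this identifies $\tilde Q$ with the set of unit vectors of the top eigenspace $V_{\max}$ of $\tilde W$, so it remains to show $\dim V_{\max}=1$.

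For this I would invoke Lemma~\ref{lem:spacebound}: it suffices that every nonzero vector of $V_{\max}$ has full Schmidt rank $d:=\dim\mathcal H_{\tilde A}=\dim\mathcal H_{\tilde B}$. Fix $\ket\phi\in\tilde Q$; lying in a $d\times d$ space it has Schmidt rank at most $d$, so the point is to exclude Schmidt rank $<d$, by the argument of Lemma~\ref{lem:minselftest}. Note first that when $\text{PVM}\in t$, $\tilde S$ must be projective: comparing $\tilde S$ through the self-test with its own Naimark dilation (a PVM strategy also generating $p_{\tilde S}$) and applying the invariance of projectivity (Proposition~\ref{prop:invariantProj}) shows $\tilde S$ is $0$-projective, hence projective since it is full-rank. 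Now consider $S_\phi:=(\ket\phi,\{\tilde A_{sa}\},\{\tilde B_{tb}\})$, which generates $p_{\tilde S}$. If $\text{full-rank}\notin t$ then $S_\phi$ is already a $t$-strategy (it is pure, and its measurements are projective whenever $\text{PVM}\in t$), so the self-test gives $S_\phi\xhookrightarrow{}\tilde S$; if $\text{full-rank}\in t$ (and, for the argument to run cleanly, $\text{PVM}\notin t$) one passes instead to the restriction $S_{\phi,\operatorname{res}}$ of Definition~\ref{def:res}, which --- since $\ket\phi$ is supported on $\Pi_A\otimes\Pi_B$, so that conjugating the measurements by the support isometries does not change the generated correlation --- is a pure, full-rank $t$-strategy generating $p_{\tilde S}$, whence $S_{\phi,\operatorname{res}}\xhookrightarrow{}\tilde S$. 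In either case, local isometries preserve Schmidt rank and tensoring by an auxiliary state cannot decrease it, so $\ket\phi$ has Schmidt rank at least $\mathrm{SR}(\ket{\tilde\psi})=d$; hence $\ket\phi$ is full-rank. Then $V_{\max}$ is a subspace all of whose nonzero vectors have Schmidt rank $d$, Lemma~\ref{lem:spacebound} gives $\dim V_{\max}=1$, and therefore $\tilde Q=\{e^{i\theta}\ket{\tilde\psi}\}$ and $\proj{\psi}=\proj{\tilde\psi}$.

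The main obstacle is precisely the one anticipated in the excerpt: there is no game operator in the correlation setting, and the substitute $\tilde W$ exists only because $p_{\tilde S}$ is extreme/exposed, which is what makes the Bell functional $\{c_{satb}\}$ available. A second, bookkeeping-level difficulty is making sure the auxiliary strategy fed to the self-testing hypothesis genuinely obeys the restriction encoded by $t$; passing to restrictions handles this, the one genuinely delicate corner being $\{\text{full-rank},\text{PVM}\}\subseteq t$, where the restriction of a projective strategy need not remain projective --- a case that does not arise in the application to Theorem~\ref{thm:pure_to_mix_rob}, where $t\subseteq\{\text{PVM}\}$.
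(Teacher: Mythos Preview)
Your argument has a genuine gap: it rests on the existence of a Bell functional $\{c_{satb}\}$ that \emph{uniquely} singles out $p_{\tilde S}$ among all quantum correlations, i.e.\ on $p_{\tilde S}$ being an \emph{exposed} point of the quantum set. The proposition as stated carries no such hypothesis, and even in its application (Theorem~\ref{thm:pure_to_mix_correlation}) only \emph{extremality} of $p_{\tilde S}$ is assumed. Extreme need not imply exposed, so you are proving a strictly weaker statement than required. Without exposedness you have no operator $\tilde W$ whose top eigenspace captures $\tilde Q$ exactly, and the entire Lemma~\ref{lem:ptm_space}/Lemma~\ref{lem:spacebound} machinery collapses.

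The paper's proof avoids this entirely by working directly with the self-testing dilation rather than through a surrogate Bell operator. Given $\ket\psi$ generating $p_{\tilde S}$ with the canonical measurements, the self-test yields $V_A\otimes V_B$ and $\ket{\aux}$ with $(V_A\otimes V_B)\ket\psi=\ket{\tilde\psi}\otimes\ket{\aux}$. Since $\ket{\tilde\psi}$ is full-rank and both states live in spaces of the same local dimension, $\ket{\aux}$ must be a product state; one then shows $U_A:=(\mathbbm 1_{\tilde A}\otimes\bra{\aux}_{\hat A})V_A$ is a genuine unitary on $\mathcal H_{\tilde A}$ satisfying $U_A\tilde A_{sa}U_A^*=\tilde A_{sa}$ for all $s,a$. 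The decisive step is that, because $\tilde S$ has minimum local dimension among strategies for $p_{\tilde S}$ (Lemma~\ref{lem:minselftest}), the algebra generated by $\{\tilde A_{sa}\}$ is irreducible, forcing $U_A$ (and likewise $U_B$) to be a scalar multiple of the identity, hence $\proj{\psi}=\proj{\tilde\psi}$. This irreducibility argument is what replaces the missing game operator and is the idea your proposal lacks. Your careful handling of whether $S_\phi$ is a $t$-strategy is commendable, but it does not rescue the exposedness issue.
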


\begin{proof}
    Let $S=(\ket{\psi},\{\tilde{A}_{sa}\},\{\tilde{B}_{tb}\})$. By $t$-self-testing, $S \xhookrightarrow{V_A\otimes V_B,\ket{\aux}} \tilde S$, which implies 
    \begin{equation}\label{eq:prob_dist_uni_isom}
        (V_A\otimes V_B)(A_{sa}\otimes B_{tb})\ket{\psi} = (A_{sa}\otimes B_{tb})\ket{\tilde\psi}_{AB}\otimes \ket{\aux}_{\hat A \hat B}.
    \end{equation}
    By \pref{lem:minselftest}, $\ket{\psi}$ has at least as large Schmidt rank as $\ket{\tilde \psi}$. The fact that they live on the same space and $\ket{\tilde \psi}$ has full Schmidt rank implies $\ket{\psi}$ also has full Schmidt rank. This means that $\ket{\aux}$ has Schmidt rank $1$, and so is a product state, $\ket{\aux}_{\hat A \hat B} = \ket{\aux}_{\hat A}\otimes\ket{\aux}_{\hat B}$. 

    Consider the Schmidt decomposition of $\ket{\psi} = \sum_i \lambda_i \ket{\alpha_i}\ket{\beta_i}$. If we now trace out $\mathcal{H}_{\tilde B}\otimes \mathcal{H}_{\hat B}$ from \pref{eq:prob_dist_uni_isom}, and sum over $a,b$ we get
    \begin{equation}
        \sum_i \lambda_i V_A \proj{\alpha_i} V_A^* 
        = \tr_{\tilde B}(\proj{\tilde\psi}_{\tilde A \tilde B}) \otimes \proj{\aux}_{\hat A}.
    \end{equation}
    By \pref{lem:sum_to_pure} we can conclude that
    $$V_A \proj{\alpha_i} V_A^* = \proj{\phi_i}_{\tilde A} \otimes \proj{\aux}_{\hat A}$$
    for some state $\ket{\phi_i}\in \mathcal{H}_{\tilde A}$ for all $i$. This implies that $V_A \ket{\alpha_i} = \ket{\phi_i}_{\tilde A} \ket{\aux}_{\hat A}$ where we have absorbed a potential global phase into $\ket{\phi_i}$. Since $\ket{\psi}$ has full Schmidt rank, $\Span(\{\alpha_i\}_i) = \mathcal{H}_A$. This implies for all $\ket{v}\in \mathcal{H}_A$, $ V_A \ket{v} = \ket{v'}\ket{\aux}_{\hat A}$. This directly implies that $(\mathbbm{1}_{\tilde A}\otimes \bra{\aux}_{\hat A})V_A$ is unitary. A similar argument for $\mathcal{H}_B$ shows that $(\mathbbm{1}_{\tilde B}\otimes \bra{\aux}_{\hat B})V_B$ is unitary. So we conclude that exists unitaries $U_A:=(\mathbbm{1}_{\tilde A}\otimes \bra{\aux}_{\hat A})V_A$, $U_B:=(\mathbbm{1}_{\tilde B}\otimes \bra{\aux}_{\hat B})V_B$ such that for all $s,t,a,b$
    $$U_A\otimes U_B(A_{sa}\otimes B_{tb})\ket{\psi} = (A_{sa}\otimes B_{tb})\ket{\tilde\psi}.$$
    
    Since $\ket{\tilde \psi}$ has full Schmidt rank, $\Supp_{\tilde A}(\ket{\psi}) = \mathcal{H}_{\tilde A}$. This implies that for all $\ket{v}\in\mathcal{H}_{\tilde A}$, $U_A \tilde A_{sa} U_A^*\ket{v} = \tilde A_{sa}\ket{v}$ for all $s,a$. This directly gives that $U_A \tilde A_{sa} U_A^* = \tilde A_{sa}$, and so $[\tilde A_{sa},U_A] = 0$ for all $s,a$. 
    By \pref{lem:minselftest} the state $\ket{\tilde \psi}$ has minimum Schmidt rank across all states that can give rise to $p_{\tilde S}$ using some local measurements. Hence, strategy $\tilde S$ has the minimum (local) dimension among all strategies that give rise to $p$. It now follows that the matrix algebra generated by all the $\tilde{A}_{sa}$ is irreducible and thus $\langle \tilde A_{sa} \rangle_{sa} = \mathbb{M}_{d\times d}$, where $d=\dim(\mathcal H_{\tilde A})$.
    The only matrix that commutes with all $d\times d$ matrices is $U_A = c\mathbbm{1}$. After applying a similar argument to Bob, we obtain that $(U_A\otimes U_B)\ket{\psi} = \ket{\tilde \psi}$, where both $U_A$ and $U_B$ are proportional to identity. This establishes the desired statement.
\end{proof}

The second major difference is in the proof of Lemma \ref{lem3robmixed}, where we decomposed $X\ket{\psi}_{ABP}$ in the eigenspace of $\tilde W$. In \pref{def:selftest}, we require the strategies to be optimal, which gives us an extremality conditions on the game value. Here, in the correlation case, we would instead have the correlation be extreme in the set of quantum correlations.

\begin{lemma}
    \label{lem:extremalityused}
    Let $\tilde S$ be a pure full-rank strategy self-tested by a extreme correlation $p_{\tilde S}$. Then for any mixed state $\rho\in\mathcal{B}(\mathcal{H}_{\tilde A}\otimes\mathcal{H}_{\tilde B})$ satisfying
    $$
    \tr[\rho\tilde A_{sa}\otimes\tilde B_{tb}]=p_{\tilde S}(a,b|s,t), \forall a,b,s,t
    $$
    must be so that $\rho=\proj{\tilde{\psi}}$.
\end{lemma}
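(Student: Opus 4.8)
The plan is to reduce the mixed-state statement to the pure-state version already established in Proposition~\ref{prop:1dim_correlation}, using the extremality of $p_{\tilde S}$ to control the decomposition of $\rho$ into pure states.

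First I would write $\rho = \sum_{i} p_i \proj{\phi_i}$ with $p_i > 0$, $\sum_i p_i = 1$, and $\ket{\phi_i} \in \mathcal{H}_{\tilde A} \otimes \mathcal{H}_{\tilde B}$. For each $i$, the tuple $S_i := (\ket{\phi_i}, \{\tilde A_{sa}\}, \{\tilde B_{tb}\})$ is a legitimate tensor-product quantum strategy, so it induces a quantum correlation $q_i := p_{S_i}$ with $q_i(a,b|s,t) = \braket{\phi_i | \tilde A_{sa} \otimes \tilde B_{tb} | \phi_i}$. By linearity of the trace, the hypothesis $\tr[\rho\, \tilde A_{sa} \otimes \tilde B_{tb}] = p_{\tilde S}(a,b|s,t)$ rewrites as $p_{\tilde S} = \sum_i p_i q_i$, exhibiting $p_{\tilde S}$ as a convex combination of points of the quantum set of correlations.

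Next I would invoke extremality: since $p_{\tilde S}$ is an extreme point of the (convex) quantum set and each $q_i$ lies in that set, the usual grouping/induction argument forces $q_i = p_{\tilde S}$ for every $i$ with $p_i > 0$. Hence each $\ket{\phi_i}$ satisfies $\braket{\phi_i | \tilde A_{sa} \otimes \tilde B_{tb} | \phi_i} = p_{\tilde S}(a,b|s,t)$ for all $a,b,s,t$, so Proposition~\ref{prop:1dim_correlation} applies and yields $\proj{\phi_i} = \proj{\tilde\psi}$. Summing over $i$ then gives $\rho = \sum_i p_i \proj{\phi_i} = \proj{\tilde\psi}$, as claimed.

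Since Proposition~\ref{prop:1dim_correlation} already carries the substantive content (minimal Schmidt rank of $\ket{\tilde\psi}$ via Lemma~\ref{lem:minselftest}, irreducibility of the measurement algebra, and uniqueness of optimal states), there is no real obstacle here beyond bookkeeping. The only point demanding a little care is that the quantum set of correlations is convex and that extremality of $p_{\tilde S}$ propagates to each term of an arbitrary convex decomposition — this is precisely where the extremality hypothesis is needed, playing the role that optimality of the game value plays in the non-local game setting (Lemma~\ref{lem3robmixed}), which has no direct analogue for correlations.
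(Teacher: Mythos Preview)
Your proof is correct and follows essentially the same approach as the paper: decompose $\rho$ into pure states, use extremality of $p_{\tilde S}$ to force each pure component to reproduce $p_{\tilde S}$, then apply Proposition~\ref{prop:1dim_correlation} to identify each component with $\proj{\tilde\psi}$. The paper uses the spectral decomposition specifically, but as you implicitly note, any convex decomposition into pure states works just as well.
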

\begin{proof}
    Consider the spectral decomposition of $\rho$
    $$
    \rho=\sum_i\lambda_i\proj{\phi_i},\ket{\phi_i}\in\mathcal{H}_{\tilde A}\otimes\mathcal{H}_{\tilde B}.
    $$
    Then 
    $$
    p_{\tilde S}(a,b|s,t)=\tr[\rho\tilde A_{sa}\otimes\tilde B_{tb}]=\sum_i \lambda_i\braket{\phi_i|\tilde A_{sa}\otimes\tilde B_{tb}|\phi_i}.
    $$
    From the extremality of $p_{\tilde S}$, each $\ket{\phi_i}$ must have the same correlation using the measurements $\tilde A_{sa}\otimes\tilde B_{tb}$. By Proposition \ref{prop:1dim_correlation}, $\proj{\phi_i}=\proj{\tilde\psi}$. So $\rho=\proj{\tilde{\psi}}$.
\end{proof}

Now we can state the primary result of this appendix. We are not going to fully prove this, since the proof is essentially the same as \pref{thm:pure_to_mix_rob}. We are instead going to state the places where they differ.

\begin{theorem}\label{thm:pure_to_mix_correlation}
	Let $t\in\{\text{PVM,POVM}\}$, and correlation $p$ pure $t$ self-tests $\tilde{S}$, where $\ket{\tilde{\psi}}$ has full Schmidt rank. If $p_{\tilde{S}}$ is extreme in the set of quantum correlations, then $p$ mixed $t$ self-tests $\tilde S$.
\end{theorem}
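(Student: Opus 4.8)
The plan is to re-run the three-step argument behind Theorem~\ref{thm:pure_to_mix_rob} (Lemmas~\ref{lem1robmixed}, \ref{lem2robmixed}, \ref{lem3robmixed}) with $\varepsilon=\delta=0$, and to replace the spectral-gap input of Lemma~\ref{lem3robmixed}, which relied on a game operator, by the extremality input of Lemma~\ref{lem:extremalityused}. Concretely, let $S=(\rho_{AB},\{A_{sa}\},\{B_{tb}\})$ be a mixed $t$-strategy with $p_S=p$, fix a purification $\ket{\psi}_{ABP}$ of $\rho_{AB}$, and set $S^{(1)}=(\ket{\psi}_{ABP},\{A_{sa}\},\{B_{tb}\otimes\mathbbm{1}_P\})$. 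Tracing out $\mathcal{H}_P$ recovers $\rho_{AB}$, so $S^{(1)}$ and $(\ket{\psi}_{ABP},\{A_{sa}\otimes\mathbbm{1}_P\},\{B_{tb}\})$ both generate $p=p_{\tilde S}$, and for $t=\mathrm{PVM}$ both are still projective; hence the hypothesis that $p$ pure $t$-self-tests $\tilde S$ applies to them. The argument of Lemma~\ref{lem1robmixed}, carried out verbatim with $\varepsilon=0$, then yields isometries $W_A$, $V_B$ and $X=W_A\otimes V_B\otimes\mathbbm{1}_P$ with $S^{(2)}=(X\ket{\psi}_{ABP},\{\tilde A_{sa}\otimes\mathbbm{1}_{\check A}\},\{\tilde B_{tb}\otimes\mathbbm{1}_{\hat B}\otimes\mathbbm{1}_P\})$ an exact local dilation of $S^{(1)}$. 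Since exact local dilations preserve correlations, $S^{(2)}$ also generates $p=p_{\tilde S}$, which is all that is needed in place of Lemma~\ref{lem2robmixed}.

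The step that genuinely differs is the analogue of Lemma~\ref{lem3robmixed}. Lacking a game operator, I would instead trace $\proj{X\psi}_{ABP}$ down to $\mathcal{H}_{\tilde A}\otimes\mathcal{H}_{\tilde B}$, obtaining a density matrix $\rho'$ with $\tr[\rho'(\tilde A_{sa}\otimes\tilde B_{tb})]=p_{\tilde S}(a,b|s,t)$ for all $a,b,s,t$; here one uses that $S^{(2)}$ reproduces $p_{\tilde S}$ and that the reduced operators act as $\tilde A_{sa}\otimes\mathbbm{1}$, $\tilde B_{tb}\otimes\mathbbm{1}$. Extremality of $p_{\tilde S}$ now enters through Lemma~\ref{lem:extremalityused} (itself resting on Proposition~\ref{prop:1dim_correlation} and the full-rank hypothesis on $\ket{\tilde\psi}$), forcing $\rho'=\proj{\tilde\psi}$. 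As $\rho'$ is the reduced state of the \emph{pure} vector $X\ket{\psi}_{ABP}$ and is rank one, $X\ket{\psi}_{ABP}=\ket{\tilde\psi}\otimes\ket{\mathrm{aux}}$ for some $\ket{\mathrm{aux}}\in\mathcal{H}_{\check A}\otimes\mathcal{H}_{\hat B}\otimes\mathcal{H}_P$ (up to global phase). Because the measurements of $S^{(2)}$ already factor as $\tilde A_{sa}\otimes\mathbbm{1}_{\check A}$ and $\tilde B_{tb}\otimes\mathbbm{1}_{\hat B}\otimes\mathbbm{1}_P$, the identity local isometry together with $\ket{\mathrm{aux}}$ witnesses $S^{(2)}\xhookrightarrow{}\tilde S$.

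Finally I would chain $S^{(1)}\xhookrightarrow{}S^{(2)}\xhookrightarrow{}\tilde S$ by transitivity to get $S^{(1)}\xhookrightarrow{}\tilde S$; since any two purifications of $\rho_{AB}$ differ by an isometry on the purifying space, which can be absorbed into $\mathcal{H}_P$ and $\ket{\mathrm{aux}}$, this upgrades to $S\xhookrightarrow{}\tilde S$ in the sense of Definition~\ref{def:localdilation}, i.e. $p$ mixed $t$-self-tests $\tilde S$. I expect the only delicate point to be the bookkeeping around the extremality step: one must verify that tracing out $\check A,\hat B,P$ cleanly leaves a state $\rho'$ that literally reproduces $p_{\tilde S}$ with the canonical operators (so that Lemma~\ref{lem:extremalityused} applies), and that collapsing $\rho'$ to a pure state genuinely factorizes $X\ket{\psi}_{ABP}$ across the $(\tilde A\tilde B)$ versus $(\check A\hat B P)$ cut. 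Everything else is a transcription of the game-based proof with $\varepsilon=\delta=0$.
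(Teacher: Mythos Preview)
Your proposal is correct and follows essentially the same route as the paper's proof sketch: use Lemma~\ref{lem1robmixed} with $\varepsilon=0$ to obtain $X$ and the strategy $S^{(2)}$, trace $X\proj{\psi}X^*$ down to $\mathcal{H}_{\tilde A}\otimes\mathcal{H}_{\tilde B}$, and invoke Lemma~\ref{lem:extremalityused} (via extremality and Proposition~\ref{prop:1dim_correlation}) in place of the game-operator spectral-gap argument of Lemma~\ref{lem3robmixed} to force $\rho'=\proj{\tilde\psi}$ and hence $X\ket{\psi}=\ket{\tilde\psi}\otimes\ket{\mathrm{aux}}$. If anything, your write-up is slightly more explicit than the paper's sketch about why the rank-one reduced state forces the tensor factorization and about the purification-independence required by Definition~\ref{def:localdilation}.
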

\begin{proof}[Sketch]
    The proof is very similar to the one in \pref{thm:pure_to_mix_rob}. Following the proof of Lemma \ref{lem1robmixed}, we can show that there exists local isometry $X$ such that $(X\ket{\psi},\{\tilde A_{sa}\otimes\mathbbm{1}_{\check{A}}\},\{\tilde B_{tb}\otimes\mathbbm{1}_{\hat{B}}\otimes\mathbbm{1}_{P})$ is a local dilation of the purification of an arbitrary strategy. Let
    \begin{equation*}
		\rho_{\tilde A\tilde B}' := \tr_{\check{A}\hat{B}P}[X\proj{\psi}X^*] = \sum_{i = 0}^{\operatorname{rank} \rho_{\tilde A\tilde B}'-1} p_i\proj{\psi_i'}_{\tilde A\tilde B}.
	\end{equation*}
    
    Observe that by the extremality of $p$, each of the $\ket{\psi_i'}_{\tilde A\tilde B}$ must have the same correlation using the same measurements $\tilde A_{sa}\otimes\tilde B_{tb}$. By Lemma \ref{lem:extremalityused}, $X\ket{\psi}=\ket{\tilde\psi}\ket{\aux}$ for some auxiliary state $\ket{\aux}\in\mathcal{H}_{\check{A}\hat{B}P}$.

    This proves that $p_{\tilde S}$ mixed $t$-self-tests $\tilde S$.
\end{proof}
    \end{appendix}
\end{document}